\newcommand*{\addFileDependency}[1]{
\typeout{(#1)}
%
%
\@addtofilelist{#1}
%
\IfFileExists{#1}{}{\typeout{No file #1.}}
}\makeatother
\newtheorem{theorem}{Theorem}
\newtheorem{lemma}{Lemma}
\newtheorem{example}{Example}
\newtheorem{definition}{Definition}
\newtheorem{assumption}{Assumption}
\newtheorem{proposition}{Proposition}
\DeclareRobustCommand\widecheck[1]{{\mathpalette\@widecheck{#1}}}
\def\@widecheck#1#2{%
    \setbox\z@\hbox{\m@th$#1#2$}%
    \setbox\tw@\hbox{\m@th$#1%
       \widehat{%
          \vrule\@width\z@\@height\ht\z@
          \vrule\@height\z@\@width\wd\z@}$}%
    \dp\tw@-\ht\z@
    \@tempdima\ht\z@ \advance\@tempdima2\ht\tw@ \divide\@tempdima\thr@@
    \setbox\tw@\hbox{%
       \raise\@tempdima\hbox{\scalebox{1}[-1]{\lower\@tempdima\box
\tw@}}}%
    {\ooalign{\box\tw@ \cr \box\z@}}}
\renewcommand{\hat}{\widehat}
\renewcommand{\tilde}{\widetilde}
\renewcommand{\check}{\widecheck}
\newcommand{\pigamma}{\dfrac{\|\pi\|_2}{\|\gamma\|_2}}
\newcommand{\pisgamma}{\|\pi\|_2/\|\gamma\|_2}
\newcommand{\expect}{\mathbb{E}}
\newcommand{\var}{{\rm Var}}
\newcommand{\R}{\mathbb{R}}
\newcommand{\IA}{{\rm I}_{A}}
\newcommand{\IAst}{{\rm I}_{A^*}}
\newcommand{\QA}{{\rm Q}_{A}}
\newcommand{\QAst}{{\rm Q}_{A^*}}
\newcommand{\hIA}{\widehat{\rm I}_{A}}
\newcommand{\hQA}{\widehat{\rm Q}_{A}}
\newcommand{\uGamma}{u_{\Gamma}}
\newcommand{\ugamma}{u_{\gamma}}
\newcommand{\hugamma}{\widehat{u}_{\gamma}}
\newcommand{\hu}{\widehat{u}}
\newcommand{\hbeta}{\widehat{\beta}}
\newcommand{\hpi}{\widehat{\pi}}
\newcommand{\hphi}{\widehat{\varphi}}
\newcommand{\hpsi}{\widehat{\psi}}
\newcommand{\hSigma}{\widehat{\Sigma}}
\newcommand{\hOmega}{\widehat{\Omega}}
\newcommand{\hGamma}{\widehat{\Gamma}}
\newcommand{\hgamma}{\widehat{\gamma}}
\newcommand{\hPsi}{\widehat{\Psi}}
\newcommand{\convd}{\stackrel{d}{\longrightarrow}}
\newcommand{\convp}{\stackrel{p}{\longrightarrow}}
\newcommand{\eqd}{\stackrel{d}{=}}
\newcommand{\bOne}{\boldsymbol{1}}
\newcommand{\sumn}{\sum_{i=1}^n}
\newcommand{\lep}{\lesssim_p}
\newcommand{\gep}{\gtrsim_p}
\renewcommand{\hat}{\widehat}
\renewcommand{\tilde}{\widetilde}
\newtheorem{remark}{Remark}
\newcommand{\blind}{0}
\begin{document}

\def\spacingset#1{\renewcommand{\baselinestretch}%
{#1}\small\normalsize} \spacingset{1}


\if0\blind
{
  \title{\bf A Heteroskedasticity-Robust Overidentifying Restriction Test with High-Dimensional Covariates}
\author{Qingliang Fan\\    Department of Economics, The Chinese University of Hong Kong\\
    \\
    Zijian Guo \\
    Department of Statistics, Rutgers University\\  
    \\
    Ziwei Mei \\
    Department of Economics, The Chinese University of Hong Kong\\}
  \maketitle
} \fi

\if1\blind
{
  \bigskip
  \bigskip
  \bigskip
  \begin{center}
    {\LARGE\bf A Heteroskedasticity-Robust Overidentifying Restriction Test with High-Dimensional Covariates}
\end{center}
  \medskip
} \fi

\bigskip
\begin{abstract}
This paper proposes an overidentifying restriction test for high-dimensional linear instrumental variable models. The novelty of the proposed test is that it allows the number of covariates and instruments to be larger than the sample size. The test is scale-invariant and is robust to heteroskedastic errors. To construct the final test statistic, we first introduce a test based on the maximum norm of multiple parameters that could be high-dimensional. The theoretical power based on the maximum norm is higher than that in the modified Cragg-Donald test \citep{kolesar2018minimum}, the only existing test allowing for large-dimensional covariates. Second, following the principle of power enhancement \citep{fan2015power}, we introduce the power-enhanced test, with an asymptotically zero component used to enhance the power to detect some extreme alternatives with many locally invalid instruments. Finally, an empirical example of the trade and economic growth nexus demonstrates the usefulness of the proposed test.
\end{abstract}
\noindent%
{\it Keywords:}  overidentification test, maximum test, heteroskedasticity, power enhancement, data-rich environment
\vfill

\newpage
\spacingset{1.5} 

\section{Introduction}\label{sec:intro}
\par Instrumental variable (IV) regression is popular for inference of endogenous effects, whose validity relies on the IV exclusion restrictions. With increasing access to large-scale data, the model with high-dimensional covariates or instruments has drawn considerable attention from the theoretical and empirical literature.  This paper develops a test for IV exclusion restrictions in a high-dimensional model. More precisely, consider the following instrumental variable model, for $i \in\{1, \ldots, n\}$, 
\begin{equation} 
\begin{aligned}
Y_i &= D_i\beta + X_{i\cdot}^{\top}\varphi + Z_{i\cdot}^{\top}\pi + e_i, \quad &\expect(e_i|Z_{i\cdot},X_{i\cdot})=0,\\
D_i &= X_{i\cdot}^{\top}\psi + Z_{i\cdot}^{\top}\gamma + \varepsilon_{i,D},\quad &\expect(\varepsilon_{i,D}|Z_{i\cdot},X_{i\cdot})=0,
\end{aligned}
\label{eq:model}
\end{equation}
where $Y_i\in \R$ is the dependent variable, $D_i\in \R$ is an endogenous variable, $X_{i\cdot}\in \R^{p_x}$ is a vector of exogenous covariates, $Z_{i\cdot}\in \R^{p_z}$ is a vector of instruments and $e_i$, $\varepsilon_{i,D}$ are random errors that may be correlated. In this paper, we allow both $p_x$ and $p_z$ to be larger than $n$ and assume the vectors $\varphi$, $\pi$, $\psi$ and $\gamma$ are sparse, which is specified by Assumption \ref{as:asym}(i) in Section \ref{subsec: asymp norm}. The paper develops a test of the null hypothesis,
\begin{equation} \label{eq:h0}
    \mathbb{H}_0:\pi=0,
\end{equation}
against the alternative $\mathbb{H}_a: \pi \neq 0$. The IVs are \emph{valid} if $\pi=0$. 
\par The classic Sargan test \citep{sargan1958estimation} and J test \citep{hansen1982large} consist of two steps: (1) Compute a two-stage least square (TSLS) estimator of $\beta$, denoted as $\hat\beta_{\rm TSLS}$; (2) Regress $Y-D\hat\beta_{\rm TSLS}$ on the covariates $X$ and IVs $Z$, and test the joint significance of the coefficients of IVs. Our new test follows similar ideas. We first construct a debiased Lasso-based estimator of the parameter $\beta$, denoted as $\hat\beta_A$ in \eqref{eq:hatbetaA}. The estimator is $\sqrt{n}$-consistent and asymptotically normal under the null hypothesis \eqref{eq:h0}. We further run the Lasso regression of $Y_i-D_i \hbeta_A$ on $X_{i\cdot}$ and $Z_{i\cdot}$, and store the debiased estimators of the coefficients of $Z_{i\cdot}$ as $\tilde \pi_A$. Under the null hypothesis \eqref{eq:h0}, $\tilde \pi_A$ is asymptotically equal to the sample average of mean-zero random vectors. The test rejects the null hypothesis if the maximum norm of a scaled version of  $\tilde\pi_A$ exceeds the critical value obtained from a high-dimensional central limit theorem by \cite{chernozhukov2013gaussian}. 

\subsection{Main Results and Contributions}
\par We first design a maximum test (M test) based on the \emph{maximum norm} of the coefficient vector $\tilde \pi_A$ that may be high-dimensional.  In closely related literature, some recent overidentification tests consider a model with a large number of IVs \citep{lee2012hahn,chao2014testing,carrasco2021testing,kolesar2018minimum}, and we refer to these tests based on a limiting $\chi^2$ distribution as ``$\chi^2$-type tests''. None of the $\chi^2$-type tests above allow $p:=p_x+p_z>n$ and $p_x \to \infty$, while our proposed M test covers this scenario. 
Under some commonly imposed sparsity assumptions \citep{Belloni2012sparse,belloni2014}, the M test has the correct asymptotic size. Moreover, when $p$ grows with the sample size and $p<n$, the M test has better power than $\chi^2$-type tests under the sparse regime. 
\par We further propose an add-on asymptotically zero \emph{quadratic} statistic (Q statistic) to improve the power when the model includes many ``locally invalid" IVs, where the individual violation of IV validity is weak; see Section \ref{sec: power enhance} for details. The resulting test, called the \textit{power-enhanced M test} (PM test), rejects the null hypothesis when either the M or the Q statistic is greater than the critical value of the significance level $\alpha$. Our paper extends the principle of power enhancement developed by \cite{fan2015power} and \citet{kock2019power} to the popular IV model and overidentification test. The PM test always has non-inferior power compared to the original M test. In simulations (see Section \ref{sec:simu} of the supplement), we show that the power of the PM test is at least as good as the M test and substantially improved when many IVs are locally invalid.  


\par In the empirical study, we revisit the effect of trade on economic growth. We perform overidentification tests on an IV model with a large number of covariates. The set of instruments includes several possibly invalid instruments, such as energy usage and business environment. The PM test strongly rejects the null hypothesis under the 1\% level. In contrast, the M test rejects the null hypothesis only under 5\%. The modified Cragg-Donald (MCD) test by \citet{kolesar2018minimum}, a representative $
\chi^2$-type test feasible for $p_x\to\infty$ with $p<n$, fails to reject at the 5\% level, indicating the potential power gains of the PM test under high-dimensional IV models. 
\par We summarize the main contributions as follows:
\begin{enumerate}
    \item We propose an overidentification test for IV models with high-dimensional data. To our knowledge, this is the first overidentification test for $p>n$ and $p_x \to\infty$. It is more powerful than $\chi^2$-type tests under certain sparsity restrictions when $p<n$. 
    \item Our test is robust to heteroskedasticity. Our paper extends the current high-dimensional statistical literature on the maximum norm or quadratic form inference to heteroskedastic data.
    \item We develop a power enhancement procedure in the IV validity test context. We use an asymptotically zero statistic to improve the  power for many locally invalid IVs.
\end{enumerate}
\subsection{Other Related Literature}
\par Our test relates to the maximum test in a linear regression model \citep{chernozhukov2013gaussian, zhang2017simultaneous}. The asymptotically zero Q statistic follows an inferential procedure for a quadratic form of high-dimensional parameters \citep{guo2019optimal,tony2020semisupervised,guo2021group}. In terms of high-dimensional IV regression, 
 \cite{Belloni2012sparse, belloni2014} and \cite{chernozhukov2015post} proposed post-selection inference for the endogenous treatment effects. The post-selection method requires covariate and IV selection consistency. Nevertheless, IV selection often suffers from errors in finite samples \citep{guo2023causal}. The treatment effect estimator used in our overidentifying restriction test adopts bias-corrected estimators of quadratic forms that are free from variable selection bias. Recently, \cite{belloni2022high} and \cite{gold2020inference} developed bias-corrected estimators for 
 high-dimensional IV models. These estimators are asymptotically normal when all IVs are valid ($\pi=0$), and thus an overidentifying restriction test with a correct asymptotic size using these estimators is possible. Nevertheless, in the presence of high-dimensional covariates, it is unclear how to derive the limiting distributions of the abovementioned estimators under the alternative $\pi\neq 0$, which brings challenges to deducing the power. In contrast, we identify $\beta$ under the null $\pi=0$ based on quadratic forms of reduced-form coefficients, paving a more transparent way to power analysis.  
\par Another strand of literature has studied the estimation and inference of endogenous treatment effects with potentially invalid instruments \citep{kang2016instrumental,guo2018confidence,windmeijer2019use,fan2022endogenous,gautier22}. In order to identify the treatment effect $\beta$, these methods required model identification conditions, e.g., the \emph{majority rule}, which means more than half of the IVs are valid \citep{kang2016instrumental}.
On the other hand, our test does not require these identification conditions, such as the majority rule, since our primary goal is to test the IV validity. Therefore, our method is complementary to the above studies. 

  \par  Other literature \citep{liao2013adaptive,cheng2015select,caner2018adaptive,chang2021culling} has studied moment condition selection under the GMM framework, requiring prior knowledge of some valid moment conditions. \cite{Chang2021} considered the overidentification test in high-dimensional settings using marginal empirical likelihood ratios and a selective subset of moment conditions. \citet{mikusheva2022inference} studied robust inference with many weak IVs. 


\par \textbf{Notations.} We consider $p=p(n)$ as a function of $n$ and discuss the asymptotics where $n$ and $p$ jointly diverge to infinity. The phrase ``with probability approaching one as $n\to\infty$" is abbreviated as ``w.p.a.1". An
\emph{absolute constant} is a positive, finite constant that is invariant with the sample size. We use ``$\convp$" and ``$\convd$" to denote convergence in probability and distribution, respectively. For any positive sequences $a_n$ and $b_n$, ``$a_n\lesssim b_n$'' means there exists some absolute constant $C$ such that $a_n\leq Cb_n$, ``$a_n\gtrsim b_n$'' means $b_n\lesssim a_n$, and ``$a_n\asymp b_n$'' means $a_n\lesssim b_n$ and $b_n\lesssim a_n$. Correspondingly, ``$\lesssim_p$'', ``$\gtrsim_p$'' and ``$\asymp_p$'' indicate that the aforementioned relations ``$\lesssim$'', ``$\gtrsim$'' and ``$\asymp$'' hold w.p.a.1. ``$a
_n\gg b_n$'' means $a_n/b_n\to\infty$ as $n\to\infty$. We use $[n]$ for some positive integer $n$ to denote the integer set $\{1,2,\cdots,n\}$. For a $p$-dimensional vector $x=(x_{1,}x_{2},\cdots,x_{p})^{\top}$, the number of nonzero entries is $\|x\|_0$, the $L_{2}$ norm is $\left\Vert x\right\Vert _{2}=\sqrt{\sum_{j=1}^{p}x_{j}^{2}}$, the $L_{1}$ norm is $\left\Vert x\right\Vert_{1}=\sum_{j=1}^{n}\left|x_{j}\right|$, and the maximum norm is $\|x\|_{\infty}=\max_{j\in[p]}|x_{j}|$. For a $p\times p$ matrix $A=(A_{ij})_{i,j\in[p]}$, we define the $L_1$ norm $\|A\|_1 = \max_{j\in[p]}\sum_{i\in[p]}|A_{ij}|$ and the maximum norm $\|A\|_\infty = \max_{i,j\in[p]}|A_{i,j}|$. ``$A\succ 0$'' means the matrix $A$ is positive definite. For any $p\times p$ matrix $A\succ 0$ with spectral decomposition $U\Lambda U^\top$, we define $\lambda_{\min}(A)$ and $\lambda_{\max}(A)$ as the minimum and maximum eigenvalues of $A$, and $A^{1/2}=U\Lambda^{1/2} U^\top$ with $\Lambda^{1/2}$ being the diagonal matrix composed of the square roots of the corresponding diagonal elements of $\Lambda$. We use ${\rm diag}(A)$ to denote the diagonal matrix composed of the diagonal elements of $A$. We define $\IA(x,y)=x^\top Ay$ and $\QA(x)=\IA(x,x)$ for any vectors $x,y\in\mathbb{R}^{p}$. We use $0_p$ to denote the $p\times 1$ null vector, $1_p$ to denote the $p\times 1$ vector of ones, and $I_p$ to denote the $p$-dimensional identity matrix. The indicator function is ${\textbf 1}(\cdot)$. Finally, for any $a,b\in\mathbb{R}$, we use $a\vee b$ and $a\wedge b$ to denote $\max(a,b)$ and $\min(a,b)$, respectively. 
\par The remainder of the paper is organized as follows. In Section \ref{sec:model}, we introduce the model and a treatment effect estimator. Section \ref{sec:overid} discusses the M test and its power-enhanced version with their asymptotic properties. We present an empirical example in Section \ref{sec:emp}. Section \ref{sec:conc} concludes the paper. Technical proofs, additional empirical study details, and Monte Carlo simulations are given in the supplement. 

\section{The Model and Treatment Effect Estimation} 
\label{sec:model}
The test operates via a random sample $\{Y_i, D_i, X_{i\cdot}, Z_{i\cdot}\}_{1\leq i\leq n}$ in model \eqref{eq:model}. 
Heteroskedastic errors are allowed so that ${\var}(e_i|Z_{i\cdot},X_{i\cdot})$ and ${\var}(\varepsilon_{i,D}|Z_{i\cdot},X_{i\cdot})$ could vary with $i$. To fix ideas, we assume high-dimensional covariates with $p_x\to\infty$ so that $p=p_x+p_z\to\infty$ with $p_z$ either fixed or growing. Our test, therefore, accommodates the studies with either high-dimensional or a fixed number of instruments.
 
In Section \ref{sec: identification}, we present the treatment effect identification using equation \eqref{betaA}. This identification motivates the data-dependent treatment effect estimator in the following Section \ref{subsec:validity}. We further establish the asymptotic normality of this estimator in Section \ref{subsec: asymp norm}. 
\subsection{Identification and Scale Invariance}
\label{sec: identification}
 Denote $Y=(Y_1,Y_2,\cdots,Y_n)^\top$, $D=(D_1,D_2,\cdots,D_n)^\top$, $X=(X_{1\cdot},X_{2\cdot},\cdots,X_{n\cdot})^\top$ and $Z=(Z_{1\cdot},Z_{2\cdot},\cdots,Z_{n\cdot})^\top$.  The reduced form of model (\ref{eq:model}) is 
\begin{equation}\label{eq: reduced form}
\begin{aligned}
Y = X\Psi + Z\Gamma + \varepsilon_Y,\\
D = X\psi + Z\gamma + \varepsilon_D, 
\end{aligned} 
\end{equation}
\noindent
where $\Psi=\psi\beta+\varphi$, $\Gamma=\gamma\beta+\pi$ and $\varepsilon_Y=\varepsilon_D\beta+e=(\varepsilon_{1,Y},\varepsilon_{2,Y},\cdots,\varepsilon_{n,Y})^\top$ with $e=(e_1,e_2,\cdots,e_n)^\top$ and $\varepsilon_D=(\varepsilon_{1,D},\varepsilon_{2,D},\cdots,\varepsilon_{n,D})^\top$. We write $W_{i\cdot}=(X_{i\cdot}^\top,Z_{i\cdot}^\top)^\top$ for $i=1,2,\cdots,n$ and $W=(W_{1\cdot},W_{2\cdot},\cdots,W_{n\cdot})^\top$. Define the population \emph{Gram matrix} $\Sigma := \expect(W_{i\cdot}W_{i\cdot}^\top)$, and the \emph{precision matrix} $\Omega := \Sigma^{-1}$. Furthermore, define $\sigma_{i,Y}^2:=\var(\varepsilon_{i,Y}|W_{i\cdot})$, $\sigma_{i,D}^2:=\var(\varepsilon_{i,D}|W_{i\cdot})$ and $\sigma_{i,YD}:=\text{cov}(\varepsilon_{i,Y},\varepsilon_{i,D}|W_{i\cdot})$. Let $\widehat{\Sigma} := n^{-1}\sum_{i=1}^n W_{i\cdot}W_{i\cdot}^\top = n^{-1}W^\top W$.

In the literature \citep{chao2014testing}, it is common to use weighted norms of the unknown parameters for the construction of estimators and tests. From model \eqref{eq: reduced form}, we have $\Gamma = \gamma \beta+\pi$. Thus, for any $p$-dimensional square matrix $A$ such that $\QA(\gamma)=\gamma^\top A\gamma > 0$, we have 
\begin{equation}\label{eq: beta identify}
    \beta = \dfrac{\gamma^\top A(\Gamma - \pi)  }{\gamma^\top A\gamma } =\dfrac{\IA(\gamma,\Gamma)-\IA(\gamma,\pi)}{\QA(\gamma)},
\end{equation}
where $\IA(\gamma,\Gamma)=\gamma^{\top} A\Gamma$ and $\IA(\gamma,\pi)=\gamma^{\top} A\pi.$ 
Since $\Gamma = \gamma \beta+\pi$, (\ref{eq: beta identify}) holds when $A$ is either random or fixed. In order to achieve the scale-invariant property, we choose
\begin{equation}\label{eq: def A}
    A:= {\rm diag}\left(\dfrac{Z^\top Z}{n}\right) = {\rm diag}\left(\hat\sigma_{1z}^2,\hat\sigma_{2z}^2,\cdots,\hat\sigma_{p_zz}^2\right),
\end{equation}
where $\hat\sigma_{jz}^2 := n^{-1}\sumn Z_{ij}^2$, for $j=1,2,\dots, p_z$. When the $j$-th IV $Z_{ij}$ is scaled with some number $m>0$, the corresponding coefficient $\gamma_j$ is multiplied by $1/m$ since $Z_{ij}\gamma_j = (mZ_{ij})(\gamma_j/m)$; similar arguments apply to 
$\Gamma_j$ and $\pi_j$. Thus, with the weighting matrix $A$ in \eqref{eq: def A}, the quadratic forms and inner products in \eqref{eq: beta identify} remain unchanged if we scale the instruments by some number $m>0$. 
\par It is easy to show that $\QA(\gamma)>0$ w.p.a.1 under the assumptions in Section \ref{subsec: asymp norm}, and thus we assume $\QA(\gamma)>0$ throughout the theoretical discussions. We define the parameter
\begin{equation}\label{betaA}
    \beta_{A} := { \IA(\gamma,\Gamma) }/{\QA(\gamma)}.
\end{equation}
In Section \ref{subsec:validity}, we apply \eqref{betaA} to derive a data-dependent estimator $\hbeta_A$ of $\beta_{A}$\footnote{We slightly abuse the terminology to say $\hbeta_A$ is an estimator of $\beta_A$ even when the matrix $A$ is random. The same applies to the notation $\widehat{\pi}_{A}$ in Section \ref{subsec: M test}.}. Since $\beta_{A}-\beta={\IA(\gamma,\pi)}/{\QA(\gamma)}$, we have $\beta_{A}=\beta$ under the null hypothesis of $\pi=0,$  
\begin{remark}[Connection to the Sargan test]\label{rem: sargan} When $p_x=0$, the TSLS estimator $\hat\beta_{\rm TSLS} = \frac{D^\top Z(Z^\top Z)^{-1} Z^\top Y}{D^\top Z(Z^\top Z)^{-1} Z^\top D }$ is the estimator for $\beta_A$ with the empirical Gram matrix $A=n^{-1}Z^\top Z$. Write the residual $\hat e_{\rm TSLS}=Y-D\hat\beta_{\rm TSLS}$, the sum of squared residuals $\hat\sigma_{\rm TSLS}^2 = n^{-1}\|\hat e_{\rm TSLS}\|_2^2$, and $\hat\pi_{\rm TSLS}=(Z^\top Z)^{-1}Z^\top\hat e_{\rm TSLS}$. The Sargan test statistic $\hat\sigma_{\rm TSLS}^{-2}\hat\pi_{\rm TSLS}^\top (Z^\top Z/n)\hat\pi_{\rm TSLS}$ weights the quadratic form by $A=n^{-1}Z^\top Z$. However, the sample Gram matrix of $Z$ is random and of large size when $p_z$ is large. It induces excessively large variances to the bias-corrected estimators in Section \ref{subsec:validity}, like (\ref{eq:hatQA}). Therefore, we employ the diagonal weighting matrix $A={\rm diag}(n^{-1}Z^\top Z)$ that is sparse and thus substantially reduces the variance. 
\end{remark}

\subsection{A Debiased Lasso-Based Estimator of $\beta$}
\label{subsec:validity}
   We now introduce an estimator of $\beta_A$ defined in \eqref{betaA}, where $\beta_A = \beta$  when all IVs are valid ($\pi=0$). This estimator is useful to construct the test statistic in Section \ref{sec:overid}. 
\par With the estimators $\hQA(\gamma)$ and $\hIA(\gamma,\Gamma)$  specified later in \eqref{eq:hatQA} and  \eqref{eq:hatIA}, $\beta_A$ can be estimated by 
  \begin{equation}\label{eq:hatbetaA}
   \hbeta_A  = \dfrac{\hIA(\gamma,\Gamma)}{\hQA(\gamma)}{\bf 1}(\hQA(\gamma)>0).
  \end{equation}
In the following, we provide details for estimating $\hIA(\gamma,\Gamma)$ and $\hQA(\gamma).$ We use Lasso \citep{tibshirani1996regression} to get the initial estimates of $\Gamma$ and $\gamma$ in \eqref{eq: reduced form}: 
\begin{equation}
\{\widehat{\Psi}, \widehat{\Gamma}\}=\arg\min_{\Psi,\Gamma} \dfrac{1}{n}\|Y-X\Psi-Z\Gamma\|_2^2+{\lambda_{1n}}(\|\Psi\|_1+\|\Gamma\|_1),
\label{Theta}
\end{equation}
\begin{equation}
\{\widehat{\psi}, \widehat{\gamma}\}=\arg\min_{\psi,\gamma} \dfrac{1}{n}\|D-X\psi-Z\gamma\|_2^2+{\lambda_{2n}}(\|\psi\|_1+\|\gamma\|_1),
\label{theta}
\end{equation}
where $\lambda_{1n},\lambda_{2n}$ are positive tuning parameters that are selected by cross-validation in practice. The plug-in estimator of $\beta_A$ given by $\IA(\hat\gamma,\hat\Gamma)/\QA(\hat\gamma)$ suffers from regularization bias and invalidates asymptotic normality. Therefore, we introduce a debiasing procedure for $\hat \beta_{A}$ through constructing debiasing estimators of $\IA(\gamma,\Gamma)$ and $\QA(\gamma)$. Here, we generalize the debiasing method for the quadratic form of high-dimensional parameters presented in recent literature \citep{guo2019optimal,guo2021group} to heteroskedastic errors. 
\par We specify our bias correction procedure in the following. First, for $\QA(\gamma)$, the denominator of $\beta_A$, the estimation error of the plug-in estimator $\QA(\hgamma)$ is
\begin{equation}
\begin{aligned}
\sqrt{n}\left(\QA(\hgamma) - \QA(\gamma)\right) &=  2\sqrt{n}\hgamma^\top A(\hgamma-\gamma) -\sqrt{n}\QA(\hgamma-\gamma).
\end{aligned}
\label{eqproj1}
\end{equation}

  The second term on the right-hand side (RHS) of \eqref{eqproj1} is asymptotically negligible. The bias of the plug-in estimator $\QA(\hat\gamma)$ is mainly induced by the first term on the RHS of \eqref{eqproj1}, specifically, the regularization bias in the initial LASSO estimator, $\hat\gamma-\gamma$. Thus, we need a bias-corrected estimator of $\gamma$ for an asymptotically normal estimator of $\QA(\gamma)$. Following the idea of \citet{javanmard2014confidence}, a bias-corrected estimator of $\gamma$ is given as 
  \begin{equation}\label{eq:DB_gamma}
\left(\begin{array}{c}
    \tilde{\psi}    \\
   \tilde{\gamma}    
\end{array}\right) = \left(\begin{array}{c}
    \hat{\psi}    \\
   \hat{\gamma}    
\end{array}\right) + \dfrac{1}{n}\hOmega W^\top (D - X\hpsi - Z\hgamma),
\end{equation}
where $\hat\Omega$ is the constrained $L_1$-minimization for inverse matrix estimation (CLIME, \citealp{cai2011constrained}) of the precision matrix $\Omega$. Specifically, let $\hat\Omega^{(1)}$ be the solution of the problem 
\begin{equation}\label{eq:clime1}
    \begin{aligned} \min_{\Omega \in\mathbb{R}^{p\times p}} \|\Omega\|_1,\ \text{s.t. }\|\widehat{\Sigma}\Omega-I_p\|_\infty \leq  \mu_{\omega},
\end{aligned}
\end{equation}
where $I_p$ is the $p$-dimensional identity matrix and $\mu_{\omega}$ is a positive tuning parameter. The CLIME estimator is defined as 
\begin{equation}\label{eq:clime}
\begin{aligned}
\hOmega &= (\hOmega_{jk})_{j,k\in[p]}\text{ where }\hOmega_{jk} = \hOmega_{jk}^{(1)}\boldsymbol{1}(|\hOmega_{jk}^{(1)}|\leq |\hOmega_{kj}^{(1)}|) + \hOmega_{kj}^{(1)}\boldsymbol{1}(|\hOmega_{jk}^{(1)}| > |\hOmega_{kj}^{(1)}|).
\end{aligned}
\end{equation}
The above definition \eqref{eq:clime} guarantees that $\hOmega$ is a symmetric matrix, even if $\hOmega^{(1)}$ is not necessarily symmetric. Particularly, for two different values in $\{\hOmega_{jk}^{(1)}, \hOmega_{kj}^{(1)}\}$, we choose the one with a smaller absolute value, and assign  $\hOmega_{jk}$ as this particular value. This value assignment results in $\hat\Omega_{jk}=\hat\Omega_{kj}$ and thus $\hOmega$ is symmetric. 
 We use the \texttt{fastclime} R package \citep{pang2014fastclime} for efficient computation of CLIME. The difference between \eqref{eq:DB_gamma} and the analog in \citet{javanmard2014confidence} is that we minimize the $L_1$-norm, instead of $L_2$-norm. The $L_1$ minization is also used in \citet{gold2020inference}. Lemma \ref{lem:CLIME} in the supplement establishes convergence rates of the CLIME estimator in  (\ref{eq:clime}). 

\par A bias-corrected estimator of $\QA(\gamma)$ is then given as 
\begin{equation}\label{eq:hatQA}
\hQA(\gamma) = \QA(\hgamma) + 2\hat\gamma^\top A(\tilde\gamma - \hat\gamma),
\end{equation}
where $\hat\gamma$ and $\tilde\gamma$ are respectively defined  in \eqref{theta} and \eqref{eq:DB_gamma}. The estimation error of the debiased estimator $\hQA(\gamma)$ is decomposed as 
\begin{equation}\begin{aligned}
\label{eq:u1hat}
\sqrt{n}(\hQA(\gamma) - \QA(\gamma)) &= 2\sqrt{n}\hat\gamma^\top A (\tilde\gamma-\gamma)  -\sqrt{n}\QA(\hgamma-\gamma). 
\end{aligned}
\end{equation}
The first term on the RHS of \eqref{eq:u1hat} is asymptotically normal since $\tilde\gamma$ is debiased, and the second term is asymptotically negligible. Thus, we can deduce the asymptotic normality of the estimator $\hQA(\gamma)$. 
\begin{remark}\label{rem: quad debias}Note that we do not use $\QA(\tilde\gamma)=\tilde\gamma^\top A\tilde\gamma$, the quadratic form of the debiased estimator $\tilde\gamma$. Though the estimator $\tilde\gamma$ is asymptotically unbiased, $\QA(\tilde\gamma)$  is not a consistent estimator of $\QA(\gamma)$ when $p_z$ is large. Instead, each $\tilde\gamma_j$ is asymptotically normal with a variance of order $1/n$. Thus, $\QA(\tilde\gamma)$ is the sum of $p_z$ squared normal random variables with an order at least $p_z/n$, thereby not necessarily a consistent estimator of $\QA(\gamma)$ when $p_z > n$. 
\end{remark} 
\par Similarly, the estimation error of the plug-in estimator $\IA(\hat\gamma,\hat\Gamma)$ is decomposed as 
\begin{equation}\label{eq: error plug in IA}
\begin{aligned}
\sqrt{n}\left(\IA(\hat\gamma,\hat\Gamma) - \IA(\gamma,\Gamma)\right) &=  \sqrt{n}\hgamma^\top A(\hGamma-\Gamma) + \sqrt{n}\hGamma^\top A(\hgamma-\gamma) -\sqrt{n}\IA(\hgamma-\gamma,\hat\Gamma-\Gamma).
\end{aligned} 
\end{equation}
With a similar motivation as \eqref{eq:hatQA}, we propose the following debiased estimator of $\IA(\gamma,\Gamma)$, 
\begin{equation}\label{eq:hatIA}
    \hIA(\gamma,\Gamma) = \IA(\hgamma,\hGamma) + \hgamma^\top A(\tilde\Gamma-\hGamma) + \hGamma^\top A(\tilde\gamma-\hgamma),
\end{equation}
where $\tilde\Gamma$ is the debiased estimator of $\Gamma$ defined as 
\begin{equation}\label{eq:DB_Gamma}
\left(\begin{array}{c}
    \tilde{\Psi}    \\
   \tilde{\Gamma}    
\end{array}\right) = \left(\begin{array}{c}
    \hat{\Psi}    \\
   \hat{\Gamma}    
\end{array}\right) + \dfrac{1}{n}\hOmega W^\top (Y - X\hat\Psi - Z\hat\Gamma),
\end{equation}
with $\widehat{\Omega}$ defined in \eqref{eq:clime}. 
We can then establish a bias-corrected estimator $\hat\beta_A$ as  \eqref{eq:hatbetaA} using the estimators in \eqref{eq:hatQA} and \eqref{eq:hatIA}. 
\subsection{Asymptotic Property of  $\hbeta_A$}
\label{subsec: asymp norm}
Under Assumptions \ref{as:DecisionMatrix}-\ref{as:tuning} below, we can show that $\hQA(\gamma)>0$ w.p.a.1, and thus the estimation error of $\hbeta_A$ in \eqref{eq:hatbetaA} is decomposed as
\begin{equation}\label{eq:IQdecom}
\hbeta_A - \beta_A = \dfrac{\hIA(\gamma,\Gamma)-\IA(\gamma,\Gamma)-\beta_A\cdot (\hQA(\gamma) - \QA(\gamma))}{\hQA(\gamma)}.
\end{equation}
We establish the asymptotic normality of $\sqrt{n}(\hbeta_A-\beta_{A})$  based on  the decomposition  \eqref{eq:IQdecom} 
and the asymptotic normality of $\sqrt{n}(\hIA(\gamma,\Gamma)-\IA(\gamma,\Gamma))$ and $\sqrt{n}(\hQA(\gamma)-\QA(\gamma))$. To state the theoretical results, we first recall the definition of sub-Gaussian norm \citep{vershynin2010introduction}. 
 \begin{definition}[Sub-Gaussian norm] 
 \label{def: subG}
 The sub-Gaussian norm of any random variable $x$ is
 \begin{equation}\label{eq:def-subG}
     \|x\|_{\psi_2} := \sup_{q\geq 1}\dfrac{1}{\sqrt{q}}[\mathbb{E}|x|^q]^{1/q}.
 \end{equation}
  For any random vector $X\in\mathbb{R}^p$, we define its sub-Gaussian norm as  
   \begin{equation}\label{eq:def-subG-vec}
     \|X\|_{\psi_2} := \sup_{b\in\R^p:\|b\|_2=1} \|b^\top X\|_{\psi_2}. 
 \end{equation}
 \end{definition}
\par We impose the following assumptions to derive the asymptotic properties of $\hbeta_A$.
\begin{assumption}
Suppose that $\{W_{i\cdot}\}_{i\in[n]}$ are independent and identically distributed random vectors with a bounded sub-Gaussian norm. The population Gram matrix $\Sigma$ satisfies $c_\Sigma\leq\lambda_{\min}(\Sigma)\leq\lambda_{\max}(\Sigma)\leq C_\Sigma$ for absolute positive constants $C_\Sigma\geq c_\Sigma >0$. \label{as:DecisionMatrix} 
\end{assumption}
\begin{assumption}
\label{as:error}Suppose that $(e_{i},\varepsilon_{i,D})_{i\leq n}$ are independent across $i$, where $e_i$ and $\varepsilon_{i,D}$ are centered with a bounded sub-Gaussian norm. Assume $\mathbb{E}(e_{i}|W_{i\cdot})=0$, $\mathbb{E}(\varepsilon_{i,D}|W_{i\cdot})=0$ and $ \sigma^2_{\min}\leq \sigma_{i,Y}^2,\sigma_{i,D}^2 \leq \sigma^2_{\max}$ for some absolute constants $\sigma_{\max}\geq \sigma_{\min}>0$. In addition, there exist some absolute constants $c_0$ and $C_0$ such that $\mathbb{E}(|\varepsilon_{i,Y}|^{2+c_0}+|\varepsilon_{i,D}|^{2+c_0}|W_{i\cdot})\leq C_0$. Further assume that $|\sigma_{i,YD}|/(\sigma_{i,Y}\sigma_{i,D})\leq\rho_\sigma<1$. 
\end{assumption}

    Assumption \ref{as:DecisionMatrix} is a sub-Gaussianity condition for the covariates and IVs, with eigenvalue bounds for the population Gram matrix. Assumption \ref{as:error} imposes sub-Gaussianity and bounded conditional moment conditions on the error terms. We rule out the perfect correlation between error terms by bounding the correlation coefficient away from one.

\begin{assumption}\label{as:SI} Define the class of population precision matrices
\begin{equation}
\mathcal{U}\left(m_\omega, q, s_\omega\right):=\left\{\Omega=\left(\omega_{j k}\right)_{j, k=1}^p \succ 0:\|\Omega\|_1 \leq m_\omega, \max _{1 \leq j \leq p} \sum_{k=1}^p\left|\omega_{j k}\right|^q \leq s_\omega\right\} ,
\end{equation}
where $0 \leq q<1$. Suppose that $\Omega \in \mathcal{U}\left(m_\omega, q, s_\omega\right)$ with $m_\omega \geq 1$ and $s_\omega \geq 1$. 
\end{assumption}

    Assumption \ref{as:SI} assumes an approximately sparse precision matrix, which is required to establish the rate convergence of the CLIME estimator (\ref{eq:clime}). Such a sparse precision matrix assumption is widely used for inferential procedures in high-dimensional models \citep{van2014asymptotically,gold2020inference}. 

We now specify the sparsity assumption on model \eqref{eq:model} and its reduced form \eqref{eq: reduced form}. Define the sparsity index $s=\max\{\|\varphi\|_0+\|\pi\|_0,\|\psi\|_0+\|\gamma\|_0, \|\Psi\|_0+\|\Gamma\|_0\}$, and the probability limit of the weighting matrix $A$ as 
\begin{equation}\label{eq:def Astar}
    A^* := {\rm diag}\left(\mathbb{E}(Z_{i\cdot}Z_{i\cdot}^\top)\right) =  {\rm diag}\left(\sigma_{1z}^2,\sigma_{2z}^2,\cdots,\sigma_{p_zz}^2\right),
\end{equation}
where $\sigma_{jz}^2 := \mathbb{E}(Z_{ij}^2)$ for $j=1,2,\dots , p_z$. 
\begin{assumption}\label{as:asym}
Define $r_n := \frac{s_\omega m_\omega^{3-2q} s^{(3-q) / 2}(\log p)^{(7+\nu-q) / 2}}{n^{(1-q) / 2}}$, where $\nu \in (0,1)$ is an absolute constant. Suppose that 
\begin{enumerate}[(i)]
    \item $r_n \to 0$ as $n\to\infty$; 
    \item (IV Strength) $\sqrt{\QAst(\gamma)} \gg r_n$.
\end{enumerate}
\end{assumption} 

    Assumption \ref{as:asym}(i) imposes the sparsity conditions by requiring an upper bound on $s$. Assumption \ref{as:asym}(i) further implies $(\log p)^{7} = o(n^{c_\nu})$ with $c_\nu=7/(7+\nu)\in(0,1)$, which is required for the Gaussian approximation property used for the M test in the next section.  Assumption \ref{as:asym}(ii)  provides an asymptotic lower bound for the global IV strength $\sqrt{\QAst(\gamma)} \asymp \|\gamma\|_2$. In classical low-dimensional IV models, strong IVs satisfy $\|\gamma\|_2\gg n^{-1/2}$. Under an exact sparse precision matrix with $q=0$ and constant sparsity indices $s_\omega$, $m_\omega$, and $s$,  Assumption \ref{as:asym}(ii) becomes $\QAst(\gamma) \gg (\log p)^{7+\nu}/n$ and is almost equivalent to the strong IV condition $\|\gamma\|_2^2 \gg 1/n$ under low dimensions up to a logarithmic term. Here, we only need global, not individual, strength for high-dimensional $\gamma$; the latter is required for post-selection inference \citep{guo2018confidence,guo2018testing}.  

\begin{assumption}[Tuning Parameters]\label{as:tuning}Suppose the following conditions hold:
\begin{enumerate}[(i)]
\item  The Lasso tuning parameters satisfy $\lambda_{\ell n}=C_{\ell} \sqrt{\log p / n}$ for $\ell=1,2$, where $\min \left\{C_{1}, C_{2}\right\} \geq C_\lambda$ with a sufficiently large absolute constant $C_\lambda$. 
\item The tuning parameters for the CLIME estimator in (\ref{eq:clime}) satisfy $\mu_{\omega} = C_\omega\sqrt{\log p/n}$ with a sufficiently large absolute constant $C_\omega$.
\end{enumerate}
\end{assumption}

Assumption \ref{as:tuning} specifies the theoretical rates for the tuning parameters. Similar restrictions are commonly used in Lasso-based estimation and inference methods \citep{bickel2009simultaneous,javanmard2014confidence,gold2020inference,belloni2022high}. These rates are necessary for theoretical analysis and merely technical. We use the data-driven tuning parameter selection for practical implementation. Details are available in Section \ref{sec:simu} for simulations in the supplement. 

The following theorem shows the asymptotic normality of $\hbeta_A$. 
 \begin{theorem}\label{thm:betaAsympNorm} Suppose that Assumptions \ref{as:DecisionMatrix}-\ref{as:tuning} hold and $\pi=0$. Then,
 \begin{equation}
    (\hat{\rm V}_\beta)^{-1/2}\sqrt{n}(\hbeta_A - \beta)  \convd N(0,1),
 \end{equation}  
 where $\hat{\rm V}_\beta=\hQA(\gamma)^{-2}n^{-1}\sum_{i=1}^n(W_{i\cdot}^\top\hugamma)^2(\hat{\varepsilon}_{i,Y}-\hbeta_A\hat{\varepsilon}_{i,D})^2$ and $\hugamma = \hat\Omega(0_{p_x}^\top, (A\hat\gamma)^\top )^\top$. 
\end{theorem}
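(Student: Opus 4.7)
The starting point is identity \eqref{eq:IQdecom}. Under $\pi=0$, $\beta_A=\beta$ and $\IA(\gamma,\Gamma)=\beta\QA(\gamma)$, so the numerator collapses to $\hIA(\gamma,\Gamma)-\beta\hQA(\gamma)$. Substituting the debiasing formulas \eqref{eq:hatIA} and \eqref{eq:hatQA}, writing $D-X\hpsi-Z\hgamma=W((\psi-\hpsi)^\top,(\gamma-\hgamma)^\top)^\top+\varepsilon_2$ and analogously for $Y-X\hPsi-Z\hGamma$, and using $\varepsilon_1=\beta\varepsilon_2+e$, I would rewrite $\sqrt{n}[\hIA(\gamma,\Gamma)-\beta\hQA(\gamma)]$ as the leading stochastic term $n^{-1/2}\hugamma^\top W^\top e$ plus three groups of remainders: (a) the cross piece $n^{-1/2}(\huGamma-\beta\hugamma)^\top W^\top\varepsilon_2$ that survives the telescoping of the two debiasing terms in $\hIA$ with the single debiasing term in $\hQA$; (b) CLIME-projection bias forms of the shape $\sqrt{n}[\hSigma\hu-(0_{p_x}^\top,(A\cdot)^\top)^\top]^\top(\text{Lasso errors})$; and (c) pure quadratic Lasso remainders such as $\sqrt{n}\QA(\hgamma-\gamma)$ and $\sqrt{n}(\hgamma-\gamma)^\top A(\hGamma-\Gamma)$.

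The central step is to show each group is $o_p(1)$. For group (b), the CLIME KKT constraint gives $\|\hSigma\hu-(0_{p_x}^\top,(A\cdot)^\top)^\top\|_\infty\lesssim_p\mu_\omega\asymp\sqrt{\log p/n}$, and H\"older's inequality against the standard Lasso $L_1$ rates $\|\hgamma-\gamma\|_1,\|\hGamma-\Gamma\|_1\lesssim_p s\sqrt{\log p/n}$ (and their $\hpsi,\hPsi$ analogues) gives $O_p(s\log p/\sqrt{n})$, which vanishes under Assumption \ref{as:asym}. Group (c) follows from $\|\hgamma-\gamma\|_2^2\lesssim_p s\log p/n$ combined with the boundedness of $\|A\|_\infty$ implied by the sub-Gaussianity of $Z$. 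For group (a), I would exploit the identity $\Gamma=\beta\gamma$ (which holds only under the null) so that $u_\Gamma=\beta u_\gamma$ at the population level; by definition \eqref{eq: def: uhat}, $\huGamma-\beta\hugamma=\hOmega(0_{p_x}^\top,(A(\hGamma-\beta\hgamma))^\top)^\top$ with $\hGamma-\beta\hgamma=(\hGamma-\Gamma)-\beta(\hgamma-\gamma)$. Combining $\|\hOmega\|_1\lesssim_p m_\omega$ from the CLIME analysis (Lemma \ref{lem:CLIME}) with the Lasso $L_1$ rates yields $\|\huGamma-\beta\hugamma\|_1\lesssim_p m_\omega s\sqrt{\log p/n}$, which, paired against $\|n^{-1}W^\top\varepsilon_2\|_\infty=O_p(\sqrt{\log p/n})$, gives an $O_p(m_\omega s\log p/\sqrt{n})=o_p(1)$ bound under Assumption \ref{as:asym}.

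What remains is a conditional Lindeberg--Feller CLT for $\QA(\gamma)^{-1}n^{-1/2}u_\gamma^\top W^\top e=\QA(\gamma)^{-1}n^{-1/2}\sum_{i=1}^n(W_{i\cdot}^\top u_\gamma)e_i$: the summands are independent and mean zero, with uniformly bounded $(2+c_0)$-th conditional moments by Assumption \ref{as:error} and conditional variance bounded away from zero and infinity by Assumptions \ref{as:DecisionMatrix}--\ref{as:error}. Replacing $u_\gamma$ by $\hugamma$ contributes only $o_p(1)$ via the same $L_1$--$L_\infty$ H\"older argument used for group (a). Consistency of $\hat{\rm V}_\beta$ then follows from $\hQA(\gamma)\convp\QA(\gamma)$ (a byproduct of groups (b) and (c)), $\hat\varepsilon_{i1}-\hbeta_A\hat\varepsilon_{i2}=e_i+o_p(1)$ in the relevant weighted sense (using uniform residual bounds plus $\hbeta_A\convp\beta$), and sub-Gaussian concentration of $n^{-1}\sum_i(W_{i\cdot}^\top\hugamma)^2 e_i^2$ around its expectation. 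Slutsky's lemma then delivers the stated asymptotic normality. The main obstacle is the cancellation in group (a): the Gaussian limit ends up depending only on $u_\gamma$, not on $u_\Gamma$, because the two debiasing contributions in $\hIA$ combine with twice the single one in $\hQA$ in a way that is specific to the null; extracting a sharp $L_1$ bound on $\huGamma-\beta\hugamma$ \emph{through} the CLIME inverse, rather than just on its inputs, is the nontrivial ingredient that dictates the rate condition in Assumption \ref{as:asym}.
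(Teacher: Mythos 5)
Your plan follows essentially the same route as the paper's proof: the same telescoping of the two debiasing corrections in $\hIA(\gamma,\Gamma)$ against the single one in $\hQA(\gamma)$, the same H\"older ($L_1$ against $L_\infty$) bounds on the CLIME-constraint and Lasso remainders, a conditional Lindeberg CLT for $n^{-1/2}\ugamma^\top W^\top e$, and a final consistency argument for the variance estimator. Your group (a) cancellation is exactly what the paper packages as the term $u_{\pi_A}^\top W^\top\varepsilon_2/\sqrt{n}$ with $u_{\pi_A}=\Omega(0_{p_x}^\top,(A\pi_A)^\top)^\top$, which vanishes under the null because $\pi_A=0$; you work with the estimated direction $\huGamma-\beta\hugamma$ instead, which is equivalent.

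There is one point you should repair. The standard deviation of the leading term $n^{-1/2}\ugamma^\top W^\top e$ is of order $\|\ugamma\|_2\asymp\|\gamma\|_2$, and Assumption \ref{as:asym} does \emph{not} bound $\|\gamma\|_2$ away from zero --- it only lower-bounds it by a vanishing rate. Consequently, showing the remainder groups are $o_p(1)$ is not the right target: they must be $o_p(\|\gamma\|_2)$, and every consistency claim must be a ratio statement ($\hQA(\gamma)/\QA(\gamma)\convp 1$ and $\hat{\rm U}_\beta/{\rm U}_\beta\convp 1$ with ${\rm U}_\beta\asymp_p\|\gamma\|_2^2$, as in the paper's Steps 1 and 3), since plain consistency is vacuous when the limits themselves shrink. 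This is precisely why Assumption \ref{as:asym} is phrased as $o(1\wedge\|\gamma\|_2)$ rather than $o(1)$. Relatedly, the CLT you state for $\QA(\gamma)^{-1}n^{-1/2}\ugamma^\top W^\top e$ has variance of order $\|\gamma\|_2^{-2}$, which need not converge; the correct normalization is by $\sqrt{{\rm U}_\beta}$, and the theorem's statement only works because $\hat{\rm V}_\beta^{-1/2}$ is self-normalizing. The good news is that your remainder bounds, e.g.\ $O_p(m_\omega s\log p/\sqrt{n})$, do meet the sharper $o_p(\|\gamma\|_2)$ requirement by Lemma \ref{lem:asymp}, so the fix is to aim at the right target rather than to change the estimates.
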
 

Theorem \ref{thm:betaAsympNorm} shows that we can use $\hbeta_A$ for inference on the treatment effect when all IVs are valid. Under the null hypothesis (\ref{eq:h0}) where all IVs are valid, the estimator $\hbeta_A$ is an alternative to the existing post-selection procedures \citep{belloni2014,chernozhukov2015post} without depending on variable selection consistency. The suitability of $\hbeta_A$ is further demonstrated by the simulation results in Section \ref{app:betaSimul} of the supplement. In the next section, we use this initial estimator $\hbeta_A$ to construct the overidentification test for its convenience in deriving the asymptotic properties of the test statistic. In the proof of Theorem \ref{thm:betaAsympNorm} in Section \ref{app:b22}, we also deduce the asymptotic normality of $\hat\beta_A-\beta_A$ for $\pi\neq0$ under the alternative set defined as (\ref{eq: HAstar}) below, which is useful in analyzing the power of our test.  
\section{Overidentifying Restriction Test}
\label{sec:overid}
So far, we have developed an estimator $\hat\beta_A$ in \eqref{eq:hatbetaA}. In this section, we develop testing procedures for the IV exclusion restriction \eqref{eq:h0} using this estimator.
Mainly, we test the weighted version of restriction $A^{1/2}\pi=0$ with $A = {\rm diag}(Z^\top Z/n)$. First, subtracting $D\beta_A$ from both sides of (\ref{eq:model}) yields
\begin{equation}
Y - D\beta_A = X\varphi_A + Z\pi_A + e_A, 
\label{deY}
\end{equation}
where $\varphi_A=\varphi-\psi(\beta_A-\beta)$, $\pi_A=\pi-\gamma(\beta_A-\beta)$ and $e_A=\varepsilon_Y-\varepsilon_D\beta_A$. 
Note that we identify $\pi_A$, not the true $\pi$, from \eqref{deY}. When $\gamma\neq 0$, $\pi = 0$ implies $\beta_A=\beta$ and hence $\pi_A=0$. 
\par Next, we derive the if and only if condition for equivalence between $\pi_A=0$ and $\pi=0$. To see this, we define the weighted quadratic forms of $\pi_A$ and $\pi$ as $\QA(\pi_A)=\pi_A^\top A\pi_A$ and $\QA(\pi)=\pi^\top A\pi$.  Following from the definition of $\beta_A$ in \eqref{betaA}, we establish the following condition between $\QA(\pi_A)$ and $\QA(\pi)$
\begin{equation}\label{eq:Q}
    \QA(\pi_A)=  \QA(\pi)\left[1-{\rm R}_A^2(\pi,\gamma)\right],
\end{equation}  
where ${\rm R}_A(\pi,\gamma)=\dfrac{\IA(\pi,\gamma)}{\sqrt{\QA(\pi)\QA(\gamma)}}\boldsymbol{1}\{\QA(\pi)>0,\QA(\gamma)>0\}$ is the relatedness between $A^{1/2}\pi$ and $A^{1/2}\gamma$. By (\ref{eq:Q}), if $|{\rm R}_A(\pi,\gamma)|\neq 1$, $\pi_A = 0$ if and only if $\pi = 0$ and hence it is equivalent to work with the following hypothesis for testing the null in \eqref{eq:h0},
\begin{equation}
A^{1/2}\pi_A=0.
\label{eq: weighted pi A}
\end{equation} 
 We interpret the condition $|{\rm R}_A(\pi,\gamma)|\neq1$ in the following Remark \ref{rem: R neq 1}. 
\begin{remark}\label{rem: R neq 1}
The inequality $|{\rm R}_A(\pi,\gamma)|\neq 1$ means that the weighted vectors $A^{1/2}\pi$ and $A^{1/2}\gamma$ are not perfectly parallel. A specific counterexample is $p_z=1$, which entails that $|{\rm R}_A(\pi,\gamma)| = 1$. This is why our test, like any other test for IV validity, requires overidentifying conditions. In Section \ref{app: pi piA} of the supplement, we provide more detailed discussions with several examples concerning ${\rm R}_A(\pi,\gamma)$ and the relation between $A^{1/2}\pi$ and $A^{1/2}\pi_A$. In later discussions about the power of the tests, we assume  $|{\rm R}_{A^*}(\pi,\gamma)|$ is bounded away from 1 in the alternative sets \eqref{eq: HAstar} and \eqref{eq: H Q}, where ${\rm R}_{A^*}(\pi,\gamma)$ is defined in (\ref{eq: RAstar}), and $A^*$ defined in (\ref{eq:def Astar}) is a population version of $A$.
\end{remark}
\par In the following subsections, we propose the testing procedure for the null hypothesis in \eqref{eq: weighted pi A}. Section \ref{subsec: M test} introduces a testing procedure for \eqref{eq: weighted pi A} using the maximum norm $\|A^{1/2}\pi_A\|_\infty$. Intuitively, the maximum test is powerful when $\pi$ is sparse but with relatively large absolute value of $\pi_j$. However, when there are many locally invalid IVs, the maximum test might be less powerful than a quadratic form based test. Inspired by the principle of power enhancement \citep{fan2015power, kock2019power}, in Section \ref{sec: power enhance}, we construct an asymptotically zero quadratic statistic by an estimator of $\QA(\pi_A)$ and use it to enhance the power of the original M test. 
 
\subsection{The M Test}\label{subsec: M test}

We start with constructing an estimator of $\pi_A$ and apply it to construct our proposed maximum test. Substituting $\beta_A$ by $\hbeta_A$ in equation \eqref{deY}, we have
\begin{equation} 
\label{eq:pluginbetahat}
Y-D\widehat{\beta}_A = X\widecheck{\varphi}_A + Z\widecheck{\pi}_A + \widecheck{e}_A, 
\end{equation} 
where\footnote{Throughout the paper, the subscript $A$ stands for a transformed variable or parameters using the unknown $\beta_A$. In addition, for generic notation $\theta$, $\check \theta_A$ stands for the transformed variables or parameters using the estimator $\hat\beta_A$, $\hat \theta$ denotes Lasso estimators or residuals, and $\tilde \theta$ represents debiased Lasso estimators. } $\widecheck{\varphi}_A=\varphi-\psi(\widehat{\beta}_A-\beta)$, $\widecheck{\pi}_A=\pi-\gamma(\widehat{\beta}_A-\beta)=\pi_A-\gamma(\widehat{\beta}_A-\beta_A)$ and $\widecheck{e}_A=\varepsilon_Y-\varepsilon_D\widehat{\beta}_A=e_A-\varepsilon_D(\beta_A-\hbeta_A)$. The left hand side, $Y-D\hbeta_A$, is analogous to the ``residual" in the Sargan test.
We apply Lasso to estimate $\widecheck \pi_A $ from \eqref{eq:pluginbetahat}, 
\begin{equation}\label{eq: Lasso pi}
\{\widehat{\varphi}_A,\widehat{\pi}_A\} = \arg\min_{\widecheck{\varphi}_A,\widecheck{\pi}_A} \dfrac{1}{n}\|Y-D\widehat{\beta}_A-X\widecheck{\varphi}_A-Z\widecheck{\pi}_A\|_2^2 + \lambda_{3n}(\|\widecheck{\varphi}_A\|_1 + \|\widecheck \pi_A\|_1),
\end{equation}
where $\lambda_{3n}$ is a positive tuning parameter selected by cross-validation in practice. The bias-corrected estimator for $({\varphi}_A^{\top}, \pi_A^{\top})^{\top}$ is given by  
\begin{equation}\label{eq:DB}
\left(\begin{array}{c}
    \tilde{\varphi}_A    \\
   \tilde{\pi}_A    
\end{array}\right) = \left(\begin{array}{c}
    \hat{\varphi}_A    \\
   \hat{\pi}_A    
\end{array}\right) + \dfrac{1}{n}\hOmega W^\top (Y - D\hbeta_A - X\hphi_A - Z\hpi_A),  
\end{equation}
where $\hOmega$ is defined by (\ref{eq:clime}). We use this bias-corrected $\tilde{\pi}_A$ in the maximum test.

Next, we give the approximate distribution of $\tilde{\pi}_A$. Let $\hOmega_z$ be the $p_z\times p$ submatrix composed of the last $p_z$ rows of $\hOmega$. 
We can deduce the following approximation under the null hypothesis $\pi=0$:
\begin{equation}\label{eq:DB-error2} 
\begin{aligned}
  \sqrt{n} A^{1/2}\tilde{\pi}_A \approx A^{1/2}\left(I_{p_z}-\dfrac{\hgamma \hgamma^\top A}{\hQA(\gamma)}\right)\dfrac{\hOmega_z W^\top e_A }{\sqrt{n}}.
\end{aligned}
\end{equation} 
By the form of the RHS of (\ref{eq:DB-error2}), the asymptotic covariance matrix of $ \sqrt{n} A^{1/2}\tilde{\pi}_A $ can be approximated by
\begin{equation}\label{eq:hatVA}
    \widehat{\rm V}_A =  \dfrac{\widehat{A}_0\hOmega_z\sum_{i=1}^n W_{i\cdot}W_{i\cdot}^\top\hat{e}_{iA}^2\hOmega_z^\top \hat{A}_0^\top}{n},
\end{equation}
where $\widehat{A}_0 = A^{1/2}\left(I_{p_z}-\dfrac{\hgamma \hgamma^\top A}{\hQA(\gamma)}\right)$ and $\widehat{e}_{iA}=Y_i-D_i\widehat{\beta}_A-X_{i\cdot}^\top\widehat{\varphi}_A-Z_{i\cdot}^\top\widehat{\pi}_A$. By  \citet{chernozhukov2013gaussian}, the distribution of $\sqrt{n}\|A^{1/2}\tilde\pi_A\|_\infty$
 can be well approximated by that of $\|\eta\|_\infty$, where $\eta\sim N(0, \widehat{\rm V}_A )$ conditionally on the observed data. 
 \par The \emph{M statistic} is defined as
 \begin{equation}\label{eq: def Mn}
M_n(A):=\sqrt{n}\|A^{1/2}\tilde\pi_A\|_\infty.
 \end{equation} 
Then, under any significance level $\alpha$, the M test rejects the null hypothesis when $M_n(A) > {\rm cv}_A(\alpha)$,  
 where the critical value ${\rm cv}_A(\alpha)$ is given as 
 \begin{equation}\label{eq: cv no hat}
     {\rm cv}_A(\alpha) = \text{inf}\{x\in\mathbb{R}:{\rm Pr}(\|\eta\|_\infty \leq x | \widehat{\rm V}_A ) \geq 1-\alpha\}. 
 \end{equation}
In practice, ${\rm cv}_A(\alpha)$ can be approximated by simulating independent draws $\eta\sim N(0,\widehat{\rm V}_A)$, following \citet{chernozhukov2013gaussian} and \citet {zhang2017simultaneous}. 

\par We then define the alternative set of $\pi$ for theoretical justification of the M test. Recall $A^*$ defined in (\ref{eq:def Astar})
is the probability limit of the weighting matrix $A$. Define the relatedness between $A^{*1/2}\pi$ and $A^{*1/2}\gamma$ as 
\begin{equation}\label{eq: RAstar}
     {\rm R}_{A^*}(\pi,\gamma) := \dfrac{\IAst(\pi,\gamma)}{\sqrt{\QAst(\pi)\QAst(\gamma)}}\bOne\{\QAst(\pi)>0,\QAst(\gamma)>0\},
\end{equation} 
similar to the relatedness in (\ref{eq:Q}) with the weighting matrix $A$. 
Treating all other parameters such as $\beta$, $\gamma$ as given, we define the alternative set of $\pi$, for any $t>0$, as
\begin{equation}\label{eq: HAstar}
    \mathcal{H}_{M}(t) := \{\pi\in\mathbb{R}^p:\|{A^*}^{1/2}\pi_{A^*}\|_\infty = t\sqrt{\log p_z / n},\ |{\rm R}_{A^*}(\pi,\gamma)|\leq c_r \},
\end{equation}
for some absolute constant $c_r\in(0,1)$, where 
\begin{equation}\label{eq: def pi A star}
    \pi_{A^*} := \pi -\gamma (\beta_{A^*} - \beta),
\end{equation}
and $\beta_{A^*}:=\IAst(\gamma,\Gamma)/\QAst(\gamma)$ are defined similarly to $\pi_A$ below (\ref{deY}) and $\beta_A$ in (\ref{betaA}) with $A$ replaced by $A^*$. We have the following technical assumptions, which are important for the theoretical properties of the M test.
\begin{assumption}\label{as: Lasso tuning pi}
The Lasso tuning parameter for (\ref{eq: Lasso pi}) satisfies $\lambda_{3n} = C_{3}\sqrt{\log p/ n}$, where $C_{3}\geq C_\lambda \left(1+\|\pi\|_2/\|\gamma\|_2\right)$ with some sufficiently large absolute constant $C_\lambda$.
\end{assumption}
\begin{remark}
The rate specified in Assumption \ref{as: Lasso tuning pi} is the same as in Assumption \ref{as:tuning}(i). Note that the lower bound for the constant $C_3$ is determined by $\|\pi\|_2/\|\gamma\|_2$ since the ``residual" $Y-D\hbeta_A$ in \eqref{eq:pluginbetahat} depends on the estimator $\hat\beta_A$, and the estimation error $\hat\beta_A - \beta_A$ relates to $\|\pi\|_2/\|\gamma\|_2$ when $\pi\neq0$.  
\end{remark}

\par Recall that $\widehat{\rm V}_A$ defined in (\ref{eq:hatVA}) estimates the asymptotic variance of $\sqrt{n}A^{1/2}\tilde\pi_A$, whose limiting form ${\rm V}_{A^*}$ is defined in  (\ref{def:VAstar}) in the supplement. 
The following assumption is needed to establish that the diagonal elements of ${\rm V}_{A^*}$ are lower-bounded away from zero, which is required for the theoretical justification of the maximum test.  
\begin{assumption}\label{as: two strong IV}
Suppose that there exists some absolute constant $C_\gamma \in (0,1)$ such that 
\[\dfrac{\max_{j\in[p_z]} \sigma_{jz}^2 \gamma_{j}^2}{\sum_{j\in[p_z]} \sigma_{jz}^2\gamma_{j}^2} \leq C_\gamma < 1,\] for all $j\in[p_z]$, where $\sigma_{jz}^2$ is defined in (\ref{eq:def Astar}). 
\end{assumption}
Assumption \ref{as: two strong IV} can be interpreted as an overidentification condition: the global weighted IV strength $\sqrt{\QAst(\gamma)}$ cannot be dominated by only one of the IVs. In other words, the model needs to be overidentified by two dominating IVs with the same order of strength. 
\begin{theorem}[asymptotic size and power of the M test]\label{thm:M test}Suppose that Assumptions \ref{as:DecisionMatrix}-\ref{as: two strong IV} hold. Then, the statistic $M_n(A)$ defined by (\ref{eq: def Mn}) satisfies the following:
\begin{enumerate}[(a)]
\item When $\pi = 0$, 
 \begin{equation}\label{eq: M test size}
 \sup_{\alpha\in (0,1) } \left| \Pr\left(  M_n(A) > {\rm cv}_A(\alpha)\right) - \alpha \right| \to 0,
\end{equation}
where ${\rm cv}_A(\alpha)$ is defined in (\ref{eq: cv no hat}). 
\item Suppose that $p_z\to\infty$ as $n\to\infty$. There exists some absolute constant $C_\pi$ such that for any constant $\epsilon > 0$ and $\alpha\in(0,1)$, 
 \begin{equation} \label{eq: M test power}
\inf_{\pi \in \mathcal{H}_{M}(C_\pi + \epsilon)} \Pr\left(  M_n(A) > {\rm cv}_A(\alpha) \right)  \to 1, 
\end{equation}
where $\mathcal{H}_{M}(\cdot)$ is defined by (\ref{eq: HAstar}). 
\end{enumerate} 
\end{theorem}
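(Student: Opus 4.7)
The plan is to prove the two parts of Theorem \ref{thm:M test} by treating $M_n(A)=\sqrt n\,\|A^{1/2}\tilde\pi_A\|_\infty$ as a maximum of an approximately Gaussian vector and then applying the Gaussian-approximation/anti-concentration machinery of \citet{chernozhukov2013gaussian}. The starting point is the debiasing expansion \eqref{eq:DB-error} together with the refined approximation \eqref{eq:DB-error2}. Since $\widecheck\pi_A=\pi_A-\gamma(\hbeta_A-\beta_A)$ and $\pi_A=\pi-\gamma(\beta_A-\beta)$, I would write
\begin{equation*}
\sqrt n\,A^{1/2}\tilde\pi_A \;=\; \sqrt n\,A^{1/2}\pi_{A} \;+\; \frac{1}{\sqrt n}\,\widehat A_0\,\hOmega_z W^\top e_A \;+\; R_n,
\end{equation*}
where the leading stochastic term is handled by Proposition \ref{prop:debias} (which controls $\sqrt n(\hbeta_A-\beta)$ via the same projection trick) and $R_n$ collects the CLIME bias, the quadratic Lasso remainder from \eqref{eq:pluginbetahat}, and the plug-in error from substituting $\hbeta_A$. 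The first task is to show $\|R_n\|_\infty=o_p(1/\sqrt{\log p_z})$ using Assumption \ref{as:asym}, the $L_1$ rates of the CLIME estimator, the Lasso $L_1$ rate $s\sqrt{\log p/n}$ applied to $\widehat\pi_A,\widehat\varphi_A$ (Assumption \ref{as: Lasso tuning pi} is used here to absorb the dependence on $\|\pi\|_2/\|\gamma\|_2$), and standard sub-Gaussian tail bounds. This uniform bound is the main technical obstacle because the M test's critical value lies at the scale $\sqrt{\log p_z}$, so I cannot afford a polylogarithmic loss.

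For part (a), once $R_n$ is controlled, I would apply the CCK Gaussian approximation to the vector $n^{-1/2}\widehat A_0\hOmega_z W^\top e_A = n^{-1/2}\sum_{i=1}^n \xi_{i,n}$ with $\xi_{i,n}=\widehat A_0\hOmega_z W_{i\cdot}\,e_{iA}$. Under $\pi=0$ we have $e_A=\varepsilon_1-\beta\varepsilon_2$, so Assumption \ref{as:error} delivers sub-exponential envelopes and bounded $(2+c_0)$-th conditional moments, and Assumption \ref{as:asym} gives $(\log p)^7=o(n^{c_\nu})$, matching the CCK rate condition. The critical step is a diagonal lower bound $\min_j[{\rm V}_{A^*}]_{jj}\gtrsim 1$: writing $A_0^* = A^{*1/2}(I_{p_z} - \gamma\gamma^\top A^*/\QAst(\gamma))$, a direct computation shows $[A_0^*A_0^{*\top}]_{jj}=\sigma_{jz}^2\bigl(1-\sigma_{jz}^2\gamma_j^2/\|A^{*1/2}\gamma\|_2^2\bigr)$, which is bounded below precisely by Assumption \ref{as: two strong IV}. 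Combined with $c_\Sigma\le\lambda_{\min}(\Sigma)$ and positive-definiteness of the conditional covariance of $(\varepsilon_1,\varepsilon_2)$ (the latter from $\rho_\sigma<1$ in Assumption \ref{as:error}), this yields the required nondegeneracy. Consistency of $\widehat{\rm V}_A$ in max norm (the HC-style variance estimator using Lasso residuals $\hat e_{iA}$) then follows from standard concentration plus the Lasso rates, and CCK's anti-concentration lemma converts all these approximations into the uniform size claim \eqref{eq: M test size}.

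For part (b), on $\mathcal{H}_M(t)$ the vector $\sqrt n A^{1/2}\pi_{A^*}$ has $\|\cdot\|_\infty=t\sqrt{\log p_z}$. The same decomposition as above, with $\pi_A$ replaced by $\pi_{A^*}$ up to a $\|A^{1/2}-A^{*1/2}\|$-error that is $O_p(\sqrt{\log p/n})$ and hence negligible in the sense of $\sqrt n$ times this error against the signal, gives
\begin{equation*}
M_n(A) \;\ge\; \sqrt n\,\|A^{1/2}\pi_{A^*}\|_\infty - \bigl\| n^{-1/2}\widehat A_0\hOmega_z W^\top e_A\bigr\|_\infty - \|R_n\|_\infty.
\end{equation*}
The first term equals $(C_\pi+\epsilon)\sqrt{\log p_z}$ on $\mathcal{H}_M(C_\pi+\epsilon)$, while by part (a)'s Gaussian approximation the middle term is $O_p(\sqrt{\log p_z})$ and the critical value ${\rm cv}_A(\alpha)$ is of the same order; choosing $C_\pi$ larger than the limiting constant of ${\rm cv}_A(\alpha)/\sqrt{\log p_z}$ plus the leading constant of the noise maximum (both finite under the diagonal lower bound) yields \eqref{eq: M test power}. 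The restriction $|{\rm R}_{A^*}|\le c_r<1$ in $\mathcal{H}_M$ is what ensures $\pi_{A^*}$ inherits the prescribed maximum norm from $\pi$ (via \eqref{eq:Q} applied coordinate-wise after writing $\pi_{A^*}=\pi-\gamma(\beta_{A^*}-\beta)$), and it also keeps the residual analysis away from the degenerate direction identified in Remark \ref{rem: R neq 1}. The main obstacle for part (b), beyond the uniform remainder control already noted, is ensuring that the signal term genuinely surfaces in $\|\cdot\|_\infty$ rather than being cancelled by the correction $-\gamma(\hbeta_A-\beta_A)$; this is where the construction of $\widehat A_0$ (projecting out the $\gamma$-direction) is essential, and where Proposition \ref{prop:debias}'s explicit expansion of $\hbeta_A-\beta$ will be invoked to show the cancellation is controlled rather than exact.
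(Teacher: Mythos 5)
Your proposal is correct and follows essentially the same route as the paper's proof: the same decomposition of $\sqrt n\,A^{1/2}\tilde\pi_A$ into signal, a Gaussian-approximable linear term, and a remainder controlled at the $o_p(1/\log p)$ scale (Proposition \ref{prop:debias}), the same use of the CCK Gaussian approximation and anti-concentration with the diagonal lower bound $[A_0^*A_0^{*\top}]_{jj}=\sigma_{jz}^2\bigl(1-\sigma_{jz}^2\gamma_j^2/\QAst(\gamma)\bigr)\gtrsim1$ from Assumption \ref{as: two strong IV} (Propositions \ref{prop:A} and \ref{prop:GaussianApprox}) plus max-norm consistency of $\widehat{\rm V}_A$ (Proposition \ref{prop:VA}), and the same power argument comparing the $(C_\pi+\epsilon)\sqrt{\log p_z}$ signal against noise and critical value both of order $\sqrt{\log p_z}$, including the negligibility of $\|A^{1/2}\pi_A-A^{*1/2}\pi_{A^*}\|_\infty$ relative to the signal. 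No substantive gaps.
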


\begin{remark}[Power for low dimensional IVs]\label{rem:pz}
    In Theorem \ref{thm:M test}(b), we assume $p_z\to\infty$ for simplicity. When $p_z$ is fixed, the $\sqrt{\log p_z}$ in the alternative set (\ref{eq: HAstar}) can be replaced by any sequence that diverges to infinity. Hence, the alternative can be detected at the rate $n^{-1/2}$ when $p_z$ is fixed, which is aligned with the Sargan test under a fixed $p$. 
\end{remark}

\begin{remark}[The range of $\pi$ for power analysis]  \label{rem:range pi} For conciseness of exposition, we only display the local power of the M test in Theorem \ref{thm:M test}(b) under the alternative set (\ref{eq: HAstar}). Our test has asymptotic power 1 not only for a vector $\pi$ satisfying $\|A^{*1/2}\pi_{A^*}\|_\infty = C_\pi\sqrt{\log p_z/n}$ as specified in (\ref{eq: HAstar}), but also any $\pi\neq0$ such that $\|A^{*1/2}\pi_{A^*}\|_\infty \gg \sqrt{\log p_z/n}$, as long as $\|\pi\|_2/\|\gamma\|_2$ is bounded so that the variance of the error term $e_A$ in the regression (\ref{deY}) is finite. Under the lower bound of IV strength by Assumption \ref{as:asym}(ii), the bound of $\|\pi\|_2/\|\gamma\|_2$ holds for the alternative set (\ref{eq: HAstar}). This result also applies to the power analysis for the Q statistic in Theorem \ref{thm:Q}(b).  
\end{remark}

\begin{remark}[Power comparison to $\chi^2$-test] \label{rem:power max chi2}
Note that when $p>n$ and $p_x\to\infty$, the $\chi^2$-type tests are infeasible. We thus focus on $p<n$ and $p\to\infty$ for power comparison, under which both the $\chi^2$-type test and our M test are feasible. 
The previous studies \citep{donald2003empirical,Okui2011,chao2014testing,kolesar2018minimum} have established that the $\chi^2$-type tests have asymptotic power 1 if the vector $\pi_{A^*}$ defined below (\ref{eq: HAstar}) satisfies  $\|\pi_{A^*}\|_2 \gg p_z^{1/4}/\sqrt{n}$. By Theorem \ref{thm:M test}, when $\|\pi_{A^*}\|_\infty  \gg \sqrt{\log p_z/n} $, our proposed M test has asymptotic power 1. Under the sparsity condition $s_\pi\log p_z = o(\sqrt{p_z})$, $\|\pi_{A^*}\|_2 \gg p_z^{1/4}/\sqrt{n}$ implies $\|\pi_{A^*}\|_\infty  \gg \sqrt{\log p_z/n}$. That means our proposed M test achieves power 1 for the regime under which the $\chi^2$-type tests achieve power 1. On the other hand, there exist certain cases (e.g., $s_\pi=1,p_z\to\infty,\|\pi_{A^*}\|_\infty=\|\pi_{A^*}\|_2=\log p_z/\sqrt{n}$) under which the M test achieves asymptotic power 1, but the $\chi^2$ test does not. Thus, if $s_\pi\log p_z = o(\sqrt{p_z})$, the M test has higher power than the $\chi^2$ test even when $p<n$. Note that, when $p_z\gtrsim n^{2/3}$, the sparsity condition $s_\pi\log p_z = o(\sqrt{p_z})$ is implied by Assumption \ref{as:asym}(i). 
\end{remark}


\subsection{Power Enhancement}
\label{sec: power enhance}
\par As discussed earlier, the M test might not be powerful enough when there are many locally invalid IVs. In this case, a test statistic used to estimate the weighted quadratic form $\QA(\pi_A)=\pi_A^\top A\pi_A$ can be leveraged for power enhancement. 
\par Theorem \ref{thm:M test} shows that the M statistic $M_n(A)$ defined by (\ref{eq: def Mn}) satisfies $\Pr(M_n(A) > {\rm cv}_A(\alpha) ) \to \alpha$ as $n\to\infty$. Suppose that we have another statistic $q_n(A)\convp 0$ as $n\to\infty$ under the null hypothesis. Define $PM_n(A) := M_n(A)\vee q_n(A)$. Then, the PM test,  
\begin{equation}\label{eq: PM test}
    \mathcal{PM}_A(\alpha) =  \bOne\{PM_n(A) > {\rm cv}_A(\alpha)\},
\end{equation}
also has asymptotic size $\alpha$ with power at least the same as that of the M test $\bOne\{M_n(A) > {\rm cv}(\alpha) \}$. We then construct an asymptotically zero statistic $q_n(A)$ named as the Q test statistic in (\ref{eq:def Qn}) that measures the magnitude of $\QA(\pi_A)$. This Q test is only for power enhancement and we do not perform this test individually.
\par Following the same idea about the debiased estimators of $\QA(\gamma)$ and $\IA(\gamma,\Gamma)$ in (\ref{eq:hatQA}) and (\ref{eq:hatIA}), we construct the following bias-corrected estimator of $\QA(\pi_A)$:  
\begin{equation}
\hQA(\pi_A) = \QA(\widehat{\pi}_A) + 2\hat\pi_A^\top A (\tilde\pi_A-\hat\pi_A), 
\label{eq: bias correction QA}
\end{equation}
where $\hat\pi_A$ and $\tilde\pi_A$ are defined in (\ref{eq: Lasso pi}) and (\ref{eq:DB}) respectively. We then define the Q statistic as 
\begin{equation}\label{eq:def Qn}
    q_n(A) := \sqrt{n}\log p\hQA(\pi_A).
\end{equation} 
  For ease of discussion, we define a new alternative set 
  \begin{equation}\label{eq: H Q}
      \mathcal{H}_Q(t) := \{\pi\in\mathbb{R}^{p_z}:\|\pi_{A^*}\|_2 = tn^{-1/4},\ |{\rm R}_{A^*}(\pi,\gamma)|\leq c_r\},
  \end{equation}  with $|{\rm R}_{A^*}(\pi,\gamma)|$ defined by (\ref{eq: RAstar}) and the absolute constant $c_r\in(0,1)$ used in (\ref{eq: HAstar}). We have the following results in favor of the asymptotically zero Q statistic $q_n(A)$.
\begin{theorem}
    \label{thm:Q}Suppose that Assumptions \ref{as:DecisionMatrix}-\ref{as: Lasso tuning pi} hold. Then the estimator $\hQA(\pi_A)$ has the following decomposition: 
\begin{equation}\label{eq: lem QA decom}
\hQA(\pi_A)  = \QA(\pi_A) + \Delta_Q + \dfrac{2u_{\pi_A}^\top W^\top e_A}{n},
\end{equation}
where $u_{\pi_A} = \Omega(0_{p_x}^\top, (A\pi_A)^\top )^\top$ and $|\Delta_Q| = o_p\left(\dfrac{1+\epsilon^2}{\sqrt{n}\log p}\right)$ when $\pi\in\mathcal{H}_Q(\epsilon)$ for any $\epsilon>0$ with $\mathcal{H}_Q(\epsilon)$ defined in (\ref{eq: H Q}). Therefore, the Q statistic $q_n(A)$ 
defined by (\ref{eq:def Qn}) satisfies the following:  \begin{enumerate}[(a)]
    \item When $\pi=0$, $q_n(A)\convp{0}$, and hence for any $\alpha\in(0,1)$,
    \[\Pr\left(q_n(A) > {\rm cv}_A(\alpha)\right)\to 0,\] as $n\to\infty$, where ${\rm cv}_A(\alpha)$ is defined by (\ref{eq: cv no hat}). 
    \item When $\|\pi_{A^*}\|_2 \gtrsim n^{-1/4}$, $q_n(A) - c\sqrt{\log p} \convp \infty$ for any absolute constant $c$, and hence for any $\alpha\in(0,1)$ and constant $\epsilon > 0$, 
    \[\inf_{\pi\in\mathcal{H}_Q(\epsilon)}\Pr\left(q_n(A) > {\rm cv}_A(\alpha)\right)\to 1, \quad \text{as} \quad n\to\infty.\] \label{th3b}
\end{enumerate}     
\end{theorem}  
\begin{remark}\label{rem: Q chi2}
    We briefly discuss the power performances here. For the Q statistic, Theorem \ref{thm:Q}(\ref{th3b}) shows it has asymptotic power 1 when (a) $\|\pi_{A^*}\|_2 \gtrsim n^{-1/4}$. To achieve asymptotic power 1, the $\chi^2$-type tests need (b) $ \sqrt{n}\|\pi_{A^*}\|_2 / p_z^{1/4} \to\infty$.  When $p_z\gtrsim n$, condition (b) implies condition (a), and thus the Q test requires a weaker condition to achieve power 1 compared to the $\chi^2$-type tests. Hence, the asymptotically zero Q statistic guarantees higher asymptotic power than the $\chi^2$-type tests when $p_z \gtrsim n$. Here, we emphasize again that our test is feasible when $p > n$ with $p_x\to\infty$,  while the $\chi^2$-type tests break down. 
\end{remark}
 \par Under the alternative with $\pi\neq 0$, we still need the sparsity condition Assumption \ref{as:asym}(i) to prove the consistency of $\hat\pi_A$ in (\ref{eq: Lasso pi}), which is required to establish the asymptotic properties of the test statistic $q_n(A)$. Under this particular sparsity condition, the conditions required for Q statistic $q_n(A)$ achieving asymptotic power 1 are not weaker than those required for the M test. Thus, the power enhancement by Q statistic $q_n(A)$ compared to the original M test is not visible from the theoretical point of view. 
 \par Nevertheless, the power enhancement procedure is still favorable in practice. As mentioned in the paragraph right after Remark \ref{rem: R neq 1}, in practice, there can be many locally invalid IVs with small $|\pi_j|$.  Our numerical studies in Section \ref{sec:simu} show that power enhancement is evident for many locally invalid IVs, with the type I error almost unaffected. 
 
 Practitioners can easily implement our test with a high-dimensional dataset\footnote{The \texttt{R} code for implementing the above method is available at \url{https://github.com/ZiweiMEI/PMtest}.}. The steps for the PM test are summarized in Algorithm \ref{algorithm1}.
 
\begin{algorithm}[H]
\footnotesize
\caption{\label{algorithm1} Power-enhanced Maximum (PM) test}
\hspace*{0.01in} 
\begin{algorithmic}[1]
\State Estimate the reduced-form model parameters in \eqref{eq: reduced form} using Lasso by  (\ref{Theta}) and (\ref{theta}).
\State    Get the debiased estimator $\hbeta_A$ in \eqref{eq:hatbetaA}, following the procedure in Section \ref{subsec:validity}.
\State    Regress the ``residual", $Y-D\hbeta_A$, against $X$ and $Z$, using Lasso as in \eqref{eq: Lasso pi}.
\State    Get the debiased $\tilde \pi_A$ as in \eqref{eq:DB}.
\State    Compute $\hat{\rm V}_A$ in (\ref{eq:hatVA}) and the M statistic $M_n(A)$ in (\ref{eq: def Mn}).
\State    Compute the critical value ${\rm cv}_A(\alpha)$ in \eqref{eq: cv no hat} by simulating $\eta\sim N(0,\hat{\rm V}_A)$.
\State    Construct the debiased quadratic form $\hQA(\pi_A)$ as in \eqref{eq: bias correction QA}.
\State    Compute the Q statistic $q_n(A)$ defined by \eqref{eq:def Qn}.
\State    Perform the PM test. Reject the null hypothesis if $M_n(A)\vee q_n(A) > {\rm cv}_A(\alpha)$. 
\end{algorithmic}
\end{algorithm}

\section{Empirical Example}
\label{sec:emp}
To illustrate the usefulness of the proposed test with high-dimensional data, we revisit the empirical analysis of the effect of trade on economic growth (\citealp{frankel1999does}, FR99 hereafter). \cite{fan2018nonparametric} searched for instruments from all geographical variables following the celebrated gravity theory of trade. In this paper, we update all data to 2018 and expand the set of IVs from \citet{fan2018nonparametric} to include potentially invalid IVs from World Bank economic data. Following the literature, the dependent variable $Y$ is the logarithm of GDP. There are $n=159$ countries, and $p=p_z+p_x=58$, which includes (1) the constructed trade $\widehat{T}$ proposed by FR99 under the guidance of the gravity theory of trade, (2) the logarithms of population $X_1$ and land area describing the sizes of the countries $X_2$ and (3) other covariates and candidate IVs concerning geographical characteristics, energy, the environment and natural resources, and business activities\footnote{$\hat{T}$, $X_1$ and $X_2$ are instruments and covariates that have been widely recognized in the literature since FR99. To make better comparisons to the literature, we do not penalize them in the Lasso problems, following the suggestions of \cite{belloni2014}.}. The dependent variable, the endogenous variable, the original FR99 covariates, and a subset of the baseline instruments used in \cite{fan2018nonparametric}, together with three additional and possibly invalid IVs, are summarized in Table 
\ref{tab:data} of the supplement's Section \ref{app:empt1}. We perform overidentification tests using this (sub)set of IVs.
\begin{table}[h]
\centering
\caption{P-values of different tests.}
\label{tab:empirical}
\begin{tabular}{cccc}
\hline\hline 
Instrument Sets&  MCD & M & PM   \\
\hline 
$\{Z_1,Z_2,\cdots,Z_{16}\}$ & 0.062 & 0.029 & 0.000   \\
$\{Z_1,Z_2,\cdots,Z_{13}\}$ & 0.317 & 0.275 & 0.275  \\
\hline \hline 
\end{tabular} 
\end{table}

We standardize the data so that all variables have zero sample mean and unit standard deviation, under which the weighting matrix $A$ is the identity matrix.  Table \ref{tab:empirical} shows the p-values of different tests performed on the real data. We first test the correct specifications of all 16 instruments in Table \ref{tab:data} and expect the null hypothesis to be rejected since at least some of the instruments, namely $Z_{14}$ (air pollution), $Z_{15}$ (access to electricity) and $Z_{16}$ (business environment), are likely to have a direct effect on economic growth. The variable dimensions in this case are $p_x=42$ and $p_z=16$. We can see that the M test and PM test reject the null hypothesis at the 5\% and 1\% levels, respectively, while MCD fails to reject the validity of IVs at the 5\% level. 
\par Next, we test whether a previously studied subset of IVs is valid. This application shows that empirical researchers can also use our method to test whether a subset of IVs is valid. Here, we select the subset of IVs used in \citet{fan2018nonparametric}, including $Z_1,Z_2,\cdots,Z_{13}$, as displayed in Table \ref{tab:data}, and treat the other three instruments as covariates. Therefore, the variable dimensions are now $p_x=45$ and $p_z=13$. All the considered tests do not reject the null hypothesis, meaning there is no evidence that this subset of instruments is invalid.  
  \par The takeaway from this empirical exercise is that practitioners should be cautious in the interpretation of a failure to reject the null hypothesis by existing overidentification tests when many covariates and/or instruments are present. Using tests with low power would result in further difficulty in the estimation and inference of the endogenous treatment effect. Our proposed test improves the power in the high-dimensional IV model with potentially invalid instruments; hence, it is recommended in a data-rich environment to detect invalid instruments.

\section{Conclusion}\label{sec:conc}
In this paper, we develop a new test on overidentifying restriction for linear IV models with high-dimensional covariates and/or IVs. This test {allows for $p>n$ and $p_x\to\infty$}, and is robust to heteroskedasticity. We show that, by utilizing a sparse model structure, our PM test has better power than the $\chi^2$-type tests even when {$p<n$ and $p\to\infty$, under which all tests under discussion are feasible.} As high-dimensional data become more common in observational studies, the PM test should have many applications in detecting instrument misspecifications. From a technical perspective, this paper extends the inference of maximum and $L_2$ norms to heteroskedastic errors, and shows its applicability to triangular systems such as the linear IV regression model.  

\bibliographystyle{apalike}

\bibliography{ValidityTest}

\clearpage
\setcounter{section}{0}
\renewcommand\thesection{\Alph{section}}
\setcounter{page}{1}
\setcounter{footnote}{0}
\begin{center}
{\large\bf SUPPLEMENTARY MATERIAL OF ``A HETEROSKEDASTICITY-ROBUST OVERIDENTIFYING RESTRICTION TEST WITH HIGH-DIMENSIONAL COVARIATES''}
\end{center}
The Appendices include the following parts: Section \ref{app:a} provides additional discussions complementary to the theory and the empirical example in the main text. Section \ref{sec:simu} collects simulation results. Section \ref{app:proof} contains all technical proofs. 
\vspace{-1em}
\newcounter{counter}[section]
\setcounter{table}{0}
\renewcommand{\thetable}{\thesection\arabic{table}}
\setcounter{equation}{0}
\renewcommand\theequation{\thesection\arabic{equation}}
\setcounter{figure}{0}
\renewcommand\thefigure{\thesection\arabic{figure}}
\setcounter{theorem}{0}
\setcounter{assumption}{0}
\setcounter{lemma}{0}
\setcounter{remark}{0}
\setcounter{corollary}{0}
\setcounter{proposition}{0}
\renewcommand\thetheorem{\thesection\arabic{theorem}}
\renewcommand\thelemma{\thesection\arabic{lemma}}
\renewcommand\theremark{\thesection\arabic{remark}}
\renewcommand\thecorollary{\thesection\arabic{corollary}}
\renewcommand\theproposition{\thesection\arabic{proposition}}
\renewcommand\theassumption{\thesection\arabic{assumption}}

\setcounter{subsection}{0}
\setcounter{table}{0}
\renewcommand{\thetable}{\thesection\arabic{table}}
\setcounter{equation}{0}
\renewcommand\theequation{\thesection\arabic{equation}}
\setcounter{figure}{0}
\renewcommand\thefigure{\thesection\arabic{figure}}
\setcounter{theorem}{0}
\setcounter{assumption}{0}
\setcounter{lemma}{0}
\setcounter{remark}{0}
\setcounter{corollary}{0}
\setcounter{proposition}{0}

\section{Additional Discussions and Details}\label{app:a}


\subsection{Relation between $\pi$ and $\pi_{A}$}
\label{app: pi piA}
For simplicity of discussion, we do not distinguish $A$ from $A^*$ in this section, and we use $A^*$ to be consistent with the alternative set (\ref{eq: HAstar}).
\par As discussed in Section \ref{sec:overid} of the paper, the true $\pi$ is of our interest while we work with the data scale-invariant version of $\pi_{A^*}$. It is thus helpful to look into the relation between $\pi$ and the identified $\pi_{A^*}$ for a clearer picture of the alternative set $\mathcal{H}_{M}(t)$ defined as (\ref{eq: HAstar}). Below are two  illustrative examples. Example \ref{eg:Astar0} shows that perfectly 
parallel $A^{*1/2}\pi$ and $A^{*1/2}\gamma$ cause a zero $\pi_{A^*}$ even if $\pi\neq 0$, and hence the M test has no power to detect invalid IVs. Other overidentifying restriction tests also have no power under similar conditions. Example \ref{eg: define Rstar} shows that when $A^{*1/2}\pi$ and $A^{*1/2}\gamma$ are far away from perfectly parallel, the alternative set defined by $\|A^{*1/2}\pi_{A^*}\|_\infty$ is similar to that defined by $\|{A}^{*1/2}\pi\|_\infty$ up to a square root term of sparsity indices.  

\begin{example}\label{eg:Astar0}
Recall that the discussions from (\ref{eq:Q}) to (\ref{eq: weighted pi A}) illustrate the absence of power when $A^{*1/2}\pi$ and $A^{*1/2}\gamma$ are perfectly parallel. A trivial example is $p_z=1$, under which the model is not overidentified. Another example with $p_z=2$ is given as follows. For simplicity, let $A^*=I_2$, $\pi = \rho_\pi(1,1)^\top$ and $\gamma = (1,1)^\top$. Here $|\rho_\pi|$ measures the strength of IV invalidity. Then it is easy to compute the $\pi_{A^*} = \pi - \gamma (\pi^\top\gamma/\gamma^\top \gamma) = 0$ even if $\rho_\pi\neq 0$.
\end{example}

\begin{example}\label{eg: define Rstar}Recall that ${\rm R}_{A^*}(\gamma,\pi)$ is defined as (\ref{eq: RAstar}). Following the arguments from (\ref{eq:Q}) to (\ref{eq: weighted pi A}), when $|{\rm R}_{A^*}(\gamma,\pi)|$ is  strictly bounded away from one, we have $\QAst(\pi) \asymp \QAst(\pi_{A^*})$. Hence, when $A^*$ is diagonal, 
\[\|A^{*1/2}\pi\|_\infty \lesssim \sqrt{\QAst(\pi)} \asymp \sqrt{\QAst(\pi_{A^*})} \lesssim \sqrt{s_\pi + s_\gamma}\|A^{*1/2}\pi_{A^*}\|_\infty,\]
where\footnote{The last inequality applies $\pi_{A^*} = \pi - (\beta_{A^*}-\beta)\gamma $, which implies $\|A^{*1/2}\pi_{A^*}\|_0=\|\pi_{A^*}\|_0\leq s_\pi+s_\gamma$ when $A^*$ is diagonal.} $s_\pi=\|\pi\|_0$ is the number of invalid IVs and $s_\gamma=\|\gamma\|_0$ is the number of relevant IVs. Consequently, $\pi\in\mathcal{H}_{M}(t)$ for some sufficiently large absolute constant $t$ whenever $\|A^{*1/2}\pi\|_\infty \geq t^\prime \sqrt{(s_\pi + s_\gamma) \log p_z/n}$ for some absolute constant $t^\prime$. Following symmetric arguments, we deduce that 
\[\|A^{*1/2}\pi_{A^*}\|_\infty\lesssim\sqrt{s_\pi+s_\gamma}\|A^{*1/2}\pi\|_\infty,\]
and hence any $\pi\in\mathcal{H}_{M}(t)$ satisfies $\|A^{*1/2}\pi\|_\infty \geq t^{\prime\prime} \sqrt{\log p_z/(n(s_\pi+s_\gamma))}$ for some $t^{\prime\prime}$. Thus, when $A^{*1/2}\pi$ and $A^{*1/2}\gamma$ are not perfectly parallel, the alternative set induced by $\|A^*\pi_{A^*}\|_\infty$ as (\ref{eq: HAstar}) is similar to that induced by $\|A^*\pi\|_\infty$ up to a square root term of sparsity indices. 
\end{example}

\par In summary, the alternative set induced by the data scale-invariant version of $\pi_{A^*}$ is appropriate for power analysis of the M test.

\subsection{Descriptive Statistics of the Empirical Example in Section \ref{sec:emp}}\label{app:empt1}

\begin{table}[H]
\centering 
\caption{Descriptive Statistics of the Raw Data.}
\label{tab:data}
\begin{tabular}{ccccccc}
\hline\hline
Notation & Variable Name                & Min    & Median   & Max        & Mean      & Std. Dev.  \\
\hline 
$Y$        & Log GDP                      & 7.463  & 10.422   & 12.026     & 10.184    & 1.102      \\
$D$        & Trade                        & 0.098  & 0.758    & 4.129      & 0.869     & 0.520      \\
$X_1$       & Log Population               & -3.037 & 1.472    & 6.674      & 1.355     & 1.830      \\
$X_2$       & Log Area                     & 5.193  & 11.958   & 16.611     & 11.685    & 2.312      \\
$Z_1$       & $\hat T$                       & 0.015  & 0.079    & 0.297      & 0.092     & 0.052      \\
$Z_2$       & Languages                    & 1.000  & 1.000    & 16.000     & 1.887     & 2.129      \\
$Z_3$       & Water Area                   & 0.000  & 2340.000 & 891163.000 & 25218.771 & 100518.984 \\
$Z_4$       & Land Boundaries              & 0.000  & 1881.000 & 22147.000  & 2819.987  & 3404.441   \\
$Z_5$       & \% Forest                    & 0.000  & 30.319   & 98.258     & 29.713    & 22.416     \\
$Z_6$       & Arable Land                  & 0.558  & 42.035   & 82.560     & 40.760    & 21.611     \\
$Z_7$       & Coast                        & 0.000  & 515.000  & 202080.000 & 4242.147  & 17399.583  \\
$Z_8$        & $Z_1\cdot Z_2$                    & 0.017  & 0.113    & 1.480      & 0.170     & 0.199      \\
$Z_9$       & $Z_1\cdot Z_3$                        & 0.000  & 201.263  & 87556.265  & 1872.710  & 8160.430   \\
$Z_{10}$       & $Z_1\cdot Z_4$                       & 0.000  & 184.863  & 2231.550   & 242.217   & 287.270    \\
$Z_{11}$      & $Z_1\cdot Z_5$                       & 0.000  & 1.946    & 20.573     & 2.686     & 3.025      \\
$Z_{12}$       & $Z_1\cdot Z_6$                        & 0.033  & 3.099    & 19.408     & 3.802     & 3.112      \\
$Z_{13}$       & $Z_1\cdot Z_7$                        & 0.000  & 39.891   & 19854.247  & 352.687   & 1675.864   \\
$Z_{14}$       & PM2.5                        & 5.861  & 22.252   & 99.734     & 27.868    & 19.436     \\
$Z_{15}$       & Access to Electricity        & 9.300  & 99.800   & 100.000    & 84.434    & 26.245     \\
$Z_{16}$       & Ease of Doing Business Index & 1.000  & 85.000   & 188.000    & 88.356    & 54.022 \\ 
\hline\hline
\end{tabular}
\flushleft{\footnotesize Data sources: the World Bank, CIA World Factbook, R package \texttt{naivereg}. }
\end{table}

\section{Simulations}
\label{sec:simu}
\subsection{Setup}
\par The simulation DGP follows Model (\ref{eq:model}) in the main text. We focus on high-dimensional covariates where \((n,p_x)\in\{(150,50),(150,100),(300,150),(300,250),(500,350),(500,450)\}\). For each pair $(n,p_x)$, we set $p_z\in\{10,100\}$ to consider both low- and high-dimensional instrumental variables. The exogenous variables $W_{i\cdot}$ are independently generated by a multivariate Gaussian distribution with mean zero and covariance matrix $\Sigma=(|0.5|^{|i-j|})_{i,j\in[p]}$. We construct the error terms as follows:
\[\begin{aligned}
  e_{i} &= a_0 \cdot e_{i}^1 + \sqrt{1-a_0^2} \cdot e_{i}^0, \\
  \varepsilon_{i,D} &= 0.5\cdot e_{i}  + \sqrt{1-0.5^2}\cdot \varepsilon_{i,D}^0,
\end{aligned}\]
where $e_i^1|W_{i\cdot}\sim N(0,Z_{i1}^2)$, $\varepsilon_{i,D}^0$ and $e_{i}^0$ are i.i.d. $N(0,1)$ variables. We set $a_0 = 0$ for homoskedasticity and $a_0 = 2^{-1/4}$ for heteroskedasticity so that the R-square\footnote{According to Footnote 11 of \citet{bekker2015jackknife}, $R^2(e^2|Z) = \dfrac{\var[\expect(e^2|Z)]}{\var[\expect(e^2|Z)] + \expect[\var(e^2|Z)]}$. } for the regression of $e_{i}^2$ on the IVs equals 0.2.
\par We fix $\beta=1$. For each combination $(n,p_x,p_z)$, we set $\varphi=(1,0.5,\cdots,0.5^{s_\varphi-1}, 0^\top_{p_x-s_\varphi})^\top$ and $\psi=(1,0.6,\cdots,0.6^{s_\psi-1}, 0^\top_{p_x-s_\psi})^\top$. We consider two sparse settings of $\gamma:$
\begin{itemize}
\item The relevant IVs are all strong: $\gamma^{(1)} = (1_{s_\gamma}^\top,0^\top_{p_z-s_\gamma})^\top$; 
\item There is a mixture of strong and weak IVs: $\gamma^{(2)} = (1,0.8,0.8^2,\cdots,0.8^{s_\gamma-1},0^\top_{p_z-s_\gamma})^\top$. 
\end{itemize}
Throughout the simulation study, we set $s_\varphi=s_\psi=10$ and $s_\gamma = 7$. For IV validity, we first consider \[\pi=\pi^{(1)}=(\rho_\pi,0_{p_z-1}^\top)^\top,\] where only the first IV is invalid. To demonstrate the necessity of power enhancement, we also consider another setting of $\pi$, given as 
\[\pi^{(2)}:=\begin{cases}0.5\rho_\pi\cdot(1,-1,1,-1,0_6^\top)^\top,\ \ p_z=10,\\0.1\rho_\pi\cdot(1_{30}^\top,0_{70}^\top)^\top,\ \ p_z=100. 
\end{cases}\] 
When $p_z=100$, the vector $\pi^{(2)}$ induces a much larger number of invalid instruments with a smaller maximum norm compared to $\pi^{(1)}$. In this case, the Q test applying the $L_2$ norm is expected to be more powerful than the M test. We will see the benefit of power enhancement in the simulation results. We vary $\rho_\pi$ from -1 to 1.
\par The Lasso problems are solved by the \texttt{glmnet} R package. The tuning parameter is selected by cross-validation with the one-standard-error rule that is also favored in the current literature \citep{windmeijer2019use}. We use the \texttt{fastclime} \citep{pang2014fastclime}  package with the built-in parameters to obtain the CLIME estimator (\ref{eq:clime1}) and (\ref{eq:clime}). The package efficiently solves the problem using the parametric simplex method. In addition to 
the M test and PM test, we report the simulation results of the MCD test proposed by \citet{kolesar2018minimum} as a representative of $\chi^2$-type tests, which allows many covariates with the restriction $(p_x+p_z)/n \to c_p\in(0,1)$ as $n\to\infty$. 
\subsection{Summary of Simulation Results} 
 Tables and fugures of the empirical size and power from the simulation studies are available in Section \ref{app:powerCurves}.
Table \ref{tab:size} shows the empirical type I errors of different tests under $\rho_\pi=0$. The MCD test controls the type I error below or close to the nominal size. However, it is infeasible when $p_x+p_z>n$. In comparison, our M test and PM test are robust to high-dimensional covariates and instruments even when $p>n$. The most severe over-rejection occurs under $(n,p_x,p_z)=(300,250,10)$, which is no more than 0.03 off from the target rejection rate 5\%. In most cases, the rejection rate is close to the nominal size. The slight bias in Type I error is offset by substantial power gains compared to the MCD test, as in the figures  shown below. In addition, the empirical type I errors are similar between the M test and PM test, indicating that the power enhancement for the M test has almost no effect on the empirical size. In Section \ref{app:betaSimul}, we also show the simulation results of our proposed IQ estimator (\ref{eq:hatbetaA}) under the null hypothesis $\pi=0$. The IQ estimator has satisfactory performance in estimation and inference for $\beta$.
\par We then discuss the power. To fix ideas, we focus on the power curves from $(n,p_x)\in\{(150,50),(500,450)\}$ shown in Figures \ref{fig:powern150px50pz10}-\ref{fig:powern500px450pz100} in the discussions. Other power curves are also available in Section \ref{app:powerCurves}. Figures \ref{fig:powern150px50pz10} and \ref{fig:powern500px450pz10} show the results when $p_z=10$. With this small number of IVs, the M test and PM test have almost the same power. In addition, both tests are more powerful than the MCD test. The power improvement is more evident when $n=500$ and $p_x=450$, where $p$ is very close to the sample size $n$. 
\par  Figures \ref{fig:powern150px50pz100} and \ref{fig:powern500px450pz100} show the results when $p_z=100$. Given $p\geq n$, the $\chi^2$-type MCD test becomes infeasible; hence, the results of the MCD test are unavailable in these two figures. Again, the power curves of the M test and PM test are close when there is only one invalid IV ($\pi=\pi^{(1)}$), as shown in the first and third rows of the two figures. However, with 30 locally invalid instruments ($\pi=\pi^{(2)}$, the second and fourth rows), the M test is outperformed by the PM test. This result shows that our power enhancement procedure makes the test more powerful in some extreme cases with many locally invalid instruments without significant impacts on type I errors. Finally, the results are robust to the settings of $\gamma$ and heteroskedastic errors.  
\label{app:OmittedSimul} 
\subsection{Tables and Figures of Size and Power}
\begin{table}[]
\caption{Type I Errors of the Overidentifying Restriction Tests under 5\% Level}
\label{tab:size}
\vspace{-1em}
\begin{center}
\begin{tabular}{ccc|rrr|rrr}
\hline \hline 
\multirow{2}{*}{$n$}   & \multirow{2}{*}{$p_x$}  & \multirow{2}{*}{$p_z$} & \multicolumn{3}{c|}{Homoskedasticity} & \multicolumn{3}{c}{Heteroskedasticity} \\ \cline{4-9}
                     &                      &                     & MCD        & M        & PM       & MCD         & M          & PM       \\
                  \hline  
\multicolumn{9}{c}{ $\gamma = \gamma^{(1)}$ }                                                            \\
\hline 
\multirow{4}{*}{150} & \multirow{2}{*}{50}  & 10                  & 0.022      & 0.073      & 0.073      & 0.023       & 0.042       & 0.042      \\
                     &                      & 100                 & NA         & 0.044      & 0.068      & NA          & 0.035       & 0.044      \\
                     & \multirow{2}{*}{100} & 10                  & 0.023      & 0.057      & 0.057      & 0.021       & 0.056       & 0.056      \\
                     &                      & 100                 & NA         & 0.038      & 0.061      & NA          & 0.023       & 0.028      \\
                     \hline 
\multirow{4}{*}{300} & \multirow{2}{*}{150} & 10                  & 0.025      & 0.056      & 0.056      & 0.032       & 0.044       & 0.044      \\
                     &                      & 100                 & 0.056      & 0.047      & 0.047      & 0.044       & 0.030        & 0.030       \\
                     & \multirow{2}{*}{250} & 10                  & 0.033      & 0.058      & 0.058      & 0.038       & 0.079       & 0.079      \\
                     &                      & 100                 & NA         & 0.039      & 0.039      & NA          & 0.038       & 0.038      \\
                     \hline 
\multirow{4}{*}{500} & \multirow{2}{*}{350} & 10                  & 0.035      & 0.052      & 0.052      & 0.028       & 0.052       & 0.052      \\
                     &                      & 100                 & 0.057      & 0.041      & 0.041      & 0.050        & 0.051       & 0.051      \\
                     & \multirow{2}{*}{450} & 10                  & 0.041      & 0.048      & 0.048      & 0.042       & 0.054       & 0.054      \\
                     &                      & 100                 & NA         & 0.038      & 0.038      & NA          & 0.037       & 0.037      \\
                     \hline  
\multicolumn{9}{c}{ $\gamma = \gamma^{(2)}$ }                                                           \\
\hline 
\multirow{4}{*}{150} & \multirow{2}{*}{50}  & 10                  & 0.023      & 0.068      & 0.069      & 0.020       & 0.045       & 0.045      \\
                     &                      & 100                 & NA         & 0.044      & 0.067      & NA          & 0.030       & 0.041      \\
                     & \multirow{2}{*}{100} & 10                  & 0.022      & 0.05       & 0.05       & 0.023       & 0.061       & 0.061      \\
                     &                      & 100                 & NA         & 0.039      & 0.06       & NA          & 0.023       & 0.028      \\
                     \hline 
\multirow{4}{*}{300} & \multirow{2}{*}{150} & 10                  & 0.029      & 0.057      & 0.057      & 0.028       & 0.039       & 0.039      \\
                     &                      & 100                 & 0.057      & 0.044      & 0.044      & 0.041       & 0.030       & 0.030      \\
                     & \multirow{2}{*}{250} & 10                  & 0.031      & 0.056      & 0.056      & 0.035       & 0.056       & 0.056      \\
                     &                      & 100                 & NA         & 0.040      & 0.040      & NA          & 0.039       & 0.039      \\
                     \hline 
\multirow{4}{*}{500} & \multirow{2}{*}{350} & 10                  & 0.036      & 0.053      & 0.053      & 0.026       & 0.044       & 0.044      \\
                     &                      & 100                 & 0.055      & 0.041      & 0.041      & 0.047       & 0.054       & 0.054      \\
                     & \multirow{2}{*}{450} & 10                  & 0.041      & 0.047      & 0.047      & 0.041       & 0.051       & 0.051      \\
                     &                      & 100                 & NA         & 0.039      & 0.039      & NA          & 0.039       & 0.039      \\
                     \hline \hline 
\end{tabular}
\end{center}
{\footnotesize Note: This table reports the type I errors over 1000 simulations. ``MCD'', ``M'', ``PM'' are the abbreviations of the modified Cragg–Donald test, the maximum test and the power-enhanced maximum test, respectively. ``NA'' means ``not available''. }
\end{table}
\begin{figure}[ht]
\centering
\includegraphics[scale=0.5]{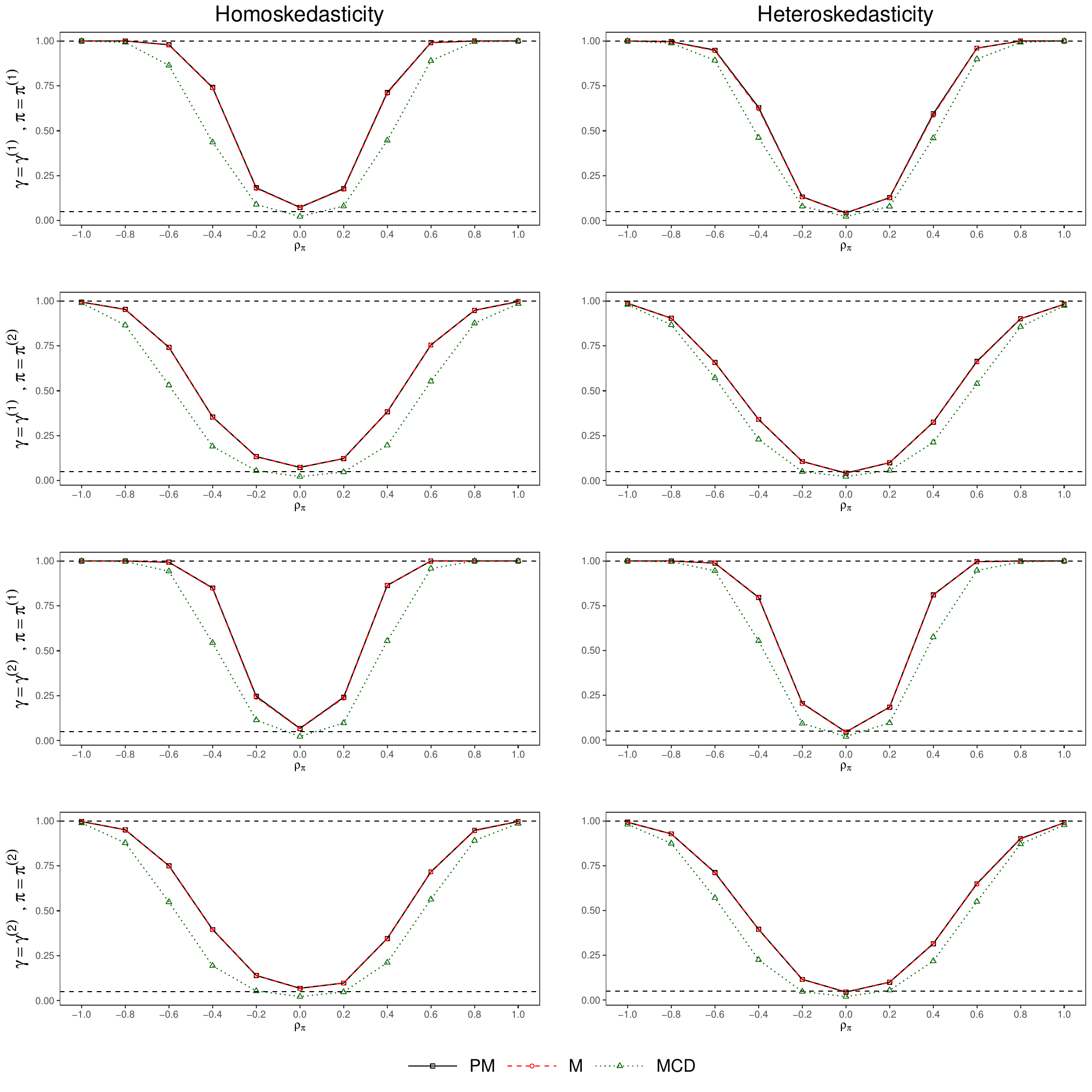}
\setlength{\abovecaptionskip}{0pt}
\caption{Power of tests with $(n,p_x,p_z)=(150,50,10)$ under 5\% level over 1000 simulations. ``MCD" represents the modified Cragg–Donald test by \citet{kolesar2018minimum}.  The nominal size 0.05 and power 1 are shown by the horizontal dashed lines. }
\label{fig:powern150px50pz10}
\end{figure}

\begin{figure}[!ht]
\centering
\includegraphics[scale=0.5]{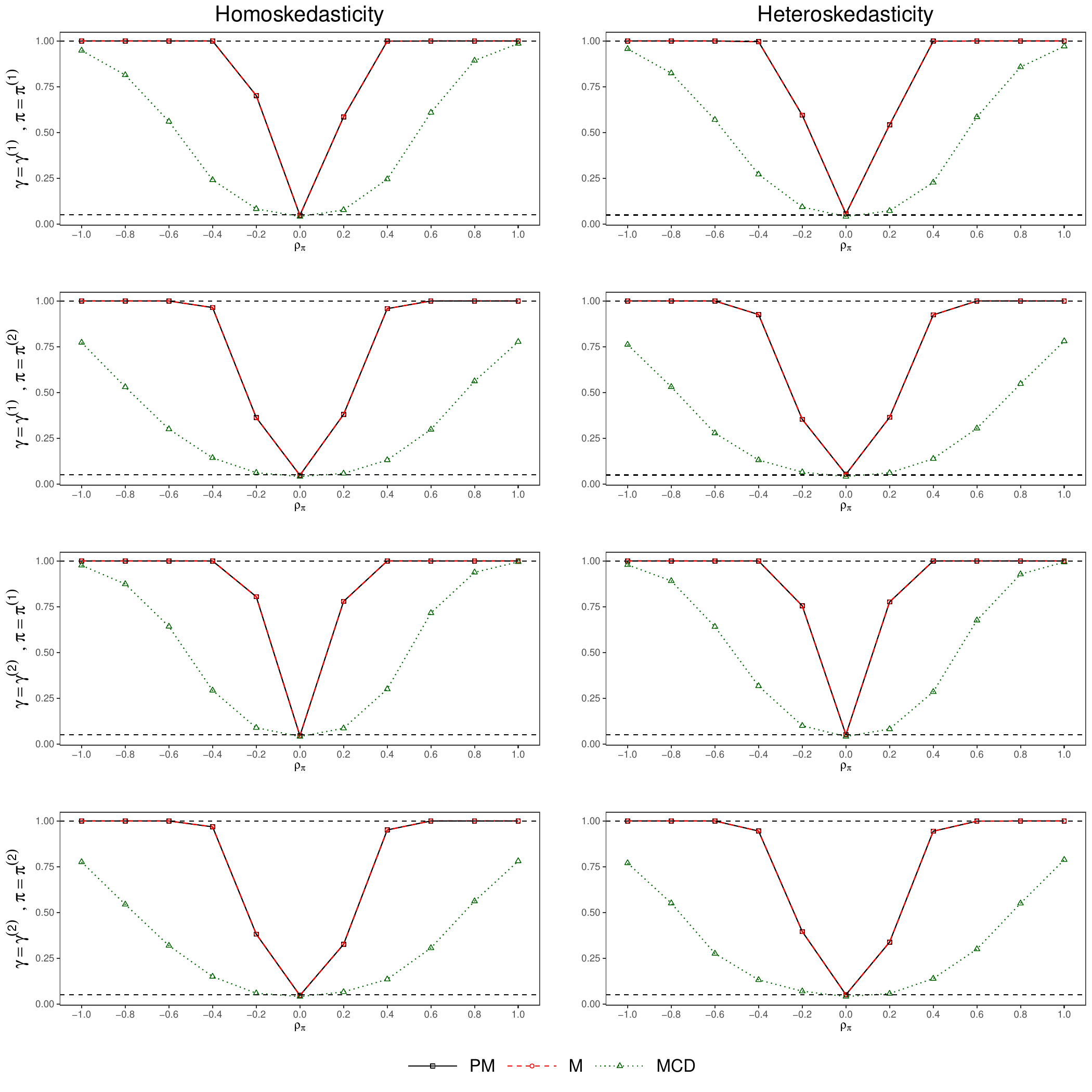}
\setlength{\abovecaptionskip}{0pt}
\caption{Power of tests with $(n,p_x,p_z)=(500,450,10)$ under 5\% level over 1000 simulations. ``MCD" represents the modified Cragg–Donald test by \citet{kolesar2018minimum}.  The nominal size 0.05 and power 1 are shown by the horizontal dashed lines. }
\label{fig:powern500px450pz10}
\end{figure}

\begin{figure}[!ht]
\centering
\includegraphics[scale=0.5]{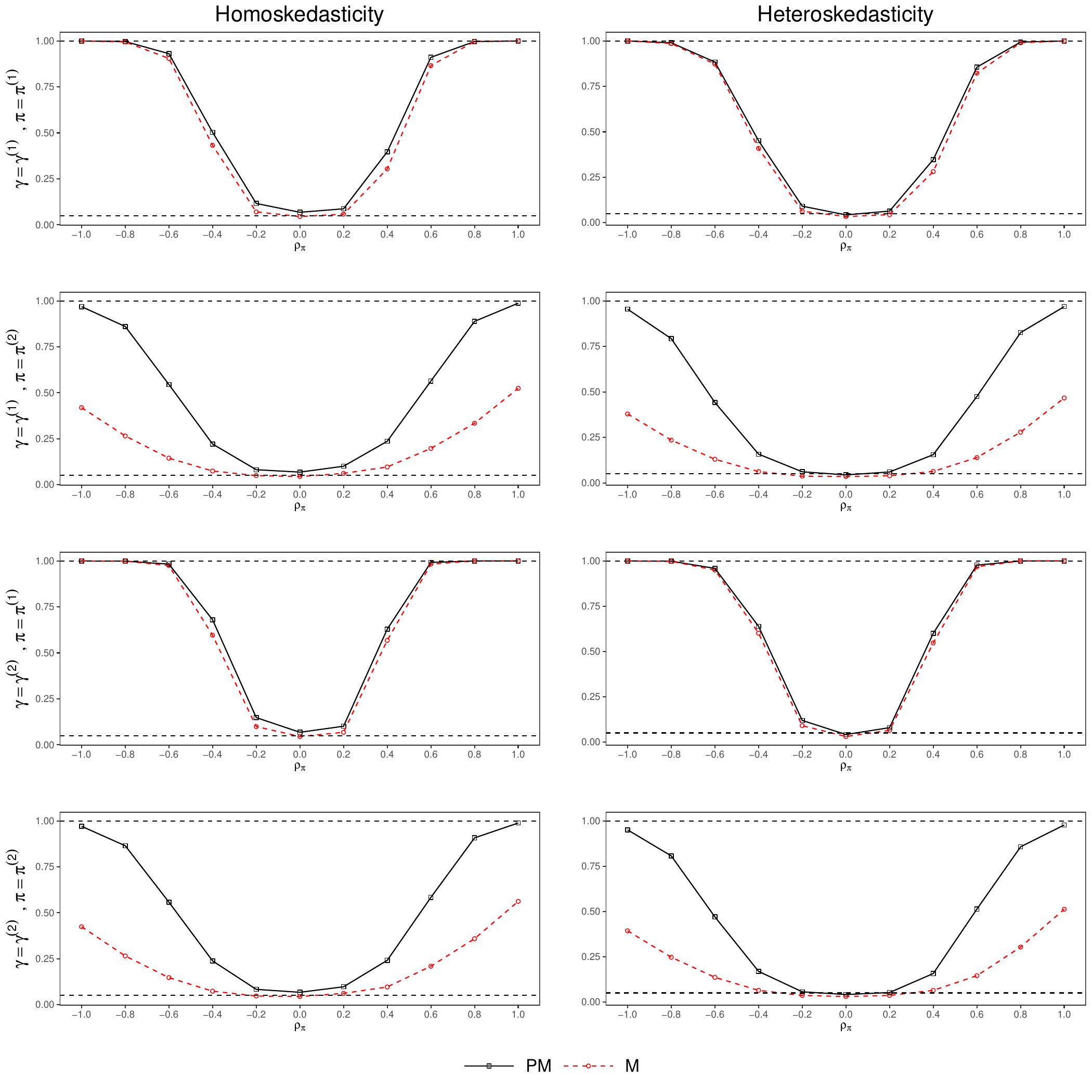}
\setlength{\abovecaptionskip}{0pt}
\caption{Power of tests with $(n,p_x,p_z)=(150,50,100)$ under 5\% level over 1000 simulations. The nominal size 0.05 and power 1 are shown by the horizontal dashed lines. }
\label{fig:powern150px50pz100}
\end{figure}
 
\begin{figure}[!ht]
\centering
\includegraphics[scale=0.5]{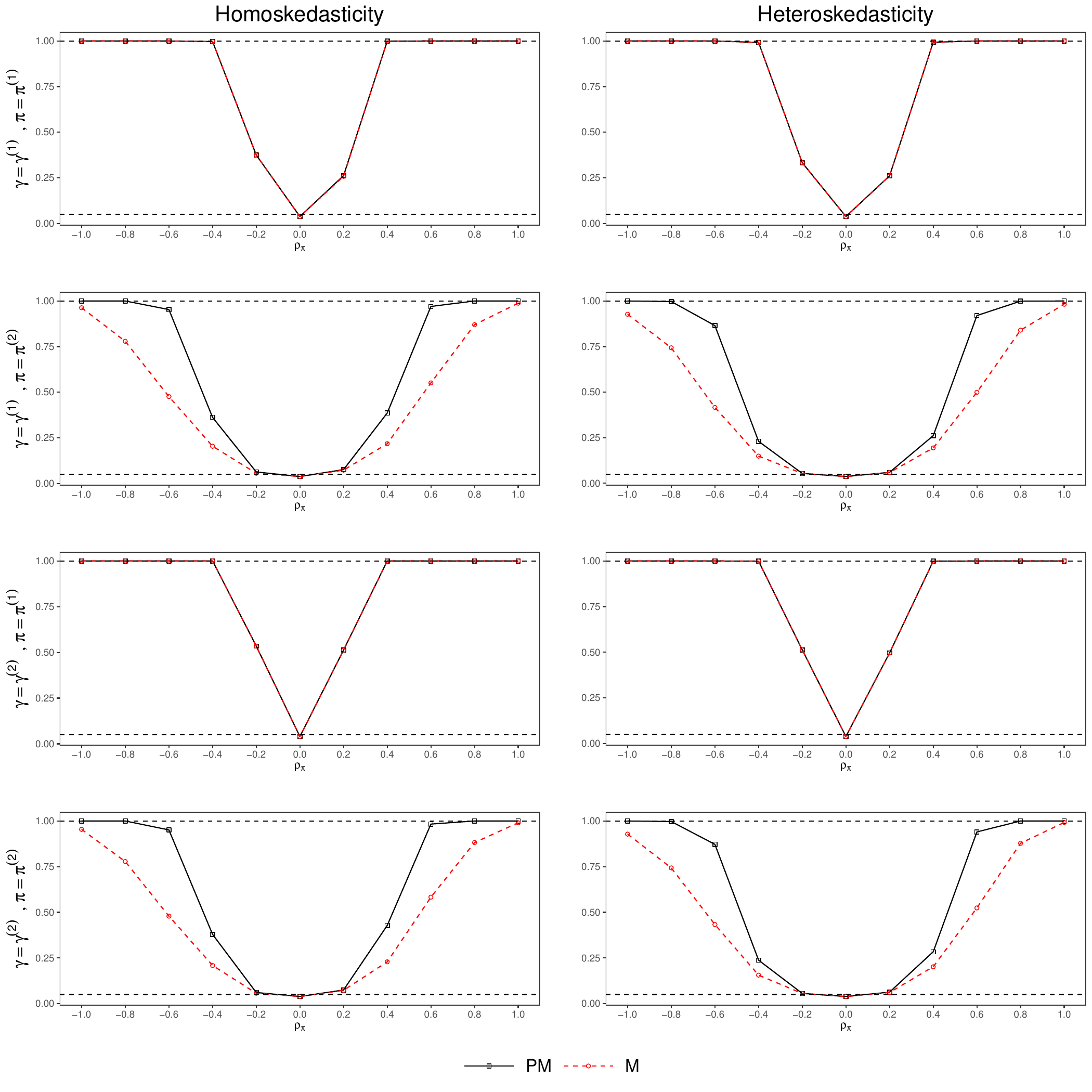}
\setlength{\abovecaptionskip}{0pt}
\caption{Power of tests with $(n,p_x,p_z)=(500,450,100)$ under 5\% level over 1000 simulations. The nominal size 0.05 and power 1 are shown by the horizontal dashed lines. }
\label{fig:powern500px450pz100}
\end{figure}
 
\label{app:powerCurves}
\begin{figure}[!ht]
\centering
\includegraphics[scale=0.5]{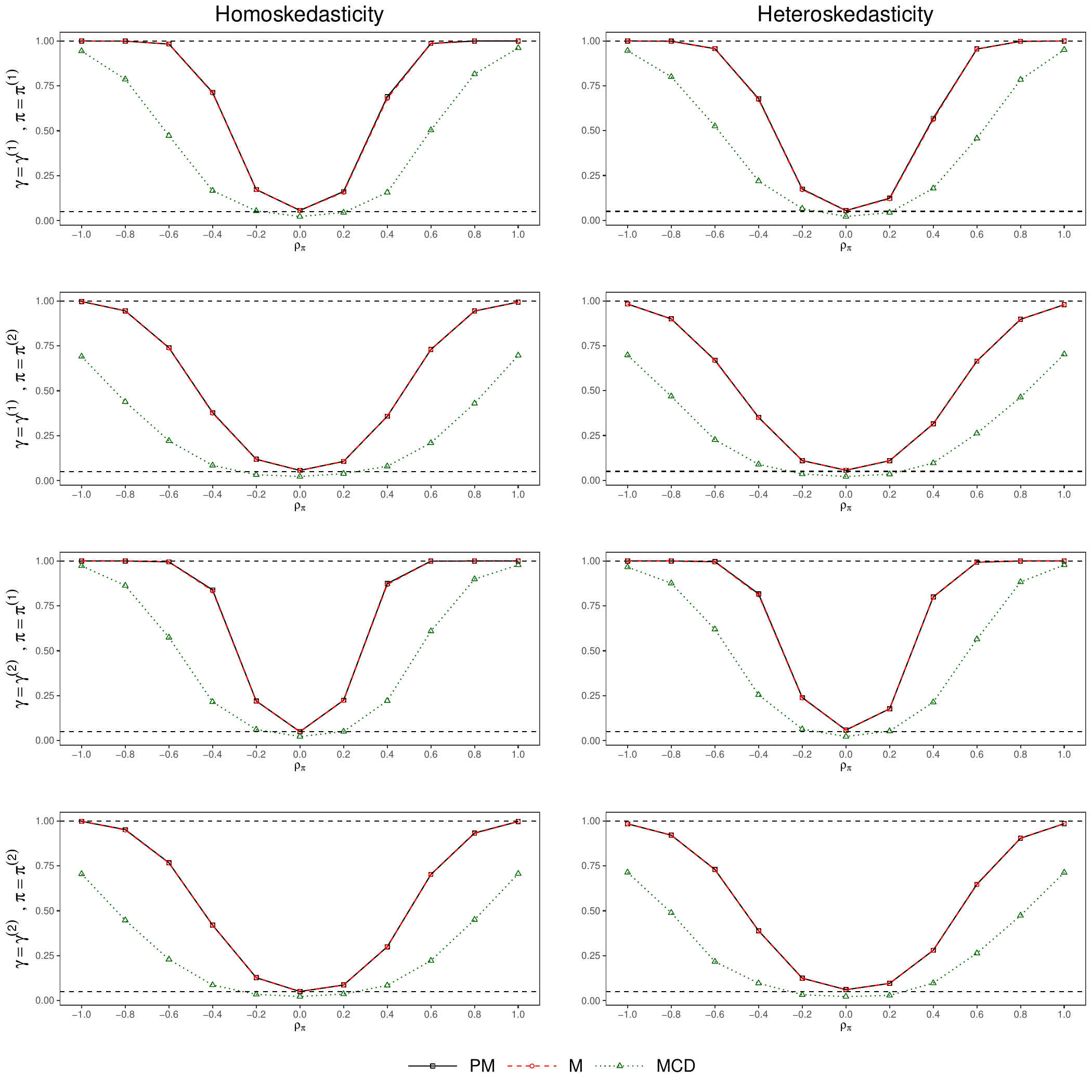}
\setlength{\abovecaptionskip}{0pt}
\caption{Power of tests with $(n,p_x,p_z)=(150,100,10)$ under 5\% level over 1000 simulations. ``MCD" represents the modified Cragg–Donald test by \citet{kolesar2018minimum}.  The nominal size 0.05 and power 1 are shown by the horizontal dashed lines. }
\label{fig:powern150px100pz10}
\end{figure}

\begin{figure}[!ht]
\centering
\includegraphics[scale=0.5]{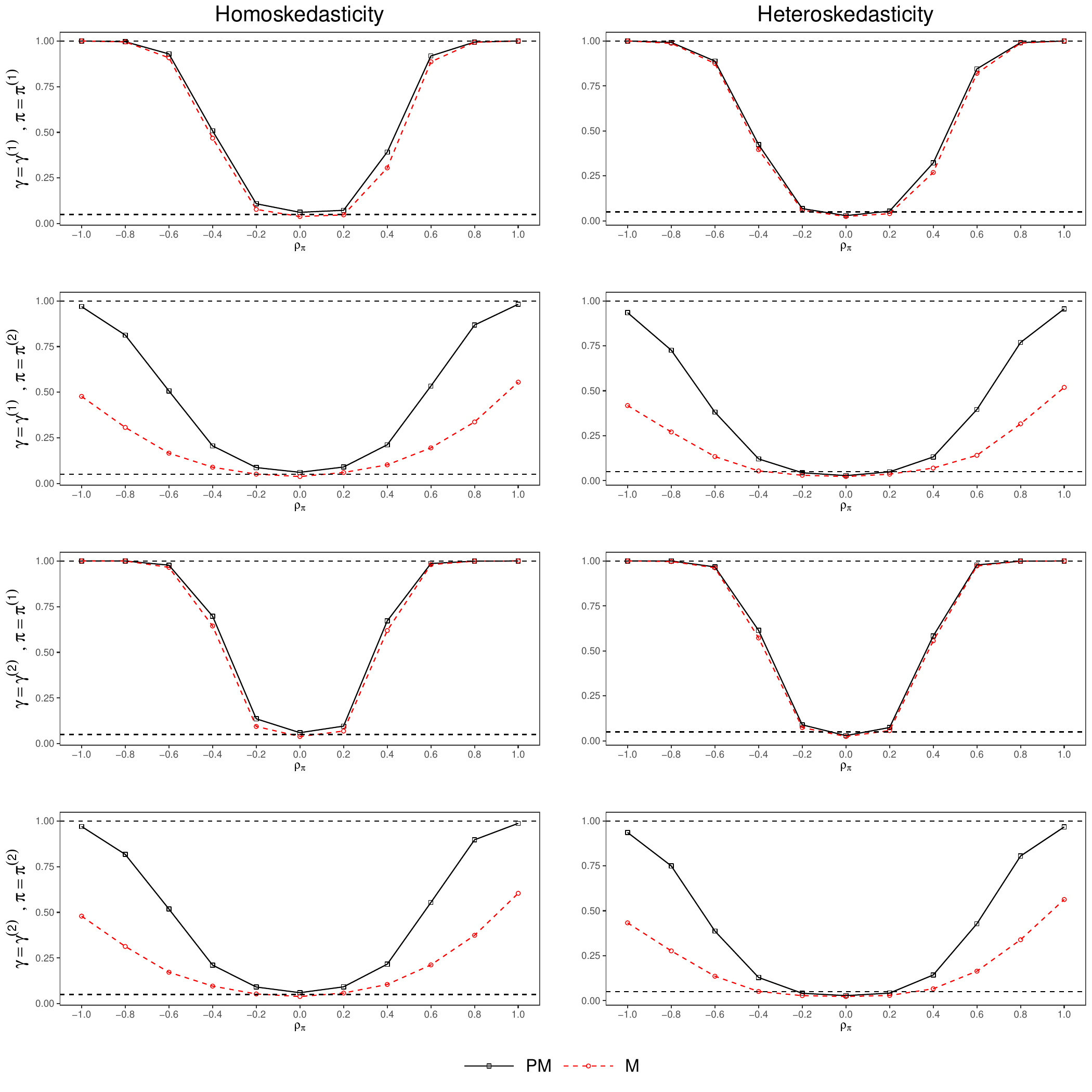}
\setlength{\abovecaptionskip}{0pt}
\caption{Power of tests with $(n,p_x,p_z)=(150,100,100)$ under 5\% level over 1000 simulations.  The nominal size 0.05 and power 1 are shown by the horizontal dashed lines. }
\label{fig:powern150px100pz100}
\end{figure}

\begin{figure}[!ht]
\centering
\includegraphics[scale=0.5]{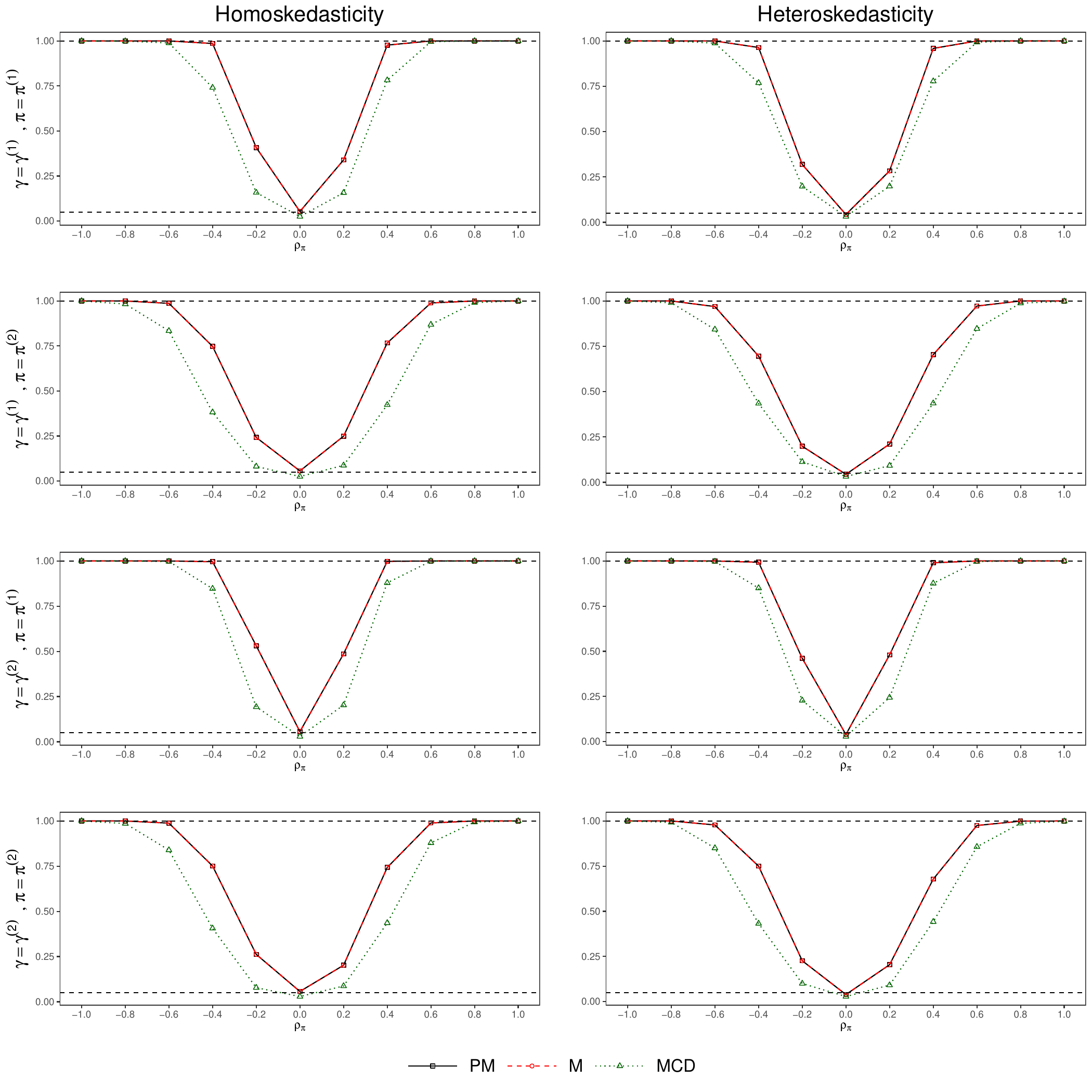}
\setlength{\abovecaptionskip}{0pt}
\caption{Power of tests with $(n,p_x,p_z)=(300,150,10)$ under 5\% level over 1000 simulations. ``MCD" represents the modified Cragg–Donald test by \citet{kolesar2018minimum}.  The nominal size 0.05 and power 1 are shown by the horizontal dashed lines. }
\label{fig:powern300px150pz10}
\end{figure}

\begin{figure}[!ht]
\centering
\includegraphics[scale=0.5]{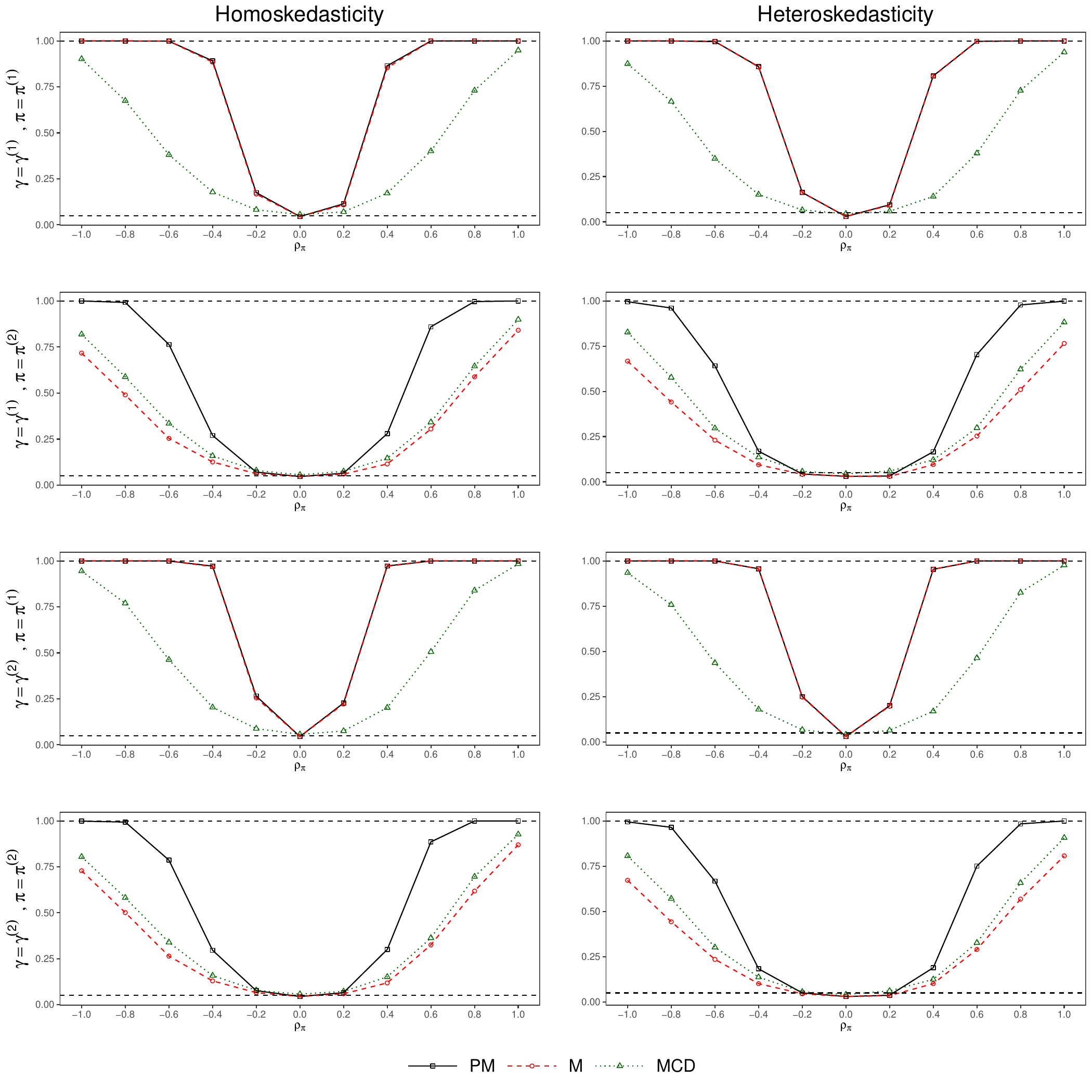}
\setlength{\abovecaptionskip}{0pt}
\caption{Power of tests with $(n,p_x,p_z)=(300,150,100)$ under 5\% level over 1000 simulations. ``MCD" represents the modified Cragg–Donald test by \citet{kolesar2018minimum}.  The nominal size 0.05 and power 1 are shown by the horizontal dashed lines. }
\label{fig:powern300px150pz100}
\end{figure}

\begin{figure}[!ht]
\centering
\includegraphics[scale=0.5]{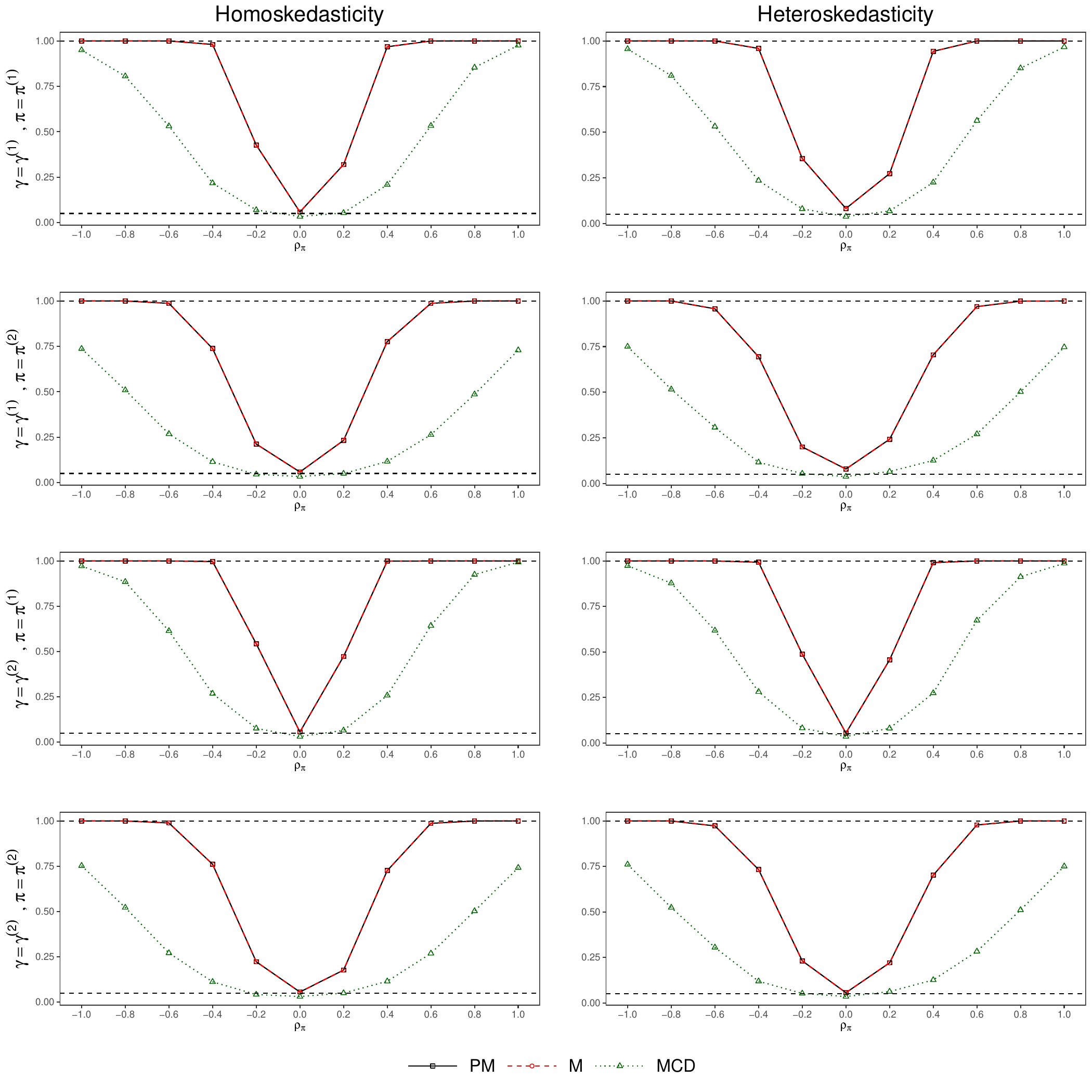}
\setlength{\abovecaptionskip}{0pt}
\caption{Power of tests with $(n,p_x,p_z)=(300,250,10)$ under 5\% level over 1000 simulations. ``MCD" represents the modified Cragg–Donald test by \citet{kolesar2018minimum}.  The nominal size 0.05 and power 1 are shown by the horizontal dashed lines. }
\label{fig:powern300px250pz10}
\end{figure}

\begin{figure}[!ht]
\centering
\includegraphics[scale=0.5]{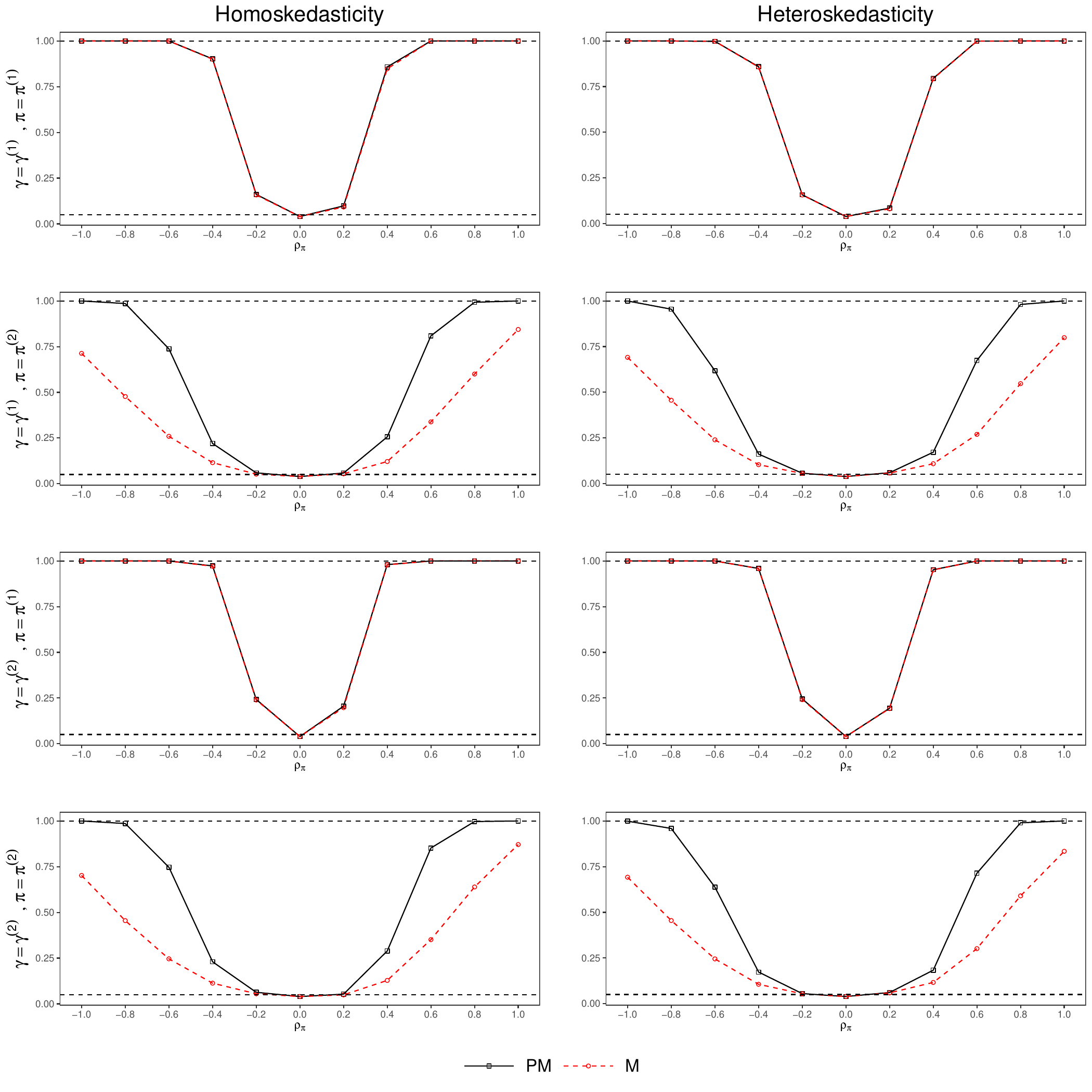}
\setlength{\abovecaptionskip}{0pt}
\caption{Power of tests with $(n,p_x,p_z)=(300,250,100)$ under 5\% level over 1000 simulations. ``MCD" represents the modified Cragg–Donald test by \citet{kolesar2018minimum}.  The nominal size 0.05 and power 1 are shown by the horizontal dashed lines. }
\label{fig:powern300px250pz100}
\end{figure} 

\begin{figure}[!ht]
\centering
\includegraphics[scale=0.5]{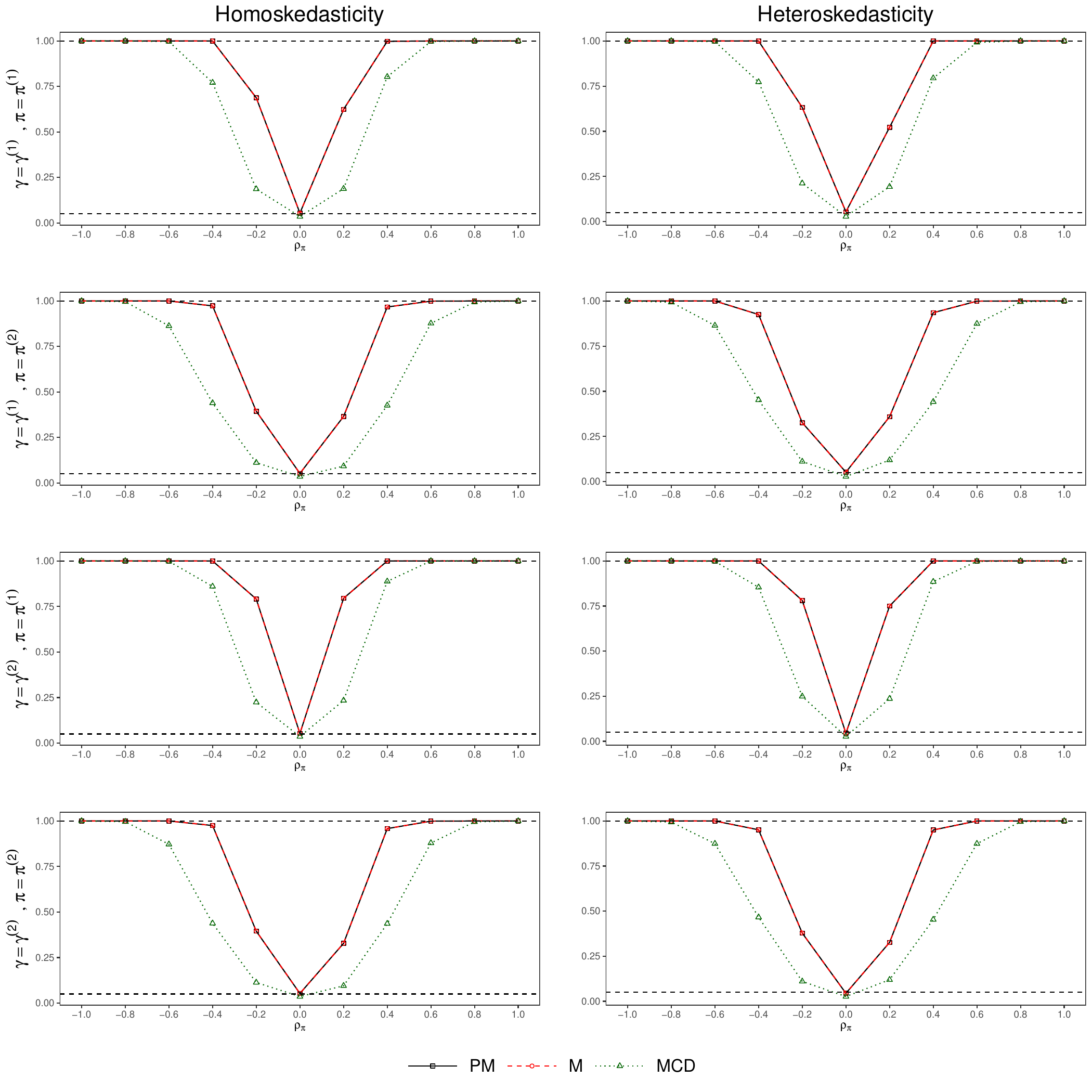}
\setlength{\abovecaptionskip}{0pt}
\caption{Power of tests with $(n,p_x,p_z)=(500,350,10)$ under 5\% level over 1000 simulations. ``MCD" represents the modified Cragg–Donald test by \citet{kolesar2018minimum}.  The nominal size 0.05 and power 1 are shown by the horizontal dashed lines. }
\label{fig:powern500px350pz10}
\end{figure}

\begin{figure}[!ht]
\centering
\includegraphics[scale=0.5]{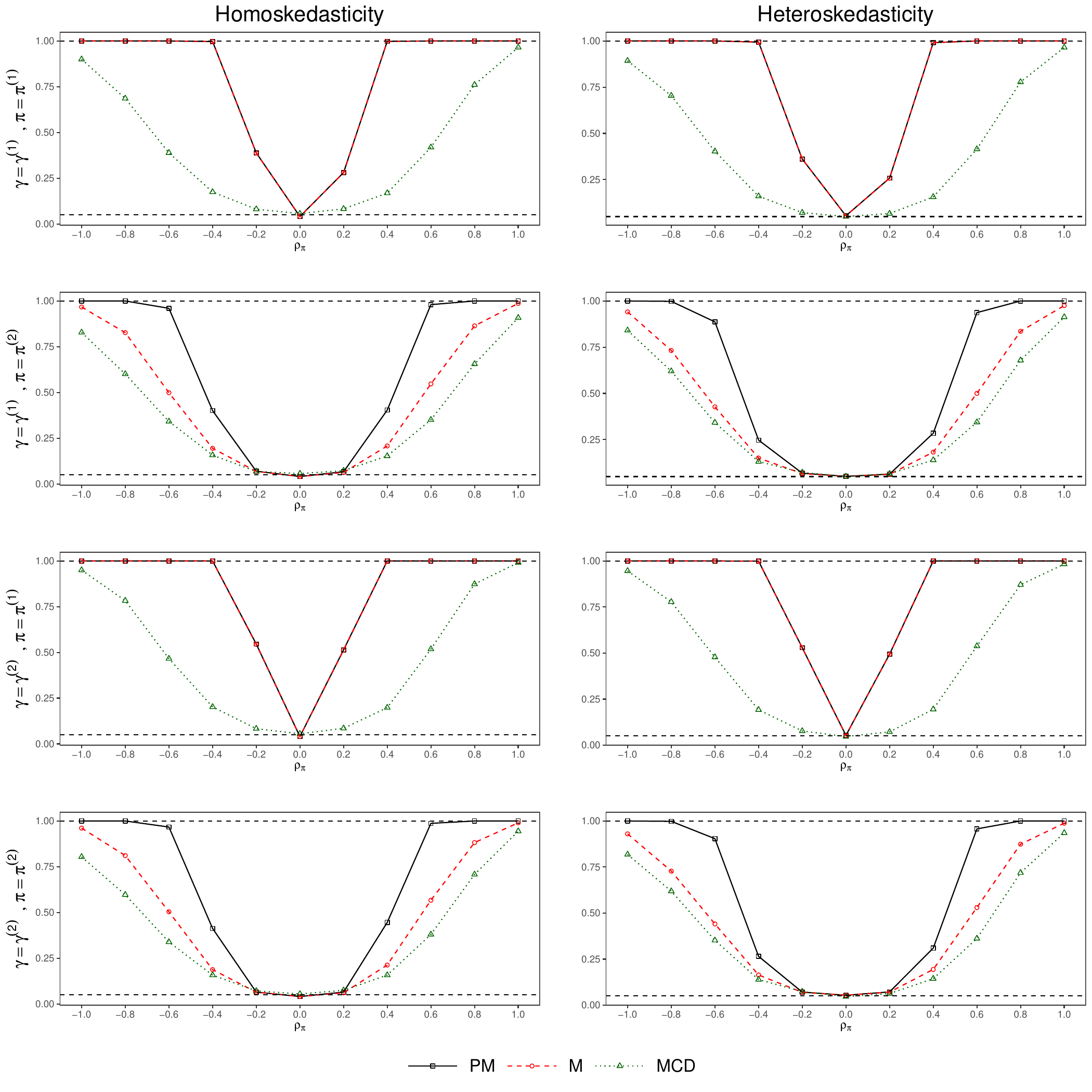}
\setlength{\abovecaptionskip}{0pt}
\caption{Power of tests with $(n,p_x,p_z)=(500,350,100)$ under 5\% level over 1000 simulations. ``MCD" represents the modified Cragg–Donald test by \citet{kolesar2018minimum}. The nominal size 0.05 and power 1 are shown by the horizontal dashed lines. }
\label{fig:powern500px350pz100}
\end{figure}

\clearpage
\subsection{Simulation Results of Estimation and Inference for $\beta$}
Tables \ref{tab:beta_homo} and \ref{tab:beta_hetero} show the simulation results of our proposed IQ estimator (\ref{eq:hatbetaA}) and its confidence interval under the null hypothesis $\pi=0$. The $100(1-\alpha)\%$ confidence interval is given by 
\[\left[\hat\beta_A-z_{\alpha/2}\sqrt{\dfrac{\hat{\rm V}_\beta}{n}},\hat\beta_A+z_{\alpha/2}\sqrt{\dfrac{\hat{\rm V}_\beta}{n}}\right]\]
where $z_{\alpha/2}$ is the $1-\alpha/2$ quantile of the standard normal distribution, and $\hat{\rm V}_\beta$ is defined in Theorem \ref{thm:betaAsympNorm}. The IQ estimator has satisfactory performance in estimation and inference for $\beta$ when all IVs are valid.
\label{app:betaSimul}
\begin{table}[!ht]
\caption{Estimation and Inference of Endogenous Effects under Homoskedasticity}
\label{tab:beta_homo}
\vspace{-1em}
\begin{center}
\begin{tabular}{ccc|rrr|rrr|rrr}
\hline\hline 
\multirow{2}{*}{$n$}   & \multirow{2}{*}{$p_x$}  & \multirow{2}{*}{$p_z$} & \multicolumn{3}{c|}{MAE} & \multicolumn{3}{c|}{Coverage} & \multicolumn{3}{c}{Length} \\ \cline{4-12}
                     &                      &                     & \multicolumn{1}{c}{IQ}     & LIML  & mbtsls &  \multicolumn{1}{c}{IQ}      & LIML    & mbtsls   &  \multicolumn{1}{c}{IQ}     & LIML   & mbtsls  \\
                     \hline \\
\multicolumn{12}{c}{ $\gamma = \gamma^{(1)}$ }                                                              \\
 \hline 
\multirow{4}{*}{150} & \multirow{2}{*}{50}  & 10                  & 0.022  & 0.021 & 0.021  & 0.942   & 0.937   & 0.940    & 0.101   & 0.099  & 0.099   \\
                     &                      & 100                 & 0.026  & NA    & NA     & 0.915   & NA      & NA       & 0.107   & NA     & NA      \\
                     & \multirow{2}{*}{100} & 10                  & 0.023  & 0.029 & 0.029  & 0.935   & 0.945   & 0.947    & 0.106   & 0.139  & 0.142   \\
                     &                      & 100                 & 0.027  & NA    & NA     & 0.895   & NA      & NA       & 0.108   & NA     & NA      \\
                      \hline 
\multirow{4}{*}{300} & \multirow{2}{*}{150} & 10                  & 0.015  & 0.016 & 0.016  & 0.935   & 0.949   & 0.952    & 0.070   & 0.080  & 0.080   \\
                     &                      & 100                 & 0.016  & 0.016 & 0.017  & 0.931   & 0.953   & 0.962    & 0.072   & 0.084  & 0.087   \\
                     & \multirow{2}{*}{250} & 10                  & 0.015  & 0.030 & 0.030  & 0.945   & 0.941   & 0.941    & 0.072   & 0.141  & 0.143   \\
                     &                      & 100                 & 0.016  & NA    & NA     & 0.929   & NA      & NA       & 0.073   & NA     & NA      \\
                      \hline 
\multirow{4}{*}{500} & \multirow{2}{*}{350} & 10                  & 0.011  & 0.017 & 0.017  & 0.937   & 0.950   & 0.952    & 0.053   & 0.080  & 0.081   \\
                     &                      & 100                 & 0.012  & 0.018 & 0.018  & 0.935   & 0.941   & 0.944    & 0.053   & 0.084  & 0.087   \\
                     & \multirow{2}{*}{450} & 10                  & 0.012  & 0.029 & 0.029  & 0.948   & 0.943   & 0.947    & 0.054   & 0.140  & 0.142   \\
                     &                      & 100                 & 0.011  & NA    & NA     & 0.953   & NA      & NA       & 0.054   & NA     & NA      \\
                      \hline  \\
\multicolumn{12}{c}{ $\gamma = \gamma^{(2)}$ }                                                                  \\
 \hline 
\multirow{4}{*}{150} & \multirow{2}{*}{50}  & 10                  & 0.036  & 0.036 & 0.036  & 0.924   & 0.943   & 0.943    & 0.163   & 0.174  & 0.175   \\
                     &                      & 100                 & 0.040  & NA    & NA     & 0.892   & NA      & NA       & 0.167   & NA     & NA      \\
                     & \multirow{2}{*}{100} & 10                  & 0.036  & 0.052 & 0.053  & 0.931   & 0.943   & 0.947    & 0.168   & 0.248  & 0.255   \\
                     &                      & 100                 & 0.040  & NA    & NA     & 0.900   & NA      & NA       & 0.166   & NA     & NA      \\
                      \hline 
\multirow{4}{*}{300} & \multirow{2}{*}{150} & 10                  & 0.024  & 0.028 & 0.028  & 0.936   & 0.942   & 0.946    & 0.113   & 0.141  & 0.142   \\
                     &                      & 100                 & 0.025  & 0.032 & 0.034  & 0.932   & 0.955   & 0.961    & 0.114   & 0.160  & 0.173   \\
                     & \multirow{2}{*}{250} & 10                  & 0.024  & 0.053 & 0.054  & 0.950   & 0.937   & 0.934    & 0.115   & 0.250  & 0.257   \\
                     &                      & 100                 & 0.025  & NA    & NA     & 0.933   & NA      & NA       & 0.115   & NA     & NA      \\
                      \hline 
\multirow{4}{*}{500} & \multirow{2}{*}{350} & 10                  & 0.018  & 0.029 & 0.029  & 0.942   & 0.951   & 0.949    & 0.086   & 0.141  & 0.142   \\
                     &                      & 100                 & 0.018  & 0.034 & 0.037  & 0.937   & 0.933   & 0.945    & 0.086   & 0.160  & 0.174   \\
                     & \multirow{2}{*}{450} & 10                  & 0.018  & 0.050 & 0.051  & 0.940   & 0.949   & 0.950    & 0.086   & 0.249  & 0.255   \\
                     &                      & 100                 & 0.018  & NA    & NA     & 0.957   & NA      & NA       & 0.086   & NA     & NA      \\ 
                      \hline  \hline 
\end{tabular}
\end{center}
{\footnotesize Note: The results come from the average of 1000 simulations. ``MAE'' denotes the mean absolute error. ``Coverage'' and ``Length'' are the empirical coverage rate and the average length of the 95\% confidence intervals, respectively. ``IQ'' represents the IQ estimator defined in \eqref{eq:hatbetaA}. ``LIML'' and ``mbtsls'' represent the LIML estimator and modified bias-corrected two stage least square estimator \citep{kolesar2015identification}, respectively. The standard errors of the latter two estimators are constructed by the minimum distance approach \citep{kolesar2018minimum}. ``NA'' means ``not available''. }
\end{table}
\begin{table}[!ht]
\caption{Estimation and Inference of Endogenous Effects under Heteroskedasticity}
\label{tab:beta_hetero}
\vspace{-1em}
\begin{center}
\begin{tabular}{ccc|rrr|rrr|rrr}
\hline\hline 
\multirow{2}{*}{$n$}   & \multirow{2}{*}{$p_x$}  & \multirow{2}{*}{$p_z$}  & \multicolumn{3}{c|}{MAE} & \multicolumn{3}{c|}{Coverage} & \multicolumn{3}{c}{Length} \\ \cline{4-12}
                     &                      &                     &  \multicolumn{1}{c}{IQ}     & LIML  & mbtsls &  \multicolumn{1}{c}{IQ}      & LIML    & mbtsls   &  \multicolumn{1}{c}{IQ}       & LIML   & mbtsls \\
                     \hline \\
\multicolumn{12}{c}{ $\gamma = \gamma^{(1)}$ }                                                              \\
 \hline 
\multirow{4}{*}{150} & \multirow{2}{*}{50}  & 10                  & 0.025  & 0.021 & 0.021  & 0.942   & 0.921   & 0.927    & 0.117   & 0.099  & 0.100   \\
                     &                      & 100                 & 0.029  & NA    & NA     & 0.918   & NA      & NA       & 0.119   & NA     & NA      \\
                     & \multirow{2}{*}{100} & 10                  & 0.026  & 0.031 & 0.031  & 0.927   & 0.933   & 0.930    & 0.120   & 0.142  & 0.143   \\
                     &                      & 100                 & 0.029  & NA    & NA     & 0.918   & NA      & NA       & 0.122   & NA     & NA      \\
                     \hline
\multirow{4}{*}{300} & \multirow{2}{*}{150} & 10                  & 0.017  & 0.017 & 0.017  & 0.935   & 0.929   & 0.928    & 0.081   & 0.080  & 0.080   \\
                     &                      & 100                 & 0.019  & 0.018 & 0.018  & 0.921   & 0.933   & 0.932    & 0.082   & 0.084  & 0.087   \\
                     & \multirow{2}{*}{250} & 10                  & 0.019  & 0.029 & 0.030  & 0.932   & 0.943   & 0.943    & 0.083   & 0.141  & 0.143   \\
                     &                      & 100                 & 0.019  & NA    & NA     & 0.931   & NA      & NA       & 0.083   & NA     & NA      \\
                     \hline
\multirow{4}{*}{500} & \multirow{2}{*}{350} & 10                  & 0.013  & 0.017 & 0.017  & 0.950   & 0.937   & 0.939    & 0.062   & 0.080  & 0.080   \\
                     &                      & 100                 & 0.014  & 0.017 & 0.018  & 0.929   & 0.941   & 0.942    & 0.062   & 0.084  & 0.087   \\
                     & \multirow{2}{*}{450} & 10                  & 0.013  & 0.029 & 0.029  & 0.929   & 0.931   & 0.931    & 0.062   & 0.142  & 0.144   \\
                     &                      & 100                 & 0.014  & NA    & NA     & 0.933   & NA      & NA       & 0.062   & NA     & NA      \\
                     \hline \\ 
\multicolumn{12}{c}{ $\gamma = \gamma^{(2)}$ }                                                                 \\
\hline
\multirow{4}{*}{150} & \multirow{2}{*}{50}  & 10                  & 0.045  & 0.039 & 0.039  & 0.935   & 0.914   & 0.921    & 0.208   & 0.174  & 0.176   \\
                     &                      & 100                 & 0.047  & NA    & NA     & 0.902   & NA      & NA       & 0.205   & NA     & NA      \\
                     & \multirow{2}{*}{100} & 10                  & 0.046  & 0.057 & 0.058  & 0.924   & 0.921   & 0.919    & 0.211   & 0.253  & 0.259   \\
                     &                      & 100                 & 0.046  & NA    & NA     & 0.925   & NA      & NA       & 0.208   & NA     & NA      \\
                     \hline
\multirow{4}{*}{300} & \multirow{2}{*}{150} & 10                  & 0.031  & 0.032 & 0.032  & 0.933   & 0.912   & 0.912    & 0.146   & 0.140  & 0.140   \\
                     &                      & 100                 & 0.033  & 0.035 & 0.038  & 0.904   & 0.925   & 0.924    & 0.147   & 0.159  & 0.174   \\
                     & \multirow{2}{*}{250} & 10                  & 0.033  & 0.052 & 0.053  & 0.931   & 0.948   & 0.946    & 0.149   & 0.250  & 0.256   \\
                     &                      & 100                 & 0.032  & NA    & NA     & 0.920   & NA      & NA       & 0.148   & NA     & NA      \\
                     \hline
\multirow{4}{*}{500} & \multirow{2}{*}{350} & 10                  & 0.023  & 0.030 & 0.030  & 0.944   & 0.923   & 0.929    & 0.113   & 0.140  & 0.141   \\
                     &                      & 100                 & 0.025  & 0.034 & 0.037  & 0.927   & 0.932   & 0.936    & 0.113   & 0.160  & 0.174   \\
                     & \multirow{2}{*}{450} & 10                  & 0.024  & 0.053 & 0.054  & 0.938   & 0.936   & 0.938    & 0.113   & 0.252  & 0.258   \\
                     &                      & 100                 & 0.024  & NA    & NA     & 0.937   & NA      & NA       & 0.114   & NA     & NA      \\ 
                      \hline  \hline 
\end{tabular}
\end{center}
{\footnotesize Note: The results come from the average of 1000 simulations. ``MAE'' denotes the mean absolute error. ``Coverage'' and ``Length'' are the empirical coverage rate and the average length of the 95\% confidence intervals, respectively. ``IQ'' represents the IQ estimator defined in \eqref{eq:hatbetaA}. ``LIML'' and ``mbtsls'' represent the LIML estimator and modified bias-corrected two stage least square estimator \citep{kolesar2015identification}, respectively. The standard errors of the latter two estimators are constructed by the minimum distance approach \citep{kolesar2018minimum}. ``NA'' means ``not available''. }
\end{table}

\clearpage

\newcommand{\hatpiZ}{\widehat{\pi}_Z}
\newcommand{\piZ}{\pi_Z}
\newcommand{\widecheckpiZ}{\widecheck{\pi}_Z}
\newcommand{\tildepiZ}{\widetilde{\pi}_Z}
\newcommand{\EV}{\expect[V]}
\newcommand{\Wi}{W_{i\cdot}}
\newcommand{\uhat}{\widehat{u}_3} 
\section{Proofs}\label{app:proof}
\par Throughout the proof, we use $C$ and $c$ to denote generic absolute constants that may vary from place to place. We first present some useful preliminary lemmas in Section \ref{app:lem}.  Section \ref{app:prooftheo} includes the proofs of the theoretical results in Section \ref{sec:model} of the main text. Firstly, some essential propositions about the initial Lasso estimators and test statistics are summarized in 
Section \ref{app:b1}. Secondly, we give the proof of Theorem \ref{thm:betaAsympNorm} in Section \ref{app:b22}. Section \ref{app:b3} includes the proofs of the main theoretical results of the proposed tests in Section \ref{sec:overid} of the main text. Firstly, some essential propositions are given in 
Section \ref{app:b31}. Secondly, we give the proofs of Theorems \ref{thm:M test} and  \ref{thm:Q} in Sections \ref{app:b32} and  \ref{app:b33}, respectively.

\subsection{Preliminary Lemmas}\label{app:lem}
This subsection provides useful lemmas implied by (or directly from) other literature. 
\par Define the \emph{restricted eigenvalue} of the empirical Gram matrix $\hSigma = W^\top W /n $, given as 
\begin{equation}
    \kappa(\hSigma,s) = \inf_{\theta\in\mathcal{R}(s)} \dfrac{\theta^\top\hSigma\theta}{\|\theta\|_2^2},
\end{equation}
where the restricted set $\mathcal{R}(s):=\{\theta\in\mathbb{R}^p:\|\theta_{\mathcal{M}^c}\|_1\leq 3\|\theta_\mathcal{M}\|_1\text{ for all }\mathcal{M}\subset\mathbb{R}^p \text{ and } |\mathcal{M}|\leq s\}$. 
Lemma \ref{lem:Lasso} provides the Lasso convergence rate. This is a direct result of Lemma 1 in \citet{mei2022lasso} and Theorem 6.1 of \citet{buhlmann2011statistics}. 
\begin{lemma} \label{lem:Lasso}
Suppose that $4\|n^{-1}W^\top\varepsilon_j\|_\infty \leq \lambda_{jn}$ for $j=1,2$. Then 
\begin{equation}
    \begin{aligned}
\max\{\|\widehat{\Gamma}-\Gamma\|_2,\|\widehat{\gamma}-\gamma\|_2,\|\widehat{\Psi}-\Psi\|_2,\|\widehat{\psi}-\psi\|_2\} &\lesssim \dfrac{\sqrt{s}\lambda_n}{\kappa(\hSigma,s)}, \\
\max\{\|\widehat{\Gamma}-\Gamma\|_1,\|\widehat{\gamma}-\gamma\|_1,\|\widehat{\Psi}-\Psi\|_1,\|\widehat{\psi}-\psi\|_1\} &\lesssim \dfrac{s\lambda_n}{\kappa(\hSigma,s)}.  
\end{aligned}
\end{equation}
with $\lambda_n=\max(\lambda_{1n},\lambda_{2n})$. In addition, if $4\|n^{-1}W^\top \widecheck{e}_A \|_\infty \leq \lambda_{3n}$, 
\begin{equation}
    \begin{aligned}
\max\{\|\widehat{\pi}_A-\widecheck\pi_A\|_2,\|\widehat{\varphi}_A-\widecheck\varphi_A\|_2\} &\lesssim \dfrac{\sqrt{s}\lambda_{3n}}{\kappa(\hSigma,s)}, \\
\max\{\|\widehat{\pi}_A-\widecheck\pi_A\|_1,\|\widehat{\varphi}_A-\widecheck\varphi_A\|_1\} &\lesssim \dfrac{s\lambda_{3n}}{\kappa(\hSigma,s)}.  
\end{aligned}
\end{equation}
\end{lemma}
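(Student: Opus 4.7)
The plan is to obtain all six bounds by applying the standard Lasso oracle inequality (of Bickel--Ritov--Tsybakov / B\"uhlmann--van de Geer type) three times, once for each of the penalized regressions \eqref{Theta}, \eqref{theta}, and \eqref{eq: Lasso pi}, recognizing in each case that the sparsity index $s$ defined just after \eqref{eq: reduced form} controls the support size of the joint parameter vector being estimated. Since the lemma takes the noise conditions $4\|n^{-1}W^\top\varepsilon_j\|_\infty\le\lambda_{jn}$ and $4\|n^{-1}W^\top\check e_A\|_\infty\le\lambda_{3n}$ as hypotheses, no probabilistic concentration argument is required; the proof is purely a deterministic consequence of the Lasso basic inequality, the cone condition, and the definition of the restricted eigenvalue $\kappa(\hat\Sigma,s)$.

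For the first regression, I would stack the estimators as $\theta^\star=(\Psi^\top,\Gamma^\top)^\top$ and $\hat\theta=(\hat\Psi^\top,\hat\Gamma^\top)^\top$, so that \eqref{Theta} becomes a single Lasso regression of $Y$ on $W=[X,Z]$ with noise $\varepsilon_1$ and penalty $\lambda_{1n}$. Using $\hat\theta$ as a minimizer, the standard basic inequality reads
$$\tfrac{1}{n}\|W(\hat\theta-\theta^\star)\|_2^2 \le \tfrac{2}{n}(\hat\theta-\theta^\star)^\top W^\top\varepsilon_1+\lambda_{1n}\bigl(\|\theta^\star\|_1-\|\hat\theta\|_1\bigr).$$
Applying H\"older with the noise hypothesis to control the stochastic term by $\tfrac{\lambda_{1n}}{2}\|\hat\theta-\theta^\star\|_1$ and decomposing the $\ell_1$ difference on $S=\mathrm{supp}(\theta^\star)$ with $|S|\le s$ yields the classical cone inclusion $\hat\theta-\theta^\star\in\mathcal R(s)$ along with $n^{-1}\|W(\hat\theta-\theta^\star)\|_2^2 \le \tfrac{3}{2}\lambda_{1n}\|(\hat\theta-\theta^\star)_S\|_1$. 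Invoking $n^{-1}\|W\delta\|_2^2\ge\kappa(\hat\Sigma,s)\|\delta\|_2^2$ on $\mathcal R(s)$ and Cauchy--Schwarz $\|\delta_S\|_1\le\sqrt{s}\|\delta\|_2$ gives the $\ell_2$ bound $\|\hat\theta-\theta^\star\|_2\lesssim \sqrt{s}\lambda_{1n}/\kappa(\hat\Sigma,s)$, from which the $\ell_1$ bound follows via the cone condition $\|\delta\|_1\le 4\|\delta_S\|_1\le 4\sqrt{s}\|\delta\|_2$. Restricting to the $X$- and $Z$-blocks of $\delta$ controls $\|\hat\Psi-\Psi\|_\bullet$ and $\|\hat\Gamma-\Gamma\|_\bullet$ simultaneously.

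The $\{\hat\psi,\hat\gamma\}$ regression is handled identically with $(\varepsilon_1,\lambda_{1n})$ replaced by $(\varepsilon_2,\lambda_{2n})$, giving the bounds with $\lambda_n=\max(\lambda_{1n},\lambda_{2n})$. For \eqref{eq: Lasso pi}, the response is $Y-D\hat\beta_A$, the noise is $\check e_A$, and the target parameter is $\check\theta_A=(\check\varphi_A^\top,\check\pi_A^\top)^\top$; its support is contained in $\mathrm{supp}(\varphi)\cup\mathrm{supp}(\psi)\cup\mathrm{supp}(\pi)\cup\mathrm{supp}(\gamma)$, which has cardinality at most $2s$ by the definition of $s$. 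Rerunning the same argument with $\lambda_{3n}$ and the noise hypothesis $4\|n^{-1}W^\top\check e_A\|_\infty\le\lambda_{3n}$ produces the remaining bounds, with the constant factor $\sqrt{2}$ absorbed into $\lesssim$. The only conceptual obstacle is recognizing that each of the three problems can be recast as a single joint Lasso regression whose total support size is of order $s$ under the definition given in the paper, so that no separate analysis of the two block-coordinates is needed and the restricted eigenvalue $\kappa(\hat\Sigma,s)$ is the right curvature constant throughout.
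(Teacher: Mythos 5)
Your argument is correct and is essentially the paper's own route: the paper proves Lemma \ref{lem:Lasso} simply by citing Lemma 1 of \citet{mei2022lasso} and Theorem 6.1 of \citet{buhlmann2011statistics}, and what you have written out is exactly that standard basic-inequality/cone-condition/restricted-eigenvalue derivation, applied once to each of the three stacked regressions. Your observation that the third regression's pseudo-true parameter $\widecheck{\theta}_A$ has support of size at most $2s$ (absorbed into the $\lesssim$) and that the noise hypotheses make the whole argument deterministic matches the intended reading of the lemma.
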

Lemma \ref{lem:re_db} shows the probability bounds for the maximum norm of some sub-Gaussian and sub-exponential variables, and a lower bound of the restricted eigenvalue useful in the proofs. 
\begin{lemma} \label{lem:re_db}Under Assumptions \ref{as:DecisionMatrix}-\ref{as:error},
\begin{align}
   \max_{i\in[n],j\in[p]}|W_{ij}| &\lep \sqrt{\log p + \log n}.\label{eq:Wbound}  
\end{align}
When $\log p=o(n)$,
\begin{align}    
    \|\hat\Sigma - \Sigma\|_\infty &\lep \sqrt{\dfrac{\log p}{n}},\label{eq:Sigma_bound} \\
    \|n^{-1}W^\top\varepsilon_j\|_\infty &\lep \sqrt{\dfrac{\log p}{n}}, \text{ for }j=1,2.\label{eq:WeBound}
\end{align}
Besides, when $s = o(\sqrt{n/\log p})$, w.p.a.1 
\begin{equation}\label{eq:RE_lower}
    \kappa(\hSigma,s) \geq 0.5c_\Sigma. 
\end{equation}
\end{lemma}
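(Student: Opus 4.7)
The plan is to prove the four inequalities in sequence, each relying on a standard concentration argument combined with a union bound, with the restricted-eigenvalue bound following by a perturbation argument from the earlier maximum-norm bound on $\hat\Sigma-\Sigma$.

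For the first bound, since Assumption~\ref{as:DecisionMatrix} gives sub-Gaussian tails $\Pr(|W_{ij}|>\mu)\le C\exp(-c\mu^2)$, I would apply a union bound over all $np$ entries: $\Pr(\max_{i,j}|W_{ij}|>\mu)\le Cnp\exp(-c\mu^2)$. Choosing $\mu=C'\sqrt{\log p+\log n}$ with $C'$ large enough makes this probability tend to zero, yielding (\ref{eq:Wbound}). For (\ref{eq:Sigma_bound}) and (\ref{eq:WeBound}), the key observation is that products of two sub-Gaussian variables (either $W_{ij}W_{ik}$ or $W_{ij}\varepsilon_{i\ell}$) are sub-exponential, with conditional mean zero in the second case by Assumption~\ref{as:error}. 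I would apply Bernstein's inequality to each centered average, getting tail bound $\exp(-cn t^2)$ for moderate $t$, then union-bound over the $p^2$ entries of $\hat\Sigma-\Sigma$ and the $p$ entries of $n^{-1}W^\top\varepsilon_j$; choosing $t=C''\sqrt{\log p/n}$ with $C''$ sufficiently large delivers both rates under $\log p=o(n)$.

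For the restricted-eigenvalue bound (\ref{eq:RE_lower}), I would exploit the fact that for any $\theta\in\mathcal{R}(s)$, the cone condition $\|\theta_{\mathcal{M}^c}\|_1\le 3\|\theta_\mathcal{M}\|_1$ implies $\|\theta\|_1\le 4\|\theta_\mathcal{M}\|_1\le 4\sqrt{s}\|\theta\|_2$. Writing $\theta^\top\hat\Sigma\theta=\theta^\top\Sigma\theta+\theta^\top(\hat\Sigma-\Sigma)\theta$, the population part is bounded below by $c_\Sigma\|\theta\|_2^2$ (Assumption~\ref{as:DecisionMatrix}), and the perturbation term satisfies
\begin{equation*}
|\theta^\top(\hat\Sigma-\Sigma)\theta|\le\|\hat\Sigma-\Sigma\|_\infty\|\theta\|_1^2\le 16s\|\hat\Sigma-\Sigma\|_\infty\|\theta\|_2^2.
\end{equation*}
Plugging in (\ref{eq:Sigma_bound}) gives $\theta^\top\hat\Sigma\theta\ge(c_\Sigma-C''' s\sqrt{\log p/n})\|\theta\|_2^2$ w.p.a.1, and under $s=o(\sqrt{n/\log p})$ the perturbation is $o(1)$, so the infimum over $\mathcal{R}(s)$ exceeds $0.5c_\Sigma$ with probability approaching one.

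The routine obstacle is simply bookkeeping the constants correctly and verifying that sub-exponential Bernstein bounds apply uniformly in $i$ despite heteroskedasticity; since Assumption~\ref{as:error} provides uniform sub-Gaussian constants for $\varepsilon_{ij}$ and Assumption~\ref{as:DecisionMatrix} does the same for $W_{ij}$, the sub-exponential norms of $W_{ij}\varepsilon_{i\ell}$ and $W_{ij}W_{ik}-\expect[W_{ij}W_{ik}]$ are bounded by absolute constants, so Bernstein applies with a common variance proxy and all four statements follow without needing anything beyond standard high-dimensional concentration machinery.
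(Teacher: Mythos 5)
Your proposal matches the paper's proof essentially step for step: a union bound over the $np$ sub-Gaussian entries for the maximum of $|W_{ij}|$, sub-exponential (Bernstein-type, via Corollary 5.17 of Vershynin) concentration plus a union bound for $\|\hat\Sigma-\Sigma\|_\infty$ and $\|n^{-1}W^\top\varepsilon_j\|_\infty$, and the standard cone-condition perturbation bound $|\theta^\top(\hat\Sigma-\Sigma)\theta|\le 16s\|\hat\Sigma-\Sigma\|_\infty\|\theta\|_2^2$ combined with $s=o(\sqrt{n/\log p})$ for the restricted eigenvalue. The argument is correct and takes the same route as the paper.
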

\begin{proof}[Proof of Lemma \ref{lem:re_db}]
    By Assumption \ref{as:DecisionMatrix}, we can deduce (\ref{eq:Wbound}) by the sub-Gaussianity of $W_{i\cdot}$, which implies 
    \[\Pr\left(\max_{i\in[n],j\in[p]}|W_{ij}| > \sqrt{2c^{-1}\cdot\log (np)} \right) \leq np\cdot C{\rm e}^{-2\log (np) } = C(np)^{-1}\to0.\]
    In terms of (\ref{eq:Sigma_bound}) and (\ref{eq:WeBound}), note that the products of two sub-Gaussian variables are sub-exponential. The LHS of the inequalities is the maximum norm of sub-exponential vectors with mean zero. By Corollary 5.17 in \citet{vershynin2010introduction}, when $\log p = o(n)$ there exists some $c>0$ such that  
    \[\Pr\left(\|\hSigma-\Sigma\|_\infty > \sqrt{2\log p/(cn)} \right) \leq 2p\cdot \exp(-2\log p)\to 0,\]
    and similar probability bound holds for $n^{-1}W^\top \varepsilon_j$. 
    As for (\ref{eq:RE_lower}), for any $\theta\in\mathcal{R}$
    \[\begin{aligned}
        \theta^\top\hSigma\theta &\geq \theta^\top\Sigma\theta - \left|\theta^\top(\hSigma-\Sigma)\theta\right| \\
        &\geq c_\Sigma \theta^\top\theta - \|\theta\|_1^2\|\hSigma-\Sigma\|_\infty \\
        &\geq c_\Sigma \|\theta\|_2^2 - (\|\theta_{\mathcal{M}}\|_1+\|\theta_{\mathcal{M}^c}\|_1)^2\cdot c\sqrt{\dfrac{\log p}{n}} \\
        &\geq c_\Sigma \|\theta\|_2^2 - (4\|\theta_{\mathcal{M}}\|_`)^2 c\sqrt{\dfrac{\log p}{n}}\\
        &\geq c_\Sigma \|\theta\|_2^2 - 16c\cdot s\sqrt{\dfrac{\log p}{n}}\cdot\|\theta\|_2^2 
        \geq 0.5c_\Sigma \|\theta\|_2^2,
    \end{aligned}\]
    for some absolute constant $c>0$, where the last inequality applies $s=o(\sqrt{n/\log p})$.
\end{proof}

Lemma \ref{lem:sub-Gaussian trans} shows that under certain conditions, linear transformations of sub-Gaussian vectors are still sub-Gaussian. 
\begin{lemma} \label{lem:sub-Gaussian trans} Suppose that all entries in the vector $x=(x_1,x_2,\cdots,x_p)^\top\in\mathbb{R}^p$ is a centered sub-Gaussian vector such that $\mathbb{E}(x)=0$ and $\|x\|_{\psi_2}\leq C_x$ for some absolute constant $C_x$. Then for any matrix $B\in\mathbb{R}^{p\times p}$ such that $\|B\|_2 \leq C_B$, then the $p\times 1$ vector $Bx$ is also sub-Gaussian such that $\|Bx\|_{\psi_2} \leq C_B\cdot C_x$. 
\end{lemma} 
\begin{proof}[Proof of Lemma \ref{lem:sub-Gaussian trans}] The result follows by 
\[\begin{aligned}
    \|Bx\|_{\psi_2} &= \sup_{\|b\|_2=1}\sup_{q\geq 1} \dfrac{1}{\sqrt{q}}\left(\mathbb{E}|b^\top Bx|^q\right)^{1/q} \\
    &= \sup_{\|b\|_2=1}\sup_{q\geq 1} \dfrac{\|B^\top b\|_2}{\sqrt{q}}\left(\mathbb{E}\left|\dfrac{b^\top B}{\|B^\top b\|_2}x\right|^q\right)^{1/q} \\
    &\leq \sup_{\|b\|_2=1}\sup_{q\geq 1} \dfrac{\|B\|_2}{\sqrt{q}}\left(\mathbb{E}\left|\dfrac{b^\top B}{\|B^\top b\|_2}x\right|^q\right)^{1/q} \\
    &\leq \|B\|_2 \cdot \sup_{\|\delta\|_2=1}\sup_{q\geq 1} \dfrac{1}{\sqrt{q}}\left(\mathbb{E}|\delta^\top x|^q\right)^{1/q} \leq C_B\cdot C_x\\
\end{aligned}\]
where the first and the last step applies the definition of sub-Gaussian norm in Definition \ref{def: subG}. 
\end{proof}

Lemma \ref{lem:CLIME} shows the asymptotic properties of the inverse covariance estimator CLIME \eqref{eq:clime}.
\begin{lemma} \label{lem:CLIME}Under Assumptions \ref{as:DecisionMatrix}-\ref{as:SI} and \ref{as:tuning}, 
\begin{equation}\label{eq:CLIMEestL1Bound}
    \|\hOmega\|_1 \leq m_\omega,
\end{equation}
w.p.a.1. Besides, 
\begin{equation}\label{eq:CLIMEL1}
    \|\hOmega-\Omega\|_1 \lep s_\omega\cdot m_\omega^{2-2q}\left(\dfrac{\log p}{n}\right)^{(1-q)/2},
\end{equation}
\begin{equation}\label{eq:CLIMEsup}
    \|\hSigma\hOmega-I\|_\infty \lep s_\omega\cdot m_\omega^{2-2q}\left(\dfrac{\log p}{n}\right)^{(1-q)/2}.
\end{equation}
\end{lemma}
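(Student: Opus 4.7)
\textbf{Proof plan for Lemma \ref{lem:CLIME}.} The lemma is essentially the CLIME consistency result of \citet{cai2011constrained} transplanted to our sub-Gaussian IV setup, so my plan is to verify the ingredients needed by their argument and then stitch the three claims together. The only probabilistic input is the entrywise concentration $\|\hSigma-\Sigma\|_\infty\lep\sqrt{\log p/n}$ from \eqref{eq:Sigma_bound} in Lemma \ref{lem:re_db}. I will combine this with Assumption \ref{as:SI} and the choice $\mu_\omega=C_\omega\sqrt{\log p/n}$ from Assumption \ref{as:tuning}(ii).

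First I would establish the \emph{feasibility of $\Omega$} for the CLIME program. Using the matrix inequality $\|AB\|_\infty\le\|A\|_\infty\|B\|_1$ with $\|\Omega\|_1\le m_\omega$ gives
\[
\|\hSigma\,\Omega-I\|_\infty=\|(\hSigma-\Sigma)\Omega\|_\infty\le\|\hSigma-\Sigma\|_\infty\,\|\Omega\|_1\lep m_\omega\sqrt{\log p/n}\le\mu_\omega,
\]
provided $C_\omega$ is taken large enough in Assumption \ref{as:tuning}(ii). Hence each column $\Omega_j$ is feasible for the $j$-th CLIME subproblem w.p.a.1, so by the $L_1$-minimality of $\hOmega^{(1)}_j$ we get $\|\hOmega^{(1)}_j\|_1\le\|\Omega_j\|_1\le m_\omega$. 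The symmetrization step in \eqref{eq:clime} replaces each entry by the one of smaller absolute value between $\hOmega^{(1)}_{jk}$ and $\hOmega^{(1)}_{kj}$, so $|\hOmega_{jk}|\le|\hOmega^{(1)}_{jk}|$ for every $(j,k)$, and therefore $\|\hOmega\|_1\le\|\hOmega^{(1)}\|_1\le m_\omega$, yielding \eqref{eq:CLIMEestL1Bound}.

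Next, for the $L_1$ rate \eqref{eq:CLIMEL1}, I would first prove the entrywise max bound $\|\hOmega-\Omega\|_{\max}\lep m_\omega\sqrt{\log p/n}$. The standard CLIME trick writes
\[
\Omega\,\hSigma\,(\hOmega^{(1)}-\Omega)=\Omega(\hSigma\hOmega^{(1)}-I)-\Omega(\hSigma-\Sigma)\Omega,
\]
whose right-hand side has entrywise max norm at most $\|\Omega\|_1\mu_\omega+\|\Omega\|_1^2\|\hSigma-\Sigma\|_\infty\lep m_\omega^2\sqrt{\log p/n}$; since $\Omega\hSigma$ is close to $I$ in the appropriate sense, this transfers (via $\hOmega^{(1)}-\Omega=\Omega\hSigma(\hOmega^{(1)}-\Omega)-(\Omega\hSigma-I)(\hOmega^{(1)}-\Omega)$ and the $L_1$ bound on $\hOmega^{(1)}-\Omega$ derived from $\|\hOmega^{(1)}\|_1,\|\Omega\|_1\le m_\omega$) into $\|\hOmega^{(1)}-\Omega\|_{\max}\lep m_\omega\sqrt{\log p/n}$. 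Then the conversion from the entrywise bound to the column-wise $L_1$ bound uses the $L_q$-ball sparsity in Assumption \ref{as:SI}: for each column, threshold the entries of $\Omega_j$ at level $\tau$, bound the small-entry part via $\sum_k|\Omega_{jk}|\mathbf 1\{|\Omega_{jk}|<\tau\}\le s_\omega\tau^{1-q}$, the large-entry part has at most $s_\omega\tau^{-q}$ indices and is controlled by $s_\omega\tau^{-q}\cdot(\|\hOmega-\Omega\|_{\max}+\tau)$, and finally optimize by choosing $\tau\asymp m_\omega\sqrt{\log p/n}$. This yields $\|\hOmega^{(1)}-\Omega\|_1\lep s_\omega m_\omega^{2-2q}(\log p/n)^{(1-q)/2}$, which by the triangle inequality and $|\hOmega_{jk}-\Omega_{jk}|\le|\hOmega^{(1)}_{jk}-\Omega_{jk}|+|\hOmega^{(1)}_{kj}-\Omega_{kj}|$ (from the symmetrization rule) is inherited by $\hOmega$, proving \eqref{eq:CLIMEL1}.

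Finally, for \eqref{eq:CLIMEsup}, I would use the triangle inequality and the same matrix bound:
\[
\|\hSigma\hOmega-I\|_\infty\le\|\hSigma(\hOmega-\Omega)\|_\infty+\|\hSigma\,\Omega-I\|_\infty\le\|\hSigma\|_\infty\,\|\hOmega-\Omega\|_1+\|\hSigma-\Sigma\|_\infty\|\Omega\|_1,
\]
where $\|\hSigma\|_\infty\le\|\Sigma\|_\infty+\|\hSigma-\Sigma\|_\infty=O_p(1)$ by Assumption \ref{as:DecisionMatrix} and \eqref{eq:Sigma_bound}; the first term is then dominated by the rate in \eqref{eq:CLIMEL1} and the second by $m_\omega\sqrt{\log p/n}$, which is absorbed into the same rate since $s_\omega,m_\omega\ge1$ and $q<1$. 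The principal technical obstacle is the proof of the entrywise max bound in the third paragraph: the standard pre-multiplication by $\Omega$ only gives control of $\Omega\hSigma(\hOmega^{(1)}-\Omega)$, so one must carefully argue that $\Omega\hSigma$ is invertible on the appropriate range with its inverse having bounded $L_1$ norm, which is where the feasibility step and $\|\Omega\|_1\le m_\omega$ reappear; once that is handled, the $L_q$-thresholding step is routine.
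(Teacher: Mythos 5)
Your overall architecture matches the paper's proof (feasibility of $\Omega$ $\Rightarrow$ $\|\hOmega^{(1)}\|_1\le m_\omega$ by $L_1$-minimality $\Rightarrow$ entrywise bound by pre-multiplying with $\Omega$ $\Rightarrow$ $L_q$-ball thresholding $\Rightarrow$ triangle inequality for \eqref{eq:CLIMEsup}), but your feasibility step has a genuine gap. You bound $\|\hSigma\Omega-I\|_\infty=\|(\hSigma-\Sigma)\Omega\|_\infty\le\|\hSigma-\Sigma\|_\infty\|\Omega\|_1\lesssim_p m_\omega\sqrt{\log p/n}$ and then declare this $\le\mu_\omega$ ``provided $C_\omega$ is taken large enough.'' But Assumption \ref{as:tuning}(ii) fixes $\mu_\omega=C_\omega\sqrt{\log p/n}$ with $C_\omega$ an \emph{absolute} constant, whereas $m_\omega$ is permitted to diverge under Assumptions \ref{as:SI}--\ref{as:asym}; so your bound does not show that $\Omega$ is feasible for the CLIME program. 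The paper avoids the spurious $m_\omega$ factor by working with the random variables directly: $(\hSigma\Omega-I)_{jk}=n^{-1}\sum_i W_{ij}(\Omega W_{i\cdot})_k-\delta_{jk}$, and since $\|\Omega\|_2\le 1/c_\Sigma$ is bounded, Lemma \ref{lem:sub-Gaussian trans} makes $(\Omega W_{i\cdot})_k$ sub-Gaussian with uniformly bounded norm, so each entry is a centered average of sub-exponential variables and concentrates at rate $\sqrt{\log p/n}$ with absolute constants (this is the content of Lemma 23 of \citet{javanmard2014confidence}, which the paper invokes). You need this sharper concentration, not the crude $\|\cdot\|_\infty\|\cdot\|_1$ product bound, for the stated tuning rule to work.

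Two smaller points. First, your entrywise bound is internally inconsistent: the right-hand side you compute is of order $m_\omega^2\sqrt{\log p/n}$ (correct, and matching the paper), yet you then assert $\|\hOmega^{(1)}-\Omega\|_{\max}\lesssim_p m_\omega\sqrt{\log p/n}$ and threshold at $\tau\asymp m_\omega\sqrt{\log p/n}$; the threshold must match the actual entrywise rate $m_\omega^2\sqrt{\log p/n}$, which is precisely what produces the exponent $m_\omega^{2-2q}$ in \eqref{eq:CLIMEL1}. Second, your ``transfer'' step pre-multiplies by $\Omega\hSigma$ and then has to argue that $\Omega\hSigma$ is invertible with controlled $L_1$ inverse — an unnecessary complication. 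The paper instead uses the exact identity $\hOmega^{(1)}-\Omega=\Omega\Sigma(\hOmega^{(1)}-\Omega)=\Omega(\Sigma\hOmega^{(1)}-I)$ with the \emph{population} $\Sigma$, then splits $\Sigma=\hSigma+(\Sigma-\hSigma)$ and uses $\|\hSigma\hOmega^{(1)}-I\|_\infty\le\mu_\omega$, $\|\hSigma\Omega-I\|_\infty\le\mu_\omega$, and $\|\hOmega^{(1)}\|_1+\|\Omega\|_1\le 2m_\omega$; no invertibility of any empirical object is needed. The remaining steps (symmetrization, $L_q$ thresholding, and the triangle-inequality argument for \eqref{eq:CLIMEsup}) are fine and coincide with the paper's.
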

\begin{proof}[Proof of Lemma \ref{lem:CLIME}]By Lemma \ref{lem:sub-Gaussian trans}, each element of $X\Omega^{1/2}$ is sub-Gaussian with uniformly bounded sub-Gaussian norm. By Lemma 23 in \citet{javanmard2014confidence}, $\Omega$ is a feasible solution w.p.a.1. in (\ref{eq:clime}) when $\mu_\omega = C\sqrt{\log p/n}$ with some sufficiently large absolute constant $C$, i.e. $\|\hSigma\Omega - I_p\|_\infty \leq \mu_\omega$ w.p.a.1. By the definition of $\hOmega$ in (\ref{eq:clime})
\[\|\hat\Omega\|_1\leq \|\hat\Omega^{(1)}\|_1 \leq \|\Omega\|_1 \leq m_\omega\]
w.p.a.1, which verifies (\ref{eq:CLIMEestL1Bound}). Besides, 
\[ \begin{aligned}
    \|\hOmega^{(1)} - \Omega\|_\infty &\leq \|\Omega\|_1 \|\Sigma\hOmega^{(1)} - I_p\|_\infty \\
     &\leq m_\omega \left(\|(\hSigma-\Sigma)(\hOmega^{(1)} - \Omega)\|_\infty + \|\hSigma(\hOmega^{(1)} - \Omega)\|_\infty    \right)\\
     &\leq m_\omega \left((\|\hOmega^{(1)}\|_1 + \|\Omega\|_1)\cdot\|\hSigma-\Sigma\|_\infty + \|\hSigma\hOmega^{(1)}-I_p\|_\infty + \|\hSigma\Omega-I_p\|_\infty \right)\\
     &\lep m_\omega^2 \sqrt{\dfrac{\log p}{n}}.
\end{aligned} \] 
Also, by definition of (\ref{eq:clime}), any entry of $\hat\Omega$ also appears in $\hat\Omega^{(1)}$. Thus, 
\[\|\hOmega - \Omega\|_\infty \leq \|\hOmega^{(1)} - \Omega\|_\infty \lep m_\omega^2 \sqrt{\dfrac{\log p}{n}}.\]
Following the proof of (14) in Theorem 6 of \citet{cai2011constrained} we can deduce \[\|\hOmega-\Omega\|_1 \lep s_\omega\cdot( \|\hOmega - \Omega\|_\infty)^{1-q}\lep s_\omega m_\omega^{2-2q}\left(\dfrac{\log p}{n}\right)^{(1-q)/2},\]
which is (\ref{eq:CLIMEL1}). For (\ref{eq:CLIMEsup}), 
\[\begin{aligned}
    \|\hSigma\hOmega-I\|_\infty 
    &\leq \|\hSigma\Omega-I\|_\infty + \|\hSigma(\hOmega-\Omega)\|_\infty \\
    &\lep \sqrt{\dfrac{\log p}{n}} + \|\hSigma\|_\infty \|\hOmega-\Omega|\|_1 \\
    &\lep \sqrt{\dfrac{\log p}{n}} + (\|\hSigma-\Sigma\|_\infty + \|\Sigma\|_\infty)\cdot s_\omega\cdot m_\omega^{2-2q}\left(\dfrac{\log p}{n}\right)^{(1-q)/2}\\
    &\lep s_\omega\cdot m_\omega^{2-2q}\left(\dfrac{\log p}{n}\right)^{(1-q)/2}.
\end{aligned}\]
This completes the proof of Lemma \ref{lem:CLIME}.
\end{proof}
Lemma \ref{lem:asymp} shows a more convenient asymptotic regime used in the proofs. 
\begin{lemma}\label{lem:asymp}Under Assumption \ref{as:asym}
\begin{equation}\label{eq:lem:asymp}
    \dfrac{m_\omega^3 s^{3/2}(\log p)^{(7+\nu)/2}}{\sqrt{n}} = o(1\wedge\|\gamma\|_2). 
\end{equation}
\end{lemma}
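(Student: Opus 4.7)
My plan is to reduce the claim to a purely algebraic exponent comparison between the bound in Assumption \ref{as:asym} and the expression in Lemma \ref{lem:asymp}. Write
\[
A := \frac{s_\omega\, m_\omega^{3-2q}\, s^{(3-q)/2}(\log p)^{(7+\nu-q)/2}}{n^{(1-q)/2}},\qquad B := \frac{m_\omega^{3}\, s^{3/2}(\log p)^{(7+\nu)/2}}{n^{1/2}}.
\]
Cancelling common factors gives
\[
\frac{B}{A} \;=\; \frac{1}{s_\omega}\, m_\omega^{2q}\, s^{q/2}\, (\log p)^{q/2}\, n^{-q/2} \;=\; \frac{1}{s_\omega}\left(\frac{m_\omega^{4}\, s \log p}{n}\right)^{q/2}.
\]
Since Assumption \ref{as:asym} guarantees $A = o(1 \wedge \|\gamma\|_2)$, it suffices to show that $B/A = O(1)$ (in fact, $o(1)$), which will immediately yield $B = o(1 \wedge \|\gamma\|_2)$.

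\textbf{Key step.} For $q=0$ the claim is immediate: $B/A = 1/s_\omega \leq 1$ since $s_\omega \geq 1$. For $q \in (0,1)$, I plan to establish that $m_\omega^{4}\, s \log p / n \to 0$. Raising $A = o(1)$ to the power $k = 4/(3-2q) > 0$ gives
\[
s_\omega^{4/(3-2q)}\, m_\omega^{4}\, s^{2(3-q)/(3-2q)}\, (\log p)^{2(7+\nu-q)/(3-2q)}\, n^{-2(1-q)/(3-2q)} \;=\; o(1),
\]
so, using $s_\omega, s, \log p \geq 1$, rearranging yields an upper bound on $m_\omega^{4}$ that I can multiply by $s\log p/n$. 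The resulting exponents collapse to $-1/(3-2q)$ for $n$, $-3/(3-2q)$ for $s$, and $-(11+2\nu)/(3-2q)$ for $\log p$, all strictly negative. Hence
\[
\frac{m_\omega^{4}\, s \log p}{n} \;=\; o\!\left(n^{-1/(3-2q)}\right) \;\longrightarrow\; 0.
\]

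\textbf{Conclusion.} Combining the two cases, $B/A \to 0$, so $B = o(A)$. Coupled with $A = o(1\wedge\|\gamma\|_2)$ from Assumption \ref{as:asym}, this delivers $B = o(1 \wedge \|\gamma\|_2)$, which is precisely \eqref{eq:lem:asymp}. The argument is entirely algebraic, and the only mild obstacle is the careful exponent bookkeeping, especially as $q \uparrow 1$, where the factors $1/(3-2q)$, $4/(3-2q)$, etc.\ grow but remain finite for any fixed $q < 1$ permitted by Assumption \ref{as:SI}.
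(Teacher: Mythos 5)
Your argument is correct, but it takes a different (and somewhat longer) algebraic route than the paper. The paper's proof is two lines: writing $B$ for the left-hand side of \eqref{eq:lem:asymp} and $A$ for the quantity in Assumption \ref{as:asym}, it observes that $B^{1-q}\leq A$ because every exponent on $m_\omega$, $s$, $\log p$ in $B^{1-q}$ is no larger than the corresponding exponent in $A$ (all three quantities being $\geq 1$) while the powers of $n$ agree; then, since $B^{1-q}=o(1)$ and $0\leq q<1$, it uses $x\leq x^{1-q}$ for $x\in(0,1)$ to conclude $B\leq B^{1-q}\leq A=o(1\wedge\|\gamma\|_2)$. You instead compute the exact ratio $B/A=s_\omega^{-1}\left(m_\omega^4 s\log p/n\right)^{q/2}$ and separately establish $m_\omega^4 s\log p/n\to 0$ by raising $A=o(1)$ to the power $4/(3-2q)$; your exponent computations ($-1/(3-2q)$ for $n$, $-3/(3-2q)$ for $s$, $-(11+2\nu)/(3-2q)$ for $\log p$) all check out, and you use the same benign facts $s_\omega,s,\log p\geq 1$ that the paper uses implicitly. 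Your version buys a marginally stronger conclusion for $q>0$ (namely $B=o(A)$ rather than just $B\lesssim A$), at the cost of more bookkeeping; the paper's monotonicity trick is the more economical path. One minor quibble: the factors $1/(3-2q)$ and $4/(3-2q)$ do not actually grow appreciably as $q\uparrow 1$ (they tend to $1$ and $4$, since $3-2q\to 1$), so the caveat at the end of your proposal is unnecessary.
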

\begin{proof}[Proof of Lemma \ref{lem:asymp}]By Assumption \ref{as:DecisionMatrix}, $\sqrt{\QAst(\gamma)} \asymp \|\gamma\|_2$ By Assumption \ref{as:asym}, we have 
\[\begin{aligned}
\left(\dfrac{m_\omega^3 s^{3/2}(\log p)^{(7+\nu)/2}}{n^{1/2}}\right)^{1-q} &= \dfrac{m_\omega^{3-3q}s^{(3-3q)/2}(\log p)^{[(7+\nu)(1-q)]/2}}{n^{(1-q)/2}} \\
&\leq \dfrac{m_\omega^{3-2q}s^{(3-q)/2}(\log p)^{(7+\nu-q)/2}}{n^{(1-q)/2}} = o(1\wedge \|\gamma\|_2).
\end{aligned}\]
By $0\leq q<1$ and $\left(\dfrac{m_\omega^3 s^{3/2}(\log p)^{(7+\nu)/2}}{n^{1/2}}\right)^{1-q} < 1$ with $n$ large enough, we have 
\[\dfrac{m_\omega^3 s^{3/2}(\log p)^{(7+\nu)/2}}{n^{1/2}} < \left(\dfrac{m_\omega^3 s^{3/2}(\log p)^{(7+\nu)/2}}{n^{1/2}}\right)^{1-q} = o(1\wedge \|\gamma\|_2),\]
as $n\to\infty$.
\end{proof}


Lemma \ref{lem:bound_sd} shows the probability bounds for the maximum norms that are useful to bound the estimation errors of asymptotic variance. 
\begin{lemma}\label{lem:bound_sd}Under Assumptions \ref{as:DecisionMatrix}, \ref{as:error} and \ref{as:asym},  

\begin{equation}\label{eq:W4}
\max_{j,k,\ell,m\in[p]}\left|\dfrac{1}{n}\sum_{i=1}^nW_{ij}W_{ik}W_{i\ell}W_{im} - \dfrac{1}{n}\sum_{i=1}^n \mathbb{E}\left(W_{ij}W_{ik}W_{i\ell}W_{im}\right)\right| \lep\sqrt{\dfrac{\log p}{n}},
\end{equation}
\begin{equation}\label{eq:W3epsilon}
\max_{j,k,h\in[p]}\left|\dfrac{1}{n}\sum_{i=1}^nW_{ij}W_{ik}W_{ih}\varepsilon_{im}\right| \lep\sqrt{\dfrac{\log p}{n}},
\end{equation}
and 
\begin{equation}\label{eq:W2epsilon2}
\max_{j,k\in[p]}\left|\dfrac{1}{n}\sum_{i=1}^nW_{ij}W_{ik}\left(\varepsilon_{i\ell}\varepsilon_{im}- \expect[\varepsilon_{i\ell}\varepsilon_{im}|W]\right)\right|\lep \sqrt{\dfrac{\log p}{n}},
\end{equation}
for $\ell,m=1,2$.
\end{lemma}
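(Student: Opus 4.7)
The plan is to prove all three bounds via a unified truncation-plus-Bernstein-plus-union-bound strategy. The common ingredients are: (i) the sub-Gaussianity of $W_{ij}$ and $\varepsilon_{im}$ from Assumptions \ref{as:DecisionMatrix}--\ref{as:error}, which together with Lemma \ref{lem:re_db}(\ref{eq:Wbound}) yields a high-probability event $\mathcal{E}_W=\{\max_{i,j}|W_{ij}|\le C\sqrt{\log(np)}\}$, and an analogous event $\mathcal{E}_\varepsilon=\{\max_{i,m}|\varepsilon_{im}|\le C\sqrt{\log(np)}\}$ for the errors; (ii) Assumption \ref{as:asym} (via Lemma \ref{lem:asymp}) which ensures that polylogarithmic factors multiplying $\sqrt{\log p/n}$ remain $o(1)$, in particular $\log n \lesssim \log p$ in the relevant regime; (iii) uniformly bounded low-order moments of the summands, so that the Bernstein variance proxy is $O(1)$.

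For (\ref{eq:W4}), I would fix a tuple $(j,k,\ell,m)\in[p]^4$ and consider the centered sum $n^{-1}\sum_i(W_{ij}W_{ik}W_{i\ell}W_{im}-\mathbb{E}[\cdot])$. On $\mathcal{E}_W$ each summand is bounded in absolute value by $M_n=C(\log(np))^2$, and its variance is bounded by an absolute constant (an eighth moment of a sub-Gaussian, finite under Assumption \ref{as:DecisionMatrix}). Bernstein's inequality then gives, for $t=c\sqrt{\log p/n}$, a probability bound of order $\exp(-c'\log p)$ provided the sub-exponential correction $M_n t/n$ is dominated by the variance term, which holds because Lemma \ref{lem:asymp} delivers $(\log(np))^{2}\sqrt{\log p/n}=o(1)$. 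A union bound over $p^4$ tuples yields probability $\le p^{4-c'}\to 0$ for $c'$ large enough, completing (\ref{eq:W4}).

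For (\ref{eq:W3epsilon}) and (\ref{eq:W2epsilon2}) I would condition on $W$ and exploit the conditional mean-zero structure. For (\ref{eq:W3epsilon}), $\mathbb{E}(\varepsilon_{im}\mid W)=0$, so conditional on $W$ the summands $W_{ij}W_{ik}W_{ih}\varepsilon_{im}$ are independent with conditional sub-Gaussian scale $|W_{ij}W_{ik}W_{ih}|\sigma_{\max}$; truncating on $\mathcal{E}_W\cap\mathcal{E}_\varepsilon$ and applying Bernstein with the observed sample sixth moment of $W$ (which concentrates to a constant by (\ref{eq:Sigma_bound})-type arguments) gives the same $\sqrt{\log p/n}$ rate after a union bound over $p^3$ triples. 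For (\ref{eq:W2epsilon2}), the summand $\varepsilon_{i\ell}\varepsilon_{im}-\mathbb{E}[\varepsilon_{i\ell}\varepsilon_{im}\mid W]$ is centered conditional on $W$ and sub-exponential (product of two sub-Gaussians); truncation on $\mathcal{E}_W\cap\mathcal{E}_\varepsilon$ bounds each summand by $C(\log(np))^2$, while the conditional variance is uniformly bounded, and Bernstein plus a union bound over $p^2$ pairs again yields the claimed rate.

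The main obstacle I anticipate is verifying that the Bernstein sub-exponential correction term never dominates: the summands in (\ref{eq:W4}) and (\ref{eq:W2epsilon2}) are not sub-Gaussian but only sub-Weibull, so after truncation the pointwise bound $M_n$ grows polylogarithmically. Ensuring $M_n\cdot\sqrt{\log p/n}=o(1)$ so that the variance regime of Bernstein applies at scale $t\asymp\sqrt{\log p/n}$ relies squarely on Assumption \ref{as:asym}, which forces $(\log p)^{7/2}/\sqrt{n}=o(1)$ via Lemma \ref{lem:asymp}; I will need to track these constants carefully and also handle the minor nuisance that the probabilities are obtained conditional on the truncation events, which, being w.p.a.1, can be absorbed into the final $\lep$ statement.
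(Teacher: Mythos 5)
Your proposal is correct and reaches the stated rate, but it takes a genuinely different route from the paper. The paper does not truncate: it applies a Bernstein-type inequality for unbounded sub-Weibull variables (Theorem 1 of Merlev\`ede, Peligrad and Rio, 2011) directly to the products $W_{ij}W_{ik}W_{i\ell}W_{im}$, whose tails satisfy $\Pr(|\cdot|>\mu)\leq C\exp(-c\mu^{1/2})$, choosing the tail-decay parameter $r=1/4$ so that the leading error term is $n\exp(-c(n\log p)^{r/2})$; the condition $(\log p)^{2/r-1}=(\log p)^7=o(n)$ from Lemma \ref{lem:asymp} is exactly what makes that term vanish after the union bound. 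Your truncation-plus-classical-Bernstein argument buys elementariness (no specialized inequality needed) at the cost of one extra bookkeeping step that your sketch glosses over: after replacing $W_{ij}W_{ik}W_{i\ell}W_{im}$ by its truncated version you must re-center at the \emph{truncated} mean and separately bound the recentering bias $|\mathbb{E}[X_i\mathbf{1}(\mathcal{E}_W^i)]-\mathbb{E}[X_i]|$, which is exponentially small by Cauchy--Schwarz and the sub-Weibull tail, and you should apply Bernstein to the i.i.d.\ truncated summands rather than literally conditioning on the global event $\mathcal{E}_W$ (conditioning on a joint event destroys independence). Your verification that the Bernstein correction $M_n\sqrt{\log p/n}$ with $M_n\asymp(\log(np))^2$ is $o(1)$ is the right check and does follow from $(\log p)^7=o(n)$. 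For (\ref{eq:W3epsilon}) and (\ref{eq:W2epsilon2}) the paper simply repeats the same unconditional sub-Weibull Bernstein argument, whereas you condition on $W$; both work, though conditioning is unnecessary since the products are already unconditionally mean-zero (for (\ref{eq:W3epsilon})) or centered by the conditional moment (for (\ref{eq:W2epsilon2})) and sub-Weibull. Net: same rate, same reliance on Assumption \ref{as:asym}, different concentration machinery.
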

\begin{proof}[Proof of Lemma \ref{lem:bound_sd}]We only show (\ref{eq:W4}). The other two inequalities can be verified following the same procedures. By Assumption \ref{as:DecisionMatrix}, for any $j,k,\ell,m\in[p]$, we know that 
\[\Pr( |W_{ij}W_{ik}W_{i\ell}W_{im}| > \mu)\leq C\exp(-c\mu^{0.5}), \]
 for some absolute constants $C$ and $c$. By Theorem 1 of \citet{merlevede2011bernstein}, we know that for any $\mu>0$
 \[\begin{aligned}
 &\ \ \Pr\left(\left|\sum_{i=1}^n \left(W_{ij}W_{ik}W_{i\ell}W_{im} - \mathbb{E}(W_{ij}W_{ik}W_{i\ell}W_{im})\right) \right| > \mu\right) \\
 &\leq n\exp\left(-\dfrac{\mu^r}{C_1}\right) + \exp\left(-\dfrac{\mu^2}{C_2(1+nV)}\right) + \exp\left(-\dfrac{\mu^2}{C_3n}\exp\left(\dfrac{\mu^{r(1-r)}}{C_4(\log \mu)^r}\right)\right),
 \end{aligned}\]
 where $r = \left(\dfrac{1}{0.5} + \dfrac{1}{r_2}\right)^{-1} < 1$ as defined in (2.8) of the same paper. Here $1/r_2$ measures the mixing coefficient of a time series, which can be arbitrarily small for independent data. Taking $\mu = \sqrt{C_xn\log p}$ with $C_x = (2C_1)^{2/r}\vee (5C_2V)$. Then 
  \[\begin{aligned}
 &\ \ \Pr\left(\max_{j,k,\ell,m\in[p]}\left|\dfrac{1}{n}\sum_{i=1}^n \left(W_{ij}W_{ik}W_{i\ell}W_{im} - \mathbb{E}(W_{ij}W_{ik}W_{i\ell}W_{im})\right) \right| > \sqrt{\dfrac{C_x\log p}{n}}\right) \\
 &\leq np^4\exp\left(-\dfrac{(C_xn\log p)^{r/2}}{C_1}\right) + p^4\exp\left(-\dfrac{C_xn\log p}{C_2(1+nV)}\right) + \\
 &\ \ \ \ p^4\exp\left(-\dfrac{C_xn\log p}{C_3n}\exp\left(\dfrac{(C_xn\log p)^{r(1-r)/2}}{C_4(0.5 \log (C_xn\log p) )^r}\right)\right) \\
 &\leq np^4\exp\left(-2(n\log p)^{r/2}\right) + p^4\exp(-5\log p) + o(1) \\
  &\leq\exp\left(-(2(n\log p)^{r/2} - \log n - \log p) \right) + o(1), \\
 \end{aligned}\]
 where the second inequality applies that 
 \[\dfrac{(C_xn\log p)^{r(1-r)/2}}{C_4(0.5 \log (C_xn\log p) )^r}\to\infty.\]
 Obviously, $(n\log p)^{r/2} - \log n \to\infty$. Take $r_2 = 0.5$ and hence $r=0.25$ and $2/r-1 = 7$. We thus also have $(n\log p)^{r/2} - \log p\to\infty$ as $(\log p)^{2/r-1} = (\log p)^7 = o(n)$ by Lemma \ref{lem:asymp}. Hence,
 \[\begin{aligned}
 &\ \ \Pr\left(\max_{j,k,\ell,m\in[p]}\left|\dfrac{1}{n}\sum_{i=1}^n \left(W_{ij}W_{ik}W_{i\ell}W_{im} - \mathbb{E}(W_{ij}W_{ik}W_{i\ell}W_{im})\right) \right| > \sqrt{\dfrac{C_x\log p}{n}}\right)= o(1),
 \end{aligned}\]
 and (\ref{eq:W4}) follows.
\end{proof}

\subsection{Proofs of the Initial Estimator in Section \ref{sec:model}} \label{app:prooftheo}

\subsubsection{Essential Propositions} \label{app:b1}
Proposition \ref{prop:LassoTheta} provides probability upper bounds of the Lasso estimators of the reduced form estimators. 
\begin{proposition}\label{prop:LassoTheta}Suppose that Assumptions \ref{as:DecisionMatrix}, \ref{as:error} and \ref{as:tuning} (i) hold. If $s = o(\sqrt{n/\log p})$, we have 
\begin{equation}
\begin{aligned}
\max\{\|\widehat{\Gamma}-\Gamma\|_2,\|\widehat{\gamma}-\gamma\|_2,\|\widehat{\Psi}-\Psi\|_2,\|\widehat{\psi}-\psi\|_2\} &\lep \sqrt{\dfrac{s\log p}{n}}, \\
\max\{\|\widehat{\Gamma}-\Gamma\|_1,\|\widehat{\gamma}-\gamma\|_1,\|\widehat{\Psi}-\Psi\|_1,\|\widehat{\psi}-\psi\|_1\} &\lep \sqrt{\dfrac{s^2 \log p}{n}}.  
\end{aligned}
\end{equation}
\end{proposition}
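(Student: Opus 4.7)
\textbf{Proof proposal for Proposition \ref{prop:LassoTheta}.} The plan is to reduce the proposition to a direct application of Lemma \ref{lem:Lasso} by verifying, w.p.a.1, the two preconditions that Lemma \ref{lem:Lasso} requires: (i) the Lasso tuning parameter dominates the empirical noise level $\|n^{-1}W^\top \varepsilon_j\|_\infty$, and (ii) the restricted eigenvalue $\kappa(\hSigma,s)$ is bounded away from zero. Both have been pre-packaged in Lemma \ref{lem:re_db}, so the proof essentially amounts to stitching the lemmas together.

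First, I would check sub-Gaussianity of the reduced-form errors $\varepsilon_{i1}$ and $\varepsilon_{i2}$. Assumption \ref{as:error} gives this directly for $\varepsilon_{i2}$ and $e_i$, and since $\varepsilon_{i1} = \varepsilon_{i2}\beta + e_i$ with $\beta$ a fixed scalar, $\varepsilon_{i1}$ is also sub-Gaussian by closure of the sub-Gaussian class under scalar-linear combinations. Hence the hypotheses underlying Lemma \ref{lem:re_db} apply to both reduced-form equations. Invoking \eqref{eq:WeBound} yields $\|n^{-1} W^\top \varepsilon_j\|_\infty \lesssim_p \sqrt{\log p/n}$ for $j=1,2$. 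Combining this with Assumption \ref{as:tuning}(i), which fixes $\lambda_{jn}=C_j\sqrt{\log p/n}$ for a sufficiently large constant $C_j$, shows that the event $\{4\|n^{-1}W^\top \varepsilon_j\|_\infty \le \lambda_{jn}\}$ holds w.p.a.1 for $j=1,2$.

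Second, I would invoke \eqref{eq:RE_lower} from Lemma \ref{lem:re_db}, which under $s=o(\sqrt{n/\log p})$ gives $\kappa(\hSigma,s)\ge 0.5 c_\Sigma$ w.p.a.1, where $c_\Sigma$ is the absolute lower bound on $\lambda_{\min}(\Sigma)$ from Assumption \ref{as:DecisionMatrix}. On the intersection of the two events just established (still w.p.a.1), Lemma \ref{lem:Lasso} applies to both Lasso programs \eqref{Theta} and \eqref{theta}. Reading off the $L_2$ and $L_1$ bounds and substituting $\lambda_n\asymp\sqrt{\log p/n}$ together with $\kappa(\hSigma,s)\gtrsim 1$ delivers
\[
\max\{\|\hGamma-\Gamma\|_2,\|\hgamma-\gamma\|_2,\|\hPsi-\Psi\|_2,\|\hpsi-\psi\|_2\}\lesssim \frac{\sqrt{s}\lambda_n}{\kappa(\hSigma,s)}\lesssim_p\sqrt{\frac{s\log p}{n}},
\]
and analogously $\lesssim_p \sqrt{s^2\log p/n}$ for the $L_1$ norms.

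There is no real obstacle here; the argument is essentially bookkeeping. The only point worth being explicit about is that the sparsity index $s$ as defined before model \eqref{eq: reduced form} simultaneously controls $\|\Psi\|_0+\|\Gamma\|_0$ and $\|\psi\|_0+\|\gamma\|_0$, which is what allows a single $s$ to govern all four estimation errors uniformly; and that Assumption \ref{as:asym} (via Lemma \ref{lem:asymp}) easily implies $s=o(\sqrt{n/\log p})$, so the restricted eigenvalue condition used in \eqref{eq:RE_lower} is indeed satisfied under the paper's maintained asymptotic scheme.
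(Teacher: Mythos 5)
Your argument is correct and is essentially the same as the paper's one-line proof, which cites Lemma \ref{lem:Lasso} together with \eqref{eq:WeBound} and \eqref{eq:RE_lower}; you have merely made the verification of the two preconditions explicit. The extra remark on sub-Gaussianity of $\varepsilon_{i1}=\varepsilon_{i2}\beta+e_i$ is a sensible (if routine) addition that the paper leaves implicit.
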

\begin{proof}[Proof of Proposition \ref{prop:LassoTheta}]
    The results are directly implied by Lemma \ref{lem:Lasso},  (\ref{eq:WeBound}) and (\ref{eq:RE_lower}).
\end{proof}

  Proposition \ref{prop:A and A*} provides probability upper bounds of the weighting matrix $A$. 
 \begin{proposition}\label{prop:A and A*}Suppose that Assumption \ref{as:DecisionMatrix} holds. Then 
 \begin{equation}\label{eq:A - A*}
     \|A-A^*\|_2 + \|A^{1/2}-A^{*1/2}\|_2 + \|A-A^*\|_1 + \|A^{1/2}-A^{*1/2}\|_1 \lep \sqrt{\dfrac{\log p}{n}}.
 \end{equation}
 Furthermore, when $\log p = o(n)$, 
  \begin{equation}\label{eq: A lep 1}
     \|A\|_2 + \|A^{1/2}\|_2 + \|A\|_1 + \|A^{1/2}\|_1 \lep 1.
 \end{equation}
and 
\begin{equation}\label{eq: A min lower}
     \lambda_{\min}(A) \gep 1.
 \end{equation}
\end{proposition}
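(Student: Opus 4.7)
The plan is to exploit the fact that both $A$ and $A^*$ are diagonal matrices, which collapses all three norms ($\|\cdot\|_2$, $\|\cdot\|_1$, and also $\|\cdot\|_\infty$) to the maximum absolute value of the diagonal entries. Once this observation is made, everything reduces to a uniform concentration statement for $\{\hat\sigma_{jz}^2 - \sigma_{jz}^2\}_{j\in[p_z]}$, which is essentially already supplied by Lemma \ref{lem:re_db}.

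First, I would record the diagonal-matrix identity: for any diagonal $D = \mathrm{diag}(d_1,\dots,d_{p_z})$ one has $\|D\|_2 = \|D\|_1 = \max_j |d_j|$. Applied to $A-A^*$, this gives $\|A-A^*\|_2 = \|A-A^*\|_1 = \max_{j\in[p_z]}|\hat\sigma_{jz}^2 - \sigma_{jz}^2|$. Since $(\hat\sigma_{jz}^2)_{j}$ are the diagonal entries of the $Z$-block of $\hat\Sigma$ and $(\sigma_{jz}^2)_{j}$ the corresponding diagonal entries of $\Sigma$, the bound $\|\hat\Sigma-\Sigma\|_\infty \lep \sqrt{\log p/n}$ from \eqref{eq:Sigma_bound} in Lemma \ref{lem:re_db} immediately yields $\max_j|\hat\sigma_{jz}^2 - \sigma_{jz}^2|\lep \sqrt{\log p/n}$.

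Next, for the square-root pieces I would use the elementary identity $|\sqrt{a}-\sqrt{b}| = |a-b|/(\sqrt{a}+\sqrt{b})$. Assumption \ref{as:DecisionMatrix} gives $c_\Sigma \leq \Sigma_{jj} = \sigma_{jz}^2 \leq C_\Sigma$ for every $j$ (diagonal entries lie between the extreme eigenvalues of $\Sigma$), so $\sigma_{jz} \geq \sqrt{c_\Sigma}$ is bounded away from zero. Hence
\[
\max_j |\hat\sigma_{jz} - \sigma_{jz}| \leq \frac{1}{\sqrt{c_\Sigma}}\max_j|\hat\sigma_{jz}^2 - \sigma_{jz}^2| \lep \sqrt{\log p/n},
\]
and since $A^{1/2}-A^{*1/2}$ is diagonal the same collapse of norms gives $\|A^{1/2}-A^{*1/2}\|_2 = \|A^{1/2}-A^{*1/2}\|_1 = \max_j|\hat\sigma_{jz}-\sigma_{jz}|$. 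Summing the four terms delivers \eqref{eq:A - A*}.

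For \eqref{eq: A lep 1} and \eqref{eq: A min lower} I would just combine \eqref{eq:A - A*} with the deterministic bounds on $A^*$. The same diagonal-matrix reduction gives $\|A^*\|_2 = \|A^*\|_1 = \max_j \sigma_{jz}^2 \leq C_\Sigma$ and $\|A^{*1/2}\|_2 = \|A^{*1/2}\|_1 \leq \sqrt{C_\Sigma}$, so by the triangle inequality all four norms of $A$ and $A^{1/2}$ are bounded above by an absolute constant plus an $o_p(1)$ term, giving \eqref{eq: A lep 1}. Finally, since $A$ is diagonal, $\lambda_{\min}(A) = \min_j \hat\sigma_{jz}^2 \geq \min_j \sigma_{jz}^2 - \max_j|\hat\sigma_{jz}^2 - \sigma_{jz}^2| \geq c_\Sigma - O_p(\sqrt{\log p/n}) \gep 1$, which is \eqref{eq: A min lower}. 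There is no real obstacle here — the only mild care required is the lower bound $\sigma_{jz}^2 \geq c_\Sigma$, which is used both to control the square-root differences and to produce the minimum-eigenvalue bound, and all concentration is imported wholesale from Lemma \ref{lem:re_db}.
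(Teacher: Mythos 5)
Your proof is correct and follows essentially the same route as the paper's: reduce all norms to the maximum diagonal entry, import the concentration of $\hat\Sigma$ from Lemma \ref{lem:re_db}, and handle the square roots via $|\sqrt{a}-\sqrt{b}|=|a-b|/(\sqrt{a}+\sqrt{b})$. The only (harmless) difference is that you lower-bound the denominator by the deterministic $\sigma_{jz}\geq\sqrt{c_\Sigma}$ rather than first establishing $\lambda_{\min}(A)\gtrsim_p 1$ as the paper does, which if anything streamlines the ordering of the argument.
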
 
\begin{proof}[Proof of Proposition \ref{prop:A and A*}]By definitions of $A$ and $A^*$ in (\ref{eq: def A}) and (\ref{eq:def Astar}),  
\begin{equation}\label{eq: A temp}
    \|A-A^*\|_2 + \|A-A^*\|_1 \leq 2\|\hSigma-\Sigma\|_\infty \lep \sqrt{\dfrac{\log p}{n}}.
\end{equation}
  
Hence, 
\[\lambda_{\min}(A) \geq  \lambda_{\min}(A^*) - \|A-A^*\|_2 \gep 1,\]
which verifies (\ref{eq: A min lower}). Besides, 
\[\begin{aligned}
 \|A^{1/2}-A^{*1/2}\|_2 =  \|A^{1/2}-A^{*1/2}\|_1 &= \max_{j\in[p_z]}\left|\sqrt{n^{-1}\sumn Z_{ij}^2} - \sqrt{\mathbb{E}(Z_{ij}^2)} \right| \\
 &\leq \max_{j\in[p_z]}\dfrac{\left|n^{-1}\sumn Z_{ij}^2-\mathbb{E}(Z_{ij}^2)\right|}{\sqrt{n^{-1}\sumn Z_{ij}^2}+\sqrt{\mathbb{E}(Z_{ij}^2)}} \\
 &\leq \dfrac{\|\hSigma-\Sigma\|_\infty}{\sqrt{\lambda_{\min}(A)}+\sqrt{\lambda_{\min}(A^*)}} \lep \sqrt{\dfrac{\log p}{n}},
\end{aligned}\]
which, together with (\ref{eq: A temp}), induces (\ref{eq:A - A*}). Then (\ref{eq: A lep 1}) directly follows (\ref{eq:A - A*}) and the result that 
\[\|A^*\|_2+\|A^{*1/2}\|+\|A^*\|_1+\|A^{*1/2}\|_1\lesssim 1.\]
\end{proof}

Proposition \ref{prop:bound_sd_hat} provides some error bounds that are useful in deriving estimation error of the asymptotic variance. Define \[\sigma_{iA}^2 = \sigma_{i,Y}^2 - 2\beta_A \sigma_{i,YD} + \beta_A^2\sigma_{i,D}^2.\] Similarly, define \[\sigma_{iA^*}^2 = \sigma_{i,Y}^2 - 2\beta_{A^*} \sigma_{i,YD} + \beta_{A^*}^2\sigma_{i,D}^2\] where  $\beta_{A^*}=\dfrac{\IAst(\gamma,\Gamma)}{\QAst(\gamma)}$ is defined below (\ref{eq: HAstar}).
\begin{proposition} \label{prop:bound_sd_hat}Under Assumptions \ref{as:DecisionMatrix}-\ref{as:tuning}, if $\pi\in\mathcal{H}_M(t)$ for any absolute constant $t$, 
\begin{align}
&\left\|\dfrac{1}{n}\sumn \Wi\Wi^\top (\sigma^2_{iA} - \sigma^2_{iA^*})\right\|_\infty + \max_{i\in[n]}|\sigma^2_{iA} - \sigma^2_{iA^*}| \lep \sqrt{\dfrac{\log p}{n}},\label{eq:bound_hat_Wsigma_star}\\
&\left\|\dfrac{1}{n}\sumn \Wi\Wi^\top \sigma^2_{iA}\right\|_\infty + \max_{\ell,m\in\{1,2\}}\left\|\dfrac{1}{n}\sumn \Wi\Wi^\top \mathbb{E}(\varepsilon_{i\ell}\varepsilon_{im}|W_{i\cdot})\right\|_\infty \lep 1,\label{eq:bound_hat_Wsigma}\\
& \max_{\ell,m\in\{1,2\}}\left\|\dfrac{1}{n}\sumn \Wi\Wi^\top \varepsilon_{i\ell}\varepsilon_{im}\right\|_\infty \lep 1,\label{eq:bound_hat_Weps}\\
&\left\|\dfrac{1}{n}\sumn \Wi\Wi^\top(\widehat{\varepsilon}_{i\ell}\widehat{\varepsilon}_{im}-\mathbb{E}(\varepsilon_{i\ell}\varepsilon_{im}|W))\right\|_\infty  \lep \dfrac{s^2\log p}{n} + \sqrt{\dfrac{\log p}{n}}, \text{ for }\ell,m=1,2, \label{eq:bound_hat_1} 
\end{align} 
\end{proposition}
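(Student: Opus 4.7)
The plan is to establish the four bounds \eqref{eq:bound_hat_Wsigma_star}--\eqref{eq:bound_hat_1} in order, with the bulk of the work lying in the last. For \eqref{eq:bound_hat_Wsigma_star}, I would start from the algebraic identity $\sigma_{iA}^2 - \sigma_{iA^*}^2 = -2(\beta_A - \beta_{A^*})\sigma_{i12} + (\beta_A^2 - \beta_{A^*}^2)\sigma_{i2}^2$, so that the uniformly bounded conditional moments in Assumption~\ref{as:error} yield $\max_i|\sigma_{iA}^2 - \sigma_{iA^*}^2| \lesssim |\beta_A - \beta_{A^*}|$. To bound the latter I would write $\beta_A - \beta_{A^*}$ as a ratio difference, use Proposition~\ref{prop:A and A*} for $\|A - A^*\|_1 \lep \sqrt{\log p/n}$, and use the alternative-set constraint in $\mathcal{H}_M(t)$ together with the $\|\gamma\|_2$ lower bound implicit in Assumption~\ref{as:asym} to keep $\QAst(\gamma)$ safely away from zero; the $\Wi\Wi^\top$-weighted sum then follows from $\|\hSigma\|_\infty \lep 1$ (a consequence of Lemma~\ref{lem:re_db}).

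The bounds \eqref{eq:bound_hat_Wsigma} and \eqref{eq:bound_hat_Weps} are more elementary. For \eqref{eq:bound_hat_Wsigma}, $\sigma_{iA}^2$ and $\mathbb{E}[\varepsilon_{i\ell}\varepsilon_{im}|\Wi]$ are $\lesssim 1$ by Assumption~\ref{as:error}, so the weighted sums are dominated by $\|\hSigma\|_\infty \lep 1$. For \eqref{eq:bound_hat_Weps}, I would split $\varepsilon_{i\ell}\varepsilon_{im} = \mathbb{E}[\varepsilon_{i\ell}\varepsilon_{im}|\Wi] + (\varepsilon_{i\ell}\varepsilon_{im} - \mathbb{E}[\varepsilon_{i\ell}\varepsilon_{im}|\Wi])$, bound the first piece by \eqref{eq:bound_hat_Wsigma}, and apply \eqref{eq:W2epsilon2} of Lemma~\ref{lem:bound_sd} to the mean-zero second piece.

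For \eqref{eq:bound_hat_1} I would expand the reduced-form Lasso residuals as $\widehat{\varepsilon}_{i\ell} = \varepsilon_{i\ell} - \Wi^\top\Delta_\ell$, with $\Delta_1 = ((\hat\Psi - \Psi)^\top, (\hat\Gamma - \Gamma)^\top)^\top$ and $\Delta_2 = ((\hat\psi - \psi)^\top, (\hat\gamma - \gamma)^\top)^\top$, giving
\begin{equation*}
\widehat{\varepsilon}_{i\ell}\widehat{\varepsilon}_{im} - \varepsilon_{i\ell}\varepsilon_{im} = -\varepsilon_{i\ell}\Wi^\top\Delta_m - \varepsilon_{im}\Wi^\top\Delta_\ell + (\Wi^\top\Delta_\ell)(\Wi^\top\Delta_m).
\end{equation*}
For each linear cross term I would use $\|n^{-1}\sum_i \Wi\Wi^\top\varepsilon_{i\ell}\Wi^\top\Delta_m\|_\infty \leq \|\Delta_m\|_1 \cdot \max_{j,k,h}|n^{-1}\sum_i W_{ij}W_{ik}W_{ih}\varepsilon_{i\ell}|$, combining the Lasso $L_1$ rate $\|\Delta_m\|_1 \lep s\sqrt{\log p/n}$ from Proposition~\ref{prop:LassoTheta} with the triple-product bound \eqref{eq:W3epsilon} to obtain a contribution of order $s\log p/n$. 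For the quadratic term I would use $\|\Delta_\ell\|_1\|\Delta_m\|_1 \cdot \max_{j,k,h,r}|n^{-1}\sum_i W_{ij}W_{ik}W_{ih}W_{ir}|$, where the empirical fourth-moment maximum is $\lep 1$ by \eqref{eq:W4} plus sub-Gaussianity of $W$, producing a contribution of order $s^2\log p/n$. Finally, the gap from $\varepsilon_{i\ell}\varepsilon_{im}$ to its conditional expectation contributes $\sqrt{\log p/n}$ via \eqref{eq:W2epsilon2}, yielding the total rate $s^2\log p/n + \sqrt{\log p/n}$.

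The main obstacle will be \eqref{eq:bound_hat_1}: controlling the Lasso-error-weighted triple and quadruple empirical products of $\Wi$ uniformly over $p^3$ and $p^4$ indices, which is where the deepest concentration input from Lemma~\ref{lem:bound_sd} enters. A secondary technical issue is establishing $|\beta_A - \beta_{A^*}| \lep \sqrt{\log p/n}$ in \eqref{eq:bound_hat_Wsigma_star} uniformly across $\pi\in\mathcal{H}_M(t)$, which requires combining the diagonal-matrix concentration $\|A-A^*\|_1 \lep \sqrt{\log p/n}$ with the lower bound on $\|\gamma\|_2$ from Assumption~\ref{as:asym}.
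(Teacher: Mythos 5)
Your proposal is correct and follows essentially the same route as the paper's proof: the same algebraic identity for $\sigma_{iA}^2-\sigma_{iA^*}^2$ driven by a bound $|\beta_A-\beta_{A^*}|\lesssim_p\sqrt{\log p/n}$ obtained from Proposition \ref{prop:A and A*} and the $\mathcal{H}_M(t)$ constraint, the same split of $\varepsilon_{i\ell}\varepsilon_{im}$ into its conditional mean plus a deviation handled by \eqref{eq:W2epsilon2}, and for \eqref{eq:bound_hat_1} the identical residual expansion whose cross and quadratic terms are controlled by $\|\Delta_\ell\|_1$ from Proposition \ref{prop:LassoTheta} paired with \eqref{eq:W3epsilon} and \eqref{eq:W4}. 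The only cosmetic difference is that you bound the bounded-coefficient sums $n^{-1}\sum_i W_{ij}W_{ik}c_i$ via Cauchy--Schwarz and $\|\hSigma\|_\infty\lesssim_p 1$, whereas the paper concentrates $n^{-1}\sum_i \Wi\Wi^\top\sigma_{i\cdot}^2$ around its bounded population mean; both are valid.
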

\begin{proof}[Proof of Proposition \ref{prop:bound_sd_hat}] \underline{Proof of (\ref{eq:bound_hat_Wsigma_star})}. We first need a bound for $\beta_{A}-\beta_{A^*}$. Note that when $\pi\in\mathcal{H}_M(t)$, 
\begin{equation}\label{eq: equi L2 norm}
    \|\pi\|_2 = \dfrac{\|\pi_{A^*}\|_2}{\sqrt{1-{\rm R}_{A^*}^2(\pi,\gamma)}} \asymp \|\pi_{A^*}\|_2,
\end{equation}
and hence
\begin{equation}\label{eq: pi 2 norm bound HM}
   \|\pi\|_2\lesssim \sqrt{s}\|A^{*1/2}\pi_{A^*}\|_\infty\lesssim \sqrt{\dfrac{s\log p}{n}}. 
\end{equation}
Thus, by Lemma \ref{lem:asymp} $\|\pi\|_2 \lesssim \|\gamma\|_2$. This implies  
\begin{equation}\label{eq: beta Ast bound}
    |\beta_{A^*}|\lesssim\dfrac{\|\Gamma\|_2\|\gamma\|_2}{\|\gamma\|_2^2}\lesssim \dfrac{\|\pi\|_2+|\beta|\cdot\|\gamma\|_2}{\|\gamma\|_2}\lesssim 1+\dfrac{\|\pi\|_2}{\|\gamma\|_2} \lesssim 1,
\end{equation}
and by Proposition \ref{prop:A and A*}
\[\begin{aligned}
|\IA(\gamma,\Gamma)-\IAst(\gamma,\Gamma)|&\leq \|\Gamma\|_2\|A-A^*\|_2\|\gamma\|_2 \\
&\lep \|\pi+\gamma\beta\|_2\|\gamma\|_2 \sqrt{\dfrac{\log p}{n}}\\
&\lesssim \left(\|\pi\|_2\|\gamma\|_2 + \|\gamma\|_2^2\right)\sqrt{\dfrac{\log p}{n}}\lesssim \QAst(\gamma)\sqrt{\dfrac{\log p}{n}},
\end{aligned}\]
and 
\[\begin{aligned}
|\QA(\gamma)-\QAst(\gamma)|&\leq \|A-A^*\|_2\|\gamma\|_2^2 \lep \QAst(\gamma)\sqrt{\dfrac{\log p}{n}}. 
\end{aligned}\]
which implies $\QA(\gamma)/\QAst(\gamma)\convp1$. 
We then deduce that 
\begin{equation}\label{eq: betaA betaAst}
    |\beta_A - \beta_{A^*}| = \left|\dfrac{\IA(\gamma,\Gamma)-\IAst(\gamma,\Gamma) - \beta_{A^*}(\QA(\gamma)-\QAst(\gamma))}{\QA(\gamma)}\right| \lep \sqrt{\dfrac{\log p}{n}},
\end{equation} 
which together with (\ref{eq: beta Ast bound}) also implies 
\begin{equation}\label{eq: beta A bound}
    |\beta_{A}| \lesssim_p 1. 
\end{equation}
In addition, we have 
\begin{equation}\label{eq: betaA2 betaAst2}
    |\beta_A^2 - \beta_{A^*}^2| =  |\beta_A - \beta_{A^*}| \cdot |\beta_A + \beta_{A^*} | \lep \sqrt{\dfrac{\log p}{n}}. 
\end{equation} 
Finally, by the  of $\sigma_{i,D}$ specified in Assumption \ref{as:error}, each entry of $\Wi\sigma_{i,D}$ is also sub-Gaussian. Thus $\|\mathbb{E}(\Wi \Wi^\top  \sigma_{i,D}^2)\|_\infty$ is uniformly bounded. Following the proof of (\ref{eq:W4}) we deduce that 
\begin{equation}\label{eq: DB WWsigma}
\left\|\dfrac{1}{n}\sumn\left[\Wi\Wi^\top \sigma^2_{i,D}-\mathbb{E}(\Wi \Wi^\top \sigma_{i,D}^2)\right]\right\|_\infty \lep \sqrt{\dfrac{\log p}{n}},
\end{equation}
and hence 
\begin{equation}\label{eq:WWsigma2Bound1}
\begin{aligned}
\left\|\dfrac{1}{n}\sumn \Wi\Wi^\top \sigma^2_{i,D}\right\|_\infty 
&\leq   \left\|\dfrac{1}{n}\sumn\left[\Wi\Wi^\top \sigma^2_{i,D}-\mathbb{E}(\Wi \Wi^\top \sigma_{i,D}^2)\right]\right\|_\infty  + \left\|\dfrac{1}{n}\sumn \mathbb{E}(\Wi \Wi^\top \sigma_{i,D}^2) \right\|_\infty  \\
&\lep \sqrt{\dfrac{\log p}{n}} + 1 \lesssim 1.
\end{aligned}
\end{equation}
Similarly, 
\begin{equation}\label{eq:WWsigma12Bound1}
    \left\|\dfrac{1}{n}\sumn \Wi\Wi^\top \sigma_{i,YD}\right\|_\infty \lep 1. 
\end{equation}

Then by (\ref{eq: betaA betaAst}), (\ref{eq: betaA2 betaAst2}) and (\ref{eq:WWsigma2Bound1}), 
\[\begin{aligned}
\left\|\dfrac{1}{n}\sumn \Wi\Wi^\top (\sigma^2_{iA} - \sigma^2_{iA^*})\right\|_\infty &\lesssim |\beta_A^2 - \beta_{A^*}^2| \cdot \left\|\dfrac{1}{n}\sumn \Wi\Wi^\top \sigma^2_{i,D} \right\|_\infty + |\beta_A - \beta_{A^*}|\cdot \left\|\dfrac{1}{n}\sumn \Wi\Wi^\top \sigma_{i,YD}  \right\|_\infty \\&\lep \sqrt{\dfrac{\log p}{n}}
\end{aligned},\]
and 
\[\begin{aligned}
\max_{i\in[n]}\left|\sigma^2_{iA} - \sigma^2_{iA^*}\right|_\infty &\lesssim |\beta_A^2 - \beta_{A^*}^2| \cdot \max_{i\in[n]}\sigma_{i,D}^2 + |\beta_A - \beta_{A^*}|\cdot|\max_{i\in[n]} \sigma_{i,YD}| \lep \sqrt{\dfrac{\log p}{n}}.
\end{aligned}\]

\par \underline{Proof of (\ref{eq:bound_hat_Wsigma})}. We only show the upper bound of the first term on the LHS since the second term goes through similarly. By the boundness of $\sigma_{iA^*}$, each entry of $\Wi\sigma_{iA^*}$ is also sub-Gaussian. Thus  $\|\mathbb{E}(\Wi \Wi^\top  \sigma_{iA^*}^2)\|_\infty$ is uniformly bounded. Following the proof of  (\ref{eq:W4}) we deduce that 
\begin{equation}\label{eq: DB WWsigma A}
\left\|\dfrac{1}{n}\sumn\left[\Wi\Wi^\top \sigma^2_{iA^*}-\mathbb{E}(\Wi \Wi^\top \sigma_{iA^*}^2)\right]\right\|_\infty \lep \sqrt{\dfrac{\log p}{n}},
\end{equation}
and hence 
\[\begin{aligned}
&\ \ \ \ \left\|\dfrac{1}{n}\sumn \Wi\Wi^\top \sigma^2_{iA}\right\|_\infty \\
&\leq  \left\|\dfrac{1}{n}\sumn \Wi\Wi^\top (\sigma^2_{iA} - \sigma^2_{iA^*})\right\|_\infty +  \left\|\dfrac{1}{n}\sumn\left[\Wi\Wi^\top \sigma^2_{iA^*}-\mathbb{E}(\Wi \Wi^\top \sigma_{iA^*}^2)\right]\right\|_\infty  + \left\|\dfrac{1}{n}\sumn \mathbb{E}(\Wi \Wi^\top \sigma_{iA^*}^2) \right\|_\infty  \\
&\lep \sqrt{\dfrac{\log p}{n}} + 1 \lesssim 1.
\end{aligned}\]
\par \underline{Proof of (\ref{eq:bound_hat_Weps})}. It immediately follows by (\ref{eq:W2epsilon2}) and (\ref{eq:bound_hat_Wsigma}) that 
\[\begin{aligned}
&\ \ \ \ \left\|\dfrac{1}{n}\sumn \Wi\Wi^\top \varepsilon_{i\ell}\varepsilon_{im}\right\|_\infty\\
&\lesssim \left\|\dfrac{1}{n}\sumn \Wi\Wi^\top \left(\varepsilon_{i\ell}\varepsilon_{im}-\mathbb{E}[\varepsilon_{i\ell}\varepsilon_{im}|W]\right)\right\|_\infty + \left\|\dfrac{1}{n}\sumn \Wi\Wi^\top \mathbb{E}[\varepsilon_{i\ell}\varepsilon_{im}|W]\right\|_\infty \\
&\lep \sqrt{\dfrac{\log p}{n}} + 1 \lep 1.
\end{aligned}\]
\par \underline{Proof of (\ref{eq:bound_hat_1})}. We only prove the case with $\ell=m=1$. Other cases can be verified in the same manner. Recall that $\sigma_{i,Y}^2 = \expect(\varepsilon_{i,Y}^2|W_{i\cdot})$. Note that
\begin{equation}\label{eq:WWhate temp}
    \begin{aligned}
\hat\varepsilon_{i,Y}^2 - \sigma_{i,Y}^2 &=  \hat\varepsilon_{i,Y}^2 - \varepsilon_{i,Y}^2 +  \varepsilon_{i,Y}^2 - \sigma_{i,Y}^2 \\
&= ( \hat\varepsilon_{i,Y} - \varepsilon_{i,Y} )^2 + 2\varepsilon_{i,Y}(\hat\varepsilon_{i,Y} - \varepsilon_{i,Y}) +  \varepsilon_{i,Y}^2 - \sigma_{i,Y}^2 \\
&= \left( X_{i\cdot}^\top(\hat\Psi-\Psi) + Z_{i\cdot}^\top(\hat\Gamma-\Gamma) \right)^2 + 2\varepsilon_{i,Y}(X_{i\cdot}^\top(\hat\Psi-\Psi) + Z_{i\cdot}^\top(\hat\Gamma-\Gamma)) +  \varepsilon_{i,Y}^2 - \sigma_{i,Y}^2 \\
&= \left( W_{i\cdot}^\top \left( \begin{array}{c}
     \hat\Psi-\Psi  \\
     \hat\Gamma-\Gamma 
\end{array} \right) \right)^2 + 2\varepsilon_{i,Y}W_{i\cdot}^\top \left( \begin{array}{c}
     \hat\Psi-\Psi  \\
     \hat\Gamma-\Gamma 
\end{array} \right) +  \varepsilon_{i,Y}^2 - \sigma_{i,Y}^2,
\end{aligned}
\end{equation}
and hence 
\[\begin{aligned}
&\ \ \ \left\|\dfrac{1}{n}\sumn \Wi\Wi^\top(\widehat{\varepsilon}_{i,Y}^2-\sigma_{i,Y}^2)\right\|_\infty \\
&\lesssim \left\|\dfrac{1}{n}\sumn \Wi\Wi^\top \left( W_{i\cdot}^\top \left( \begin{array}{c}
     \hat\Psi-\Psi  \\
     \hat\Gamma-\Gamma 
\end{array} \right) \right)^2\right\|_\infty + \max_{j,k,h\in[p]}\left|\dfrac{1}{n}\sum_{i=1}^nW_{ij}W_{ik}W_{ih}\varepsilon_{im}\right| \cdot \left(\|\hPsi-\Psi\|_1+\|\hGamma-\Gamma\|_1\right)\\
&\ \ \ + \left\|\dfrac{1}{n}\sumn W_{i\cdot}W_{i\cdot}(\varepsilon_{i,Y}^2 - \sigma_{i,Y}^2)\right\|_\infty. \\ 
\end{aligned}\]
Note that the first term on the RHS of (\ref{eq:WWhate temp}) can be written as 
\[\begin{aligned}
\left\|\dfrac{1}{n}\sumn \Wi\Wi^\top \left( W_{i\cdot}^\top \left( \begin{array}{c}
     \hat\Psi-\Psi  \\
     \hat\Gamma-\Gamma 
\end{array} \right) \right)^2\right\|_\infty  &=
\left\|\dfrac{1}{n}\sumn {\rm vec}\left[\Wi W_{i\cdot}^\top \left( \begin{array}{c}
     \hat\Psi-\Psi  \\
     \hat\Gamma-\Gamma 
\end{array} \right)\left( \begin{array}{c}
     \hat\Psi-\Psi  \\
     \hat\Gamma-\Gamma 
\end{array} \right)^\top \Wi^\top \Wi\right]\right\|_\infty \\
&= \left\|\dfrac{1}{n}\sumn \Wi W_{i\cdot}^\top\otimes \Wi W_{i\cdot}^\top {\rm vec}\left[ \left( \begin{array}{c}
     \hat\Psi-\Psi  \\
     \hat\Gamma-\Gamma 
\end{array} \right)\left( \begin{array}{c}
     \hat\Psi-\Psi  \\
     \hat\Gamma-\Gamma 
\end{array} \right)^\top  \right]\right\|_\infty \\
&\leq \left\|\dfrac{1}{n}\sumn \Wi W_{i\cdot}^\top\otimes \Wi W_{i\cdot}^\top\right\|_\infty \left(\|\hat\Psi-\Psi\|_1 + \|\hat\Gamma-\Gamma\|_1\right)^2 \\
&\lep \max_{j,k,\ell,m\in[p]}\left|\dfrac{1}{n}\sum_{i=1}^n  W_{ij}W_{ik}W_{i\ell}W_{im}  \right| \dfrac{s^2\log p}{n}.
\end{aligned}\]
where the last inequality applies Proposition \ref{prop:LassoTheta}. By sub-Gaussianity in Assumption \ref{as:DecisionMatrix}, the fourth-moment $|\mathbb{E}(W_{ij}W_{ik}W_{i\ell}W_{im})|$ is uniformly bounded by some absolute constant. Then by (\ref{eq:W4})
\[\begin{aligned}
&\ \ \ \left\|\dfrac{1}{n}\sumn {\rm vec}\left[\Wi W_{i\cdot}^\top \left( \begin{array}{c}
     \hat\Psi-\Psi  \\
     \hat\Gamma-\Gamma 
\end{array} \right)\left( \begin{array}{c}
     \hat\Psi-\Psi  \\
     \hat\Gamma-\Gamma 
\end{array} \right)^\top \Wi^\top \Wi\right]\right\|_\infty \\
&\lep  \max_{j,k,\ell,m\in[p]}\left|\dfrac{1}{n}\sum_{i=1}^n\left(  W_{ij}W_{ik}W_{i\ell}W_{im} - \mathbb{E}(W_{ij}W_{ik}W_{i\ell}W_{im} )\right) \right| \dfrac{s^2\log p}{n}\\
&\ \ \ \ +\max_{j,k,\ell,m\in[p]}|\mathbb{E}(W_{ij}W_{ik}W_{i\ell}W_{im})| \dfrac{s^2\log p}{n} \\
&\lep \left(1+\sqrt{\dfrac{s\log p}{n}}\right)\dfrac{s^2\log p}{n} \lesssim \dfrac{s^2\log p}{n}.
\end{aligned} \]
As for the last two terms of (\ref{eq:WWhate temp}), by (\ref{eq:W3epsilon}), (\ref{eq:W2epsilon2}) and Lemma \ref{lem:asymp}, 
\[\max_{j,k,h\in[p]}\left|\dfrac{1}{n}\sum_{i=1}^nW_{ij}W_{ik}W_{ih}\varepsilon_{i,Y}\right| \cdot \left(\|\hPsi-\Psi\|_1+\|\hGamma-\Gamma\|_1\right) \lep \sqrt{\dfrac{\log p}{n}}\dfrac{s\sqrt{\log p}}{\sqrt{n}} \leq \sqrt{\dfrac{\log p}{n}},\]
and 
\[\left\|\dfrac{1}{n}\sumn W_{i\cdot}W_{i\cdot}(\varepsilon_{i,Y}^2 - \sigma_{i,Y}^2)\right\|_\infty \lep  \sqrt{\dfrac{\log p}{n}}.\]
Then (\ref{eq:bound_hat_1}) follows. 
\end{proof}
\subsubsection{Proof of Theorem \ref{thm:betaAsympNorm}}\label{app:b22}
Define $\hat{\rm U}_\beta := n^{-1}\sum_{i=1}^n(W_{i\cdot}^\top\hugamma)^2(\hat{\varepsilon}_{i,Y}-\hbeta_A\hat{\varepsilon}_{i,D})^2$ where $\hugamma$ is defined in Theorem \ref{thm:betaAsympNorm}. Thus, $\hat{\rm V}_\beta^{-1/2}=\hQA(\gamma)^{-1}\hat{\rm U}_\beta^{1/2}$. We will show a stronger result that is useful for power analysis: when $\pi\in\mathcal{H}_{A^*}(t)$ for any absolute constant $t$, the following asymptotic normality holds
\begin{equation} \label{eq: asymp norm beta A}
    \hat{\rm U}_\beta^{-1/2}\hQA(\gamma)\sqrt{n}(\hbeta_A-\beta_A) \convd N(0,1).  
\end{equation} 
under the conditions in Theorem \ref{thm:betaAsympNorm}. Define $u_\gamma=\Omega(0_{p_x}^\top,(A\gamma)^\top)^\top$.
The estimation error of $\hQA(\gamma)$ can be decomposed as 
\begin{equation*}
    \begin{aligned}
        \hQA(\gamma) - \QA(\gamma) &= \dfrac{2}{n}\hugamma^\top W^\top \varepsilon_D -2(\hu_\gamma\hSigma - (0_{p_x}^\top,\hgamma^\top A))\left(\begin{array}{cc}
\widehat{\psi} - \psi  \\
  \widehat{\gamma} - \gamma 
\end{array}\right) - \QA(\hgamma-\gamma) \\
&= \dfrac{2}{n} u_\gamma^\top W^\top \varepsilon_D + \dfrac{2}{n}(\hugamma - u_\gamma)^\top  W^\top \varepsilon_D  - 2(0_{p_x}^\top,\hgamma^\top A)(\hOmega\hSigma-I_{p})\left(\begin{array}{cc}
\widehat{\psi} - \psi  \\
  \widehat{\gamma} - \gamma 
\end{array}\right) - \QA(\hgamma-\gamma).
    \end{aligned}
\end{equation*}
By Proposition \ref{prop:LassoTheta}, Proposition \ref{prop:A and A*}.
\[\QA(\hgamma-\gamma) \lep \|\hgamma-\gamma\|_2^2 \lep \dfrac{s\log p}{n}.\]
Additionally, by Lemma \ref{lem:CLIME}, 
\[ \begin{aligned}
    \left|(\widehat{u}_\gamma^\top\widehat{\Sigma}-({\bf 0},\widehat{\gamma}^\top A))\left(\begin{array}{cc}
\widehat{\psi} - \psi  \\
  \widehat{\gamma} - \gamma 
\end{array}\right)\right| &\lep \|\hgamma\|_1 \|A\|_1 \|\hSigma\hOmega - I\|_\infty \left(\|\hpsi-\psi\|_1 + \|\hgamma-\gamma\|_1\right) \\
&\lep \left(\|\hgamma-\gamma\|_1 + \|\gamma\|_1\right)\dfrac{s_\omega m_\omega^{2-2q}\cdot s (\log p)^{(1-q)/2}}{n^{(1-q)/2}}\cdot\sqrt{\dfrac{s^2\log p}{n}} \\
&\lep  \dfrac{s_\omega m_\omega^{2-2q}\cdot s^{2} (\log p)^{(3-q)/2}}{n^{(3-q)/2}} + \dfrac{s_\omega m_\omega^{2-2q}\cdot s^{3/2} (\log p)^{1-q/2}}{n^{1-q/2}}\|\gamma\|_2\\
&\lep  \dfrac{m_\omega s\log p}{n} + \dfrac{s_\omega m_\omega^{2-2q}s^{1/2}(\log p)^{(1-q)/2}}{n^{(1-q)/2}}\|\gamma\|_2,
\end{aligned}\]
where the last inequality applies $s_\omega m_{\omega}^{1-2q} s\log p^{(1-q/2)} = o(n^{(1-q)/2})$ and $s\sqrt{\log p} = o(\sqrt{n})$ implied by Assumption \ref{as:asym} and Lemma \ref{lem:asymp}. 
 Recall that $\hugamma = \hOmega(0_{p_x}^\top,\hgamma^\top A)^\top$ and $\ugamma = \Omega(0_{p_x}^\top,\gamma^\top A)^\top$. Thus, 
\begin{equation}\label{eq:hu-ugamma}
\begin{aligned}
    \|\hugamma - u_\gamma\|_1 &\leq \|A\|_1\left(\|\hgamma - \gamma\|_1 \|\hOmega\|_1 + \|\hOmega-\Omega\|_1\cdot\|\gamma\|_1\right) \\
    &\lep  m_\omega\sqrt{\dfrac{ s^2 \log p}{n}} + \dfrac{s_\omega m_{\omega}^{2-2q}\cdot s^{1/2}(\log p)^{(1-q)/2}}{n^{(1-q)/2}}\|\gamma\|_2,    
\end{aligned}
\end{equation}
where the last inequality applies Proposition \ref{prop:LassoTheta} and Lemma \ref{lem:CLIME}. Then by (\ref{eq:hu-ugamma}) and (\ref{eq:WeBound})
\[ \begin{aligned}
    \left|\dfrac{2}{n}(\hugamma - u_\gamma)^\top  W^\top \varepsilon_D\right| &\lep \left\|\hugamma - u_\gamma\right\|_1 \cdot \left\|\dfrac{W^\top \varepsilon_D}{n}\right\|_\infty \\
    &\lep \dfrac{m_\omega s\log p}{n} + \dfrac{s_\omega m_{\omega}^{2-2q}\cdot s^{1/2}(\log p)^{1-q/2}}{n^{1-q/2}}\|\gamma\|_2.
\end{aligned}\]
Thus, 
\begin{equation}\label{eq:QAdecom_rate}
    \begin{aligned}
       | \hQA(\gamma) - \QA(\gamma)| \lep \left|\dfrac{2}{n}\ugamma^\top W^\top \varepsilon_D\right| + \dfrac{m_\omega s\log p}{n} + \dfrac{s_\omega m_{\omega}^{2-2q}\cdot s^{1/2}(\log p)^{1-q/2}}{n^{1-q/2}}\|\gamma\|_2.
    \end{aligned}
\end{equation}
The probability bound of the first term implied by (\ref{eq:WeBound}) is given as
\[\begin{aligned}\left|\dfrac{u_\gamma^\top W^\top \varepsilon_D}{n}\right| &\leq \|u_\gamma\|_1 \left\|\dfrac{W^\top \varepsilon_D}{n}\right\|_\infty    \\
&\leq \|\gamma\|_1\|A\|_1\|\Omega\|_1\sqrt{\dfrac{\log p}{n}} \lep \|\gamma\|_2\sqrt{\dfrac{m_\omega^2s\log p}{n}},
\end{aligned}\]
which, together with (\ref{eq:QAdecom_rate}), implies that 
\begin{equation}\label{eq:consistent_Qgamma}
\begin{aligned}
   \left| \dfrac{\hQA(\gamma)}{\QA(\gamma)} - 1\right|  = \dfrac{m_\omega s\log p}{n\QA(\gamma)} + \dfrac{1}{\|\gamma\|_2}\left(\dfrac{s_\omega m_\omega^{2-2q}\cdot s^{1/2}(\log p)^{1-q/2}}{n^{1-q/2}}+\sqrt{\dfrac{m_\omega^2s\log p}{n}}\right).
\end{aligned}
\end{equation} 
Thus, $\hQA(\gamma)/\QA(\gamma)\convp1$ by (\ref{eq:consistent_Qgamma}), Assumption \ref{as:asym}, and the fact that $\sqrt{\QAst(\gamma)}\asymp \|\gamma\|_2$. 
\par Besides, define $\hat u_\Gamma = \hat\Omega(0_{p_x}^\top, (A\hat\Gamma)^\top )^\top$, $u_\Gamma = \Omega(0_{p_x}^\top, (A\Gamma)^\top )^\top$ and $u_{\pi_A} = \Omega(0_{p_x}^\top, (A\pi_A)^\top)^\top$ where $\pi_A=\pi-\gamma(\beta_A-\beta)=\Gamma-\gamma\beta_A$. The estimation error of $\hIA(\gamma,\Gamma)$ can be decomposed as
\begin{equation*}
\begin{aligned}
\hIA(\gamma,\Gamma) - \IA(\gamma,\Gamma) &= \widehat{u}_\Gamma^\top\dfrac{1}{n}W^\top\varepsilon_D + \widehat{u}_\gamma^\top\dfrac{1}{n}W^\top\varepsilon_Y - (\widehat{u}_\gamma^\top\widehat{\Sigma}-({\bf 0},\widehat{\gamma}^\top A))\left(\begin{array}{cc}
\widehat{\Psi} - \Psi  \\
  \widehat{\Gamma} - \Gamma 
\end{array}\right)\\
&\ \ \  - (\widehat{u}_\Gamma^\top\widehat{\Sigma}-({\bf 0},\widehat{\Gamma}^\top A))\left(\begin{array}{cc}
\widehat{\psi} - \psi  \\
  \widehat{\gamma} - \gamma 
\end{array}\right) - (\widehat{\Gamma}-{\Gamma})^\top A(\widehat{\gamma}-{\gamma}),
\end{aligned}
\end{equation*}
and following the same procedures to derive (\ref{eq:QAdecom_rate}), we deduce that  
\begin{equation} \label{eq:IAdecom_rate}
    \begin{aligned}
    \hIA(\gamma,\Gamma) - \IA(\gamma,\Gamma) &= \uGamma^\top \dfrac{1}{n}W^\top\varepsilon_D + \ugamma^\top \dfrac{1}{n}W^\top\varepsilon_Y +  \\
    &\ \ \ \  O_p\left(\dfrac{m_\omega s\log p}{n} + \dfrac{s_\omega m_{\omega}^{2-2q}\cdot s^{1/2}(\log p)^{1-q/2}}{n^{1-q/2}}(\|\Gamma\|_2 + \|\gamma\|_2) \right) + \\ 
    &\ \ \ \  O_p\left(\dfrac{m_\omega s\log p}{n} + \dfrac{s_\omega m_{\omega}^{2-2q}\cdot s^{1/2}(\log p)^{1-q/2}}{n^{1-q/2}}(\|\pi\|_2 + \|\gamma\|_2) \right), 
\end{aligned}
\end{equation} 
where the last step applies $\|\Gamma\|_2\lesssim \|\pi\|_2 + |\beta|\cdot\|\gamma\|_2\lesssim\|\pi\|_2 +  \|\gamma\|_2$. 
Then by (\ref{eq:IQdecom}), (\ref{eq:QAdecom_rate}) and (\ref{eq:IAdecom_rate}) we deduce that 
\begin{equation}\label{eq: beta hat decom with pi}
\begin{aligned} 
&\ \hQA(\gamma)\sqrt{n}(\hbeta_A - \beta_A)\\ =& \sqrt{n}\cdot \left[\hIA(\gamma,\Gamma)-\IA(\gamma,\Gamma) - \beta_A(\hQA(\gamma)-\QA(\gamma)) \right] \\
=& \dfrac{u_{\pi_A}^\top W^\top \varepsilon_D + u_\gamma^\top W^\top e_A}{\sqrt{n}} +  O_p\left(\dfrac{m_\omega s\log p}{\sqrt{n}} + \dfrac{s_\omega m_{\omega}^{2-2q}\cdot s^{1/2}(\log p)^{1-q/2}}{n^{(1-q)/2}}(\|\gamma\|_2+\|\pi\|_2)\right),
\end{aligned}
\end{equation}
and by (\ref{eq:WeBound}),
\begin{equation}\label{eq:bound u pi A W e2}
    \left|u_{\pi_A}^\top \dfrac{W^\top \varepsilon_D}{\sqrt{n}}\right| \leq \|u_{\pi_A}\|_1 \cdot \left\|\dfrac{W^\top \varepsilon_D}{\sqrt{n}}\right\|_\infty \lep \|\pi\|_2\cdot\|A\|_1\|\Omega\|_1\cdot\sqrt{s\log p} \lep  m_\omega \sqrt{s\log p}\|\pi\|_2.
\end{equation}
Note that when $\pi\in\mathcal{H}_{A^*}(t)$, $\|\pi\|_2 \lesssim \sqrt{s\log p/n}$, which implies $\left|u_{\pi_A}^\top \dfrac{W^\top \varepsilon_D}{\sqrt{n}}\right|\lep \dfrac{m_\omega s\log p}{\sqrt{n}}$ together with (\ref{eq:bound u pi A W e2}). Thus, 
\begin{equation}
\begin{aligned}\label{eq:root_n_betahat_decom}
\hQA(\gamma)\sqrt{n}(\hbeta_A - \beta_A) &= \sqrt{n}\cdot \left[\hIA(\gamma,\Gamma)-\IA(\gamma,\Gamma) - \beta_A(\hQA(\gamma)-\QA(\gamma)) \right] \\
&= \dfrac{u_\gamma^\top W^\top e_A}{\sqrt{n}} +  O_p\left(\dfrac{m_\omega s\log p}{\sqrt{n}} + \dfrac{s_\omega m_{\omega}^{2-2q}\cdot s^{1/2}(\log p)^{1-q/2}}{n^{(1-q)/2}}\|\gamma\|_2 \right). 
\end{aligned}
\end{equation}
Define the asymptotic variance of the first term on the RHS of (\ref{eq:root_n_betahat_decom})
\begin{equation}\label{eq: def U beta}
\begin{aligned}
    {\rm U}_\beta  :=& \dfrac{1}{n}\sum_{i=1}^n \mathbb{E}\left[( u_{\gamma}^\top W_{i\cdot}e_{iA} )^2 | W\right] = \dfrac{1}{n}\sum_{i=1}^n \left(u_{\gamma}^\top W_{i\cdot}\right)^2\sigma_{iA}^2. 
\end{aligned}
\end{equation}
The remaining of this proof will show that 
\begin{enumerate}
    \item The rate of the asymptotic variance
    \begin{equation}\label{eq:VbetaRate}
        \sqrt{{\rm U}_\beta} \asymp_p  \|\gamma\|_2.
    \end{equation} This result, together with Assumption \ref{as:asym} and Lemma \ref{lem:asymp}, implies 
    \begin{equation}\label{eq:sqrtVbLower}
        \dfrac{m_\omega s\log p}{\sqrt{n}} + \dfrac{s_\omega m_{\omega}^{2-2q}\cdot s^{1/2}(\log p)^{1-q/2}}{n^{(1-q)/2}}\cdot \|\gamma\|_2 = o_p(\sqrt{\rm V_\beta}).
    \end{equation}
 
    In other words, the $O_p$ term in (\ref{eq:root_n_betahat_decom}) is dominated by the square root of asymptotic variance $\sqrt{{\rm U}_\beta}$.      
    \item $\dfrac{u_\gamma^\top W^\top e_A}{\sqrt{n{\rm U}_\beta}} \convd N(0,1)$, which together with (\ref{eq:root_n_betahat_decom}) and (\ref{eq:sqrtVbLower}) implies the asymptotic normality 
    \begin{equation}\label{eq:betaAsympNormTrueV}
        {\rm U}_\beta^{-1/2}\hQA(\gamma)\sqrt{n}(\hbeta_A - \beta_A) \convd N(0,1).
    \end{equation}
    \item $\hat{\rm U}_\beta / {\rm U}_\beta \convp 1$. And then (\ref{eq: asymp norm beta A}) follows by (\ref{eq:betaAsympNormTrueV}) and the Slutsky's Theorem. 
\end{enumerate} 
\par \underline{\textbf{Step 1}.} Show that ${\rm U}_\beta \asymp_p \|\gamma\|_2^2$. Recall that $\sigma_{iA^*}^2 = \mathbb{E}(e_{iA^*}^2|W)$ where $e_{iA^*} = \varepsilon_{i,Y}-\beta_{A^*}\varepsilon_{i,D}$. By the upper and lower bounds of conditional variances and covariances in Assumption \ref{as:error}, we deduce that 
\[\begin{aligned}
  2(1+\beta_{A^*}^2)\sigma_{\max}^2 \geq  \sigma_{iA^*}^2 &= \sigma_{i,Y}^2 + \beta_A^2\sigma_{i,D}^2 - 2\beta_{A^*}\sigma_{i,YD}\\
    &\geq  \sigma_{i,Y}^2 - \beta_{A^*}^2\sigma_{i,D}^2 + 2|\beta_{A^*}|\rho_\sigma\sigma_{i,Y}\sigma_{i,D} \\
    &\geq (1-\rho_\sigma)(\sigma_{i,Y}^2 + \beta_{A^*}^2\sigma_{i,D}^2) \geq (1-\rho_\sigma)\sigma_{\min}^2.
\end{aligned}\]
where $\rho_\sigma$ is specified in Assumption \ref{as:error}. By (\ref{eq: beta Ast bound}),
\begin{equation}\label{eq:bound_sigma_iA}
    (1-\rho_\sigma)\sigma_{\min}^2 \leq \sigma_{iA^*}^2 \lesssim \sigma_{\max}^2, 
\end{equation}
and hence by the bound of the second term on the LHS of (\ref{eq:bound_hat_Wsigma_star}),  $\sigma_{iA}^2\asymp_p \sigma_{iA^*}^2 \asymp 1$ uniformly for all $i\in[n]$. 
In addition, 
\[ \left|\dfrac{u_\gamma^\top \hSigma u_\gamma}{u_\gamma^\top \Sigma u_\gamma} - 1 \right| \leq \dfrac{\|u_\gamma\|_1^2\cdot\|\hSigma - \Sigma\|_\infty}{u_\gamma^\top \Sigma u_\gamma} \lep \dfrac{\|\gamma\|_2^2\sqrt{\dfrac{m_\omega^2 s^2 \log p}{n}}}{u_\gamma^\top \Sigma u_\gamma} \lesssim  m_\omega\sqrt{\dfrac{ s^2 \log p}{n}} = o(1), \] 
under Assumption \ref{as:asym}, implying that 
\begin{equation}\label{eq:rate_ugamma}
   u_\gamma^\top \hSigma u_\gamma \asymp_p u_\gamma^\top \Sigma u_\gamma \asymp_p \|\gamma\|_2^2. 
\end{equation}
Consequently, by the definition of ${\rm U}_\beta$ in (\ref{eq: def U beta}), 
\begin{equation}\label{eq: U beta rate}
    \begin{aligned}
    {\rm U}_\beta = \dfrac{1}{n}\sum_{i=1}^n \left(u_{\gamma}^\top W_{i\cdot}\right)^2\sigma_{iA}^2 \asymp_p \|\gamma\|_2^2.
\end{aligned}
\end{equation}
\par \underline{\textbf{Step 2}.} Define $\chi_i = \dfrac{ u_\gamma^\top W_{i\cdot} e_{iA}}{\sqrt{n{\rm U}_\beta}}$ where $e_{iA}$ is the $i$-th element in the $n$-dimensional vector $e_A$. Thus we have $\mathbb{E}(\chi_i|W) = 0$ and $\sum_{i=1}^n \mathbb{E}(\chi_i^2|W) = 1$. By Corollary 3.1 of \citet{hall1980martingale}, it suffices to show the following conditional Lindeberg condition
\begin{equation}\label{eq:lindeberg_beta}
\sum_{i=1}^n\mathbb{E}\left[\chi_i^2\boldsymbol{1}(|\chi_i|>\chi)\Big|W\right]\convp{0}.
\end{equation} 
for any fixed $\chi > 0$. Following the same arguments in the proof of Lemma 24 in \citet{javanmard2014confidence}, each element of the matrix $\Omega W^\top$ is sub-Gaussian. Consequently, $\|\Omega W^\top\|_\infty \lep \sqrt{\log n + \log p} \leq n^{1/4}$   when $\log p= o(n^{1/3})$ as implied by Assumption \ref{as:asym}. Thus by Proposition \ref{prop:A and A*} and (\ref{eq:VbetaRate}), 
\[\begin{aligned}
|\chi_i|&\leq (n{\rm U}_\beta)^{-1/2}\|A\|_1\|\gamma\|_1\cdot \|\Omega W^\top \|_\infty \cdot |e_{iA}| \\
&\leq C_\chi n^{-1/2}\|\gamma\|_2^{-1} \cdot \sqrt{s}\|\gamma\|_2 \cdot n^{1/4} \cdot |e_{iA}|   = C_\chi s^{1/2}n^{-1/4}\cdot|e_{iA}| 
\end{aligned}
\]
w.p.a.1 for some absolute constant $C_\chi>0$. Also, by (\ref{eq: beta A bound}),
\begin{equation}\label{eq: 2+c0 moment}
    \mathbb{E}\left(|e_{iA}|^{2+c_0}\Big|W\right) \lesssim \mathbb{E}\left(|\varepsilon_{i,Y}|^{2+c_0}\Big|W\right) + |\beta_A|^{2+c_0} \mathbb{E}\left(|\varepsilon_{i,D}|^{2+c_0}\Big|W\right) \lep C_0,
\end{equation} 
where the absolute constants $c_0$ and $C_0$ are defined in Assumption \ref{as:error}. In addition, (\ref{eq:bound_sigma_iA}) and the definition of ${\rm U}_\beta$ in (\ref{eq: def U beta}) imply
\begin{equation}\label{eq: nUbeta LB}
    n{\rm U}_\beta \geq (1-\rho_\sigma)\sigma_{\min}^2\sum_{i=1}^n(u_\gamma^\top W_{i\cdot} )^2
\end{equation}
Therefore, for any $\chi>0$, w.p.a.1, 
\begin{equation}
\begin{aligned}
&\ \ \ \sum_{i=1}^n\mathbb{E}\left[\chi_i^2\boldsymbol{1}(|\chi_i|>\chi)\Big|W\right]\\
&\leq \sum_{i=1}^n\mathbb{E}\left[\chi_i^2 \boldsymbol{1}\left(|e_{iA}| \geq \dfrac{\chi}{C_\chi\cdot s^{1/2}\cdot n^{-1/4} }\right) \Big|W\right]         \\
&= \sum_{i=1}^n\dfrac{(u_\gamma^\top W_{i\cdot} )^2}{n{\rm U}_\beta }\mathbb{E}\left[e_{iA}^2 \boldsymbol{1}\left(|e_{iA}| \geq \dfrac{\chi}{C_\chi\cdot s^{1/2}\cdot n^{-1/4} }\right) \Big|W\right]\\
&\leq  \dfrac{1}{(1-\rho_\sigma)\sigma_{\min}^2} \sum_{i=1}^n\dfrac{(u_\gamma^\top W_{i\cdot} )^2}{ \sum_{i=1}^n(u_\gamma^\top W_{i\cdot} )^2 }\mathbb{E}\left[e_{iA}^2 \boldsymbol{1}\left(|e_{iA}| \geq \dfrac{\chi}{C_\chi\cdot s^{1/2}\cdot n^{-1/4} }\right) \Big|W\right]\\ 
&\lesssim  \left(\dfrac{C_\chi\cdot s^{1/2}}{\chi\cdot  n^{-1/4}}\right)^{c_0} \sum_{i=1}^n\dfrac{(u_\gamma^\top W_{i\cdot} )^2}{ \sum_{i=1}^n(u_\gamma^\top W_{i\cdot} )^2 }\mathbb{E}\left[|e_{iA}|^{2+c_0} \Big|W\right] \leq C_0\cdot \left(\dfrac{C_\chi\cdot s^{1/2}}{\chi\cdot  n^{1/4}}\right)^{c_0}, \\ 
\end{aligned}
\end{equation}
where the fourth row applies (\ref{eq: nUbeta LB}) and the fifth row applies (\ref{eq: 2+c0 moment}). 
The upper bound $\left(\dfrac{C_\chi\cdot s^{1/2}}{\chi \cdot  n^{1/4}}\right)^{c_0} \to 0$ as $n\to\infty$, as $s^{1/2} = o(n^{1/4})$ implied by Assumption \ref{as:asym}. Then the Lindeberg condition (\ref{eq:lindeberg_beta}) holds. We have completed Step 2. 
\par \underline{\textbf{Step 3.}} Show that $\hat{\rm U}_\beta / {\rm U}_\beta \convp 1$. We decompose the estimation error of the asymptotic variance $\hat{\rm U}_\beta - {\rm U}_\beta$ as 
\[\begin{aligned}
    \hat{\rm U}_\beta - {\rm U}_\beta &= \Delta_{1\beta} + \Delta_{2\beta},
\end{aligned}\]
where \[\Delta_{1\beta} =  \dfrac{1}{n}\sum_{i=1}^n\left(\hugamma^\top W_{i\cdot}\right)^2\left( (\hat\varepsilon_{i,Y} - \hbeta_A\hat\varepsilon_{i,D})^2 - \sigma_{iA}^2\right),\]
and 
\[\Delta_{2\beta} =  \dfrac{1}{n}\sum_{i=1}^n\left((\hugamma-\ugamma)^\top W_{i\cdot}\right)^2\sigma_{iA}^2 + (\hugamma-\ugamma)^\top \dfrac{2}{n}\sum_{i=1}^n W_{i\cdot}W_{i\cdot}^\top \sigma_{iA}^2 u_\gamma.  \] 
\par \underline{\textbf{We first bound $\Delta_{1\beta}.$}} Note that by (\ref{eq:VbetaRate}) and (\ref{eq: asymp norm beta A}),  
\begin{equation}\label{eq:betahat_rate}
\begin{aligned}
    \hat\beta_A - \beta_A &= O_p\left(\dfrac{1}{\sqrt{n}\|\gamma\|_2}\right) \lep \dfrac{\sqrt{\log p}}{\sqrt{n}\|\gamma\|_2}.
\end{aligned} 
\end{equation}
Then by Lemma \ref{lem:asymp} $\hat\beta_A - \beta_A = o_p(1)$ and hence by (\ref{eq: beta A bound}) $\hat\beta_A\lep 1$ and 
\begin{equation}\label{eq:betahatsq_rate}
    |\hat\beta_A^2-\beta_A^2| = |(\hat\beta_A-\beta_A)(\hat\beta_A+\beta_A)| \lep  \dfrac{\sqrt{\log p}}{\sqrt{n}\|\gamma\|_2}.
\end{equation}
Then by Proposition \ref{prop:bound_sd_hat}, (\ref{eq:WWsigma2Bound1}), (\ref{eq:WWsigma12Bound1}), (\ref{eq:betahat_rate}), (\ref{eq:betahatsq_rate}) and the fact that $\sigma_{iA}^2 = \sigma_{i,Y}^2 + \beta_A^2\sigma_{i,D}^2 - 2\beta_A\sigma_{i,YD}$, we deduce that 
\[\begin{aligned}
&\ \ \ \ \left\|\dfrac{1}{n}\sumn \Wi\Wi^\top [(\hat\varepsilon_{i,Y}-\hbeta_A\hat\varepsilon_{i,D})^2 - \sigma_{iA}^2]\right\|_\infty \\
&\leq \left\|\dfrac{1}{n}\sumn \Wi\Wi^\top (\hat\varepsilon_{i,Y}^2 - \sigma_{i,Y}^2)\right\|_\infty + \left\|\dfrac{1}{n}\sumn \Wi\Wi^\top (\hbeta_A^2\hat\varepsilon_{i,D}^2 - \beta_A^2\sigma_{i,D}^2)\right\|_\infty \\
&\ \ \ \ + 2\left\|\dfrac{1}{n}\sumn \Wi\Wi^\top (\hbeta_A\hat\varepsilon_{i,Y}\hat\varepsilon_{i,D} - \beta_A\sigma_{i,YD})\right\|_\infty  \\
&\leq \left\|\dfrac{1}{n}\sumn \Wi\Wi^\top (\hat\varepsilon_{i,Y}^2 - \sigma_{i,Y}^2)\right\|_\infty + \hbeta_A^2\left\|\dfrac{1}{n}\sumn \Wi\Wi^\top (\hat\varepsilon_{i,D}^2 - \sigma_{i,D}^2)\right\|_\infty + |\hbeta_A^2-\beta_A^2|\cdot \left\|\dfrac{1}{n}\sumn \Wi\Wi^\top\sigma_{i,D}^2\right\|_\infty \\
&\ \ \ \ \ +\hbeta_A\left\|\dfrac{1}{n}\sumn \Wi\Wi^\top (\hat\varepsilon_{i,Y}\hat\varepsilon_{i,D} - \sigma_{i,YD})\right\|_\infty + |\hbeta_A-\beta_A|\cdot \left\|\dfrac{1}{n}\sumn \Wi\Wi^\top\sigma_{i,YD}\right\|_\infty \\
&\lep \dfrac{s^2\log p}{n} + \left(1+\dfrac{1}{\|\gamma\|_2}\right)\sqrt{\dfrac{\log p}{n}}, 
\end{aligned}\]
where the last inequality applies (\ref{eq:betahat_rate}) and (\ref{eq:betahatsq_rate}). 
In addition, by Lemma \ref{lem:sub-Gaussian trans} we know the entries of \[\tilde{W}_{i\cdot} := \Omega W_{i\cdot}\] are sub-Gaussian with uniformly bounded sub-Gaussian norms. Then similar upper bounds as Proposition \ref{prop:bound_sd_hat} still hold with $W_{i\cdot}$ replaced by $\tilde{W}_{i\cdot}$, which implies 
\[\left\|\dfrac{1}{n}\sum_{i=1}^n \tilde{W}_{i\cdot}\tilde{W}_{i\cdot}^\top\left[(\hat\varepsilon_{i,Y} - \hbeta_A\hat\varepsilon_{i,D})^2-\sigma_{i
    A}^2\right]\right\|_\infty  \lep \dfrac{s^2\log p}{n} + \left(1+\dfrac{1}{\|\gamma\|_2}\right)\sqrt{\dfrac{\log p}{n}}, \]
and 
\[\left\|\dfrac{1}{n}\sum_{i=1}^n  {W}_{i\cdot}\tilde{W}_{i\cdot}^\top\left[(\hat\varepsilon_{i,Y} - \hbeta_A\hat\varepsilon_{i,D})^2-\sigma_{i
    A}^2\right]\right\|_\infty  \lep \dfrac{s^2\log p}{n} + \left(1+\dfrac{1}{\|\gamma\|_2}\right)\sqrt{\dfrac{\log p}{n}}.\]
Then, by Assumption \ref{as:asym}, Lemma \ref{lem:asymp} and (\ref{eq:hu-ugamma}), 
\[\begin{aligned}
     &\ \ \ |\Delta_{1\beta}|\\
     &\leq \left| \dfrac{1}{n}\sum_{i=1}^n \left(W_{i\cdot}^\top(\hugamma-\ugamma)\right)^2\left[(\hat\varepsilon_{i,Y} - \hbeta_A\hat\varepsilon_{i,D})^2-\sigma_{i
    A}^2\right]\right| +  \left| \dfrac{1}{n}\sum_{i=1}^n \left(W_{i\cdot}^\top\ugamma\right)^2\left[(\hat\varepsilon_{i,Y} - \hbeta_A\hat\varepsilon_{i,D})^2-\sigma_{i
    A}^2\right]\right|\\
    &\ \ \ \ \ + \left| \dfrac{2}{n}\sum_{i=1}^n \hugamma^\top W_{i\cdot}W_{i\cdot}^\top\ugamma\left[(\hat\varepsilon_{i,Y} - \hbeta_A\hat\varepsilon_{i,D})^2-\sigma_{i
    A}^2\right]\right| \\
    &\leq \|\hugamma-\ugamma\|_1^2\cdot\left\|\dfrac{1}{n}\sum_{i=1}^n W_{i\cdot}W_{i\cdot}^\top\left[(\hat\varepsilon_{i,Y} - \hbeta_A\hat\varepsilon_{i,D})^2-\sigma_{i
    A}^2\right]\right\|_\infty\\
    &\ \ \ \  + \|\gamma\|_1^2 \cdot\left\|\dfrac{1}{n}\sum_{i=1}^n \tilde{W}_{i\cdot}\tilde{W}_{i\cdot}^\top\left[(\hat\varepsilon_{i,Y} - \hbeta_A\hat\varepsilon_{i,D})^2-\sigma_{i
    A}^2\right]\right\|_\infty \\
     &\ \ \ \ \ + \|\hugamma-\ugamma\|_1 \cdot \|\gamma\|_1 \cdot\left\|\dfrac{1}{n}\sum_{i=1}^n  {W}_{i\cdot}\tilde{W}_{i\cdot}^\top\left[(\hat\varepsilon_{i,Y} - \hbeta_A\hat\varepsilon_{i,D})^2-\sigma_{i
    A}^2\right]\right\|_\infty \\
    &\lep \left( m_\omega\sqrt{\dfrac{ s^2 \log p}{n}} + \dfrac{s_\omega m_{\omega}^{2-2q}\cdot s^{1/2}(\log p)^{(1-q)/2}}{n^{(1-q)/2}}\|\gamma\|_2  \right)^2 \cdot \left(\dfrac{s^2\log p}{n} + \left(1+\dfrac{1}{\|\gamma\|_2}\right)\sqrt{\dfrac{\log p}{n}} \right) \\
    &\ \ \ \ \ + \|\gamma\|_2^2 \cdot \left(\dfrac{s^3\log p}{n}+\sqrt{\dfrac{s^2\log p}{n}}+ \dfrac{1}{\|\gamma\|_2}\sqrt{\dfrac{m_\omega^2 s\log p}{n}}\right)  \\
    &\ \ \ \ \ + \|\gamma\|_2\cdot \left( m_\omega\sqrt{\dfrac{ s^2 \log p}{n}} + \dfrac{s_\omega m_{\omega}^{2-2q}\cdot s^{1/2}(\log p)^{(1-q)/2}}{n^{(1-q)/2}}\|\gamma\|_2  \right)  \left(\dfrac{s^{5/2}\log p}{n}+\left(1+\dfrac{1}{\|\gamma\|_2}\right)\sqrt{\dfrac{s^{3/2}\log p}{n}} \right)  \\ 
    &=o_p(\|\gamma\|_2^2)\cdot o_p(1) + \|\gamma\|_2^2\cdot o_p(1) + \|\gamma\|_2\cdot o_p(\|\gamma\|_2)\cdot o_p(1)\\
    &= o_p(\|\gamma\|_2^2) = o_p({\rm U}_\beta),
\end{aligned}\]
where the last equality applies (\ref{eq:VbetaRate}). 
\par \underline{\textbf{We next bound $\Delta_{2\beta}$}}. By Assumption \ref{as:asym}, Lemma \ref{lem:asymp} and Proposition \ref{prop:bound_sd_hat}, 
\[\begin{aligned}
   |\Delta_{2\beta}| 
    &\lep \|\hugamma-\ugamma\|_1^2\cdot\left\|\dfrac{1}{n}\sum_{i=1}^n W_{i\cdot}W_{i\cdot}^\top\sigma_{iA}^2\right\|_\infty  + \|\hugamma-\ugamma\|_1^2\cdot\|\gamma\|_1\cdot\left\|\dfrac{1}{n}\sum_{i=1}^n \tilde{W}_{i\cdot}W_{i\cdot}^\top\sigma_{iA}^2\right\|_\infty \\
    &\lep \left( m_\omega\sqrt{\dfrac{ s^2 \log p}{n}} + \dfrac{s_\omega m_{\omega}^{2-2q}\cdot s^{1/2}(\log p)^{(1-q)/2}}{n^{(1-q)/2}}\|\gamma\|_2  \right)^2 + \\
    &\ \ \ \ \ \|\gamma\|_2 \cdot \left(m_\omega \sqrt{\dfrac{s^3 \log p}{n}} + \dfrac{s_\omega m_{\omega}^{2-2q}\cdot s(\log p)^{(1-q)/2}}{n^{(1-q)/2}}\|\gamma\|_2  \right) \\
    &= o_p(\|\gamma\|_2^2) = o_p({\rm U}_\beta).
\end{aligned}\]
The probability upper bounds of $\Delta_{1\beta}$ and $\Delta_{2\beta}$ imply that 
\[\dfrac{\hat{\rm U}_\beta}{{\rm U}_\beta} - 1 = \dfrac{\Delta_{1\beta} + \Delta_{2\beta}}{{\rm U}_\beta} \convp 0,\]
or equivalently, $\dfrac{\hat{\rm U}_\beta}{{\rm U}_\beta} \convp 1.$ This completes the proof of Theorem \ref{thm:betaAsympNorm}.

\subsection{Proofs of Theorems in Section \ref{sec:overid}}\label{app:b3}

\subsubsection{Essential Propositions}\label{app:b31}

Proposition \ref{prop:Xihat} provides Lasso estimation errors of the identified parameters $\pi_A$ that measure IV validity. 
\begin{proposition}\label{prop:Xihat}Under Assumptions \ref{as:DecisionMatrix}-\ref{as:tuning},  
\begin{equation}\label{eq:XihatError}
\begin{aligned}
\max\{\|\widehat{\pi}_A-\widecheck{\pi}_A\|_2, \|\widehat{\varphi}_A-\widecheck{\varphi}_A\|_2\}&\lep \left(1+\pigamma\right)\sqrt{\dfrac{s\log p}{n}}, \\
\max\{\|\widehat{\pi}_A-\widecheck{\pi}_A\|_1, \|\widehat{\varphi}_A-\widecheck{\varphi}_A\|_1\}&\lep \left(1+\pigamma\right) \sqrt{\dfrac{s^2 \log p}{n}}.
\end{aligned}
\end{equation}
\end{proposition}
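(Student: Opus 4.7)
The plan is to invoke the generic Lasso oracle inequality in Lemma \ref{lem:Lasso}, applied to the penalized regression \eqref{eq: Lasso pi} whose noise term is $\widecheck{e}_A = Y-D\hbeta_A - X\widecheck{\varphi}_A - Z\widecheck{\pi}_A = e_A - \varepsilon_2(\hbeta_A-\beta_A)$. The lemma then yields the desired error bounds provided two ingredients are in place: (i) a dual-norm condition $4\|n^{-1}W^\top \widecheck{e}_A\|_\infty \leq \lambda_{3n}$ holds w.p.a.1, and (ii) the restricted eigenvalue $\kappa(\hSigma,s)$ is bounded below by an absolute constant w.p.a.1. Ingredient (ii) is immediate from \eqref{eq:RE_lower} in Lemma \ref{lem:re_db}, since Assumption \ref{as:asym} implies $s=o(\sqrt{n/\log p})$.

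The work is therefore concentrated in verifying (i). I will decompose
\[
\left\|\dfrac{W^\top \widecheck{e}_A}{n}\right\|_\infty \;\leq\; \left\|\dfrac{W^\top e_A}{n}\right\|_\infty + |\hbeta_A-\beta_A|\cdot \left\|\dfrac{W^\top \varepsilon_2}{n}\right\|_\infty,
\]
and control the two pieces separately. Using $e_A = \varepsilon_1 - \beta_A\varepsilon_2$ together with \eqref{eq:WeBound}, the first piece is bounded by $(1+|\beta_A|)\sqrt{\log p / n}$. Next, a direct computation from $\beta_A-\beta = \IA(\gamma,\pi)/\QA(\gamma)$ combined with Cauchy--Schwarz and the diagonal structure of $A$ gives $|\beta_A| \lesssim |\beta| + \|\pi\|_2/\|\gamma\|_2 \lesssim 1 + \|\pi\|_2/\|\gamma\|_2$. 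For the second piece, I only need that $|\hbeta_A-\beta_A|$ is $o_p(1)$: the decomposition \eqref{eq: beta hat decom with pi} derived for Theorem \ref{thm:betaAsympNorm} together with \eqref{eq:bound u pi A W e2} and the rate \eqref{eq:consistent_Qgamma} for $\hQA(\gamma)$ give $|\hbeta_A-\beta_A| = o_p(1)$ under Assumption \ref{as:asym}. Since $\|n^{-1}W^\top \varepsilon_2\|_\infty \lep \sqrt{\log p/n}$ as well, this second piece is of lower order than the first.

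Combining these bounds,
\[
\left\|\dfrac{W^\top \widecheck{e}_A}{n}\right\|_\infty \;\lep\; \left(1+\dfrac{\|\pi\|_2}{\|\gamma\|_2}\right)\sqrt{\dfrac{\log p}{n}}.
\]
Assumption \ref{as: Lasso tuning pi} chooses $\lambda_{3n}=C_3\sqrt{\log p/n}$ with $C_3 \geq C_\lambda(1+\|\pi\|_2/\|\gamma\|_2)$ for an absolute constant $C_\lambda$ taken sufficiently large, so the dual-norm condition $4\|n^{-1}W^\top\widecheck{e}_A\|_\infty \leq \lambda_{3n}$ holds w.p.a.1. Plugging $\lambda_{3n}$ and $\kappa(\hSigma,s) \gep 1$ into Lemma \ref{lem:Lasso} then delivers
\[
\max\{\|\widehat{\pi}_A-\widecheck{\pi}_A\|_2,\|\widehat{\varphi}_A-\widecheck{\varphi}_A\|_2\} \lep \dfrac{\sqrt{s}\lambda_{3n}}{\kappa(\hSigma,s)} \lep \left(1+\dfrac{\|\pi\|_2}{\|\gamma\|_2}\right)\sqrt{\dfrac{s\log p}{n}},
\]
and likewise the $L_1$ bound with an extra factor of $\sqrt{s}$.

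The main obstacle I anticipate is justifying that $|\hbeta_A-\beta_A|$ is small enough along the full alternative regime considered here (not just under $\pi=0$). One has to be careful that the bounds used in the proof of Theorem \ref{thm:betaAsympNorm} (and the decomposition \eqref{eq: beta hat decom with pi}) only rely on Assumptions \ref{as:DecisionMatrix}--\ref{as:tuning} and the ratio $\|\pi\|_2/\|\gamma\|_2$ rather than on $\pi=0$ itself; once this is checked, the dominant term in $\|n^{-1}W^\top \widecheck{e}_A\|_\infty$ is $(1+|\beta_A|)\sqrt{\log p/n}$, and the remainder fits inside $\lambda_{3n}/4$ for $C_\lambda$ large, closing the argument.
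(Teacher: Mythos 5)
Your proposal follows essentially the same route as the paper: invoke Lemma \ref{lem:Lasso} together with the restricted-eigenvalue bound \eqref{eq:RE_lower}, and reduce everything to verifying the dual-norm condition $4\|n^{-1}W^\top\widecheck{e}_A\|_\infty\leq\lambda_{3n}$, which is exactly how the paper proceeds (it writes $\widecheck{e}_A=\varepsilon_1-\hbeta_A\varepsilon_2$ and bounds $|\hbeta_A|\lep 1+\pisgamma$, while you write $\widecheck{e}_A=e_A-\varepsilon_2(\hbeta_A-\beta_A)$; the two decompositions are interchangeable). The one step that needs repair is your claim that $|\hbeta_A-\beta_A|=o_p(1)$: the proposition places no restriction on $\pi$, and the decomposition \eqref{eq: beta hat decom with pi} combined with \eqref{eq:consistent_Qgamma} and \eqref{eq:bound u pi A W e2} only yields $\hbeta_A-\beta_A=O_p\bigl((\|\pi\|_2+\|\gamma\|_2)m_\omega\sqrt{s\log p}/(\sqrt{n}\,\QA(\gamma))\bigr)=o_p\left(1+\pisgamma\right)$, which can diverge when $\|\pi\|_2/\|\gamma\|_2\to\infty$. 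This weaker bound still closes your argument, since it makes the second piece $o_p\bigl((1+\pisgamma)\sqrt{\log p/n}\bigr)$ and hence dominated by the first, but the statement should be corrected accordingly rather than asserted as $o_p(1)$.
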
 
\begin{proof}[Proof of Proposition \ref{prop:Xihat}]
    By Lemma \ref{lem:Lasso} and (\ref{eq:RE_lower}), it suffices to show that 
    \[\|n^{-1}W^\top\widecheck e_A\|_\infty\lep \left(1+\pigamma\right)\sqrt{\dfrac{\log p}{n}}.\]
    By (\ref{eq:WeBound}), we have
\[\begin{aligned}
\left|\dfrac{u_\gamma^\top W^\top e_A}{\sqrt{n}}\right| &\leq\|\gamma\|_1\|A\|_1\|\Omega\|_1\cdot\left(\left\|\dfrac{W^\top \varepsilon_Y}{\sqrt{n}}\right\|_\infty+|\beta_A|\cdot\left\|\dfrac{W^\top \varepsilon_D}{\sqrt{n}}\right\|_\infty\right) \\
&\lep \|\gamma\|_2\cdot \left(m_\omega\sqrt{s\log p} + \pigamma m_\omega\sqrt{s\log p}\right) \\
&\lep (\|\pi\|_2  + \|\gamma\|_2)\cdot m_\omega\sqrt{s\log p}. \\
\end{aligned}\]
By (\ref{eq:consistent_Qgamma}), (\ref{eq: beta hat decom with pi}) and (\ref{eq:bound u pi A W e2}), 
\begin{equation}\label{eq:betahat_rate with pi}
\begin{aligned}
    \hat\beta_A - \beta_A &= O_p\left(\dfrac{(\|\pi\|_2 + \|\gamma\|_2)\cdot m_\omega\sqrt{s\log p}}{\sqrt{n}\QA(\gamma)}\right) +  \\
    &\ \ \ \ \dfrac{1}{\sqrt{n}\QA(\gamma)}O_p\left(\dfrac{m_\omega s\log p}{\sqrt{n}} + \dfrac{s_\omega m_{\omega}^{2-2q}\cdot s^{1/2}(\log p)^{(1-q)/2}}{n^{(1-q)/2}}(\|\gamma\|_2+\|\pi\|_2)\right) \\
    &= O_p\left(\dfrac{(\|\pi\|_2 + \|\gamma\|_2)\cdot m_\omega\sqrt{s\log p}}{\sqrt{n}\QA(\gamma)}\right) + \dfrac{1}{\sqrt{n}\QA(\gamma)}O_p\left(\|\gamma\|_2 + (\|\gamma\|_2+\|\pi\|_2)\right)\\
    &=O_p\left(\dfrac{(\|\pi\|_2 + \|\gamma\|_2)\cdot m_\omega\sqrt{s\log p}}{\sqrt{n}\QA(\gamma)}\right),
\end{aligned} 
\end{equation}
where the last two steps apply Assumption \ref{as:asym} and Lemma \ref{lem:asymp}.  
    Given that $n^{-1/2} = o(\|\gamma\|_2)$ implied by Lemma \ref{lem:asymp}, 
    \[\dfrac{(\|\pi\|_2 + \|\gamma\|_2)\cdot m_\omega\sqrt{s\log p}}{\sqrt{n}\QA(\gamma)} \lep \left(\dfrac{ m_\omega\sqrt{s\log p}}{\sqrt{n}\|\gamma\|_2}\right)\dfrac{\|\pi\|_2 + \|\gamma\|_2}{\|\gamma\|_2} \lep 1+\pigamma,\] 
    and hence 
 \begin{equation}\label{eq:betahat_bound with pi}
\begin{aligned}
    |\hat\beta_A| \lep |\beta_A| +  1+\pigamma \lep \left(|\beta|+\pigamma\right) +   1+\pigamma \lesssim 1 + \pigamma.
\end{aligned} 
\end{equation}   
  The above, together with (\ref{eq:WeBound}), implies  
    \[\begin{aligned}
        \|n^{-1}W^\top\widecheck e_A\|_\infty &\leq \|n^{-1}W^\top\varepsilon_Y\|_\infty + |\hbeta_A|\cdot \|n^{-1}W^\top\varepsilon_D\|_\infty \\
        &\lep \sqrt{\dfrac{\log p}{n}} + \left(1+\pigamma\right) \sqrt{\dfrac{\log p}{n}}\\
        &\lep  \left(1+\pigamma\right) \sqrt{\dfrac{\log p}{n}}.
    \end{aligned}\]
    This completes the proof of Proposition \ref{prop:Xihat}. 
\end{proof}


\begin{remark}
When $\pi\in\mathcal{H}_{A^*}(t)$, by (\ref{eq: pi 2 norm bound HM}) and Lemma \ref{lem:asymp} we have $\|\pi\|_2\lesssim \|\gamma\|_2$. Then the convergence rate becomes
\begin{equation}\label{eq:XihatError_simple}
\begin{aligned}
\max\{\|\widehat{\pi}_A-\widecheck{\pi}_A\|_2, \|\widehat{\varphi}_A-\widecheck{\varphi}_A\|_2\}&\lep  \sqrt{\dfrac{s\log p}{n}}, \\
\max\{\|\widehat{\pi}_A-\widecheck{\pi}_A\|_1, \|\widehat{\varphi}_A-\widecheck{\varphi}_A\|_1\}&\lep  \sqrt{\dfrac{s^2 \log p}{n}},
\end{aligned}
\end{equation}
as usual for Lasso estimators. 
\end{remark}

\begin{proposition} \label{prop:bound_sd_hat_for_pi}Under Assumptions \ref{as:DecisionMatrix}-\ref{as:tuning}, if $\pi\in\mathcal{H}_M(t)$ for any absolute constant $t$, 
\begin{align}
\left\|\dfrac{1}{n}\sumn \left(\Wi\Wi^\top\widehat{e}_{iA}^2-\mathbb{E}(\Wi\Wi^\top\sigma_{iA^*}^2)\right)\right\|_\infty \lep \dfrac{s^2\log p}{n} + \left(1+\dfrac{1}{\|\gamma\|_2}\right)\sqrt{\dfrac{\log p}{n}}. \label{eq:bound_hat_eA} 
\end{align} 
\end{proposition}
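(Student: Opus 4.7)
The plan is to mirror the proof of~(\ref{eq:bound_hat_1}) in Proposition~\ref{prop:bound_sd_hat}, but now with the structural residual $\hat{e}_{iA}=Y_i-D_i\hat{\beta}_A-X_{i\cdot}^\top\hat{\varphi}_A-Z_{i\cdot}^\top\hat{\pi}_A$ and with $\sigma^2_{iA^*}$ as the centering. The starting observation is the identity $\hat{e}_{iA}=\widecheck{e}_{iA}-W_{i\cdot}^\top\Delta$, where $\widecheck{e}_{iA}=\varepsilon_{i1}-\hat{\beta}_A\varepsilon_{i2}$ and $\Delta=((\hat{\varphi}_A-\widecheck{\varphi}_A)^\top,(\hat{\pi}_A-\widecheck{\pi}_A)^\top)^\top$. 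Squaring yields $\hat{e}_{iA}^2=\widecheck{e}_{iA}^2-2\widecheck{e}_{iA}W_{i\cdot}^\top\Delta+(W_{i\cdot}^\top\Delta)^2$, so I split the target quantity into three summands
\begin{align*}
T_1 &:= \dfrac{1}{n}\sumn W_{i\cdot}W_{i\cdot}^\top \widecheck{e}_{iA}^2 - \mathbb{E}(W_{i\cdot}W_{i\cdot}^\top \sigma^2_{iA^*}), \\
T_2 &:= -\dfrac{2}{n}\sumn W_{i\cdot}W_{i\cdot}^\top \widecheck{e}_{iA}\cdot W_{i\cdot}^\top \Delta, \\
T_3 &:= \dfrac{1}{n}\sumn W_{i\cdot}W_{i\cdot}^\top (W_{i\cdot}^\top \Delta)^2,
\end{align*}
and bound each in the maximum norm separately. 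Under $\pi\in\mathcal{H}_M(t)$ the ratio $\|\pi\|_2/\|\gamma\|_2=o(1)$ by~(\ref{eq: pi 2 norm bound HM}) and Lemma~\ref{lem:asymp}, so Proposition~\ref{prop:Xihat} simplifies to $\|\Delta\|_1\lesssim_p\sqrt{s^2\log p/n}$, and $|\hat{\beta}_A|\lesssim_p 1$ by~(\ref{eq: beta A bound}).

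For $T_3$, using the vectorization trick from the proof of~(\ref{eq:bound_hat_1}), each entry is bounded by $\|\Delta\|_1^2\cdot\max_{j,k,\ell,m}|n^{-1}\sum W_{ij}W_{ik}W_{i\ell}W_{im}|$, and Lemma~\ref{lem:bound_sd}~(\ref{eq:W4}) together with the bounded fourth moments of the sub-Gaussian $W_{i\cdot}$ gives the latter factor $\lesssim_p 1$. Hence $\|T_3\|_\infty\lesssim_p s^2\log p/n$. For $T_2$, a Hölder step gives a typical entry bounded by $\|\Delta\|_1\cdot\max_{j,k,\ell}|n^{-1}\sum W_{ij}W_{ik}W_{i\ell}\widecheck{e}_{iA}|$; writing $\widecheck{e}_{iA}=\varepsilon_{i1}-\hat{\beta}_A\varepsilon_{i2}$ and applying Lemma~\ref{lem:bound_sd}~(\ref{eq:W3epsilon}) together with $|\hat{\beta}_A|\lesssim_p 1$ yields the inner factor $\lesssim_p\sqrt{\log p/n}$, so $\|T_2\|_\infty\lesssim_p s\log p/n$, absorbed into the $s^2\log p/n$ term.

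The substantive piece is $T_1$. Using the algebraic identity
\[
\widecheck{e}_{iA}^2-\sigma^2_{iA^*}=(\varepsilon_{i1}^2-\sigma_{i1}^2)+\hat{\beta}_A^2(\varepsilon_{i2}^2-\sigma_{i2}^2)-2\hat{\beta}_A(\varepsilon_{i1}\varepsilon_{i2}-\sigma_{i12})+(\hat{\beta}_A^2-\beta_{A^*}^2)\sigma_{i2}^2-2(\hat{\beta}_A-\beta_{A^*})\sigma_{i12},
\]
I decompose $T_1$ into three ``concentration'' pieces (the $W_{i\cdot}W_{i\cdot}^\top$ averages against $\varepsilon_{i1}^2-\sigma_{i1}^2$, $\varepsilon_{i2}^2-\sigma_{i2}^2$, $\varepsilon_{i1}\varepsilon_{i2}-\sigma_{i12}$), two ``plug-in bias'' pieces (the $\hat{\beta}_A^2-\beta_{A^*}^2$ and $\hat{\beta}_A-\beta_{A^*}$ terms), and the residual centering $n^{-1}\sum W_{i\cdot}W_{i\cdot}^\top\sigma^2_{iA^*}-\mathbb{E}(W_{i\cdot}W_{i\cdot}^\top\sigma^2_{iA^*})$. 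The three concentration pieces are $O_p(\sqrt{\log p/n})$ in max norm by Lemma~\ref{lem:bound_sd}~(\ref{eq:W2epsilon2}), scaled by the $O_p(1)$ factors $\hat{\beta}_A$ and $\hat{\beta}_A^2$. The final centering is $O_p(\sqrt{\log p/n})$ by a direct copy of the argument for~(\ref{eq: DB WWsigma}) applied to the bounded variable $\sigma^2_{iA^*}$.

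The plug-in bias pieces require controlling $|\hat{\beta}_A-\beta_{A^*}|$ and $|\hat{\beta}_A^2-\beta_{A^*}^2|$. I would combine $|\hat{\beta}_A-\beta_A|\lesssim_p 1/(\sqrt{n}\,\|\gamma\|_2)$ from~(\ref{eq:betahat_rate}) with $|\beta_A-\beta_{A^*}|\lesssim_p\sqrt{\log p/n}$ from~(\ref{eq: betaA betaAst}) to get $|\hat{\beta}_A-\beta_{A^*}|\lesssim_p(1+1/\|\gamma\|_2)\sqrt{\log p/n}$, and analogously for the squared difference using $|\hat{\beta}_A|,|\beta_{A^*}|\lesssim_p 1$. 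Multiplying by $\|n^{-1}\sum W_{i\cdot}W_{i\cdot}^\top\sigma_{i2}^2\|_\infty+\|n^{-1}\sum W_{i\cdot}W_{i\cdot}^\top\sigma_{i12}\|_\infty\lesssim_p 1$ from~(\ref{eq:bound_hat_Wsigma}) gives each plug-in bias term $\lesssim_p(1+1/\|\gamma\|_2)\sqrt{\log p/n}$. Hence $\|T_1\|_\infty\lesssim_p(1+1/\|\gamma\|_2)\sqrt{\log p/n}$, and summing the three bounds yields the claimed rate. The main obstacle is precisely the $(1+1/\|\gamma\|_2)$ factor: a weak global IV strength $\|\gamma\|_2$ slows the convergence of $\hat{\beta}_A$ to $\beta_A$, and no parametric improvement of the plug-in bias in $T_1$ is available without invoking the full asymptotic-normality analysis of Theorem~\ref{thm:betaAsympNorm}.
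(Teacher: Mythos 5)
Your proposal is correct and follows essentially the same route as the paper: the same expansion $\hat{e}_{iA}=\widecheck{e}_{iA}-W_{i\cdot}^\top\Delta$, the same three-way split (your $T_1,T_2,T_3$ match the paper's $\Delta^A_3+\Delta^A_4+\Delta^A_5$, $\Delta^A_2$, $\Delta^A_1$), and the same supporting bounds from Lemma \ref{lem:bound_sd}, Proposition \ref{prop:Xihat}, (\ref{eq:betahat_rate}) and (\ref{eq: betaA betaAst}). The only cosmetic difference is that you telescope $\widecheck{e}_{iA}^2-\sigma_{iA^*}^2$ in one identity centered at $\beta_{A^*}$, whereas the paper passes through the intermediate quantities $e_{iA}^2$ and $\sigma_{iA}^2$ before moving to $\sigma_{iA^*}^2$; the resulting rates, including the $(1+1/\|\gamma\|_2)\sqrt{\log p/n}$ contribution from the plug-in error in $\hat\beta_A$, are identical.
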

\begin{proof}[Proof of Proposition \ref{prop:bound_sd_hat_for_pi}] Note that 
\[\begin{aligned}
\dfrac{1}{n}\sumn \left(\Wi\Wi^\top\widehat{e}_{iA}^2-\mathbb{E}(\Wi\Wi^\top\sigma_{iA^*}^2)\right) &= \dfrac{1}{n}\sumn \Wi\Wi^\top (\widehat{e}_{iA}^2 - \sigma_{iA}^2) + \dfrac{1}{n}\sumn \Wi\Wi^\top(\sigma_{iA}^2 - \sigma_{iA^*}^2)   \\
&\ \ \ \ + \dfrac{1}{n}\sumn(\Wi\Wi^\top\sigma_{iA^*}^2 - \mathbb{E}(\Wi\Wi^\top \sigma_{iA^*}^2)).
\end{aligned} \]
We decompose $\widehat{e}_{iA}^2-\sigma_{iA}^2$ as 
\[\begin{aligned}
\widehat{e}_{iA}^2-\sigma_{iA}^2 &= \widehat{e}_{iA}^2-\widecheck{e}_{iA}^2+\widecheck{e}_{iA}^2-e_{iA}^2+e_{iA}^2-\sigma_{iA}^2 \\
&= \left[\Wi^\top\left(\begin{array}{c}
     \widehat{\varphi}_A-\widecheck{\varphi}_A  \\
      \widehat{\pi}_A-\widecheck{\pi}_A
\end{array}\right)+\widecheck{e}_{iA}\right]^2-\widecheck{e}_{iA}^2 + (\widehat{\beta}_A^2-\beta_A^2)\varepsilon_{i,D}^2 - 2(\widehat{\beta}_A-\beta_A)\varepsilon_{i,Y}\varepsilon_{i,D} + e_{iA}^2-\sigma_{iA}^2 \\ 
&= \left[\Wi^\top\left(\begin{array}{c}
     \widehat{\varphi}_A-\widecheck{\varphi}_A \\
      \widehat{\pi}_A-\widecheck{\pi}_A
\end{array}\right)\right]^2-2\Wi^\top\left(\begin{array}{c}
     \widehat{\varphi}_A-\widecheck{\varphi}_A  \\
      \widehat{\pi}_A-\widecheck{\pi}_A
\end{array}\right)(\varepsilon_{i,Y}-\widehat{\beta}_A\varepsilon_{i,D}) + (\widehat{\beta}^2_A-\beta^2_A)\varepsilon_{i,D}^2 \\
&\ \ \ - 2(\widehat{\beta}_A-\beta_A)\varepsilon_{i,Y}\varepsilon_{i,D} + (\varepsilon_{i,Y}^2-\sigma_{i,Y}^2) + \beta_A^2(\varepsilon_{i,D}^2-\sigma_{i,D}^2) - 2\beta_A(\varepsilon_{i,Y}\varepsilon_{i,Y} - \sigma_{i,YD}). 
\end{aligned}\]
Then 
\[\dfrac{1}{n}\sumn \left(\Wi\Wi^\top\widehat{e}_{iA}^2-\mathbb{E}(\Wi\Wi^\top\sigma_{iA}^2)\right)=\Delta^A_1+\Delta^A_2+\Delta^A_3+\Delta^A_4+\Delta^A_5,\]
where
\[\Delta^A_1=\dfrac{1}{n}\sumn \Wi\left[\Wi^\top\left(\begin{array}{c}
     \widehat{\varphi}_A-\widecheck{\varphi}_A  \\
     \widehat{\pi}_A-\widecheck{\pi}_A
\end{array}\right)\right]^2\Wi^\top,\]
\[\Delta^A_2=\dfrac{1}{n}\sumn \Wi\Wi^\top\left(\begin{array}{c}
     \widehat{\varphi}_A-\widecheck{\varphi}_A  \\
     \widehat{\pi}_A-\widecheck{\pi}_A
\end{array}\right)(\varepsilon_{i,Y}-\widehat{\beta}_A\varepsilon_{i,D})\Wi^\top,\]
\[\begin{aligned}
\Delta^A_3&=\dfrac{1}{n}\sumn \Wi\Wi^\top\left[(\widehat{\beta}_A^2-\beta_A^2)\varepsilon_{i,D}^2- 2(\widehat{\beta}_A-\beta_A)\varepsilon_{i,Y}\varepsilon_{i,D}\right], 
\end{aligned}\]
\[\Delta^A_4=\dfrac{1}{n}\sumn \Wi\Wi^\top\left[(\varepsilon_{i,Y}^2-\sigma_{i,Y}^2) + \beta_A^2(\varepsilon_{i,D}^2-\sigma_{i,D}^2) - 2\beta_A(\varepsilon_{i,Y}\varepsilon_{i,Y} - \sigma_{i,YD})\right],\]
and 
\[\Delta^A_5=\dfrac{1}{n}\sumn \Wi\Wi^\top(\sigma_{iA}^2 - \sigma_{iA^*}^2) + \dfrac{1}{n}\sumn(\Wi\Wi^\top\sigma_{iA^*}^2 - \mathbb{E}(\Wi\Wi^\top \sigma_{iA^*}^2)).\]
\par \underline{Bound $\Delta^A_1$.} Following similar arguments to show (\ref{eq:bound_hat_1}), by (\ref{eq:XihatError_simple})
\[\begin{aligned}
\|\Delta^A_1\|_\infty 
&\leq \max_{j,k,\ell,m\in[p]}\left|\dfrac{1}{n}\sum_{i=1}^n  W_{ij}W_{ik}W_{i\ell}W_{im}  \right|  (\|\widehat{\varphi}_A-\widecheck{\varphi}_A\|_1+\|\widehat{\pi}_A-\widecheck{\pi}_A\|_1)^2 \lep \dfrac{s^2\log p}{n}.
\end{aligned}\]
\par \underline{Bound $\Delta^A_2$.} By (\ref{eq: beta A bound}) and (\ref{eq:betahat_rate}), 
        $|\hbeta_A| \lep |\beta_A| + O_p\left(\dfrac{1}{\sqrt{n}\|\gamma\|_2}\right) \lep 1$. Then following similar arguments to show (\ref{eq:bound_hat_1}),
\[\begin{aligned}
\|\Delta^A_2\|_\infty 
&\lep \max_{(j,k,h)\in[p]^3,m\in\{1,2\}}\left|\dfrac{1}{n}\sum_{i=1}^nW_{ij}W_{ik}W_{ih}\varepsilon_{im}\right|(\|\widehat{\varphi}_A-\widecheck{\varphi}_A\|_1+\|\widehat{\pi}_A-\widecheck{\pi}_A\|_1)\cdot(1+|\hat\beta_A|)\\
&\lep \dfrac{s\log p}{n}.
\end{aligned}\]  
\par \underline{Bound $\Delta^A_3$.} Then by (\ref{eq:bound_hat_Weps}) and (\ref{eq:betahatsq_rate}),
\[\begin{aligned}
\|\Delta^A_3\|_\infty &\lep \left[|\widehat{\beta}_A^2-\beta_A^2|+|\widehat{\beta}_A-\beta_A|\right]\cdot 1 \lep \dfrac{\sqrt{\log p}}{\sqrt{n}\|\gamma\|_2}.
\end{aligned}\]
\par \underline{Bound $\Delta^A_4$.}  By  (\ref{eq:W2epsilon2}) and (\ref{eq: beta A bound}), 
\[\begin{aligned}
\|\Delta^A_4\|_\infty &\lep (1+|\beta_A|+|\beta_A|^2 )\sqrt{\dfrac{\log p}{n}} \lep \sqrt{\dfrac{\log p}{n}}.
\end{aligned}\]
\par \underline{Bound $\Delta^A_5$.} By (\ref{eq:bound_hat_Wsigma_star}) and (\ref{eq: DB WWsigma}), 
\[\|\Delta^A_5\|_\infty \lep \sqrt{\dfrac{\log p}{n}}.\]
Then we complete the proof of (\ref{eq:bound_hat_eA}) by summing up the upper bounds of $\Delta^A_1$, $\Delta^A_2$, $\Delta^A_3$, $\Delta^A_4$ and $\Delta^A_5$. 
\end{proof}

 Proposition \ref{prop:A} provides an intermediate result for lower bounded individual variances of the test statistic for the M test. 
 \begin{proposition}\label{prop:A}Let $A_{0,j}^{*\top}$ denote the $j$-th row of the matrix $A_0^*$ defined as (\ref{eq: def A0 star}). Suppose that Assumption \ref{as: two strong IV} holds. Then $\min_{j\in[p_z]}\|A_{0,j}^*\|_2^2 \gtrsim 1$. 
\end{proposition}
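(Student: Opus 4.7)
The strategy is a direct algebraic computation that exploits the diagonal structure of $A^*$. I would first unfold the definition $A_0^* = A^{*1/2}\bigl(I_{p_z} - \gamma\gamma^\top A^*/\QAst(\gamma)\bigr)$ so that the $j$-th row reads $A_{0,j}^{*\top} = \sigma_{jz}\bigl(e_j^\top - (\gamma_j/\QAst(\gamma))\,\gamma^\top A^*\bigr)$, where $e_j$ is the $j$-th canonical basis vector. Since $A^* = {\rm diag}(\sigma_{1z}^2,\ldots,\sigma_{p_zz}^2)$, squaring the Euclidean norm gives the closed form
\[
\|A_{0,j}^*\|_2^2 \;=\; \sigma_{jz}^2\Bigl(1 - \dfrac{2\sigma_{jz}^2\gamma_j^2}{\QAst(\gamma)} + \dfrac{\gamma_j^2 \sum_{k\in[p_z]}\sigma_{kz}^4\gamma_k^2}{\QAst(\gamma)^2}\Bigr).
\]
Introducing the shorthand $\alpha_j := \sigma_{jz}^2\gamma_j^2/\QAst(\gamma)$ and $\bar\sigma^2 := \sum_k \sigma_{kz}^4\gamma_k^2/\QAst(\gamma)$ (a convex combination of the $\sigma_{kz}^2$'s with nonnegative weights $\sigma_{kz}^2\gamma_k^2/\QAst(\gamma)$ summing to one) compacts this to $\|A_{0,j}^*\|_2^2 = \sigma_{jz}^2(1 - 2\alpha_j) + \alpha_j\bar\sigma^2$.

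The crucial observation I would then make is a single-term lower bound on the weighted average: retaining only the $k=j$ summand in $\bar\sigma^2$, and using the nonnegativity of the remaining terms, yields $\bar\sigma^2 \geq \sigma_{jz}^2\alpha_j$. Substituting this into the identity above gives
\[
\|A_{0,j}^*\|_2^2 \;\geq\; \sigma_{jz}^2(1 - 2\alpha_j) + \alpha_j \cdot \sigma_{jz}^2\alpha_j \;=\; \sigma_{jz}^2(1 - \alpha_j)^2.
\]
Finally I would invoke Assumption \ref{as: two strong IV} to bound $\alpha_j \leq C_\gamma < 1$, and use Assumption \ref{as:DecisionMatrix}, which implies $\sigma_{jz}^2 = e_j^\top \Sigma e_j \geq \lambda_{\min}(\Sigma) \geq c_\Sigma$, to conclude $\|A_{0,j}^*\|_2^2 \geq c_\Sigma(1 - C_\gamma)^2 \gtrsim 1$ uniformly in $j\in[p_z]$.

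The main (modest) delicacy lies in identifying the right way to exploit Assumption \ref{as: two strong IV}. A naive case split on whether $\alpha_j$ lies above or below $1/2$, combined with the plain interval bounds $\sigma_{jz}^2, \bar\sigma^2\in[c_\Sigma, C_\Sigma]$, produces a lower bound of the form $(1-2C_\gamma)C_\Sigma + C_\gamma c_\Sigma$, which can be \emph{negative} when $C_\gamma$ is close to $1$ and the ratio $c_\Sigma/C_\Sigma$ is small. The trick is to refuse to decouple $\bar\sigma^2$ from $\alpha_j$: keeping the correlation $\bar\sigma^2 \geq \alpha_j\sigma_{jz}^2$ collapses the expression to the perfect square $\sigma_{jz}^2(1 - \alpha_j)^2$, whose positivity is immediate from $C_\gamma < 1$. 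No other estimates on $\gamma$ are needed.
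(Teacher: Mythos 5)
Your proof is correct, and it takes a genuinely different route from the paper's. The paper argues via the matrix identity $A_0^*A_0^{*\top} = A^{*1/2}\bigl(I_{p_z}-\gamma\gamma^\top A^*/\QAst(\gamma)\bigr)A^{*1/2}$, justified by the idempotence of $P:=I_{p_z}-\gamma\gamma^\top A^*/\QAst(\gamma)$, and then reads off the $j$-th diagonal entry as $\sigma_{jz}^2\bigl(1-\sigma_{jz}^2\gamma_j^2/\QAst(\gamma)\bigr)$, which Assumption \ref{as: two strong IV} bounds below by $(1-C_\gamma)\sigma_{jz}^2$. Note, however, that $P$ is an \emph{oblique} (non-symmetric) projection whenever $A^*\gamma$ is not proportional to $\gamma$, so $PP^\top\neq P$ in general; the paper's identity, and hence its exact diagonal formula, holds only when the $\sigma_{jz}^2$ over the support of $\gamma$ are all equal. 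Your direct row-by-row expansion gives the correct general expression $\sigma_{jz}^2(1-2\alpha_j)+\alpha_j\bar\sigma^2$, and your single-term bound $\bar\sigma^2\geq\alpha_j\sigma_{jz}^2$ collapses it to the perfect square $\sigma_{jz}^2(1-\alpha_j)^2\geq c_\Sigma(1-C_\gamma)^2$. This is slightly weaker than the paper's claimed $(1-C_\gamma)\sigma_{jz}^2$ (since $(1-\alpha_j)^2\leq 1-\alpha_j$), but it is entirely sufficient for the proposition and, unlike the paper's shortcut, valid for heteroskedastic $Z$-scales. Your observation that one must not decouple $\bar\sigma^2$ from $\alpha_j$ is exactly the right one; the naive interval bound can indeed go negative when $C_\gamma$ is near $1$ and $c_\Sigma/C_\Sigma$ is small.
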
 
\begin{proof}[Proof of Proposition \ref{prop:A}]Note that $I_{p_z} - \dfrac{\gamma\gamma^\top A^*}{\QAst(\gamma)}$ is idempotent and hence 
\[A_0^* A_0^{*\top} = A^{*1/2}\left(I_{p_z} - \dfrac{\gamma\gamma^\top A^*}{\QAst(\gamma)}\right)A^{*1/2}.\] 
For any $j\in[p_z]$,  
 $\|A_{0,j}^*\|_2^2$ is the $j$-th diagonal element of $A_0^* A_0^{*\top}$ given as 
 \[\begin{aligned}
     \|A_{0,j}^*\|_2^2 &= \sigma_{jz}^2\left(1-\dfrac{\gamma_j^2 \sigma_{jz}^2}{\QAst(\gamma)}\right) = \sigma_{jz}^2\left(1-\dfrac{\gamma_j^2 \sigma_{jz}^2}{\sum_{j\in[p_z]}\gamma_j^2 \sigma_{jz}^2}\right),
 \end{aligned}\]
 which is strictly bounded from below by $(1-C_\gamma)\sigma_{jz}^2$.  Proposition \ref{prop:A} then follows by the fact that $\sigma_{jz}^2$ is uniformly lower bounded for all $j\in[p_z]$ implied by the bounded eigenvalues of the population Gram matrix specified in Assumption \ref{as:DecisionMatrix}.
\end{proof}

Proposition \ref{prop:GaussianApprox} shows the Gaussian Approximation property for the key component in the test statistic, which is the key for the asymptotic size and power of the M test. Define 
\begin{equation}\label{eq:def eiAst}
    e_{iA^*} := \varepsilon_{i,Y} - \varepsilon_{i,D}\beta_{A^*}, 
\end{equation}
and $e_{A^*} = ( e_{1A^*}, e_{2A^*},\cdots, e_{nA^*})^\top$.  
\begin{proposition} \label{prop:GaussianApprox} Define $\xi_{i\cdot} = A_0^*\Omega_z W_{i\cdot}e_{iA^*}$ and $\xi_{ij}$ as the $j$-th element of $\xi_{i\cdot}$ for any $j\in[p_z]$. Suppose that $\pi\in\mathcal{H}_{A^*}(t)$. Under Assumptions \ref{as:DecisionMatrix}-\ref{as:asym} and \ref{as: two strong IV},
\begin{equation}\label{eq:GaussianApprox}
    \sup_{x\in\mathbb{R}} \left|\Pr\left( \max_{j\in[p_z]} \dfrac{\sumn \xi_{ij}} {\sqrt{n}} \leq x \right) - \Pr\left( \max_{j\in[p_z]} \dfrac{\sumn a_{ij}} {\sqrt{n}} \leq x \right)\right| \lesssim Cn^{-c},
\end{equation}
for some absolute constant $c$, where $\left \{a_{i\cdot} = (a_{i,Y},\cdots,a_{ip_z})^\top \right \}_{i=1}^{n}$ is a sequence of mean zero Gaussian vector with covariance matrix 
\begin{equation}\label{def:VAstar}
    {\rm V}_{A^*}  := A_0^*\Omega_z\mathbb{E}[W_{i\cdot}W_{i\cdot}\sigma_{iA^*}^2]\Omega_z^\top A_0^{*\top}.
\end{equation} 
\end{proposition}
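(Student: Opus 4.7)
The plan is to invoke the high-dimensional Gaussian approximation theorem of \citet{chernozhukov2013gaussian} for maxima of sums of independent centered random vectors. Observe first that $\{\xi_{i\cdot}\}_{i=1}^n$ are i.i.d.\ because $\{W_{i\cdot},\varepsilon_{i1},\varepsilon_{i2}\}_{i=1}^n$ are i.i.d.\ and $\beta_{A^*}$ is a deterministic population quantity, and that $\mathbb{E}(\xi_{ij}|W_{i\cdot}) = A_{0,j}^{*\top}\Omega_z W_{i\cdot}\cdot\mathbb{E}(e_{iA^*}|W_{i\cdot}) = 0$ by Assumption \ref{as:error}. To obtain a polynomial rate in Kolmogorov distance, the two non-trivial ingredients to verify are (i) a uniform componentwise variance lower bound $\mathrm{Var}(\xi_{ij}) \gtrsim 1$, and (ii) sub-exponential tails of $\xi_{ij}$ with uniformly bounded sub-exponential norm across $i,j$.

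For (i), I would write $\mathrm{Var}(\xi_{ij}) = \mathbb{E}[(A_{0,j}^{*\top}\Omega_z W_{i\cdot})^2 \sigma_{iA^*}^2]$ and use the lower bound $\sigma_{iA^*}^2 \geq (1-\rho_\sigma)\sigma_{\min}^2$ from (\ref{eq:bound_sigma_iA}) to reduce the problem to bounding $A_{0,j}^{*\top}\Omega_z\Sigma\Omega_z^\top A_{0,j}^*$ from below. A direct block-matrix calculation gives $\Omega_z\Sigma = [0_{p_z\times p_x},\,I_{p_z}]$, so $\Omega_z\Sigma\Omega_z^\top = \Omega_{zz}$, the lower-right $p_z\times p_z$ principal block of $\Omega$; Cauchy interlacing delivers $\lambda_{\min}(\Omega_{zz}) \geq \lambda_{\min}(\Omega) \geq 1/C_\Sigma$, and together with Proposition \ref{prop:A} yields $\mathrm{Var}(\xi_{ij}) \gtrsim \|A_{0,j}^*\|_2^2 \gtrsim 1$ uniformly in $j\in[p_z]$. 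For (ii), I would observe $\|A_{0,j}^{*\top}\Omega_z\|_2 \lesssim \|A_0^*\|_2\,\|\Omega\|_2 \lesssim 1$: the first factor is controlled via $A_0^* = A^{*1/2}(I_{p_z} - \gamma\gamma^\top A^*/\QAst(\gamma))$ together with $\lambda_{\max}(A^*)\asymp 1$ and the fact that $A^{*1/2}\gamma\gamma^\top A^{*1/2}/\QAst(\gamma)$ is an orthogonal projector, while the second factor is $1/\lambda_{\min}(\Sigma)\lesssim 1$ by Assumption \ref{as:DecisionMatrix}. Lemma \ref{lem:sub-Gaussian trans} then makes $A_{0,j}^{*\top}\Omega_z W_{i\cdot}$ sub-Gaussian with uniformly bounded sub-Gaussian norm, and $e_{iA^*} = \varepsilon_{i1}-\beta_{A^*}\varepsilon_{i2}$ is sub-Gaussian since $|\beta_{A^*}|\lesssim 1$ by (\ref{eq: beta Ast bound}); their product $\xi_{ij}$ is therefore sub-exponential with sub-exponential norm uniformly bounded in $i,j$.

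With (i) and (ii) in place, the Gaussian approximation theorem (e.g., Proposition 2.1 of \citet{chernozhukov2013gaussian}) applies once the dimension condition $(\log(p_z n))^7 = o(n)$ is verified, and this is implied by the remark following Assumption \ref{as:asym} that $(\log p)^7 = o(n^{c_\nu})$ with $c_\nu\in(0,1)$; this yields (\ref{eq:GaussianApprox}) with rate $Cn^{-c}$ for some absolute $c>0$. The main obstacle is the uniform variance lower bound in (i): because $A_0^*$ is rank-deficient by construction—it annihilates the direction $A^{*1/2}\gamma$—without further structure one could have $\|A_{0,j}^*\|_2$ arbitrarily small for some $j$, in which case the CCK theorem would deliver only a degenerate bound. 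The overidentification condition in Assumption \ref{as: two strong IV}, channelled through Proposition \ref{prop:A}, is tailored precisely to rule this out, so the overidentification-by-two-strong-IVs requirement is the crucial input to the proof; the remaining pieces are standard once the right norms are controlled.
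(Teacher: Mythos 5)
Your proposal is correct and follows essentially the same route as the paper's proof: invoke the Chernozhukov--Chetverikov--Kato Gaussian approximation, verify the uniform componentwise variance lower bound via Proposition \ref{prop:A} (reducing to $A_{0,j}^{*\top}\Omega_z\Sigma\Omega_z^\top A_{0,j}^*\gtrsim\|A_{0,j}^*\|_2^2$ and using the boundedness of $\sigma_{iA^*}^2$), establish sub-exponential tails of $\xi_{ij}$ via Lemma \ref{lem:sub-Gaussian trans}, and check the dimension condition $(\log p)^7=o(n^{c_\nu})$ from Assumption \ref{as:asym}. Your explicit Cauchy-interlacing step for $\lambda_{\min}(\Omega_{zz})$ and the orthogonal-projector bound on $\|A_0^*\|_2$ merely fill in details the paper leaves implicit, and you correctly identify Assumption \ref{as: two strong IV} as the crucial input ruling out degenerate component variances.
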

\begin{proof}[Proof of Proposition \ref{prop:GaussianApprox}]By Corollary 2.1 of \citet{chernozhukov2013gaussian}, it suffices to show
\begin{enumerate}
        \item $c \leq n^{-1}\sumn \mathbb{E}[\xi_{ij}^2] \leq C$ for all $j\in[p_z]$. 
    \item $\max_{k=1,2}n^{-1}\sumn \mathbb{E}[|\xi_{ij}|^{2+k}/C^k] + \mathbb{E}\left[\exp(|\xi_{ij}|/C)\right]<4$ for some  large enough absolute constant $C$. Here the constant $C$ is a counterpart of ``$B_n$'' in  \citet{chernozhukov2013gaussian}. 
\end{enumerate}
Then (\ref{eq:GaussianApprox}) follows by  Corollary 2.1 of \citet{chernozhukov2013gaussian}, given that $B_n[\log (np)]^7/n = O\left(n^{-\nu/(7+\nu)}\right)$ implied by Assumption \ref{as:asym}. 
\par \textbf{\underline{Step 1.}} Show $c \leq n^{-1}\sumn \mathbb{E}[\xi_{ij}^2] \leq C$. By the law of iterated expectations 
\[\begin{aligned}
    \mathbb{E}(\xi_{i\cdot}\xi_{i\cdot}^\top) &= \mathbb{E}(a_{i\cdot}a_{i\cdot}^\top) \\
    &=  A_0^*\Omega_z\mathbb{E}\left[ W^{i\cdot}W_{i\cdot}^\top\mathbb{E}(e_{iA^*}^2|W) \right]\Omega_z^\top A_0^{*\top}\\
    &=  A_0^*\Omega_z\mathbb{E}\left[ W_{i\cdot}W_{i\cdot}^\top \sigma_{iA^*}^2\right] \Omega_z^\top A_0^{*\top}. 
\end{aligned}\]
Let $\delta_j$ be the $j$-th standard basis vector of $\mathbb{R}^{p_z}$. Then by (\ref{eq:bound_sigma_iA}) $\sigma_{iA^*}^2\asymp 1$. Hence, 
\[\begin{aligned}
\mathbb{E}[\xi_{ij}^2]  = \delta_j^\top \mathbb{E}(\xi_{i\cdot}\xi_{i\cdot}^\top) \delta_j &\gtrsim \cdot \delta_j^\top  A_0^*\Omega_z \Sigma \Omega_z^\top A_0^{*\top} \delta_j \\
&\gtrsim \delta_j^\top  A_0^* A_0^{*\top}\delta_j \\
&\gtrsim \min_{j\in[p_z]}\|A_{0,j}^*\|_2 \gtrsim 1. 
\end{aligned}\]
where the last inequality is deduced by Proposition \ref{prop:A}. Similarly, 
\[\begin{aligned}
\mathbb{E}[\xi_{ij}^2]  &\leq \sigma_{\max}^2 \cdot \delta_j^\top  A_0^*\Omega_z \Sigma \Omega_z^\top A_0^{*\top} \delta_j \\
&\lesssim \delta_j^\top  A_0^* A_0^{*\top}\delta_j \\
&\lesssim \lambda_{\max}(A^*) \leq C_{A^*}. 
\end{aligned}\]
\par \textbf{\underline{Step 2.}} It suffices to show that $\xi_{ij}$ is sub-exponential satisfying for any $\mu>0$, $\Pr(|\xi_{ij}|>\mu)\leq C\exp(-c\mu)$. Since $\Wi$ is a sub-Gaussian vector with bounded sub-Gaussian norm and $A_0^*\Omega_z$ has $L_2$ norm bounded from above, by Lemma \ref{lem:sub-Gaussian trans}, the entries of $A_0^*\Omega_z\Wi$ are sub-Gaussian variables. By Sub-Gaussianity of $\varepsilon_{i,D}$, $A_0^*\Omega_z\Wi\varepsilon_{i,D}$ is sub-exponential. It then turns out that $\xi_{ij}$ is  sub-exponential, since it is an element in the sub-exponential vector  $A_0^*\Omega_z\Wi\varepsilon_{i,D}$. This completes Step 2. 
\end{proof}

Proposition \ref{prop:debias} provides a decomposition of the debiased Lasso estimator $\tilde\pi_A$ of the target vector $\pi_{A^*}$. 
\begin{proposition}\label{prop:debias}Suppose that $\pi\in\mathcal{H}_{A^*}(t)$. Under Assumptions \ref{as:DecisionMatrix}-\ref{as:tuning}, the estimation error of $A^{1/2}\tilde{\pi}_A$ is decomposed as 
\begin{equation}\label{eq:DBLassoPiDecom}
     A^{1/2}\left(\tilde\pi_A-\pi_A\right) = \dfrac{A_0^*\Omega_z W^\top e_{A^*}}{n} +  \Delta_A, 
\end{equation}
with $A_0^*=A^{*1/2}\left(I_{p_z} - \dfrac{\gamma\gamma^\top A^*}{\QAst(\gamma)}\right)$ and  $\|\Delta_A\|_\infty = o_p\left(\dfrac{1}{\sqrt{n}\log p}\right)$. 
\end{proposition}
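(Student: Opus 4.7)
The plan is to start from the raw decomposition (\ref{eq:DB-error}), reduce both the residual-type leading term $\hOmega_z W^\top \widecheck{e}_A/n$ and the $\widecheck{\pi}_A$-vs-$\pi_A$ discrepancy to population quantities, and then swap every plug-in object for its population analogue while tracking the sup-norm error at the target rate $o_p(1/(\sqrt{n}\log p))$.

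First I would substitute $\widecheck{\pi}_A - \pi_A = -\gamma(\hbeta_A - \beta_A)$ and $\widecheck{e}_A = e_A - \varepsilon_2(\hbeta_A - \beta_A)$ into (\ref{eq:DB-error}), divide by $\sqrt{n}$, and obtain
\begin{equation*}
\tilde{\pi}_A - \pi_A \;=\; -\gamma(\hbeta_A - \beta_A) \;+\; \dfrac{\hOmega_z W^\top e_A}{n} \;-\; \dfrac{\hOmega_z W^\top \varepsilon_2}{n}(\hbeta_A - \beta_A) \;+\; \dfrac{r_n}{\sqrt{n}}.
\end{equation*}
The remainder $r_n/\sqrt{n}$ is the standard debiasing residual, treated as in Theorem~6 of \citet{javanmard2014confidence} using Assumption~\ref{as: Lasso tuning pi}, Proposition~\ref{prop:Xihat}, and Lemma~\ref{lem:CLIME}, and it is of the required order under Assumption~\ref{as:asym}. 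The $\varepsilon_2$ cross-term is the product of $\|\hOmega_z W^\top \varepsilon_2/n\|_\infty = O_p(\sqrt{\log p/n})$ (Lemmas~\ref{lem:re_db} and \ref{lem:CLIME}) and $|\hbeta_A - \beta_A| = O_p(1/(\sqrt{n}\|\gamma\|_2))$ from the IQ-estimator analysis, hence $o_p(1/(\sqrt{n}\log p))$ by Lemma~\ref{lem:asymp}.

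Next I would insert the leading expansion of $\hbeta_A - \beta_A$ from (\ref{eq: beta hat decom with pi}); combined with the bound (\ref{eq:bound u pi A W e2}) and the fact that $\pi\in\mathcal{H}_{A^*}(t)$ forces $\|\pi\|_2\lesssim\|\gamma\|_2$, this gives
\begin{equation*}
\hbeta_A - \beta_A \;=\; \dfrac{\ugamma^\top W^\top e_A}{n\,\QA(\gamma)} + R_\beta, \qquad |R_\beta| = o_p\!\left(\dfrac{1}{\sqrt{n}\log p\,\|\gamma\|_2}\right).
\end{equation*}
Since $\ugamma^\top W^\top = \gamma^\top A\,\Omega_z W^\top$, the leading part of $-\gamma(\hbeta_A-\beta_A)$ combines with $\Omega_z W^\top e_A/n$ (after passing from $\hOmega$ to $\Omega$) to yield $A^{1/2}(I_{p_z} - \gamma\gamma^\top A/\QA(\gamma))\,\Omega_z W^\top e_A/n$ once $A^{1/2}$ is applied. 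I would then perform the remaining population swaps: $\hOmega\to\Omega$ via $L_1$--$L_\infty$ duality and Lemma~\ref{lem:CLIME}; $A\to A^*$ via Proposition~\ref{prop:A and A*}; $\hgamma$ and $\hQA(\gamma)$ to $\gamma$ and $\QAst(\gamma)$ via Proposition~\ref{prop:LassoTheta} and (\ref{eq:consistent_Qgamma}); and $e_A\to e_{A^*}$ via $e_A = e_{A^*} - \varepsilon_2(\beta_A - \beta_{A^*})$ together with (\ref{eq: betaA betaAst}). Each swap contributes an error of the form (perturbation rate)$\times$(maximum-norm residual rate); the dominant ones are of $s_\omega m_\omega^{3-2q}(\log p/n)^{1-q/2}$ type, all $o_p(1/(\sqrt{n}\log p))$ by Assumption~\ref{as:asym} and Lemma~\ref{lem:asymp}.

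The main obstacle will be achieving the sharpened $o_p(1/(\sqrt{n}\log p))$ sup-norm rate rather than the easier $o_p(1/\sqrt{n})$: every composite bound — especially those coupling the Lasso error in $\hgamma$ with the CLIME error in $\hOmega$ inside the projection direction $\hugamma$, and those that propagate the multi-step IQ-decomposition of $\hbeta_A - \beta_A$ through the $p_z$-dimensional object $A^{1/2}\gamma$ rather than through the scalar $\hbeta_A$ — must fit inside the global scaling imposed by Assumption~\ref{as:asym}, and it is precisely this tight coupling that dictates the exponent $(7+\nu-q)/2$ on $\log p$ there.
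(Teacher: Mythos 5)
Your proposal is correct and follows essentially the same route as the paper's proof: the same starting decomposition from (\ref{eq:DB-error}) with $\widecheck{\pi}_A-\pi_A=-\gamma(\hbeta_A-\beta_A)$ and $\widecheck{e}_A=e_A-\varepsilon_2(\hbeta_A-\beta_A)$, the same four error sources (the $(\hOmega\hSigma-I_p)$ debiasing residual, the $\varepsilon_2$ cross term, the linearization error of $\hbeta_A-\beta_A$ through $u_\gamma^\top W^\top e_A/(n\QA(\gamma))$, and the swaps $\hOmega\to\Omega$, $A\to A^*$, $e_A\to e_{A^*}$), and the same $L_1$--$L_\infty$ duality bounds closed by Assumption \ref{as:asym} and Lemma \ref{lem:asymp}. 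The paper labels these $\Delta_{1\pi}$ through $\Delta_{4\pi}$ and bounds each at $o_p(1/(\sqrt{n}\log p))$ exactly as you describe.
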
 

\begin{proof}[Proof of Proposition \ref{prop:debias}]By definition of $\tilde\pi_A$, 
\begin{equation}\label{eq:DBLassoPiInitialDecom}
    \begin{aligned}
    A^{1/2}\left(\tilde\pi_A-\pi_A\right) &=A^{1/2}\left(\widecheck\pi_A - \pi_A\right) + A^{1/2}(\hOmega \hSigma - I_{p_z})_z \left(\begin{array}{c}
       \widecheck\varphi_A -  \hphi_A  \\
       \widecheck\pi_A  - \hpi_A  
    \end{array}\right) +  A^{1/2}\dfrac{\hOmega_z W^\top \widecheck e_A}{n} \\
    &=  A^{1/2}\left( \dfrac{\Omega_z W^\top e_A}{n} - \gamma(\hbeta_A-\beta_A) \right) + \\
    &\ \ \ \ \ A^{1/2} (\hOmega \hSigma - I_{p})_z \left(\begin{array}{c}
       \widecheck\varphi_A -  \hphi_A  \\
       \widecheck\pi_A  - \hpi_A  
    \end{array}\right) +  A^{1/2}\dfrac{\hOmega_z W^\top (\widecheck e_A - e_A)}{n} \\
    &=  \dfrac{A_0\Omega_z W^\top e_A}{n} + \dfrac{A^{1/2}\gamma\gamma^\top A}{n\QA(\gamma)} -  A^{1/2}\gamma(\hbeta_A-\beta_A)  + \\
    &\ \ \ \ \ A^{1/2} (\hOmega \hSigma - I_{p})_z \left(\begin{array}{c}
       \widecheck\varphi_A -  \hphi_A  \\
       \widecheck\pi_A  - \hpi_A  
    \end{array}\right) +  A^{1/2}\dfrac{\hOmega_z W^\top (\widecheck e_A - e_A)}{n} \\
    &= \dfrac{A_0^*\Omega_z W^\top e_{A^*} }{n} + \Delta_{1\pi} + \Delta_{2\pi} + \Delta_{3\pi} + \Delta_{4\pi},
\end{aligned}
\end{equation}
where $(\hOmega \hSigma - I_{p})_z$ is the $p_z\times p$ submatrix composed of the last $p_z$ rows of $\hOmega \hSigma - I_{p}$, and 
\[\Delta_{1\pi} = A^{1/2}(\hOmega \hSigma - I_{p})_z \left(\begin{array}{c}
       \widecheck\varphi_A -  \hphi_A  \\
       \widecheck\pi_A  - \hpi_A  
    \end{array}\right), \]
     \[\Delta_{2\pi} =  A^{1/2}\dfrac{\hOmega_z W^\top (\widecheck e_A - e_A)}{n},\]
     \begin{equation}\label{eq:Delta3pi}
         \Delta_{3\pi} = \dfrac{A^{1/2}\gamma\gamma^\top AW^\top e_A}{n\QA(\gamma)}  - A^{1/2}\gamma(\hat\beta_A - \beta_A),
     \end{equation}
      \[\Delta_{4\pi} =  \dfrac{A_0 \Omega_z W^\top e_{A} }{n} - \dfrac{A_0^* \Omega_z W^\top e_{A^*} }{n}.\]
\par \underline{\textbf{Bound $\Delta_{1\pi}$}}. By Assumption \ref{as:asym}, Lemmas \ref{lem:CLIME}, \ref{lem:asymp} and Proposition \ref{prop:Xihat}, 
\begin{equation}\label{eq:supNormRootN1}
\begin{aligned}
    \|\Delta_{1\pi}\|_\infty &\lep \|\Omega \|_1\cdot\|\hOmega \hSigma - I_{p}\|_\infty \left(\|\widecheck\varphi_A - \hphi_A\|_1 + \|\widecheck\pi_A - \hpi_A\|_1 \right) \\
    &\lep m_\omega \cdot s_\omega \dfrac{m_\omega^{2-2q}(\log p)^{(1-q)/2}}{n^{(1-q)/2}} \cdot \sqrt{\dfrac{s^2 \log p }{n} }  \\
    & = \dfrac{1}{\sqrt{n}\log p}\cdot\dfrac{s_\omega m_\omega^{3-2q} s (\log p)^{(4-q)/2}}{n^{(1-q)/2}} \\ 
    &= o(n^{-1/2}(\log p)^{-1}). 
\end{aligned}
\end{equation}
\par \underline{\textbf{Bound $\Delta_{2\pi}$}}. By (\ref{eq:betahat_rate}) $|\hbeta_A-\beta_A| = O_p(n^{-1/2}\|\gamma\|_2^{-1})$. Additionally by (\ref{eq:WeBound}) and Proposition \ref{prop:A and A*}, 
\begin{equation}\label{eq:supNormRootN2}
    \begin{aligned}
   \|\Delta_{2\pi}\|_\infty  &\leq \|A^{1/2}\|_1\|\Omega\|_1 \cdot |\hbeta_A-\beta_A| \cdot \left\|\dfrac{W^\top \varepsilon_D}{n}\right\|_\infty \\
    &= O_p\left(\dfrac{m_\omega\sqrt{\log p}}{n\|\gamma\|_2}\right) = O_p\left(\dfrac{1}{\sqrt{n}\log p}\cdot\dfrac{m_\omega (\log p)^{3/2}}{\sqrt{n}\|\gamma\|_2}\right) = o_p(n^{-1/2}(\log p)^{-1}).
\end{aligned}
\end{equation}
\par \underline{\textbf{Bound $\Delta_{3\pi}$}}. By (\ref{eq:consistent_Qgamma}), (\ref{eq:root_n_betahat_decom}), Assumption 4 and Lemma (\ref{lem:asymp}),
\begin{equation}\label{eq: want 1}
    \begin{aligned}
    \hbeta_A-\beta_A &= \dfrac{u_\gamma^\top W^\top e_A}{n\hQA(\gamma)} + \dfrac{1}{\sqrt{n}\hQA(\gamma)}o_p(\|\gamma\|_2)\\ 
    &= \dfrac{\gamma^\top A\Omega_z W^\top e_A}{n\cdot\hQA(\gamma)} + o_p((n\QA(\gamma))^{-1/2}(\log p)^{-1}) \\
    &= \dfrac{\gamma^\top A\Omega_z W^\top e_A}{n\QA(\gamma)} + \dfrac{\gamma^\top A\Omega_z W^\top e_A}{n}\left[\dfrac{1}{\hQA(\gamma)}-\dfrac{1}{\QA(\gamma)}\right]   + o_p((n\QA(\gamma))^{-1/2}(\log p)^{-1}). 
\end{aligned}
\end{equation}
 
Then by (\ref{eq:WeBound}), (\ref{eq:consistent_Qgamma}) and (\ref{eq: beta A bound}), 
\begin{equation}\label{eq: want 2}\begin{aligned}
    &\ \ \left|\dfrac{\gamma^\top A\Omega_z W^\top e_A}{n}\left[\dfrac{1}{\hQA(\gamma)}-\dfrac{1}{\QA(\gamma)}\right]\right|\\
    &\leq \|\Omega_z^\top A\gamma\|_1 \left\|\dfrac{W^\top e_A}{n}\right\|_\infty\cdot \dfrac{|\hQA(\gamma)-\QA(\gamma)|}{\hQA(\gamma)\QA(\gamma)} \\
    &\lep \dfrac{\|\gamma\|_1}{\QA^2(\gamma)} \sqrt{\dfrac{m_\omega^2\log p}{n}}\cdot  O_p\left(\dfrac{m_\omega s\log p}{n} + \left[\dfrac{s_\omega m_\omega^{2-2q}\cdot s^{1/2}(\log p)^{1-q/2}}{n^{1-q/2}}+\sqrt{\dfrac{m_\omega^2 s^2 \log p}{n}}\right]\|\gamma\|_2\right)\\
    &\lep \dfrac{1}{\|\gamma\|_2^3} \sqrt{\dfrac{m_\omega^2 s \log p}{n}}\cdot  O_p\left(\dfrac{m_\omega s\log p}{n} + \left[\dfrac{s_\omega m_\omega^{2-2q}\cdot s^{1/2}(\log p)^{1-q/2}}{n^{1-q/2}}+\sqrt{\dfrac{m_\omega^2 s^2 \log p}{n}}\right]\|\gamma\|_2\right). 
\end{aligned}\end{equation}
Then by (\ref{eq: want 1}), (\ref{eq: want 2}), Assumption \ref{as:asym}, Lemma \ref{lem:asymp} and Proposition \ref{prop:A and A*}, 
\[\begin{aligned}
    \|\Delta_{3\pi}\|_\infty 
    &= \left\|A^{1/2}\left[\gamma(\hat\beta_A - \beta_A) - \dfrac{\gamma\gamma^\top A\Omega_z W^\top e_A}{n\QA(\gamma)}\right]\right\|_\infty \\
    &= \|A^{1/2}\|_1\cdot\left\|\dfrac{\gamma\gamma^\top A\Omega_z W^\top e_A}{n}\left[\dfrac{1}{\hQA(\gamma)}-\dfrac{1}{\QA(\gamma)}\right]   + \gamma \cdot o_p((n\QA(\gamma))^{-1/2}(\log p)^{-1}) \right\|_\infty\\
    &\lep \dfrac{\|\gamma\|_\infty}{\sqrt{n}\|\gamma\|_2^3}O_p\left(\dfrac{m_\omega^2 s^{3/2} (\log p)^{3/2}}{n}+\left[\dfrac{s_\omega m_\omega^{3-2q}s (\log p)^{(5-q)/2}}{n^{1-q/2}}+\dfrac{m_\omega^2 s^{3/2} \log p}{\sqrt{n}}\right]\|\gamma\|_2\right) + \\
    &\ \ \ \ \ \ \ \|\gamma\|_\infty  o_p((n\QA(\gamma))^{-1/2}(\log p)^{-1}) \\
    &= n^{-1/2}O_p\left(\dfrac{m_\omega^2 s^{3/2} (\log p)^{3/2}}{n\|\gamma\|_2^2}+\dfrac{s_\omega m_\omega^{3-2q}s(\log p)^{(5-q)/2}}{n^{1-q/2}\|\gamma\|_2}+\dfrac{m_\omega^2 s^{3/2} \log p}{\sqrt{n}\|\gamma\|_2}\right) + o_p(n^{-1/2}(\log p)^{-1})  \\
    &=\dfrac{1}{\sqrt{n}\log p} O_p\left(\dfrac{m_\omega^2 s^{3/2} (\log p)^{5/2}}{n\|\gamma\|_2}\right) +  \\
    &\ \ \ \ \ \  \dfrac{1}{\sqrt{n}\log p}O_p\left[\dfrac{s_\omega m_\omega^{2-2q}s^{1/2}(\log p)^{(5-q)/2}}{n^{(1-q)/2}}\cdot \dfrac{m_\omega s^{1/2}\log p}{\sqrt{n}\|\gamma\|_2} + \dfrac{m_\omega^2 s^{3/2} (\log p)^2}{\sqrt{n}\|\gamma\|_2}\right] + o_p(n^{-1/2}(\log p)^{-1}) \\
    &=o_p(n^{-1/2}(\log p)^{-1}).
\end{aligned}\]
\par \underline{\textbf{Bound $\Delta_{4\pi}$}}. We first bound $\|A_0^*\|_1$. Since \[\|\gamma\gamma^\top A^*\|_1\leq \|\gamma\|_1\cdot\|\gamma\|_\infty\cdot\|A^*\|_1 \lesssim \|\gamma\|_2\cdot\sqrt{s}\|\gamma\|_2 \lesssim \sqrt{s}\QAst(\gamma), \]
we deduce that 
\begin{equation}\label{eq: Bound A0star 1}
 \|A_0^*\|_1 \leq \|A^{*1/2}\|_1\cdot \left\|I_{p_z} - \dfrac{\gamma\gamma^\top A^*}{\QAst(\gamma)}\right\|_1 \lesssim \|I_{p_z}\|_1 + \dfrac{\sqrt{s}\QAst(\gamma)}{\QAst(\gamma)} \lesssim \sqrt{s}. 
\end{equation}
\par Note that 
\[\|\Delta_{4\pi}\|_\infty \leq   \left\|\dfrac{A_0^*\Omega_z W^\top (e_A-e_{A^*})}{n}\right\|_\infty +  \left\|\dfrac{(A_0-A_0^*)\Omega_z W^\top e_A}{n}\right\|_\infty,\]
where the first term on the RHS is bounded by 
\[\begin{aligned}
\left\|\dfrac{A_0^*\Omega_z W^\top (e_A-e_{A^*})}{n}\right\|_\infty &\lesssim \|A_0^*\|_1 \|\Omega\|_1 \left\|\dfrac{W^\top \varepsilon_D}{n}\right\|_\infty\cdot |\beta_A - \beta_{A^*}| \\
&\lep \sqrt{s}\cdot \dfrac{m_\omega\sqrt{\log p}}{\sqrt{n}}\cdot \sqrt{\dfrac{\log p}{n}} = \dfrac{1}{\sqrt{n}\log p}\cdot\dfrac{m_\omega \sqrt{s}(\log p)^2}{\sqrt{n}} = o_p\left(\dfrac{1}{\sqrt{n}\log p}\right), 
\end{aligned}\]
where the second inequality applies (\ref{eq:WeBound}), (\ref{eq: betaA betaAst}) and (\ref{eq: Bound A0star 1}), and the last step applies Lemma \ref{lem:asymp}. 
It thus suffices to show that 
 \begin{equation*}
     \left\|\dfrac{(A_0-A_0^*)\Omega_z W^\top e_A}{n}\right\|_\infty = o_p\left(\dfrac{1}{\sqrt{n}\log p}\right).
 \end{equation*}
Note that by Proposition \ref{prop:A and A*}, 
\[\begin{aligned}
    |\QA(\gamma) - \QAst(\gamma)| \leq \|\gamma\|_1^2 \cdot \|A-A^*\|_\infty \lep \|\gamma\|_2^2\cdot \sqrt{\dfrac{\log p}{n}},
\end{aligned}\]
and hence $\QA(\gamma)/\QAst(\gamma) \convp 1 $ and 
\[\left|\dfrac{1}{\hQA(\gamma)}-\dfrac{1}{\QAst(\gamma)}\right| = \dfrac{|\QA(\gamma) - \QAst(\gamma)|}{\QAst(\gamma)\QA(\gamma)} \lep \dfrac{1}{\QA(\gamma)}\sqrt{\dfrac{\log p}{n}}.\]
Then by Proposition \ref{prop:A and A*}, 
\begin{equation}\label{eq:A0starA0L1}
    \begin{aligned}
    \|A_0^*-A_0\|_1 &\leq \|A^{1/2} - A^{*1/2}\|_1\cdot\left\|I_{p_z} - \dfrac{\gamma\gamma^\top A}{\QA(\gamma)}\right\|_1 + \|A^{1/2}\|_1 \cdot \left\|\dfrac{\gamma\gamma^\top A}{\QA(\gamma)} - \dfrac{\gamma\gamma^\top A^*}{\QAst(\gamma)}\right\|_1 \\
    &\lep \sqrt{\dfrac{\log p}{n}}\cdot \left(1+\dfrac{\|\gamma\|_1^2\cdot \|A\|_1}{\QA(\gamma)}\right) +\left\|\dfrac{\gamma\gamma^\top (A-A^*)}{\QA(\gamma)} \right\|_1 + \left\|\gamma\gamma^\top A^*\left[\dfrac{1}{\hQA(\gamma)}-\dfrac{1}{\QAst(\gamma)}\right]\right\|_1 \\
     &\lep \sqrt{\dfrac{\log p}{n}}\cdot \left(1+s\right) +\dfrac{\|\gamma\|_1^2\|A-A^*\|_1}{\QA(\gamma)} + \dfrac{\|\gamma\|_1^2\cdot\|A\|_1}{\QA(\gamma)}\sqrt{\dfrac{\log p}{n}} \\
     &\lep \sqrt{\dfrac{s^2\log p}{n}} + \dfrac{2\|\gamma\|_2^2}{\QA(\gamma)}\cdot s\cdot\sqrt{\dfrac{\log p}{n}}  \lep \sqrt{\dfrac{s^2\log p}{n}}.
\end{aligned}
\end{equation}
This implies 
 \begin{equation*}
     \left\|\dfrac{(A_0-A_0^*)\Omega_z W^\top e_A}{n}\right\|_\infty \leq \|A_0^*-A_0\|_1\cdot \|\Omega\|_1 \cdot \|n^{-1}W^\top e_A\|_\infty = O_p \left(\dfrac{m_\omega s\log p}{n}\right)= o_p\left(\dfrac{1}{\sqrt{n}\log p}\right),
 \end{equation*}
where the last inequality applies Lemma \ref{lem:asymp}. This completes the proof of Proposition \ref{prop:debias}. 
\end{proof}

Proposition \ref{prop:VA} provides a probability upper bound for the estimation error of ${\rm V}_A$. Define 
\begin{equation}\label{eq: def A0 star}
    A_0^* = A^{*1/2}\left(I_{p_z}-\dfrac{\gamma\gamma^\top A^*}{\QAst(\gamma)}\right), 
\end{equation} 
with $A^*$ defined as (\ref{eq:def Astar}). Recall that $\Omega_z$ is the $p_z\times p $ submatrix composed of the last $p_z$ rows of $\Omega := \Sigma^{-1}$.
\begin{proposition}\label{prop:VA}Suppose that $\pi\in\mathcal{H}_{A^*}(t)$. Under Assumptions \ref{as:DecisionMatrix}-\ref{as:tuning}, 
\begin{equation}\label{eq:hatVAErr}
    \|\hat {\rm V}_A - {\rm V}_{A^*} \|_\infty  = o_p\left(\dfrac{1}{(\log p)^3}\right),
\end{equation}
where $\hat{\rm V}_{A}$ is defined as (\ref{eq:hatVA}) and 
\begin{equation}\label{eq:def-VA}
     {\rm V}_{A^*} =A_0^*\Omega_z  \mathbb{E}\left[  W_{i\cdot}W_{i\cdot}\sigma_{iA^*}^2 \right] \Omega_z^\top A_0^{*\top}.
\end{equation}
\end{proposition}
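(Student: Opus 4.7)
The plan is a standard plug-in/telescoping decomposition. Writing $\hat B := \hat A_0 \hat\Omega_z$, $B^* := A_0^* \Omega_z$, $\hat\Phi := n^{-1}\sum_{i=1}^n W_{i\cdot}W_{i\cdot}^\top \hat e_{iA}^2$, and $\Phi^* := \mathbb{E}[W_{i\cdot}W_{i\cdot}^\top\sigma_{iA^*}^2]$, we have $\hat{\rm V}_A = \hat B\hat\Phi\hat B^\top$ and ${\rm V}_{A^*} = B^*\Phi^*B^{*\top}$. Adding and subtracting $B^*\hat\Phi\hat B^\top$ and $B^*\Phi^*\hat B^\top$ yields
\begin{equation*}
\hat{\rm V}_A - {\rm V}_{A^*} = (\hat B - B^*)\hat\Phi \hat B^\top + B^*(\hat\Phi - \Phi^*)\hat B^\top + B^*\Phi^*(\hat B - B^*)^\top.
\end{equation*}
For each summand I will apply the entrywise H\"older bound $\|UVU'{}^\top\|_\infty \le (\max_i\|U_{i\cdot}\|_1)\cdot\|V\|_\infty\cdot(\max_j\|U'_{j\cdot}\|_1)$, reducing the proposition to bounding max row-sums of $B^*$, $\hat B$, and $\hat B - B^*$, together with the elementwise errors $\|\hat\Phi\|_\infty$ and $\|\hat\Phi - \Phi^*\|_\infty$.

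The middle summand is essentially immediate: Proposition \ref{prop:bound_sd_hat_for_pi} supplies $\|\hat\Phi - \Phi^*\|_\infty = O_p(s^2\log p/n + (1+1/\|\gamma\|_2)\sqrt{\log p/n})$, while $\|\Phi^*\|_\infty\lesssim 1$ by sub-Gaussianity so $\|\hat\Phi\|_\infty \lep 1$. A direct row-sum computation on $A_0^* = A^{*1/2}(I_{p_z} - \gamma\gamma^\top A^*/\QAst(\gamma))$ (mirroring (\ref{eq: Bound A0star 1})) gives $\sqrt{s}$ for the row sums of $A_0^*$ and $A_0^{*\top}$, and $\|\Omega_z\|_1 \vee \|\Omega_z^\top\|_1 \le \|\Omega\|_1 \lesssim m_\omega$ yields the needed bound on $B^*$; the same bound transfers to $\hat B$ once $\|\hat A_0 - A_0^*\|_1$ and $\|\hat\Omega - \Omega\|_1$ are shown to vanish. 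The core work is therefore bounding $\|\hat B - B^*\|_1$ via
\begin{equation*}
\|\hat B - B^*\|_1 \le \|\hat A_0 - A_0\|_1\|\hat\Omega_z\|_1 + \|A_0 - A_0^*\|_1\|\hat\Omega_z\|_1 + \|A_0^*\|_1\|\hat\Omega_z - \Omega_z\|_1.
\end{equation*}
The second term is bounded in (\ref{eq:A0starA0L1}) and the third in Lemma \ref{lem:CLIME}. For the first, I expand $\hat\gamma\hat\gamma^\top/\hat Q_A(\gamma) - \gamma\gamma^\top/Q_A(\gamma)$, using Proposition \ref{prop:LassoTheta} for $\hat\gamma - \gamma$, the consistency $\hat Q_A(\gamma)/Q_A(\gamma)\convp 1$ from (\ref{eq:consistent_Qgamma}), and the lower bound $Q_A(\gamma)\gep \|\gamma\|_2^2$.

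The main obstacle is verifying the aggregate rate is $o_p((\log p)^{-3})$ rather than merely $o_p(1)$. Multiplying the three factors in each summand produces, up to constants, a dominant contribution of order $m_\omega\sqrt{s}\cdot\bigl(m_\omega\sqrt{s^2\log p/n} + s_\omega m_\omega^{2-2q}(\log p/n)^{(1-q)/2}\bigr)$, plus contributions involving $\|\gamma\|_2^{-1}$ arising from the $\hat A_0 - A_0$ piece. By Assumption \ref{as:asym} the master rate $s_\omega m_\omega^{3-2q} s^{(3-q)/2}(\log p)^{(7+\nu-q)/2}/n^{(1-q)/2}$ is $o(1\wedge\|\gamma\|_2)$, and the $(\log p)^{(7+\nu-q)/2}$ buffer built into the assumption is precisely what provides the $(\log p)^3$ margin required here; this calibration is ultimately what makes the variance estimator accurate enough to preserve the Gaussian approximation used in Proposition \ref{prop:GaussianApprox}.
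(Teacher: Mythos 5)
Your proposal is correct and follows essentially the same route as the paper's proof: the same telescoping of $\hat B\hat\Phi\hat B^\top - B^*\Phi^*B^{*\top}$ into three summands, the same entrywise H\"older bound via max row $L_1$ norms, the same invocation of Proposition \ref{prop:bound_sd_hat_for_pi} for $\|\hat\Phi-\Phi^*\|_\infty$, the bounds $\|A_0^*\|_1\lesssim\sqrt{s}$ and $\|\Omega\|_1\le m_\omega$, and the same splitting of $\|\hat B-B^*\|_1$ through $\hat A_0-A_0$, $A_0-A_0^*$, and $\hat\Omega-\Omega$, with Assumption \ref{as:asym} and Lemma \ref{lem:asymp} absorbing the $(\log p)^3$ factor. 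The only (immaterial) difference is that your telescoping is exact whereas the paper's version keeps $\hat B$ on both sides of the $\hat\Phi-\Phi^*$ term and drops a lower-order cross term; the rates are unaffected.
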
 
\begin{proof}[Proof of Proposition \ref{prop:VA}]We bound the estimation error of $\hat{\rm V}_A$ as 
\begin{equation}\label{eq: hatVA hatVAstar initial bound}\begin{aligned}
\left\|\hat{\rm V}_A -  {\rm V}_{A^*}\right\|_\infty &\leq \left|\hat A_0\hat\Omega_z\left(\dfrac{\sumn(\Wi\Wi^\top\hat e_{iA}^2 - \mathbb{E}(\Wi\Wi^\top\sigma_{iA^*}^2))}{n}\right)\hat\Omega_z^\top \hat A_0^\top\right| + \\
&\ \ \ \ \left|\left(\hat{A}_0\hat\Omega_z - A_0^*\Omega_z\right)\mathbb{E}(\Wi\Wi^\top\sigma_{iA^*}^2) \Omega_z^\top A_0^{*\top}\right| + \left| A_0^*\Omega_z \mathbb{E}(\Wi\Wi^\top\sigma_{iA^*}^2) \left(\hat{A}_0\hat\Omega_z - A_0^*\Omega_z\right)^\top\right| \\
&\leq  \|\hat{A}_0\|_1^2 \|\hat\Omega_z\|_1^2\cdot\left\|\dfrac{\sumn(\Wi\Wi^\top\hat e_{iA}^2 - \mathbb{E}(\Wi\Wi^\top\sigma_{iA^*}^2))}{n}\right\|_\infty + \\
&\ \ \ \ \ \ \ 2\|A_0^*\|_1\|\Omega_z\|_1\|\mathbb{E}(\Wi\Wi^\top\sigma_{iA^*}^2)\|_\infty\cdot\|\hat{A}_0\hat\Omega_z - A_0^*\Omega_z\|_1 \\
&\lep \|\hat{A}_0\|_1^2\cdot m_\omega^2 \cdot \left(\dfrac{s^2\log p}{n} + \left(1+\dfrac{1}{\|\gamma\|_2}\right)\sqrt{\dfrac{\log p}{n}}\right) + \sqrt{s} \cdot m_\omega\|\hat{A}_0\hat\Omega_z - A_0^*\Omega_z\|_1,
\end{aligned}\end{equation}
where the last inequality applies (\ref{eq:CLIMEestL1Bound}), (\ref{eq: Bound A0star 1}), Proposition \ref{prop:bound_sd_hat_for_pi} and that fact that $\|\mathbb{E}(\Wi\Wi^\top\sigma_{iA^*}^2)\|_\infty\lep 1$ follows the arguments above (\ref{eq: DB WWsigma}). 
It remains to bound $\|\hat A_0\|_1$, $\|A_{0}^*\|_1$ and $\|\hat{A}_0\hat\Omega_z - A_0^*\Omega_z\|_1$.

\par \underline{\textbf{Bound $\|\hat A_0\|_1$}.}We first bound $\|\hat A_0 - A_0^*\|_1$. Define 
\[A_0 := A^{1/2}\left(I_{p_z}-\dfrac{\gamma\gamma^\top A}{\QA(\gamma)}\right).\]
Note that 
\[\begin{aligned}
\|\hat A_0 - A_0^*\|_1 \leq \|\hat A_0 - A_0\|_1 + \| A_0 - A_0^*\|_1 \lep \|\hat A_0 - A_0\|_1 + \sqrt{\dfrac{s^2\log p}{n}}, 
\end{aligned}\]
where the second inequality follows by (\ref{eq:A0starA0L1}). We further bound the first term on the RHS that  
\[\begin{aligned}
\|\hat A_0 - A_0\|_1 &\leq \|A^{1/2}\|_1\cdot \left\|\dfrac{\hgamma\hgamma^\top A}{\hQA(\gamma)} - \dfrac{\gamma\gamma^\top A}{\QA(\gamma)}\right\|_1 \\
&\lep \dfrac{1}{\hQA(\gamma)}\left\|\hgamma\hgamma^\top A - \gamma\gamma^\top A\right\|_1 + \left\|\gamma\gamma^\top A\right\|_1\left|\dfrac{1}{\hQA(\gamma)}-\dfrac{1}{\QA(\gamma)}\right| \\
&\lep \dfrac{\|\hgamma\hgamma^\top - \gamma\gamma^\top\|_1\|A\|_1}{\QA(\gamma)} + \|\gamma\|_\infty\|\gamma\|_1\|A\|_1\cdot\dfrac{|\hQA(\gamma)-\QA(\gamma)|}{\hQA(\gamma)\QA(\gamma)} \\
&\lep  \dfrac{\|(\hgamma - \gamma)(\hgamma-\gamma)^\top\|_1}{\QA(\gamma)} + \dfrac{\|(\hgamma - \gamma)\gamma^\top\|_1}{\QA(\gamma)} + \dfrac{\|\gamma(\hgamma - \gamma)^\top\|_1}{\QA(\gamma)}   + \\
&\ \ \ \ \ \|\gamma\|_\infty\|\gamma\|_1\|A\|_1\cdot\dfrac{|\hQA(\gamma)-\QA(\gamma)|}{\hQA(\gamma)\QA(\gamma)}. \\
\end{aligned}\]
Since by Proposition \ref{prop:LassoTheta},
\[\dfrac{\|(\hgamma - \gamma)(\hgamma-\gamma)^\top\|_1}{\QA(\gamma)} \leq \dfrac{\|\hgamma - \gamma\|_\infty\|\hgamma - \gamma\|_1}{\QA(\gamma)} \lep \dfrac{s\log p}{n\QA(\gamma)},\]
\[\dfrac{\|(\hgamma - \gamma)\gamma^\top\|_1}{\QA(\gamma)} \leq \dfrac{\|\hgamma - \gamma\|_\infty\|\gamma\|_1}{\QA(\gamma)} \lep \dfrac{1}{\QA(\gamma)}\sqrt{\dfrac{s\log p}{n}}\cdot\sqrt{s}\|\gamma\|_2\lep\dfrac{1}{\|\gamma\|_2}\sqrt{\dfrac{s^2\log p}{n}},\]
\[\dfrac{\|\gamma(\hgamma-\gamma)^\top\|_1}{\QA(\gamma)} \leq \dfrac{\|\gamma \|_\infty\|\hgamma - \gamma\|_1}{\QA(\gamma)} \lep \dfrac{\|\gamma\|_2}{\QA(\gamma)}\sqrt{\dfrac{s^2\log p}{n}}\lep \dfrac{1}{\|\gamma\|_2}\sqrt{\dfrac{s^2\log p}{n}},\]
and by (\ref{eq:consistent_Qgamma}),
\[\begin{aligned}
&\ \ \ \ \ \|\gamma\|_\infty\|\gamma\|_1\|A\|_1\cdot\dfrac{|\hQA(\gamma)-\QA(\gamma)|}{\hQA(\gamma)\QA(\gamma)} \\
&\lep \dfrac{\|\gamma\|_2\cdot\sqrt{s}\|\gamma\|_2}{\QA^2(\gamma)}\left[\dfrac{m_\omega s\log p}{n}+\|\gamma\|_2\left(\dfrac{s_\omega m_\omega^{2-2q} \cdot s^{1/2}(\log p)^{1-q/2}}{n^{1-q/2}}+\sqrt{\dfrac{m_\omega^2 s^2 \log p}{n}}\right)\right] \\
&\lep \dfrac{m_\omega s^{3/2}\log p}{n\QA(\gamma)} + \dfrac{1}{\|\gamma\|_2}\left(\dfrac{s_\omega m_\omega^{2-2q} \cdot s^{1/2}(\log p)^{1-q/2}}{n^{1-q/2}}+\sqrt{\dfrac{m_\omega^2 s^2 \log p}{n}}\right).
\end{aligned}\]
We can deduce that 
\[\|\hat A_0 - A_0\|_1 \lep \dfrac{m_\omega s^{3/2}\log p}{n\QA(\gamma)} + \dfrac{1}{\|\gamma\|_2}\left(\dfrac{s_\omega m_\omega^{2-2q} \cdot s^{1/2}(\log p)^{1-q/2}}{n^{1-q/2}}+\sqrt{\dfrac{m_\omega^2 s^2 \log p}{n}}\right),\]
and thus, 
\begin{equation}\label{eq: hatA0 A0star L1}
    \|\hat A_0 - A_0^*\|_1 \lep \dfrac{m_\omega s^{3/2}\log p}{n\QA(\gamma)} + \dfrac{1}{\|\gamma\|_2}\left(\dfrac{s_\omega m_\omega^{2-2q} \cdot s^{1/2}(\log p)^{1-q/2}}{n^{1-q/2}}+\sqrt{\dfrac{m_\omega^2 s^2 \log p}{n}}\right)  + \sqrt{\dfrac{s^2\log p}{n}}.
\end{equation}
By Assumption \ref{as:asym} and Lemma \ref{lem:asymp}, $\|\hat A_0 - A_0^*\|_1 = o_p(1)$ and hence 

\begin{equation}\label{eq: hat A0 L1}
    \|\hat A_0\|_1 \leq \|\hat A_0 - A_0^*\|_1 + \|A_0^*\|_1 \lep \sqrt{s}.
\end{equation}


\par \underline{\textbf{Bound $\|\hat A_0\hOmega_z - A_0^*\Omega_z\|_1$}}. Note that by (\ref{eq: hatA0 A0star L1}) and Lemma \ref{lem:CLIME},
\[\begin{aligned}
\|\hat A_0\hOmega_z - A_0^*\Omega_z\|_1 &\leq \|\hat A_0 - A_0^*\|_1\|\hOmega\|_1 + \|A_0^*\|_1\|\hOmega - \Omega\|_1 \\
&\leq m_\omega\left[\dfrac{m_\omega s^{3/2}\log p}{n\QA(\gamma)} + \sqrt{\dfrac{s^2\log p}{n}}+ \dfrac{1}{\|\gamma\|_2}\left(\dfrac{s_\omega m_\omega^{2-2q} \cdot s^{1/2}(\log p)^{1-q/2}}{n^{1-q/2}}+\sqrt{\dfrac{m_\omega^2 s^2 \log p}{n}}\right)\right]  \\
&\ \ \ \ \ \ +\sqrt{s}\cdot\dfrac{s_\omega\cdot m_\omega^{2-2q}\cdot (\log p)^{(1-q)/2}}{n^{(1-q)/2}}. 
\end{aligned}\]

Then by (\ref{eq: hatVA hatVAstar initial bound}),
\[\begin{aligned}
\left\|\hat{\rm V}_A -  {\rm V}_{A^*}\right\|_\infty &\leq s\cdot m_\omega^2 \left(\dfrac{s^2\log p}{n} + \dfrac{1}{\|\gamma\|_2}\sqrt{\dfrac{\log p}{n}} \right) \\
&\ \ \ \ +\dfrac{m_\omega^3 s^{2}\log p}{n\QA(\gamma)} + m_\omega^2\sqrt{\dfrac{ s^3\log p}{n}}+ \dfrac{1}{\|\gamma\|_2}\left(\dfrac{s_\omega m_\omega^{3-2q} \cdot s(\log p)^{1-q/2}}{n^{1-q/2}}+m_\omega^3\cdot\sqrt{\dfrac{s^3 \log p}{n}}\right) \\
&\ \ \ \   + \dfrac{s_\omega\cdot m_\omega^{3-2q}\cdot s (\log p)^{(1-q)/2}}{n^{(1-q)/2}}. 
\end{aligned}\]
Then it follows by Assumption \ref{as:asym} and Lemma \ref{lem:asymp} that $\left\|\hat{\rm V}_A -  {\rm V}_{A^*}\right\|_\infty = o_p(1/(\log p)^3)$. 
\end{proof}

\subsubsection{Proof of Theorem \ref{thm:M test}}\label{app:b32}
This proof follows the procedure in the proof of Theorem 2.2 in \citet{zhang2017simultaneous}. Conditional on the observed data, the normal vector $\eta \in [p_z]$ is equal in distribution to 
\[\eta \eqd \dfrac{1}{\sqrt{n}}\sum_{i=1}^n \hat{A}_0 \hat{\Omega}_z W_{i\cdot}\hat{e}_{iA}\cdot w_{i}, \]
where $\{w_i\}_{i=1}^n$ are i.i.d. standard normal variables. 
 Define \[\mathcal{T} = \max_{j\in[p_z]}\sqrt{n}A_j^{1/2\top}(\tilde\pi_A-\pi_A),\ \mathcal{T}_0 = \max_{j\in[p_z]}\dfrac{1}{\sqrt{n}}\sumn \xi_{ij},\] 
where $A_j^{1/2\top}$ denotes the $j$-th row of the matrix $A^{1/2}$, and 
\[\mathcal{W} = \max_{j\in[p_z]} \eta_j. \]
Here $\mathcal{T}$ and $\mathcal{T}_0$ are analogs of ``$T_0$'' and ``$T$'' in (14) of \citet{chernozhukov2013gaussian}, and $\mathcal{W}$ is ``$W$'' and ``$W_0$'' in (15) of the same paper.  Proposition \ref{prop:debias} shows that $|\mathcal{T}-\mathcal{T}_0| = o_p\left(\dfrac{1}{\log p}\right)$ and hence, 
\begin{equation}\label{eq: T minus T0}
    \Pr\left(|\mathcal{T}-\mathcal{T}_0| > \zeta_1\right) <\zeta_2,
\end{equation} 
where $\zeta_1 = \dfrac{1}{\log p}$ and $\zeta_2 = o(1)$. Furthermore, define $\varpi(\vartheta):=C_\varpi \vartheta^{1/3}(1\vee \log (p_z/\vartheta))^{2/3}$ with $C_\varphi>0$ large enough and
\[\Delta_{\rm V} := \|\hat{\rm V}_A - {\rm V}_{A^*}\|_\infty. \]
Finally, define the critical value of $\mathcal{W}$
\[{\rm cv}_{\mathcal{W}}(\alpha) := \inf\{x\in\mathbb{R}:{\rm Pr}_\eta(\mathcal{W}\leq x) \geq 1-\alpha\}.\]
Following the same path to verify Theorem 3.2 of \citet{chernozhukov2013gaussian}, we can deduce that 
\[\sup_{\alpha^*\in(0,1)}\left|\Pr\left(\mathcal{T}_0 > {\rm cv}_{\mathcal{W}}(\alpha^*)\right)-\alpha^*\right| \lesssim \varpi(\vartheta) + \Pr(\Delta_{\rm V} > \vartheta) + \zeta_1\sqrt{1\vee \log (p/\zeta_1)} + \zeta_2. \]
where the $n^{-c}$ comes from Proposition \ref{prop:GaussianApprox}. By (\ref{eq: T minus T0}) 
\[\zeta_1\sqrt{1\vee \log (p/\zeta_1)} + \zeta_2 = o(1).\] Take $\vartheta = 1/(\log p)^3$. By (\ref{prop:VA}) and the definition of $\varpi(\vartheta)$ below (\ref{eq: T minus T0}), 
\[\varpi(\vartheta) + \Pr(\Delta_{\rm V} > \vartheta) = o(1).\]
Thus
\begin{equation}\label{eq: Prob Difference alpha}
    \sup_{\alpha^*\in(0,1)}\left|\Pr\left(\mathcal{T}_0 > {\rm cv}_{\mathcal{W}}(\alpha^*)\right)-\alpha^*\right| \to 0,
\end{equation}
as $n \to \infty$. 
\par \underline{\textbf{Prove (\ref{eq: M test size})}}. Recall that $\pi_A = 0$ when $\pi=0$. Then  (\ref{eq: M test size}) is a direct corollary of (\ref{eq: Prob Difference alpha}).  
\par  \underline{\textbf{Prove (\ref{eq: M test power})}}. Let $a_{ij}$ be the normal variable with covariance matrix ${\rm V}_{A^*}$ as defined in Proposition \ref{prop:GaussianApprox}. By Step 1 in the proof of the same lemma, we have $\min_{j\in[p_z]}({\rm V}_{A^*})_{jj})\leq C$ for some absolute constant $C$. By Lemma 6 of \citet{tony2014two}, for any $x\in\mathbb{R}$, 
\[\Pr\left(\max_{j\in[p_z]}\dfrac{(\sum_{i=1}^n a_{ij})^2}{n({\rm V}_{A^*})_{jj}} - 2\log p_z + \log \log p_z \leq x\right)\to F(x) := \exp\left[\dfrac{1}{\sqrt{\pi}}\exp\left(-\dfrac{x}{2}\right)\right],\]
as $p_z \to\infty$, which implies 
\[\Pr\left(\max_{j\in[p_z]}\dfrac{(\sum_{i=1}^n a_{ij})^2}{n({\rm V}_{A^*})_{jj}} 
< 2\log p_z - 0.5\log \log p_z \right)\to1.\]
By the bounds of $({\rm V}_{A^*})_{jj}$, we deduce for some absolute constant $C$,

\begin{equation*} 
    \Pr\left(\max_{j\in[p_z]}\dfrac{(\sum_{i=1}^n a_{ij})^2}{n} < 2C\log p_z  - 0.5C\log \log p_z \right)\to1.
\end{equation*}
 
The Gaussian approximation result from Proposition \ref{prop:GaussianApprox} implies that 
\[\begin{aligned}
&\ \ \ \ \Pr\left(\mathcal{T}_0^2 < 2C\log p_z - 0.5\log\log p_z \right) \\
&=  \Pr\left(\max_{j\in[p_z]}\dfrac{(\sum_{i=1}^n \xi_{ij})^2}{n} < 2C\log p_z - 0.5\log\log p_z \right) \\
&\geq \Pr\left(\max_{j\in[p_z]}\dfrac{(\sum_{i=1}^n a_{ij})^2}{n} < 2C\log p_z - 0.5\log\log p_z \right) - Cn^{-c} \to 1. 
\end{aligned}\]

Then (\ref{eq: T minus T0}) implies
\begin{equation} \label{eq:TstatBound}
    \begin{aligned}
&\ \ \ \ \Pr\left( \mathcal{T}^2 < 3C\log p_z - 0.5 C\log \log p_z \right) \\
&\geq  \Pr\left( |\mathcal{T} - \mathcal{T}_0| + |\mathcal{T}_0| < \sqrt{3C\log p_z - 0.5\log\log p_z} \right) \\
&\geq \Pr\left(|\mathcal{T}_0| < \sqrt{3C\log p_z - 0.5\log\log p_z} - \zeta_1 \right)  - \Pr\left( |\mathcal{T} - \mathcal{T}_0| > \zeta_1 \right) \\
&\geq \Pr\left(|\mathcal{T}_0| < \sqrt{2C\log p_z - 0.5\log\log p_z} \right) - \zeta_2 \to 1.
\end{aligned}
\end{equation}

Recall that conditional on the observed data, $\eta\sim N(0,{\rm V}_A)$. By Proposition \ref{prop:VA} and Lemma 3.1 of \citet{chernozhukov2013gaussian}, taking $t=2C\log p_z - 0.5C\log\log p_z$ for the same Lemma, we deduce that the distribution of $\sqrt{n}\|\eta\|_\infty$ can be well approximated by $\max_{j\in[p_z]}n^{-1/2}|\sum_{i=1}^n a_{ij}|$ so that 
\[\begin{aligned}
    &\ \ {\rm\Pr}_\eta \left( n\|\eta\|_\infty^2 < 2C\log p_z  - 0.5C\log \log p_z \right) \\
    &=\Pr\left(\max_{j\in[p_z]}\dfrac{(\sum_{i=1}^n a_{ij})^2}{n} < 2C\log p_z  - 0.5C\log \log p_z \right) + o_p(1) \to 1.
\end{aligned}\]
Consequently, w.p.a.1, 
\begin{equation}\label{eq: cv bound}
    [{\rm cv}_A(\alpha)]^2 \leq 2C\log p_z - 0.5C\log \log p_z.
\end{equation}
 
Furthermore, since $\|A^{1/2}-A^{*1/2}\|_1 \lep \sqrt{\log p/n}$ by Proposition \ref{prop:A and A*} and 
\[\begin{aligned}
    \|\pi_A-\pi_{A^*}\|_\infty &= \|\gamma\|_\infty \cdot |\beta_A - \beta_{A^*}| \\
    &\leq \|\gamma\|_2\left[\dfrac{| \pi_{A^*}^\top (A-A^*)\gamma|}{\QA(\gamma)} + |\pi_{A^*}^\top A^*\gamma|\cdot\left|\dfrac{1}{\QA(\gamma)} - \dfrac{1}{\QAst(\gamma)}\right| \right] \\
    &\lep \dfrac{\|\pi_{A^*}\|_2\|A-A^*\|_2\|\gamma\|_2 }{\|\gamma\|_2} + \dfrac{\|\gamma\|_2\|\pi_{A^*}\|_2\|A^*\|_2\|\gamma\|_2|\QAst(\gamma)-\QA(\gamma)|}{\QAst^2(\gamma)} \\
    &\lep  \|\pi_{A^*}\|_2\|A-A^*\|_2 +  \dfrac{\|\pi_{A^*}\|_2\QAst(\gamma)\|\gamma\|_2\|A-A^*\|_2\|\gamma\|_2}{\QA(\gamma)} \\
    &\leq  \|\pi_{A^*}\|_2\sqrt{\dfrac{s\log p}{n}} \\
    &\lep \sqrt{\dfrac{s^2\log p}{n}}\|\pi_{A^*}\|_\infty  = o_p(\|(A^*)^{-1/2}\|_1\cdot\|A^{*1/2}\pi_{A^*}\|_\infty) = o_p(\|A^{*1/2}\pi_{A^*}\|_\infty). \\
\end{aligned}\]
We then deduce that 
\begin{equation}\label{eq:ApiSupErr}
    \begin{aligned}
    &\ \ \ \ \|A^{1/2}\pi_A - A^{*1/2}\pi_{A^*}\|_\infty\\
    &\leq \|A^{1/2}-A^{*1/2}\|_1\cdot\|\pi_A-\pi_{A^*}\|_\infty + \|A^{*1/2}\|_1 \|\pi_A-\pi_{A^*}\|_\infty + \|A^{1/2}-A^{*1/2}\|_1\cdot \|\pi_{A^*}\|_\infty \\
    & = o_p(\|A^{*1/2}\pi_{A^*}\|_\infty). \\
\end{aligned} 
\end{equation}
 
Take $C_\pi=\sqrt{5C}$. Combining (\ref{eq:TstatBound}) and (\ref{eq:ApiSupErr}), whenever $\pi\in \mathcal{H}_{A^*}(C_\pi + \epsilon)$ for any $\epsilon>0$,
\[\begin{aligned}
   &\ \ \ \ \Pr\left(\| \sqrt{n}\|A^{1/2}\tilde\pi_A\|_\infty > {\rm cv}_A(\alpha)\right) \\ &\geq \Pr\left( \|\sqrt{n}A^{1/2}\tilde\pi_A\|_\infty^2 >  2C\log p_z - 0.5C\log \log p_z \right) + o(1) \\
    &\geq \Pr\left(\|\sqrt{n}A^{1/2}\pi_A\|_\infty^2 > \|\sqrt{n}A^{1/2}(\tilde\pi_A-\pi_A)\|_\infty^2 + 2C\log p_z - 0.5C\log \log p_z \right) + o(1) \\
       &\geq \Pr\left(\|\sqrt{n}A^{1/2}\pi_A\|_\infty > \sqrt{ \mathcal{T}^2 + 2C\log p_z - 0.5C\log \log p_z} \right) + o(1) \\
    &\geq \Pr\left(\|\sqrt{n}A^{1/2}\pi_A\|_\infty >  \sqrt{ 5C\log p_z - C\log \log p_z } \right) + o(1) \\
     &\geq \Pr\left(\|\sqrt{n}A^{*1/2}\pi_{A^*}\|_\infty > \sqrt{n}\|A^{1/2}\pi_A - A^{*1/2}\pi_{A^*}\|_\infty +  \sqrt{ 5C\log p_z - C\log \log p_z } \right) + o(1) \\
      &\geq \Pr\left(\|\sqrt{n}A^{*1/2}\pi_{A^*}\|_\infty > o_p(\|\sqrt{n}A^{*1/2}\pi_{A^*}\|_\infty) +  \sqrt{ 5C\log p_z - C\log \log p_z } \right) + o(1) \\
       &\geq \Pr\left(\dfrac{\sqrt{5C}}{\sqrt{5C}+\epsilon}\|\sqrt{n}A^{*1/2}\pi_{A^*}\|_\infty >  \sqrt{ 5C\log p_z - C\log \log p_z } \right) + o(1) \\
       &\geq \Pr\left(  \sqrt{ 5C\log p_z} >  \sqrt{ 5C\log p_z - C\log \log p_z } \right) + o(1) \to 1. \\
\end{aligned}\]

\subsubsection{Proof of Theorem \ref{thm:Q}}\label{app:b33} 
We have the following decomposition of $\hat Q_A$
\begin{equation}
\begin{aligned}
\hat Q_A-Q_A &=  \widehat{Q}_A - \widecheck{Q}_A +  \widecheck{Q}_A - Q_A \\
&= \dfrac{2}{n} \hu_{\pi_A}^\top W^\top\widecheck{e}_A + 2(\widehat{\Sigma}\hu_{\pi_A}-(0_{p_x}^\top , A\hat\pi_A^\top)^\top)^\top\left(\begin{array}{c}
    \widehat{\varphi}_A - \widecheck{\varphi}_A  \\
    \widehat{\pi}_A - \widecheck{\pi}_A  \\
\end{array}\right) - \QA(\widehat{\pi}_A-\widecheck{\pi}_A) + \QA(\gamma)(\hbeta_A-\beta_A)^2 \\
&=\dfrac{2}{n}u_{\pi_A}^\top W^\top e_A + \Delta_{1Q} + \Delta_{2Q} ,\\
\end{aligned}
\end{equation}
where 
\begin{equation} \label{eq:B1}
\Delta_{1Q}=2(\widehat{\Sigma}\hu_{\pi_A}-(0_{p_x}^\top , \hat\pi_A^\top A)^\top)^\top\left(\begin{array}{c}
    \widehat{\varphi}_A - \widecheck{\varphi}_A  \\
    \widehat{\pi}_A - \widecheck{\pi}_A  \\
\end{array}\right) -  \QA(\widehat{\pi}_A-\widecheck{\pi}_A) + \QA(\gamma)(\hbeta_A-\beta_A)^2.
\end{equation}
and 
\begin{equation} \label{eq:B2}
\Delta_{2Q} = \dfrac{2}{n}(W\hat{u}_{\pi_A})^\top(\widecheck{e}_A - e_A) + \dfrac{2}{n}(\hu_{\pi_A} - u_{\pi_A})^\top\Omega W^\top e_A. 
\end{equation}
Recall that ${u}_{\pi_A} = (0_{p_x}^\top, \pi_A^\top A)^\top$ and ${\hat u}_{\pi_A} = (0_{p_x}^\top, \hat\pi_A^\top \hat A)^\top$. 
Define $\epsilon_n:=n^{1/4}\|\pi_{A^*}\|_2$. Note that by Proposition \ref{prop:A and A*} we can deduce $Q_A \asymp_p n^{-1/2}\epsilon_n^2$. Suppose that $|\Delta_{1Q} + \Delta_{2Q}| = o_p\left(\dfrac{1+\epsilon_n^2}{\sqrt{n}\log p}\right)$. \begin{itemize}
    \item[(a)] When $\epsilon_n=0$, we have $\pi=0$, and thus $Q_A=0$, $u_{\pi_A}=0$. Then  
\[\sqrt{n}\log p \hat Q_A = \sqrt{n}\log p(\Delta_{1Q} + \Delta_{2Q}) \convp 0.\]
\item[(b)] When $\epsilon_n\gtrsim 1$, we have $\epsilon_n\lesssim \epsilon_n^2$. Besides,
\begin{equation}\label{eq: betaA bound for Q}
    |\beta_A|\lesssim |\beta| + \dfrac{|\IA(\pi,\gamma)|}{\QA(\gamma)} \lep 1 + \pigamma \lep 1+\dfrac{\epsilon_n}{n^{1/4}\|\gamma\|_2}.
\end{equation}
Thus, by Assumption \ref{as:SI}, Proposition \ref{prop:A and A*} and (\ref{eq:WeBound})
\[\begin{aligned}
    \left|\dfrac{u_{\pi_A}^\top W^\top e_A}{n}\right| &\lep \|u_{\pi_A}\|_1 \cdot \left\|n^{-1}W^\top(\varepsilon_Y-\varepsilon_D \beta_A)\right\|_\infty  \\
    &\lep \left(1+|\beta_A|\right)\|A\|_1\|\Omega\|_1\|\pi_A\|_2\sqrt{s\log p/n} \\
    &\lep \left(1+\pigamma\right)\cdot m_\omega \cdot\dfrac{\epsilon_n}{n^{1/4}}\cdot\sqrt{\dfrac{s\log p}{n}}   \\
       &\lep  m_\omega \cdot\dfrac{\epsilon_n}{n^{1/4}}\cdot\sqrt{\dfrac{s\log p}{n}} + m_\omega\cdot \dfrac{\epsilon_n^2}{\sqrt{n}\|\gamma\|_2}\cdot \sqrt{\dfrac{s\log p}{n}} \\
    &= o_p\left(\dfrac{\epsilon_n^2}{\sqrt{n}}\right),
\end{aligned}\]
\end{itemize}
where the last step applies Lemma \ref{lem:asymp}. 
Thus, w.p.a.1, \[\begin{aligned}
     \sqrt{n}\log p\hat Q_A  - c\sqrt{\log p} &=   \sqrt{n}\log p \left(Q_A + \dfrac{2}{n}u_{\pi_A}^\top W^\top e_A + \Delta_{1Q} + \Delta_{2Q} \right)- c\sqrt{\log p}\\
     &= \log p \cdot\epsilon_n^2 - c\sqrt{\log p} + o_p\left(\log p\cdot\epsilon_n^2\right)   \gtrsim  \log p - c\sqrt{\log p} \convp \infty, 
\end{aligned}\]
for any $c>0$. Consequently, it suffices to show that $|\Delta_Q| = |\Delta_{1Q} + \Delta_{2Q}| = o_p(n^{-1/2}(\log p)^{-1}(1+\epsilon_n^2)). $
\par \underline{\textbf{Show $|\Delta_{1Q}| = o_p(n^{-1/2}(\log p)^{-1})(1+\epsilon_n^2))$}}. By (\ref{eq:betahat_rate with pi}), the definition $\epsilon_n = n^{1/4}\|\pi_{A^*}\|_2$ and (\ref{eq: equi L2 norm}), 
\begin{equation}\label{eq:betahat with eps}
    |\hat\beta_A - \beta_A| = O_p\left(\dfrac{\epsilon_n\cdot m_\omega\sqrt{s\log p}}{n^{3/4}\QA(\gamma)} + \dfrac{m_\omega\sqrt{s\log p}}{\sqrt{n}\|\gamma\|_2}\right).
\end{equation} 
In addition, by (\ref{eq:betahat with eps}), Proposition \ref{prop:Xihat}, Lemma \ref{lem:CLIME}, Proposition \ref{prop:Xihat} and Proposition \ref{prop:A and A*},
\[\begin{aligned}
&\ \ \ \ \|\widehat{\Sigma}\hu_{\pi_A}-(0_{p_x}^\top , \hat\pi_A^\top A)^\top\|_\infty  \\
&\leq \|A\|_1 \cdot \|\hat\pi_A\|_1 \cdot \|\hSigma\hOmega - I\|_\infty \\
&\lep \dfrac{s_\omega m_\omega^{2-2q} (\log p)^{(1-q)/2}}{n^{(1-q)/2}}  \cdot \left(\|\hat\pi_A - \widecheck\pi_A\|_1 + \|\widecheck\pi_A - \pi_A\|_1 + \|\pi_A\|_1\right)  \\
&\lep \dfrac{s_\omega m_\omega^{2-2q} (\log p)^{(1-q)/2}}{n^{(1-q)/2}}  \cdot \left( \left(1+\pigamma\right)\sqrt{\dfrac{s^2\log p}{n}} + \sqrt{s}\|\gamma\|_2\cdot |\hbeta_A-\beta_A| + \sqrt{s}\|\pi_A\|_2\right)  \\
&\lep \dfrac{s_\omega m_\omega^{2-2q} (\log p)^{(1-q)/2}}{n^{(1-q)/2}}  \cdot \left( \left(1+\pigamma\right)\sqrt{\dfrac{m_\omega^2 s^2\log p}{n}} + \dfrac{\epsilon_nm_\omega s\sqrt{\log p}}{n^{3/4}\|\gamma\|_2} + \sqrt{s}\|\pi_A\|_2\right), 
\end{aligned}\]
and hence, 
\begin{equation}\label{eq: for Delta1Q temp}
    \begin{aligned}
    &\ \ \ \ \ \left|(\widehat{\Sigma}\hu_{\pi_A}-(0_{p_x}^\top , \hat\pi_A^\top A)^\top)^\top\left(\begin{array}{c}
    \widehat{\varphi}_A - \widecheck{\varphi}_A  \\
    \widehat{\pi}_A - \widecheck{\pi}_A  \\
\end{array}\right)\right|\\
&\leq \left[\|\hat\varphi_A-\widecheck\varphi_A\|_1 + \|\hat\pi_A-\widecheck\pi_A\|_1\right] \cdot \|\widehat{\Sigma}\hu_{\pi_A}-(0_{p_x}^\top , \hat\pi_A^\top A)^\top\|_\infty  \\
&\lep \dfrac{s_\omega m_\omega^{2-2q} (\log p)^{(1-q)/2}}{n^{(1-q)/2}}   \cdot \\
&\ \ \ \ \left( \left(1+\pigamma\right)^2 \dfrac{\epsilon_nm_\omega s^2\log p}{n}  + \dfrac{\left(1+\pigamma\right)m_\omega s^2 {\log p}}{n^{5/4}\|\gamma\|_2} + \left(1+\pigamma\right)\dfrac{s^{3/2}\log p}{\sqrt{n}}\|\pi_A\|_2\right) \\
&= o_p\left(\dfrac{1\wedge\|\gamma\|_2}{s\log p}\right)  \cdot \left( \left(1+\pigamma\right)^2 \dfrac{\epsilon_nm_\omega s^2\log p}{n}  + \dfrac{\left(1+\pigamma\right)m_\omega s^2 {\log p}}{n^{5/4}\|\gamma\|_2} + \left(1+\pigamma\right)\dfrac{s^{\frac{3}{2}}\log p\|\pi_A\|_2}{\sqrt{n}}\right)
\\
&= O_p\left[\left(1+\pigamma\right)^2 \dfrac{m_\omega s\log p}{n}\right]  + \\
&\ \ \ \ o_p\left(\dfrac{1\wedge\|\gamma\|_2}{\log p}\right)  \cdot \left(  \dfrac{\left(1+\pigamma\right)\epsilon_nm_\omega s \cdot \log p}{n^{5/4}\|\gamma\|_2} + \left(1+\pigamma\right)\dfrac{s^{1/2}\log p}{\sqrt{n}}\|\pi_A\|_2\right).
\end{aligned}
\end{equation}
With (\ref{eq:betahat with eps}), (\ref{eq: for Delta1Q temp}) and $\QA(\hat\pi_A-\widecheck\pi_A)\lep (1+\pisgamma)^2 s\log p/n$ from Proposition \ref{prop:Xihat}, we can deduce that 
\[\begin{aligned}
    |\Delta_{1Q}| &=o_p\left(\dfrac{ \left(1+\pigamma\right)(1\wedge \|\gamma\|_2)}{\log p}\right)\cdot\left(\dfrac{\epsilon_nm_\omega s\log p}{n^{5/4}\|\gamma\|_2} + \dfrac{s^{1/2}\log p\|\pi_A\|_2}{\sqrt{n}}\right)+\\
    &\ \ \ \ \   O_p\left[ \left(1+\pigamma\right)^2\dfrac{m_\omega^2 s\log p}{n}  \right]  + O_p\left(\dfrac{\epsilon_n^2m_\omega^2 s\log p}{n^{3/2}\QA(\gamma)} + \dfrac{m_\omega^2 s^2 \log p}{n}\right)\\
    &= o_p\left(\dfrac{1+\epsilon_n^2}{\sqrt{n}\log p}\right).
\end{aligned}\]
Below we show the last step to derive the $o_p\left(\dfrac{1+\epsilon_n^2}{\sqrt{n}\log p}\right)$ term by term. 
By Lemma \ref{lem:asymp} and (\ref{eq: betaA bound for Q}),
\[\begin{aligned}
    \left(1+\pigamma\right)^2\dfrac{m_\omega^2 s\log p}{n} &\lep \dfrac{m_\omega^2 s\log p}{n} + \dfrac{m_\omega^2 s(\log p)^2}{n\QA(\gamma)}\cdot\dfrac{\epsilon_n^2}{\sqrt{n}\log p} = o_p\left(\dfrac{1+\epsilon_n^2}{\sqrt{n}\log p}\right),
\end{aligned}\]
and 
\[\begin{aligned}
     &\ \ \ \ o_p\left(\dfrac{ \left(1+\pigamma\right)(1\wedge \|\gamma\|_2)}{\log p}\right)\cdot\left(\dfrac{\epsilon_nm_\omega s\log p}{n^{5/4}\|\gamma\|_2} + \dfrac{s^{1/2}\log p\|\pi_A\|_2}{\sqrt{n}\log p}\right) \\
     &= o_p\left(\dfrac{(1 + \|\pi\|_2/\|\gamma\|_2)\cdot (1\wedge\|\gamma\|_2)}{\sqrt{n} \log p} \right)\cdot\left(\dfrac{\epsilon_n m_\omega s \log p}{n^{3/4}\|\gamma\|_2 } + \dfrac{s^{1/2}\log p \cdot \epsilon_n}{n^{1/4}  } \right) \\
     &= o_p\left(\dfrac{(1 + \|\pi\|_2)\epsilon_n}{\sqrt{n}\log p }\right) =  o_p\left(\dfrac{\epsilon_n + n^{-1/4}\epsilon_n^2}{\sqrt{n}\log p }\right) = o_p\left(\dfrac{1+\epsilon_n^2}{\sqrt{n}\log p }\right).
\end{aligned}\] 
where the last equality applies $\epsilon_n\leq (1+\epsilon_n^2)/2$, and 
\[ \dfrac{\epsilon_n^2 m_\omega^2 s\log p}{n^{3/2}\QA(\gamma)} = \dfrac{m_\omega^2 s(\log p)^2}{n\QA(\gamma)}\cdot \dfrac{\epsilon_n^2}{\sqrt{n}\log p} = o_p\left(\dfrac{1+\epsilon_n^2}{\sqrt{n} \log p}\right).\]
This completes the proof of $\Delta_{1Q} = o_p(n^{-1/2}(\log p)^{-1}(1+\epsilon_n^2))$.
\par \underline{\textbf{Show $\Delta_{2Q} = o_p(n^{-1/2}(\log p)^{-1}(1+\epsilon_n^2))$.}} Note that by Propositions \ref{prop:Xihat} and \ref{prop:A and A*}, Equation \eqref{eq: betaA bound for Q} and Lemma \ref{lem:CLIME},
\[\begin{aligned}
    \left\|\hu_{\pi_A}-u_{\pi_A}\right\|_1 &\leq \|A\|_1 \|\hOmega-\Omega\|_1\|\pi_A\|_1+\|A\|_1\|\Omega\|_1\|\hat\pi_A - \pi_A\|_1 \\
    &\lep \dfrac{s_{\omega}m_{\omega}^{2-2q}(\log p)^{(1-q)/2}}{n^{(1-q)/2}} \cdot \sqrt{s}\|\pi_A\|_2 + m_\omega\left(\|\hat\pi_A - \widecheck\pi_A\|_1 +  \|\widecheck\pi_A - \pi_A\|_1\right) \\
    &\lep \dfrac{s_{\omega}m_{\omega}^{2-2q}(\log p)^{(1-q)/2}}{n^{(1-q)/2}} \cdot \sqrt{s}\|\pi_A\|_2 + \dfrac{\left(1+\pigamma\right)m_\omega s\sqrt{\log p}}{\sqrt{n}} + m_\omega \|\gamma\|_1\cdot |\hat\beta_A - \beta_A| \\
    &\lep \dfrac{s_{\omega}m_{\omega}^{2-2q}\sqrt{s}(\log p)^{(1-q)/2}}{n^{(1-q)/2}} \cdot \|\pi_A\|_2 + \\
    &\ \ \ \ \ \dfrac{\left(1+\pigamma\right)m_\omega s\sqrt{\log p}}{\sqrt{n}} + m_\omega\sqrt{s} \|\gamma\|_2\cdot \left(\dfrac{\epsilon_nm_\omega \sqrt{s\log p}}{n^{3/4}\QA(\gamma)} + \left(1+\pigamma\right)\dfrac{m_\omega \sqrt{s\log p}}{\sqrt{n}\|\gamma\|_2}\right)\\
    &\lep \dfrac{s_{\omega}m_{\omega}^{2-2q}\sqrt{s}(\log p)^{(1-q)/2}}{n^{(1-q)/2}} \cdot \|\pi_A\|_2 +  \dfrac{\left(1+\pigamma\right)m_\omega^2 s\sqrt{\log p}}{\sqrt{n}} +   \dfrac{\epsilon_nm_\omega^2 s \sqrt{\log p}}{n^{3/4} \|\gamma\|_2} \\
    &\lep o_p\left(\dfrac{\|\pi\|_2}{\sqrt{\log p}}\right) + \left(1+\dfrac{\|\pi\|_2}{\|\gamma\|_2}\right)\dfrac{m_\omega^2 s\sqrt{\log p}}{\sqrt{n}} +  \dfrac{\epsilon_nm_\omega^2 s\sqrt{\log p}}{n^{3/4}\|\gamma\|_2} \\ 
    &= o_p\left(\dfrac{\epsilon_n}{n^{1/4}\sqrt{\log p}}\right) + O_p\left(\dfrac{m_\omega^2 s\sqrt{\log p}}{\sqrt{n}}\right) +  \dfrac{2\epsilon_nm_\omega^2 s\sqrt{\log p}}{n^{3/4}\|\gamma\|_2} \\
    &= o_p\left(\dfrac{\epsilon_n}{n^{1/4}\sqrt{\log p}}\right) + O_p\left(\dfrac{m_\omega^2 s\sqrt{\log p}}{\sqrt{n}}\right),
\end{aligned}\]
where the last equality applies that $\dfrac{m_\omega^2 s\log p}{n^{1/2}\|\gamma\|_2} = o_p(1)$. In addition, 
\[\|u_{\pi_A}\|_1 \leq \|\Omega\|_1\|A\|_1\|\pi_A\|_1\lep m_\omega\sqrt{s}\|\pi\|_2 \lesssim \dfrac{m_\omega \sqrt{s} \epsilon_n}{n^{1/4}}. \]
Then applying the upper bounds of $\left\|\hu_{\pi_A}-u_{\pi_A}\right\|_1 $ and $\|u_{\pi_A}\|_1$ derived above, together with Proposition \ref{prop:A and A*}, (\ref{eq:WeBound}), (\ref{eq:betahat with eps}) and (\ref{eq: betaA bound for Q}), the first term of $\Delta_{2Q}$ is bounded by 
\[\begin{aligned}
    &\ \ \ \ \left|\dfrac{2}{n}\hu_{\pi_A}^\top W^\top (\widecheck e_A - e_A)\right|\\
    &\leq 2\|\hat u_{\pi_A}\|_1\cdot \|n^{-1}W^\top \varepsilon_D\|_\infty\cdot |\hat\beta_A - \beta_A|  \\
    &\lep (\|\hat u_{\pi_A} - u_{\pi_A}\|_1 + \|u_{\pi_A}\|_1) \cdot \sqrt{\dfrac{\log p}{n}} \cdot \left(\dfrac{\epsilon_nm_\omega \sqrt{s\log p}}{n^{3/4}\QA(\gamma)} + \left(1+\pigamma\right)\dfrac{m_\omega \sqrt{s\log p}}{\sqrt{n}\|\gamma\|_2}\right) \\
    &\lep O_p\left(\dfrac{\epsilon_n}{n^{1/4}\sqrt{\log p}} + \dfrac{m_\omega\sqrt{s}\epsilon_n}{n^{1/4}}\right)  \sqrt{\dfrac{\log p}{n}}\cdot  \left(\dfrac{\epsilon_nm_\omega \sqrt{s\log p}}{n^{3/4}\QA(\gamma)} + \left(1+\pigamma\right)\dfrac{m_\omega \sqrt{s\log p}}{\sqrt{n}\|\gamma\|_2}\right) \\
    &\ \ \ \ +  O_p\left(\dfrac{m_\omega^2 s\sqrt{\log p}}{\sqrt{n}} \right) \sqrt{\dfrac{\log p}{n}}\cdot  \left(\dfrac{\epsilon_nm_\omega \sqrt{s\log p}}{n^{3/4}\QA(\gamma)} + \left(1+\pigamma\right)\dfrac{m_\omega \sqrt{s\log p}}{\sqrt{n}\|\gamma\|_2}\right) \\
    &\lep O_p\left(\dfrac{\epsilon_n}{\sqrt{n}\log p}\right) \cdot  \left(\dfrac{\epsilon_nm_\omega^2 s(\log p)^{3/2}}{n\QA(\gamma)} + \dfrac{m_\omega^2 s(\log p)^{3/2} }{n^{3/4}\|\gamma\|_2} + \dfrac{\|\pi\|_2 m_\omega^2 s(\log p)^{3/2} }{n^{3/4}\QA(\gamma)}\right) \\
    &\ \ \ \ +  O_p\left(\dfrac{1}{\sqrt{n}\log p}\right)   \left(\dfrac{\epsilon_nm_\omega^3 s^{3/2} (\log p)^{5/2}}{n^{5/4}\QA(\gamma)} +  \dfrac{m_\omega^3 s^{3/2} (\log p)^{5/2}}{n\|\gamma\|_2} + \dfrac{\|\pi\|_2m_\omega^3 s^{3/2} (\log p)^{5/2}}{n\QA(\gamma)}\right) \\
    &\lep o_p\left(\dfrac{1+\epsilon_n+\epsilon_n^2}{\sqrt{n}\log p}\right) \cdot  \left(o_p(1) + o_p(1) + \dfrac{m_\omega s(\log p)^{3/2}}{n\QA(\gamma)}\right) \\
    &\ \ \ \ +  O_p\left(\dfrac{1+\epsilon_n+\epsilon_n^2}{\sqrt{n}\log p}\right) \cdot  \left(\dfrac{m_\omega^2 s(\log p)^2 }{n\QA(\gamma)} +  \dfrac{m_\omega \sqrt{s}(\log p)^{3/2}}{\sqrt{n}} + \dfrac{m_\omega^2 s (\log p)^2}{n\QA(\gamma)}\right) \\ 
    &= o_p\left(\dfrac{1+\epsilon_n^2}{\sqrt{n}\log p}\right),
\end{aligned}\]
where the last two steps apply Lemma \ref{lem:asymp}. Besides, using the same set of probability upper bounds, the second term of $\Delta_{2Q}$ is bounded by 
\[\begin{aligned}
    \left|\dfrac{2}{n}(\hu_{\pi_A} - u_{\pi_A})^\top\Omega W^\top e_A\right| &\leq \|\hu_{\pi_A} - u_{\pi_A}\|_1 \cdot \left\|\dfrac{2W^\top e_A}{n}\right\|_\infty \\
    &= \left[o_p\left(\dfrac{\epsilon_n}{n^{1/4}\sqrt{\log p}} \right) + O_p\left(\dfrac{m_\omega^2 s\sqrt{\log p}}{\sqrt{n}}\right)\right]\cdot\left\|\dfrac{2W^\top(\varepsilon_Y-\beta_A\varepsilon_D)}{n}\right\|_\infty \\
    &= \left[o_p\left(\dfrac{\epsilon_n}{n^{1/4}\sqrt{\log p}} \right) + O_p\left(\dfrac{m_\omega^2 s\sqrt{\log p}}{\sqrt{n}}\right)\right]\cdot\left(1+|\beta_A|\right)\sqrt{\dfrac{\log p}{n} }\\ 
     &= O_p\left(\dfrac{1+\epsilon_n}{\sqrt{n}\log p}\right)\cdot\left[\dfrac{\log p}{n^{1/4}}  + \dfrac{m_\omega^2 s(\log p)^2}{\sqrt{n}} + \dfrac{\|\pi\|_2}{\|\gamma\|_2}\dfrac{\log p}{n^{1/4}} + \dfrac{\|\pi\|_2}{\|\gamma\|_2}\dfrac{m_\omega^2 s (\log p)^2}{\sqrt{n}}\right]  \\
     &= O_p\left(\dfrac{1+\epsilon_n+\epsilon_n^2}{\sqrt{n}\log p}\right)\cdot\left[o_p(1)  + o_p(1) +  \dfrac{\log p}{n^{1/2}\|\gamma\|_2} +  \dfrac{m_\omega s(\log p)^2}{\sqrt{n}\|\gamma\|_2}\right]\\
     &= o_p\left(\dfrac{1+\epsilon_n^2}{\sqrt{n}\log p}\right). 
\end{aligned}\]
This completes the proof of Theorem \ref{thm:Q}.

\clearpage

\end{document}